\newcommand{\iffull}{\ifthenelse{\boolean{@full}}}
\newtheorem{theorem}{Theorem}[section]
\newtheorem{lemma}[theorem]{Lemma}
\newtheorem{corollary}[theorem]{Corollary}
\newtheorem{proposition}[theorem]{Proposition}
\newtheorem{claim}[theorem]{Claim}
\newtheorem{fact}[theorem]{Fact}
\theoremstyle{definition}
\newtheorem{definition}[theorem]{Definition}
\newenvironment{fminipage}%
  {\begin{Sbox}\begin{minipage}}%
  {\end{minipage}\end{Sbox}\fbox{\TheSbox}}
	\newenvironment{algbox}[0]{\vskip 0.2in
	\noindent 
	\begin{fminipage}{6.3in}
	}{
	\end{fminipage}
	\vskip 0.2in
	}
\def\expandafter\normalsize\expandafter{%
	    \normalsize
	    \setlength\belowdisplayskip{5pt}
	    \setlength\belowdisplayshortskip{5pt}
	}
	\titlespacing*{\section}{0pt}{0.7\baselineskip}{0.4\baselineskip}
	\titlespacing*{\subsection}{0pt}{0.7\baselineskip}{0.4\baselineskip}
	\newenvironment{algbox}[0]{
	\noindent 
	\begin{fminipage}{6.3in}
	}{
	\end{fminipage}
	}
	\renewenvironment{itemize}[1]{\begin{compactitem}#1}{\end{compactitem}}
	\renewenvironment{enumerate}[1]{\begin{compactenum}#1}{\end{compactenum}}
	\titlespacing*{\section}{0pt}{0.2\baselineskip}{0.2\baselineskip}
	\titlespacing*{\subsection}{0pt}{0.2\baselineskip}{0.2\baselineskip}
	\titlespacing*{\theorem}{0pt}{0\baselineskip}{0\baselineskip}
	\titlespacing*{\lemma}{0pt}{0\baselineskip}{0\baselineskip}
\def\expandafter\normalsize\expandafter{%
	    \normalsize
	    \setlength\abovedisplayskip{5pt}
	    \setlength\belowdisplayskip{5pt}
	    \setlength\abovedisplayshortskip{5pt}
	    \setlength\belowdisplayshortskip{5pt}
	}
\def\pleq{\preccurlyeq}
\def\pgeq{\succcurlyeq}
\def\bvec#1{{\mbox{\boldmath $#1$}}}
\def\prob#1#2{\mbox{Pr}_{#1}\left[ #2 \right]}
\def\expec#1#2{{\mathbb{E}}_{#1}\left[ #2 \right]}
\def\defeq{\stackrel{\mathrm{def}}{=}}
\def\setof#1{\left\{#1  \right\}}
\def\sizeof#1{\left|#1  \right|}
\def\floor#1{\left\lfloor #1 \right\rfloor}
\def\ceil#1{\left\lceil #1 \right\rceil}
\def\union{\cup}
\def\abs#1{\left|#1  \right|}
\def\norm#1{\left\| #1 \right\|}
\newcommand\ttau{\boldsymbol{\tau}}
\newcommand\eepsilon{\boldsymbol{\epsilon}}
\def\aa{\pmb{\mathit{a}}}
\newcommand\bb{\boldsymbol{\mathit{b}}}
\newcommand\dd{\boldsymbol{\mathit{d}}}
\newcommand\ff{\boldsymbol{\mathit{f}}}
\renewcommand\gg{\boldsymbol{\mathit{g}}}
\newcommand\rr{\boldsymbol{\mathit{r}}}
\newcommand\uu{\boldsymbol{\mathit{u}}}
\newcommand\vv{\boldsymbol{\mathit{v}}}
\newcommand\ww{\boldsymbol{\mathit{w}}}
\newcommand\xx{\boldsymbol{\mathit{x}}}
\renewcommand\AA{\boldsymbol{\mathit{A}}}
\newcommand\BB{\boldsymbol{\mathit{B}}}
\newcommand\BBtil{\boldsymbol{\tilde{\mathit{B}}}}
\newcommand\CC{\boldsymbol{\mathit{C}}}
\newcommand\CChat{\boldsymbol{\widehat{\mathit{C}}}}
\newcommand\DD{\boldsymbol{\mathit{D}}}
\newcommand\FF{\boldsymbol{\mathit{F}}}
\newcommand\GG{\boldsymbol{\mathit{G}}}
\newcommand\II{\boldsymbol{\mathit{I}}}
\newcommand\MM{\boldsymbol{\mathit{M}}}
\newcommand\Mtil{\boldsymbol{\widetilde{\mathit{M}}}}
\newcommand\QQ{\boldsymbol{\mathit{Q}}}
\newcommand\LL{\boldsymbol{\mathit{L}}}
\newcommand\RR{\boldsymbol{\mathit{R}}}
\renewcommand\SS{\boldsymbol{\mathit{S}}}
\newcommand\TT{\boldsymbol{\mathit{T}}}
\newcommand\UU{\boldsymbol{\mathit{U}}}
\newcommand\WW{\boldsymbol{\mathit{W}}}
\newcommand\VV{\boldsymbol{\mathit{V}}}
\newcommand\XX{\boldsymbol{\mathit{X}}}
\newcommand\YY{\boldsymbol{\mathit{Y}}}
\newcommand\ZZ{\boldsymbol{\mathit{Z}}}
\newcommand\MMhat{\boldsymbol{\widehat{\mathit{M}}}}
\newcommand\MMtil{\boldsymbol{\widetilde{\mathit{M}}}}
\newcommand\UUhat{\boldsymbol{\widehat{\mathit{U}}}}
\newcommand\xxtil{\boldsymbol{\tilde{\mathit{x}}}}
\newcommand\Ghat{{\widehat{{G}}}}
\newcommand\Vhat{{\widehat{{V}}}}
\newcommand\Ehat{{\widehat{{E}}}}
\newcommand\ddhat{{\hat{\dd}}}
\newcommand\dhat{{\hat{{d}}}}
\newcommand\nhat{{\hat{{n}}}}
\def\Gtil{\widetilde{G}}
\newcommand{\nfrac}{\nicefrac}
\newcommand{\schur}[2]{Sc \left(#1,  #2\right) }
\newcommand{\bdd}{{bDD}}
\newcommand{\abdd}{{$\alpha$-\bdd}}
\newcommand{\blk}[2]{\ensuremath{{#1}_{[#2]}}}
\newcommand{\id}{\ensuremath{\mathbb{I}}}
\newcommand{\dg}{\ensuremath{*}}
\newcommand{\complex}{\mathds{C}}
\DeclareMathOperator*{\diagop}{Diag}
\begin{document}

\title{
  Sparsified Cholesky and Multigrid Solvers\\
for Connection Laplacians\thanks{
This paper incorporates and improves upon results previously announced by a subset
  of the authors in \cite{LeePengSpielman}.
}
}
%Existence of Optimal Parallel Solvers for SDD Linear Systems
\author{
Rasmus Kyng\thanks{Supported by 
   NSF grant CCF-1111257.}\\
Yale University\\
rasmus.kyng@yale.edu \\
\and
Yin Tat Lee
\thanks{Supported in part by NSF awards 0843915 and 1111109.  Part of this work was done while visiting the Simons Institute for the Theory of Computing, UC Berkeley.}
\\
M.I.T.\\
yintat@mit.edu
\and
Richard Peng\\
Georgia Tech\\
rpeng@cc.gatech.edu
\and
Sushant Sachdeva\thanks{Supported by a Simons Investigator Award to Daniel A. Spielman.}\\
Yale University\\
sachdeva@cs.yale.edu
\and
Daniel A. Spielman\thanks{Supported by AFOSR Award FA9550-12-1-0175,
   NSF grant CCF-1111257, a Simons Investigator Award to Daniel A. Spielman, and a MacArthur Fellowship.}
\\ 
Yale University\\
spielman@cs.yale.edu
}

\maketitle

\begin{abstract}
We introduce
  the \textit{sparsified Cholesky} and \textit{sparsified multigrid} algorithms
  for solving systems of linear equations.
These algorithms accelerate Gaussian elimination by sparsifying the nonzero
  matrix entries created by the elimination process.

We use these new algorithms to derive the first nearly linear time algorithms
  for solving systems of equations in connection Laplacians---a generalization of
  Laplacian matrices that arise in many problems in image and signal processing.

We also prove that every connection Laplacian has a linear sized approximate inverse.
This is an LU factorization with a linear number of nonzero entries that is a strong
  approximation of the original matrix.
Using such a factorization one can solve systems of equations in a connection Laplacian
  in linear time.
Such a factorization was unknown even for ordinary graph Laplacians.

\end{abstract}
\thispagestyle{empty}

\newpage

\setcounter{page}{1}

\section{Introduction}

We introduce
  the \textit{sparsified Cholesky} and \textit{sparsified multigrid} algorithms
  for solving systems of linear equations.
Two advantages of these algorithms over other recently introduced 
  nearly-linear time algorithms
  for solving systems of equations in Laplacian matrices
  \cite{Vaidya,SpielmanTengLinsolve,KMP1,KMP2,KOSZ,CohenKMPPRX} are:
\begin{enumerate}
\item [1.] They give nearly-linear time algorithms for solving systems of equations in a much broader class of matrices---the 
  connection Laplacians and Hermitian block diagonally dominant matrices.
Connection Laplacians \cite{singer2012vector,connection} are a generalization of graph Laplacians that arise
  in many applications, including celebrated work on cryo-electron microscopy
  \cite{singer2011three,shkolnisky2012viewing,zhao2014rotationally}, phase retrieval \cite{alexeev2014phase,marchesini2014alternating},
  and many image processing problems (e.g. \cite{stable2015,arie2012global}).
Previous algorithms for solving systems of equations in graph Laplacians cannot be extended to
  solve equations in connection Laplacians because the previous algorithms
  relied on some form of low stretch spanning trees, a concept
  that has no analog for the more general connection Laplacians.
\item [2.] They provide linear-sized approximate inverses of connection Laplacian matrices.
That is, for every $n$-dimensional connection Laplacian $\MM$, the sparsified Cholesky factorization algorithm
  produces an block-upper-triangular matrix $\UU$ and block diagonal $\DD$ with $O (n)$ nonzero entries
  so that $\UU^{T}\DD  \UU$ is a constant-factor approximation of $\MM$.
  Such matrices $\UU$ and $\DD$ allow one to solve systems of equations in $\MM$ to accuracy
  $\epsilon$ in time $O (m \log \epsilon^{-1})$, where $m$ is the number of nonzero entries of $\MM$.
 Even for ordinary Laplacian matrices, the existence of such approximate inverses is entirely new.
 The one caveat of this result is that we do not yet know how to
    compute those approximate inverses in nearly linear time.
\end{enumerate}

The sparsified Cholesky and sparsified multigrid algorithms work by sparsifying
  the matrices produced during Gaussian elimination.
Recall that Cholesky factorization is the version of Gaussian
elimination 
  % that is applied to
  for symmetric matrices, and that the high running time of Gaussian elimination comes from
  \textit{fill}---new nonzero matrix entries that are created by row operations.
If Gaussian elimination never produced rows with a super-constant number of entries,
  then it would run in linear time.
The sparsified Cholesky algorithm accelerates Gaussian elimination by sparsifying
  the rows that are produced by elimination, thereby guaranteeing
  that the elimination will be fast.
Sparsified Cholesky is inspired by one of the major advances in algorithms for solving
  linear equations in  Laplacian
  matrices---the Incomplete Cholesky factorization (ICC) \cite{ICC}.
However, ICC merely drops entries produced by elimination,
  whereas sparsification also carefully increases ones that remain.
The difference is crucial, and is why ICC does not provide a nearly-linear
  time solver.

To control the error introduced by sparsification, we have to be careful not to do it
  too often.
This means that our algorithm actually chooses a large set of rows and columns to
  eliminate at once, and then sparsifies the result.
This is the basis of our first algorithms, which establish the existence of linear time
  solvers and linear sized approximate inverses, after precomputation.
This precomputation is analogous to the procedure of computing a matrix inverse:
  the approximate inverse takes much less time to apply than to compute.

To produce entire algorithms that run in nearly linear time (both to compute and apply the approximate inverse)
  requires a little more work.
To avoid the work of computing the matrix obtained by eliminating the large set of rows and columns,
  we design a fast algorithm for approximating it quickly.
This resulting algorithm produces a solver routine that does a little more work 
  at each level, and so resembles a multigrid V-cycle \cite{trottenberg2000multigrid}.
We call the resulting algorithm the \textit{sparsified multigrid}.
We note that Krishnan, Fattal, and Szeliski
  \cite{krishnan2013efficient} present experimental results from 
  the use of a sparsification heuristic 
  in a multigrid algorithm for solving problems in computer vision.

Our new algorithms 
  are most closely related to the  Laplacian solver recently
  introduced by Peng and Spielman \cite{PengS14}:
  unlike the other Laplacian solvers, they rely only on sparsification and do not directly
  rely on 
  ``support theory'' preconditioners or any form of low stretch spanning trees.
This is why our algorithms can solve a much broader family of linear systems.
To sparsify without using graph theoretic algorithms, we employ a recently developed 
  algorithm of Cohen \textit{et. al.}~\cite{cohen2014uniform} that allows us to
  sparsify a matrix by solving systems of equations in a subsampled matrix.

\subsection{Connection Laplacians and Block DD Matrices}

In this section, we define block diagonally dominant
  (\bdd) matrices---the most general family of matrices such that the
  associated systems of linear equations 
  can be solved by our algorithms.
We begin by defining our motivating case: the connection Laplacians

Connection Laplacians may be thought of as a generalization of graph Laplacians
where every vertex is  associated with  a vector, instead of a real number,
  and every edge is associated with a unitary matrix.
Like graph Laplacians, they describe a natural quadratic form.
Let $\blk{\MM}{i,j} \in \complex^{r \times r}$ be the unitary
  matrix associated with edge $(i,j)$, and let $w_{i,j}$ be the
  (nonnegative) weight of edge $(i,j)$.
We require that $\blk{\MM}{i,j} = \blk{\MM}{j,i}^{*}$, where
  $*$ denotes the conjugate transpose.
The quadratic form associated with this connection Laplacian
  is a function of vectors $\vv^{(i)} \in \complex^{r}$,
  one for each vertex $i$, that equals
\[
\textstyle  \sum_{(i,j) \in E} \ww_{i,j} \norm{\vv^{(i)} - \blk{\MM}{i,j} \vv^{(j)}}^{2}.
\]
The matrix corresponding to this quadratic form is a block matrix
  with blocks $\blk{\MM}{i,j}$.
Most applications of the connection Laplacian require one to either solve
  systems of linear equations in this matrix,
  or to compute approximations of its smallest eigenvalues
  and eigenvectors.
By applying the inverse power method (or inverse Lanczos), we can
  perform these eigenvector calculations by solving a logarithmic
  number of linear systems in the matrix
  (see \cite[Section 7]{SpielmanTengLinsolve}).

The matrices obtained from the connection Laplacian are a special case
  of block diagonally dominant (\bdd) matrices, which we now define.
Throughout this paper, we consider block matrices having 
  entries in $\complex^{r \times r},$
  where $r > 0$ is a fixed integer. 
We say that
  $\MM \in \left({\complex^{r \times r}}\right)^{m \times n}$ has $m$
  block-rows, and $n$ block-columns. 
For $i \in [m], j \in [n],$ we let
  $\blk{\MM}{i,j} \in \complex^{r \times r}$ denote the $i,j$-block in
  $\MM,$ and $\blk{\MM}{j}$ denote the $j^\textrm{th}$ block-column,
  i.e.,
  $\blk{\MM}{j} = [(\blk{\MM}{1,j})^{\dg}, (\blk{\MM}{2,j})^{\dg},
  \ldots ,(\blk{\MM}{m,j})^{\dg}]^{\dg}.$
For sets $F \subseteq [m], C \subseteq [n],$ we let
  $\blk{\MM}{F,C} \in \left({\complex^{r \times r}}\right)^{|F| \times
  |C|}$ denote the block-submatrix with blocks $\blk{\MM}{i,j}$ for
  $i \in F, j \in C.$ 
$\MM$ is block-diagonal, if $\blk{\MM}{i,j} = 0$ for $i \neq j.$
\textit{We emphasize that all computations are done over $\complex$, not
  over a matrix group.}

To define bSDD matrices, let
  $\norm{\AA}$ denote the operator norm\footnote{%
Recall that the operator norm is the largest singular value of $\AA$ and
 square root of the largest eigenvalue of $\AA^{\dg} \AA$.}
  of a matrix $\AA$.
For Hermitian matrices $\AA, \BB,$ write $\AA \succeq \BB$ iff
$\AA-\BB$ is positive semidefinite.
\begin{definition}
A Hermitian block-matrix $\MM \in \left({\complex^{r
      \times r}}\right)^{n \times n}$ is block diagonally dominant (or {\bdd}) if
\[\textstyle \textrm{for all } i \in [n], \qquad \blk{\MM}{i,i} \succeq \id_{r}
\cdot \sum_{j: j \neq i} \norm{\blk{\MM}{i,j}},\]
where $\id_{r} \in \complex^{r \times r}$ denotes the identity matrix.
\end{definition}
\noindent Equivalently, a Hermitian $\MM$ is a {\bdd} matrix iff it can be
written as $\DD - \AA$ where $\DD$ is block-diagonal, and $
\blk{\DD}{i,i} \succeq \id_{r}\sum_{j} \norm{\blk{\AA}{i,j}}$ (see Lemma~\ref{lem:extra-diagonal}).

Throughout the paper we treat $r$ as a constant.
The hidden dependence on $r,$ of the running times of the algorithms
we present, is polynomial.
The major results of this paper are the following.

\begin{restatable}[Sparsified Multigrid]{theorem}{recursive}
\label{thm:recursive}
There is an algorithm that, when given
  a {\bdd} matrix $\MM$ with $n$ block rows 
  and $m$ nonzero blocks,
produces a solver for $\MM$ in $O(m\log n+n\log^{2+o(1)}n)$ work and
  $O(n^{o(1)})$ depth so that the solver finds $\epsilon$-approximate
  solutions to systems of equations in $\MM$ in $O((m+n\log^{1+o(1)}n)\log(1/\epsilon))$
work and $O(\log^{2}n\log\log n\log(1/\epsilon))$ depth.
\end{restatable}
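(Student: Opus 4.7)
The plan is to build a recursive solver that mimics a multigrid V-cycle. At each level we identify a constant fraction of block rows that can be eliminated cheaply, approximate the resulting Schur complement with a sparser {\bdd} matrix, and recurse. Composing these approximate Schur complements yields an approximate block-LU factorization, which is used as a preconditioner in an outer iterative scheme to drive the residual down by a factor $\epsilon$.

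First I would establish a subroutine that finds an ``eliminatable'' subset $F \subseteq [n]$ of size $\Omega(n)$, meaning that $\blk{\MM}{F,F}$ is $\alpha$-strongly {\bdd} (excess of each diagonal block over its off-diagonal mass in $F$ is a constant fraction). A pigeonhole/averaging argument over the {\bdd} slack, in the spirit of the selection step used in \cite{PengS14}, shows such an $F$ always exists; because $\blk{\MM}{F,F}$ is strongly {\bdd}, its action and its inverse's action can be applied (to accuracy $\epsilon$) in $O(\mathrm{nnz} \cdot \log(1/\epsilon))$ work and $O(\log n \log(1/\epsilon))$ depth by block Jacobi. Parallelism follows because, after reordering, $\blk{\MM}{F,F}$ is close enough to block-diagonal that Jacobi converges geometrically.

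Next I would realize the Schur complement approximation. The exact Schur complement
\[
\SS = \blk{\MM}{C,C} - \blk{\MM}{C,F}\blk{\MM}{F,F}^{-1}\blk{\MM}{F,C}, \qquad C = [n]\setminus F,
\]
can be written as a sum of ``clique-like'' {\bdd} contributions, one per eliminated block row. Rather than forming $\SS$, I would sparsify it using the subspace/uniform sampling framework of Cohen et al.~\cite{cohen2014uniform}, which sparsifies a matrix by repeatedly subsampling and calling a solver on the subsampled matrix; the subsampled object is itself a smaller {\bdd} matrix, so the solver call is supplied by the recursion. This produces a constant-factor spectral approximation $\tilde{\SS}$ with $\tilde{O}(|C|)$ nonzero blocks without ever instantiating $\SS$, and the $n^{o(1)}$ factors in the statement arise from the logarithmic/sub-logarithmic number of sparsification rounds at each of the $O(\log n)$ levels.

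Finally I would glue the levels together. The composed factorization $\UU^\dg \DD \UU$ is an $O(1)$-spectral approximation of $\MM$, and applying $(\UU^\dg \DD \UU)^{-1}$ by block back-substitution costs $O(m + n \log^{1+o(1)} n)$ work and $O(\log^2 n \log\log n)$ depth. Preconditioned Chebyshev iteration with this preconditioner then yields an $\epsilon$-approximate solution in $O(\log(1/\epsilon))$ iterations, matching the quoted bounds. The main obstacle will be controlling the spectral error across the $O(\log n)$ recursive levels while guaranteeing that each approximate Schur complement is itself {\bdd} (so that the recursion can be invoked on it); standard spectral sparsification does not automatically preserve {\bdd} structure, and the argument must exploit the explicit clique-summand decomposition of $\SS$ so that each sampled term is {\bdd}. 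A secondary obstacle is that the sampling probabilities (analogs of leverage scores or effective resistances) themselves require an approximate solve for a related matrix, forcing a careful bootstrap where the solver at level $k+1$ is used to generate the sparsification at level $k$ without blowing up the total work beyond nearly linear.
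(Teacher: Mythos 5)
Your high-level architecture is close to the paper's (choose $\alpha$-strongly {\bdd} $F_i$, apply block Jacobi, approximate the Schur complement, sparsify, recurse, compose into a preconditioned outer iteration), but there is a genuine gap in the step where you pass from $\MM$ to a sparse approximation of $\schur{\MM}{F}$.

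You assert that $\SS = \blk{\MM}{C,C} - \blk{\MM}{C,F}\blk{\MM}{F,F}^{-1}\blk{\MM}{F,C}$ ``can be written as a sum of clique-like {\bdd} contributions, one per eliminated block row.'' That decomposition only exists when $\blk{\MM}{F,F}$ is block-diagonal, because then $\blk{\MM}{C,F}\blk{\MM}{F,F}^{-1}\blk{\MM}{F,C} = \sum_{i\in F}\blk{\MM}{C,i}\blk{\MM}{i,i}^{-1}\blk{\MM}{i,C}$. Your $F$ is only $\alpha$-strongly {\bdd}, which does \emph{not} make $\blk{\MM}{F,F}$ block-diagonal, so the per-vertex clique decomposition you want to sample from does not exist. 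The paper's key missing ingredient here is the \textsc{SchurSquare} step (Lemma~\ref{lem:schurSquare}, Eq.~\eqref{eq:schurSquare:overview}): a squaring identity in the spirit of Peng--Spielman that replaces $\MM$ by a new matrix $\MM_2$ with $\schur{\MM_2}{F}=\schur{\MM}{F}$ but whose $(F,F)$-block is roughly $\alpha^2$-{\bdd}, hence much closer to diagonal; $O(\log\log\epsilon^{-1})$ rounds of squaring-and-sparsifying, followed by a carefully tuned \textsc{LastStep}, make the decomposition-into-cliques usable. Without something playing this role, your proposal has no way to produce the Schur complement approximant.

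You also conflate two distinct sparsification mechanisms that the paper deliberately keeps separate. The per-level Schur-complement approximation (\textsc{ApproxSchur}) is done with explicit weighted-expander constructions (Lemmas~\ref{lem:product-clique-sparsify}, \ref{lem:weightedExpander}) that are \emph{solver-free}: they directly replace implicitly represented product-demand cliques by $O(n\epsilon^{-4})$-edge graphs. The Cohen et al.\ uniform-sampling framework, which \emph{does} require a recursive solver, is invoked only in the periodic density-control step \textsc{Sparsify} (Lemma~\ref{lem:sparsify}), applied every $k$ iterations with a large compression factor $K_j$. If you instead tried to apply the Cohen et al.\ subsampler to sparsify $\schur{\MM}{F}$ at every level, two problems arise. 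First, Cohen et al.\ needs an explicit column-factored form $\XX + \BB\BB^{\dg}$ to subsample; the Schur complement is implicit and dense, and obtaining such a $\BB$ is exactly the factorization problem you are trying to solve, so the reduction is circular. Second, spawning a recursive solver at every one of the $\Theta(\log n)$ levels gives a branching recursion that does not close to $n^{1+o(1)}$ work; the paper's analysis (Lemmas~\ref{lem:recursivePhase}, \ref{lem:recursiveCost}) only works because \textsc{Sparsify}'s recursive call is made with a compression factor $K_j$ large enough that the extra chain's cost is dominated by the remaining work at the same level, yielding an effective branching factor of $2$ per phase rather than per iteration. You would need to supply both the squaring step and this careful scheduling of the subsampling recursion to make the argument go through.
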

\noindent Previously, the existence of nearly-linear time solvers was even unknown for the 1-dimensional case ($r=1$)
  when the off-diagonal entries were allowed to be complex numbers.

We can use the above algorithm to find approximations of the smallest eigenvalues and eigenvectors of
  such matrices at an additional logarithmic cost.
\begin{restatable}[Sparsified Cholesky]{theorem}{thmUDU}
\label{thm:UDU}
For every {\bdd} matrix $\MM$ with $n$ block-rows
there exists a diagonal matrix $\DD$ 
  and an upper triangular matrix $\UU$ with $O (n)$ nonzero blocks so that
\[
  \UU^{T} \DD \UU \approx_{3/4} \MM .
\]
Moreover, linear equations in $\UU$, $\UU^{T}$, and $\DD$ can be solved
   with linear work in depth $O (\log^{2} n)$, and these matrices can be computed
  in polynomial time.
\end{restatable}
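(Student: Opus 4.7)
The plan is a multi-level block Gaussian elimination in which we sparsify the Schur complement after each layer so that the total fill in the factor $\UU$ remains linear in $n$. Set $\MM_0 \defeq \MM$, and at level $k$ suppose $\MM_k$ is bDD on $n_k$ block-rows with $O(n_k)$ nonzero blocks. I would choose a set $F_k \subseteq [n_k]$ of size $\Omega(n_k)$ that is \emph{independent} in the block-graph of $\MM_k$ and contained in block-rows of constant block-degree; such a set exists via a Markov-type argument (a constant fraction of rows have bounded degree in any matrix with $O(n_k)$ nonzero blocks) followed by a greedy independent-set extraction. Let $C_k = [n_k] \setminus F_k$ and compute the exact Schur complement
\[
\schur{\MM_k}{F_k} = \blk{\MM_k}{C_k,C_k} - \blk{\MM_k}{C_k,F_k}\,\blk{\MM_k}{F_k,F_k}^{-1}\,\blk{\MM_k}{F_k,C_k},
\]
which is easy because $\blk{\MM_k}{F_k,F_k}$ is block-diagonal. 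This step contributes one block-layer to $\UU$ (with $O(n_k)$ nonzero blocks) and a block of pivots to $\DD$. Next, sparsify $\schur{\MM_k}{F_k}$ to obtain a bDD matrix $\MM_{k+1}$ with $O(|C_k|)$ nonzero blocks satisfying $\MM_{k+1} \approx_{\epsilon} \schur{\MM_k}{F_k}$ for $\epsilon = \Theta(1/\log n)$, then recurse until the matrix is of constant size.

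To verify the size bound, observe that $n_{k+1} = |C_k| \leq (1-c) n_k$, so there are $O(\log n)$ levels and the total nonzero count in $\UU$ is $\sum_k O(n_k) = O(n)$ by a geometric series. For the approximation bound, chain the per-layer spectral distortions using monotonicity of the Schur complement in the Loewner order; the composed distortion is $(1 + O(\epsilon))^{O(\log n)}$, a constant that can be tuned to $3/4$ by choosing the hidden constant in $\epsilon = \Theta(1/\log n)$ small enough. For the solver complexity, applying $\UU$ or $\UU^{T}$ is a block-triangular solve with $O(n)$ nonzero blocks organized into $O(\log n)$ independent-set layers; within each layer the dependency graph has depth $O(\log n)$ handled by parallel prefix, giving total depth $O(\log^2 n)$ and linear work, while applying $\DD^{-1}$ is embarrassingly parallel since $\DD$ is block-diagonal with $O(1)$-size blocks.

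The principal obstacle is the sparsification subroutine: it must output a bDD matrix (so the recursion can continue) that is spectrally close to the input Schur complement with only $\epsilon = O(1/\log n)$ distortion, all while having only $O(|C_k|)$ nonzero blocks. Generic spectral sparsifiers produce only a PSD approximation and may violate diagonal dominance. My plan would be to sample off-diagonal blocks via an effective-resistance-style distribution adapted to the bDD setting, and then add a small block-diagonal correction to restore bDD, paying for this correction out of the slack in the bDD inequality without degrading the spectral error beyond $\epsilon$. Once this subroutine is in place, polynomial-time computability of the overall factorization follows because each component---subset selection, exact Schur complementation on the block-diagonal pivot, and sparsification---runs in polynomial time; the stronger nearly-linear construction time is deferred to the sparsified multigrid algorithm of Theorem~\ref{thm:recursive}.
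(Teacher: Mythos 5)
Your overall skeleton (eliminate a constant fraction of block-rows, sparsify the Schur complement, recurse, and read off a block-triangular factorization) matches the paper's, and your choice of a genuinely independent, bounded-degree set $F_k$ is a legitimate simplification of the paper's route: the paper instead takes an $\alpha$-{\bdd} subset (Lemma~\ref{lem:subsetLowDeg}) and substitutes the Jacobi polynomial $\ZZ^{(k)}$ of \eqref{eqn:defZ} for $\blk{\MM}{F,F}^{-1}$ in the triangular factors, paying an extra $\max_i\epsilon_i + 1/2$ in the middle matrix via Lemma~\ref{lem:split}. Your version avoids that loss since your pivot block is exactly block-diagonal.

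The genuine gap is your parameter schedule, and it is fatal to the $O(n)$ claim as stated. You ask the sparsifier at every level to achieve distortion $\epsilon = \Theta(1/\log n)$ while outputting only $O(\abs{C_k})$ nonzero blocks with an absolute constant. These two demands are incompatible: spectral sparsification to accuracy $\epsilon$ requires $\Omega(n_k\,\epsilon^{-2})$ edges (and Theorem~\ref{thm:blockBSS}, the tool the paper uses, delivers $10 n r/\epsilon^2$), so your per-level output has $\Theta(n_k \log^2 n)$ blocks and the factorization has $\Theta(n\log^2 n)$ nonzeros, not $O(n)$. Conversely, if you insist on constant per-level density you must take $\epsilon$ constant, and then the composed distortion over $\Theta(\log n)$ levels is $(1+\Omega(1))^{\Theta(\log n)}$, which is unbounded rather than $3/4$. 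The paper's resolution---the one idea your proposal is missing---is a \emph{decreasing} error schedule, $\epsilon_i = 1/8(i+2)^2$: the sum $\sum_i \epsilon_i$ converges, so the total spectral error is a constant, while the level-$i$ density is only $O(\epsilon_i^{-2}) = O((i+2)^4)$, polynomial in the level index. Since $n_i \le (1-c)^i n$ decays geometrically, the total fill $\sum_i n_i \cdot \mathrm{poly}(i)$ (and even $\sum_i n_i\exp(O(\log^2 i))$, which is what the paper actually bounds) is $O(n)$. With this schedule substituted for your uniform $\Theta(1/\log n)$, and with Theorem~\ref{thm:blockBSS} supplying {\bdd}-preserving sparsifiers (so your worry about restoring diagonal dominance is already handled), your argument goes through.
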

\noindent These matrices allow one to solve systems of linear equations in $\MM$ to $\epsilon$-accuracy
  in parallel time $O (\log^{2} n \log^{-1} \epsilon )$ and work $O (m \log^{-1} \epsilon )$.
Results of this form were previously unknown
  even for graph Laplacians.

In the next two sections we explain the ideas used to prove these theorems.
Proofs may be found in the appendices that follow.

%%% Local Variables:
%%% mode: latex
%%% TeX-master: "bsdd-solver"
%%% End:

\section{Background}

We require some standard facts about the $\pleq$ order on matrices.
\begin{fact}\label{fact:orderInverse}
For $\AA$ and $\BB$ positive definite,
$\AA \pgeq \BB$ if and only if $\BB^{-1} \pgeq \AA^{-1}$.
\end{fact}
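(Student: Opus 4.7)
The plan is to establish one direction, and then obtain the converse by symmetry since the hypothesis and conclusion have the same form under the substitution $\AA \leftrightarrow \BB^{-1}$, $\BB \leftrightarrow \AA^{-1}$.

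For the forward direction, assume $\AA \pgeq \BB$ with both positive definite. The key idea is to reduce to a statement about eigenvalues of a single Hermitian matrix by a congruence transformation. Since $\BB$ is positive definite, $\BB^{-1/2}$ exists and is Hermitian. Conjugating both sides of $\AA \pgeq \BB$ by $\BB^{-1/2}$ preserves the order (this is the standard fact that $\XX \pgeq \YY$ implies $\CC^{\dg} \XX \CC \pgeq \CC^{\dg} \YY \CC$ for any $\CC$, applied here with $\CC = \BB^{-1/2}$), yielding
\[
\BB^{-1/2} \AA \BB^{-1/2} \pgeq \BB^{-1/2} \BB \BB^{-1/2} = \II.
\]

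Next I would use that for a Hermitian positive definite matrix $\MM$, the condition $\MM \pgeq \II$ is equivalent to all eigenvalues of $\MM$ being at least $1$, which in turn is equivalent to all eigenvalues of $\MM^{-1}$ being at most $1$, i.e., $\MM^{-1} \pleq \II$. Applying this to $\MM = \BB^{-1/2} \AA \BB^{-1/2}$, whose inverse is $\BB^{1/2} \AA^{-1} \BB^{1/2}$, I get
\[
\BB^{1/2} \AA^{-1} \BB^{1/2} \pleq \II.
\]
Conjugating both sides by $\BB^{-1/2}$ then gives $\AA^{-1} \pleq \BB^{-1}$, as desired.

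For the converse, note that if $\BB^{-1} \pgeq \AA^{-1}$, then applying the forward direction with $\AA' = \BB^{-1}$ and $\BB' = \AA^{-1}$ (both positive definite since $\AA, \BB$ are) yields $(\AA')^{-1} \pleq (\BB')^{-1}$, which is exactly $\BB \pleq \AA$.

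The whole argument is short and standard; the only ``step'' worth naming as potentially subtle is the passage from $\MM \pgeq \II$ to $\MM^{-1} \pleq \II$, which relies on the spectral theorem for Hermitian matrices and the order-reversing property of $x \mapsto 1/x$ on $(0,\infty)$. Everything else is bookkeeping with congruence transformations, and no special structure of connection Laplacians or \bdd{} matrices is needed — this is a purely linear-algebraic fact used as background.
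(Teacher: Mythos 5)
Your proof is correct, and it is the standard argument: congruence by $\BB^{-1/2}$ to reduce to $\MM \pgeq \II \iff \MM^{-1} \pleq \II$ for Hermitian positive definite $\MM$, then congruence back. The paper states this as a background fact and offers no proof of its own, so there is nothing to compare against; your write-up, including the symmetry argument for the converse, is a complete and correct justification.
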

\begin{fact}\label{fact:orderCAC}
If $\AA \pgeq \BB$ and $\CC$ is any matrix of compatible dimension,
  then $\CC \AA \CC^{T} \pgeq \CC \BB \CC^{T}$.
\end{fact}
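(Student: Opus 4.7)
The plan is a direct unwinding of the definition of the semidefinite ordering. Recall that $\AA \pgeq \BB$ means $\AA - \BB$ is Hermitian positive semidefinite, which is equivalent to $\xx^{\dg}(\AA - \BB)\xx \geq 0$ for every vector $\xx$ of compatible dimension. Since the map $\XX \mapsto \CC \XX \CC^{T}$ is linear,
\[
\CC \AA \CC^{T} - \CC \BB \CC^{T} \;=\; \CC (\AA - \BB) \CC^{T},
\]
so it suffices to show the right-hand side is positive semidefinite.

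First I would verify that $\CC(\AA-\BB)\CC^{T}$ is itself Hermitian, so that the ordering is well defined: this is immediate in the real case from the symmetry of $\AA - \BB$, and in the complex Hermitian setting one should read $\CC^{T}$ as $\CC^{\dg}$ (the conjugate transpose, consistently with how $\dg$ is used elsewhere in the paper). Next, for an arbitrary vector $\xx$, I would introduce $\yy = \CC^{T} \xx$ and compute
\[
\xx^{\dg}\,\CC (\AA - \BB) \CC^{T}\, \xx \;=\; \yy^{\dg}(\AA - \BB)\yy \;\geq\; 0,
\]
where the inequality is just the PSD property of $\AA - \BB$ applied to the vector $\yy$. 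This would immediately yield $\CC \AA \CC^{T} \pgeq \CC \BB \CC^{T}$.

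There is essentially no hard step here; the claim is a one-line consequence of the quadratic-form characterization of PSD matrices. The only mildly delicate point is keeping the conjugate-transpose conventions consistent when working over $\complex$, but once one commits to $\dg$ throughout the above computation goes through verbatim, with no appeal to eigenvalues or factorizations.
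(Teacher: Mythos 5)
Your proof is correct: the paper states this as a standard background fact without proof, and your argument---reducing to $\CC(\AA-\BB)\CC^{\dg}\pgeq 0$ via the quadratic form with $\yy=\CC^{\dg}\xx$---is the canonical one. Your remark about reading $\CC^{T}$ as $\CC^{\dg}$ over $\complex$ is exactly the right care to take.
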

We say that $\AA$ is an $\epsilon$-approximation of $\BB$, written
  $\AA  \approx_{\epsilon} \BB $,
if $  e^{\epsilon} \BB \pgeq \AA \pgeq e^{-\epsilon} \BB$.
This relation is symmetric.
We say that $\xxtil$ is an $\epsilon$-approximate solution to the
  system $\AA \xx = \bb$ if
  $\norm{\xxtil - \AA^{-1} \bb}_{\AA} \leq \epsilon \norm{\xx}_{\AA}$,
 where $\norm{\xx}_{\AA} =  (\xx^{T} \AA \xx)^{1/2}$.
If $\epsilon<1/2$, $\AA \approx_{\epsilon} \BB$ and 
  $\BB \xxtil = \bb$, then
  $\xxtil$ is a $2 \epsilon$ approximate solution to
  $\AA \xx = \bb$.
%%% for a proof, see http://www.cs.yale.edu/homes/spielman/561/2012/lect19-12.pdf
\begin{fact}\label{frac:orderComposition}
If $\AA \approx_{\epsilon} \BB$
  and $\BB \approx_{\delta } \CC$, then
  $\AA \approx_{\epsilon + \delta} \CC$.
\end{fact}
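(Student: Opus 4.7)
The plan is to unfold the definition of $\approx_{\cdot}$ in terms of the Loewner order and then chain the resulting sandwich inequalities. First I would rewrite the two hypotheses as $e^{\epsilon}\BB \succeq \AA \succeq e^{-\epsilon}\BB$ and $e^{\delta}\CC \succeq \BB \succeq e^{-\delta}\CC$, obtained by applying the given definition of $\approx$.

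Next I would scale the second pair of inequalities by the positive reals $e^{\epsilon}$ and $e^{-\epsilon}$ respectively, producing $e^{\epsilon+\delta}\CC \succeq e^{\epsilon}\BB$ and $e^{-\epsilon}\BB \succeq e^{-(\epsilon+\delta)}\CC$. That scaling by a positive real preserves $\succeq$ is an immediate consequence of Fact~\ref{fact:orderCAC} with $\CC = e^{\pm \epsilon/2}\II$, or equivalently of the definition of positive semidefiniteness. Chaining these two inequalities through the first pair via transitivity of $\succeq$ then gives $e^{\epsilon+\delta}\CC \succeq \AA \succeq e^{-(\epsilon+\delta)}\CC$, which is precisely the definition of $\AA \approx_{\epsilon+\delta} \CC$.

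There is no real obstacle here: the argument rests only on transitivity of the Loewner order and its invariance under multiplication by a positive scalar, both of which are standard. In a write-up I would condense the entire proof to two or three lines, presenting the scaled inequalities and the transitive chain inline.
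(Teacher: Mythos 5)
Your argument is correct: unfolding the definition, scaling the second sandwich by $e^{\pm\epsilon}$, and chaining by transitivity of $\succeq$ is exactly the standard proof, and the paper itself states this fact without proof precisely because it is this routine. Nothing is missing.
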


We now address one technicality of dealing with $\bdd$ matrices:
  it is not immediate whether or not a $\bdd$ matrix is singular.
Moreover, if it is singular, the structure of its null space is not immediately clear either.
Throughout the rest of this paper, we will consider $\bdd$ matrices to which a small multiple
  of the identity have been added.
These matrices will be nonsingular.
To reduce the problem of solving equations in a general $\bdd$ matrix $\MM$
  to that of solving equations in a nonsingular matrix, we require an estimate
  %of a lower bound 
  of the smallest nonzero eigenvalue of $\MM$.
%\begin{proof}
%%Let $\PPi$ denote the projection onto $\MM$'s rank space.
%%Since $\ZZ$ changed each eigenspace of $\MM$ by at most $\epsilon$
%%multiplicatively, we have
%%\[
%%\PPi \ZZ \PPi \approx_{\epsilon} \MM^{\dag},
%%\]
%%which by Fact~\ref{fact:orderCAC} becomes:
%%\[
%%\MM^{1/2} \PPi \ZZ \PPi \MM^{1/2} \approx_{\epsilon} \PPi.
%%\]
%%
%%Since $\MM^{1/2} \PPi = \MM^{1/2}$, this becomes:
%%\[
%%\MM^{1/2} \ZZ \MM^{1/2} \approx_{\epsilon} \PPi,
%%\]
%%which is a statement about all eigenvalues of a single operator
%%being close to $1$.
%%This is preserved under squaring, so we have:
%%\[
%%\MM^{1/2} \ZZ \MM \ZZ  \MM^{1/2} \approx_{2\epsilon} \PPi,
%%\]
%%or
%%\[
%%\PPi \ZZ \MM \ZZ \PPi \approx_{2\epsilon} \MM^{\dag},
%%\]
%%
%
%\end{proof}
%
\begin{claim}\label{clm:LZ3L_new}
Suppose that all nonzero eigenvalues of $\MM$ are at least $\mu$ and 
$\ZZ \approx_{\epsilon} (\MM + \epsilon \mu \id)^{-1}$ for some $0<\epsilon<1/2$.
Given any $\bb$ such that $\MM \xx = \bb$ for some $\xx$, we have
\[
\norm{\xx-\ZZ \bb}^2_{\MM} \leq 6 \epsilon \norm{x}^2_{\MM}.
\]
\end{claim}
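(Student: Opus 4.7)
The plan is to introduce the intermediate quantity $(\MM + \epsilon\mu\id)^{-1}\bb$ and split
\[
\xx - \ZZ\bb = \bigl(\xx - (\MM + \epsilon\mu\id)^{-1}\bb\bigr) + \bigl((\MM + \epsilon\mu\id)^{-1}\bb - \ZZ\bb\bigr),
\]
bounding each piece separately in the $\MM$-seminorm and combining by the triangle inequality. Write $\MM' \defeq \MM + \epsilon\mu\id$ for brevity. First I would reduce to the case $\xx \in \text{range}(\MM)$: any component of $\xx$ in $\ker(\MM)$ leaves $\bb$, $\|\xx\|_\MM$, and $\|\xx - \ZZ\bb\|_\MM$ unchanged, so we may assume this. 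Since $\MM$ and $\MM'$ commute, they admit a common eigenbasis in which $\MM = \sum_i \lambda_i \ee_i \ee_i^*$ with each $\lambda_i$ either $0$ or at least $\mu$, and $\xx = \sum_i \xx_i \ee_i$ with $\xx_i = 0$ on the kernel.

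For the first (bias) term, a direct calculation in the common eigenbasis gives
\[
\|\xx - \MM'^{-1}\bb\|_\MM^2 = \sum_i \lambda_i |\xx_i|^2 \frac{\epsilon^2\mu^2}{(\lambda_i + \epsilon\mu)^2} \leq \epsilon^2 \|\xx\|_\MM^2,
\]
since for every nonzero $\lambda_i$ (hence $\lambda_i \geq \mu$) we have $\tfrac{\epsilon\mu}{\lambda_i + \epsilon\mu} \leq \epsilon$.

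For the second (approximation) term, $\MM \preceq \MM'$ gives $\|\cdot\|_\MM \leq \|\cdot\|_{\MM'}$, so it suffices to bound the $\MM'$-norm. Set $\WW \defeq \MM'^{1/2}$ and $\SS \defeq \WW\ZZ\WW$. Applying Fact~\ref{fact:orderCAC} to $\ZZ \approx_\epsilon \MM'^{-1}$ yields $\SS \approx_\epsilon \WW\MM'^{-1}\WW = \id$, so $\|\id - \SS\|_{\mathrm{op}} \leq e^\epsilon - 1$. Letting $\cc \defeq \WW^{-1}\bb$ and noting that $\WW(\MM'^{-1}-\ZZ)\WW = \id - \SS$, we obtain
\[
\|(\MM'^{-1} - \ZZ)\bb\|_{\MM'}^2 = \|(\id - \SS)\cc\|^2 \leq (e^\epsilon - 1)^2 \bb^*\MM'^{-1}\bb.
\]
A final eigenbasis calculation gives $\bb^*\MM'^{-1}\bb = \sum_i \tfrac{\lambda_i^2}{\lambda_i + \epsilon\mu}|\xx_i|^2 \leq \sum_i \lambda_i |\xx_i|^2 = \|\xx\|_\MM^2$.

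Combining via the triangle inequality, $\|\xx - \ZZ\bb\|_\MM \leq (\epsilon + (e^\epsilon - 1))\|\xx\|_\MM$. For $0 < \epsilon < 1/2$ a routine Taylor estimate gives $\epsilon + (e^\epsilon - 1) \leq 5\epsilon/2$, so squaring yields $\|\xx - \ZZ\bb\|_\MM^2 \leq (25/4)\epsilon^2 \|\xx\|_\MM^2 \leq 6\epsilon\|\xx\|_\MM^2$, as claimed. The main obstacle is the approximation term: the Loewner-order statement $\ZZ \approx_\epsilon \MM'^{-1}$ does not directly bound the quadratic form $(\MM'^{-1} - \ZZ)^*\MM(\MM'^{-1} - \ZZ)$, and the key trick is the conjugation by $\MM'^{1/2}$ that reduces the problem to the cleaner operator-norm estimate $\|\id - \SS\|_{\mathrm{op}} \leq e^\epsilon - 1$, together with the elementary observation that $\MM \preceq \MM'$ is enough to pull the estimate back to the $\MM$-seminorm.
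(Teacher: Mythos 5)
Your proof is correct, and it takes a genuinely different route from the paper's. The paper expands $\norm{\xx-\ZZ \bb}^2_{\MM}$ directly as $\xx^{\dg}\MM\xx - 2\xx^{\dg}\MM\ZZ\MM\xx + \xx^{\dg}\MM\ZZ\MM\ZZ\MM\xx$, observes that the eigenvalues of $\MM^{1/2}(\MM+\epsilon\mu\id)^{-1}\MM^{1/2}$ (on the range of $\MM$) lie in $[1/(1+\epsilon),1]$, deduces that those of $\MM^{1/2}\ZZ\MM^{1/2}$ lie in $[e^{-2\epsilon},e^{\epsilon}]$, and bounds the whole quadratic form at once by $(1-2e^{-2\epsilon}+e^{2\epsilon})\norm{\xx}_{\MM}^2 \leq 6\epsilon\norm{\xx}_{\MM}^2$. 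You instead split $\xx-\ZZ\bb$ into a bias term $\xx-(\MM+\epsilon\mu\id)^{-1}\bb$ and an approximation term $((\MM+\epsilon\mu\id)^{-1}-\ZZ)\bb$, bound each in (semi)norm, and combine by the triangle inequality. Both hinge on the same two spectral facts (the regularization shifts eigenvalues by at most a $1/(1+\epsilon)$ factor, and $\approx_\epsilon$ conjugates to an $e^{\pm\epsilon}$ operator-norm bound), but your decomposition isolates the two error sources cleanly and makes transparent where each factor of $\epsilon$ comes from, at the cost of a triangle-inequality loss that forces the slightly fussier final estimate $\epsilon + (e^{\epsilon}-1) \leq 5\epsilon/2$ and $(25/4)\epsilon^2 \leq 6\epsilon$; the paper's one-shot expansion is shorter but buries both effects inside a single eigenvalue interval $[e^{-2\epsilon},e^{\epsilon}]$. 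Your explicit reduction to $\xx\in\mathrm{range}(\MM)$ is also a point the paper glosses over (its eigenvalue claims implicitly hold only on the range), so your write-up is, if anything, more careful on that score.
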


%So, we can solve systems in $\MM$ by performing two multiplications
%  by $\MM$ and three solves in an approximation of $\MM + \epsilon \mu \II$.
Hence, we can solve systems in $\MM$ by approximately solving systems in $\MM + \epsilon \mu \II$.
Any lower bound $\mu$ on the smallest nonzero eigenvalue of $\MM$ will suffice.
It only impacts the running times of our algorithms in the numerical precision
  with which one must carry out the computations.

The above claim allows us to solve systems in $\MM$ that have a solution.
If $\MM$ is singular and we want to apply its pseudoinverse
  (that is, to find the closest solution in the range of $\MM$),
  we can do so
  by pre and post multiplying by $\MM$ to project onto its range.
The resulting algorithm, which is implicit in the following claim,
  requires applying a solver for $\MM$ three times.
It also takes $O (\log \kappa)$ times as long to run,
  where $\kappa$ is the finite condition number\footnote{%
The finite condition number is the ratio of the largest singular value
  to the smallest \textit{nonzero} singular value.}
 of $\MM$.
\begin{claim}\label{clm:LZ3L}
Let $\kappa$ be an upper bound on the finite condition number of $\MM$. 
Given an error parameter $0 < \epsilon < 1$,
let $\delta = \nicefrac{\epsilon}{56\kappa^3}$.
If all nonzero eigenvalues of $\MM$ are at least $\mu$, and if
$\ZZ \approx_{\delta} (\MM + \epsilon \mu \II)^{-1}$,
then
\[
 \MM \ZZ^{3} \MM\approx_{4 \epsilon} \MM^{+}.
\]
\end{claim}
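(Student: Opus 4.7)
I would introduce the regularised inverse $\TT := (\MM + \epsilon \mu \II)^{-1}$ and use transitivity of the $\approx$-relation (Fact~\ref{frac:orderComposition}) to chain the two bounds
\[
\MM \TT^{3} \MM \approx_{3\epsilon} \MM^{+}
\qquad \text{and} \qquad
\MM \ZZ^{3} \MM \approx_{\epsilon} \MM \TT^{3} \MM,
\]
which together yield the desired $\MM \ZZ^{3} \MM \approx_{4\epsilon} \MM^{+}$.

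The first bound is a clean spectral computation. Since $\MM$ and $\TT$ are simultaneously diagonalisable, it suffices to check eigenvalue ratios. For each nonzero eigenvalue $\lambda \in [\mu, \kappa\mu]$ of $\MM$, the corresponding eigenvalue of $\MM \TT^{3} \MM$ is $\lambda^{2}/(\lambda + \epsilon\mu)^{3}$ and that of $\MM^{+}$ is $1/\lambda$; their ratio is $(1 + \epsilon\mu/\lambda)^{-3} \in [(1+\epsilon)^{-3}, 1] \subseteq [e^{-3\epsilon}, e^{3\epsilon}]$. On $\ker(\MM)$ both operators vanish, so $\MM \TT^{3} \MM \approx_{3\epsilon} \MM^{+}$ holds globally.

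The second bound is the technical heart. Write $\ZZ = \TT + \FF$; translating $\ZZ \approx_{\delta} \TT$ via the substitution $\EE := \TT^{-1/2} \ZZ \TT^{-1/2} \approx_{\delta} \II$ gives $\FF = \TT^{1/2}(\EE - \II)\TT^{1/2}$ with $\norm{\EE - \II} \leq e^{\delta} - 1 \leq 2\delta$. I would expand $\ZZ^{3} - \TT^{3}$ as a sum of monomials in $\TT$ and $\FF$ of degree between $1$ and $3$ in $\FF$, sandwich each monomial with $\MM$ on both sides, and dominate it in the PSD sense by a multiple of $\MM^{+}$. The essential structural input is that $\MM$ annihilates $\ker(\MM)$ while $\TT$ acts on that subspace as the large multiplier $1/(\epsilon\mu)$: decomposing the ambient space as $\operatorname{range}(\MM) \oplus \ker(\MM)$, each $\MM$-sandwiched monomial either stays on the range (where $\norm{\MM \TT} \leq 1$ and $\norm{\TT} \leq 1/\mu$) or picks up factors of $1/\sqrt{\epsilon\mu}$ from off-diagonal blocks of $\FF$ crossing into the kernel. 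Using $\MM^{+} \pgeq (\kappa\mu)^{-1}\PPi$ on the range, each error monomial is dominated by $O(\delta\kappa^{3}) \cdot \MM^{+}$ (higher-degree-in-$\FF$ terms carry additional factors of $\delta$ and are smaller), and the specific choice $\delta = \epsilon/(56\kappa^{3})$ is calibrated so that the aggregated error is at most $(e^{\epsilon}-1)\MM^{+}$.

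\textbf{Main obstacle.} Matrix cubing is not monotone under $\pleq$, so the approximation $\ZZ \approx_{\delta} \TT$ cannot be lifted directly to $\ZZ^{3} \approx_{3\delta} \TT^{3}$ by pure ordering manipulations: operator monotonicity breaks for powers $>1$. One is therefore forced into the term-by-term expansion outlined above, and the cubic-in-$\kappa$ blow-up of $\delta$ comes exactly from the worst monomial, in which $\FF$ is flanked by $\TT$-factors whose off-diagonal blocks see the inflated norm of $\TT$ on $\ker(\MM)$.
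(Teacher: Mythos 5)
Your proposal is correct, and its skeleton coincides with the paper's: both routes pass through the intermediate operator $\MM \TT^{3} \MM$ with $\TT = (\MM+\epsilon\mu\II)^{-1}$, both prove $\MM\TT^{3}\MM \approx_{3\epsilon} \MM^{+}$ by exactly the eigenvalue-ratio computation you give, and both conclude by transitivity (Fact~\ref{frac:orderComposition}). The difference is in how $\MM\ZZ^{3}\MM \approx_{\epsilon} \MM\TT^{3}\MM$ is obtained. The paper delegates it to a separate perturbation claim (Claim~\ref{clm:M3Z3}): for positive definite $\AA \approx_{\delta} \BB$ with $\AA$ of condition number $\kappa$, telescoping identities give $\norm{\AA^{3}-\BB^{3}} \leq 28\delta \norm{\AA}^{3} \leq 28\delta\kappa^{3}\lambda_{min}(\AA^{3})$ and hence $\AA^{3} \approx_{O(\delta\kappa^{3})} \BB^{3}$; applying this with $\AA = \TT$, $\BB=\ZZ$ and then conjugating by $\MM$ via Fact~\ref{fact:orderCAC} finishes. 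Your sandwich-first, expand-by-monomials argument with the $\operatorname{range}(\MM)\oplus\ker(\MM)$ block analysis is heavier, but it buys robustness on a point the paper glosses over: the condition number entering Claim~\ref{clm:M3Z3} is that of $\TT$ itself, which equals $\kappa$ only when $\MM$ is nonsingular; when $\MM$ is singular, $\TT$ has the eigenvalue $(\epsilon\mu)^{-1}$ on $\ker(\MM)$ and its condition number inflates to roughly $\kappa/\epsilon$, so the paper's step implicitly requires restricting everything to $\operatorname{range}(\MM)$ --- which is not automatic, since $\ZZ$, unlike $\TT$, need not commute with the projection onto that range. Your tracking of the off-diagonal blocks of $\FF$ confronts exactly this issue and shows they contribute only at second order in $\delta$. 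Two minor points to tighten in a full write-up: the individual monomials are not Hermitian, so pair each with its adjoint (or bound the operator norm of the entire Hermitian error $\MM(\ZZ^{3}-\TT^{3})\MM$) before comparing it with $\MM^{+}$ on the range; and your per-monomial bound $O(\delta\kappa^{3})\MM^{+}$ is safe but loose --- the worst first-order term is already $O(\delta\kappa^{2})\MM^{+}$ --- so the budget $\delta = \epsilon/(56\kappa^{3})$ closes the argument with room to spare.
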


%%% Local Variables:
%%% mode: latex
%%% TeX-master: "bsdd-solver"
%%% End:

\newcommand{\subsolver}{\textsc{Subsolver}}

\section{Overview of the algorithms}

We index the block rows and columns of a block matrix by a set of
  vertices (or indices) $V$.
When we perform an elimination, we eliminate a set $F \subset V$,
  and let $C = V - F$.
Here, $F$ stands for ``fine'' and $C$ stands for ``coarse''.
In contrast with standard multigrid methods, we will have $\sizeof{F} \leq \sizeof{C}$.

To describe block-Cholesky factorization, we write the matrix $\MM$
  with the rows and columns in $F$ first:
\[
\MM
=
\left[
\begin{array}{cc}
\blk{\MM}{F,F} & \blk{\MM}{F,C}\\
\blk{\MM}{C,F} & \blk{\MM}{C,C}
\end{array}
\right].
\]
Cholesky factorization writes the inverse of this matrix as
\begin{equation}\label{eqn:blockInverse}
\MM^{-1}
=
\left[
\begin{array}{cc}
\II & -\blk{\MM}{F,F}^{-1} \blk{\MM}{F,C}\\
0 & \II
\end{array}
\right]
\left[
\begin{array}{cc}
\blk{\MM}{F,F}^{-1} & 0 \\
0 & \schur{\MM}{F}^{-1}
\end{array}
\right]
\left[
\begin{array}{cc}
\II & 0\\
-\blk{\MM}{C,F} \blk{\MM}{F,F}^{-1} & \II
\end{array}
\right],
\end{equation}
where
\[
\schur{\MM}{F} \defeq \blk{\MM}{C,C} - \blk{\MM}{C,F} \blk{\MM}{F,F}^{-1} \blk{\MM}{F,C}
\]
is the Schur complement of $\MM$ with respect to $F$.

Our algorithms rely on two fundamental facts about $\bdd$ matrices:
  that the Schur complement of a $\bdd$ matrix is a $\bdd$ matrix (Lemma \ref{lem:schurClosed})
  and that one can sparsify $\bdd$ matrices.
The following theorem is implicit in \cite{CarliSilvaHS}.
\begin{theorem}\label{thm:blockBSS}
For every $\epsilon \leq 1$, every {\bdd} matrix $\MM$ with $n$ 
  $r$-dimensional block rows and columns
  can be $\epsilon$-approximated by a {\bdd} matrix having at most
  $10 n r/ \epsilon^{2}$ nonzero blocks.
\end{theorem}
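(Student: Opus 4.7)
My plan is to decompose $\MM$ as a sum of small, PSD \bdd{} ``edge'' pieces---one per nonzero off-diagonal block---and then invoke the spectral sparsification theorem of de Carli Silva, Harvey, and Sato~\cite{CarliSilvaHS}, which takes any decomposition $L=\sum_k L_k$ of a PSD matrix of real dimension $d$ into PSD summands and returns a nonnegative reweighting $\sum_k s_k L_k\approx_{\epsilon}L$ using only $O(d/\epsilon^2)$ summands.

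First I would construct the decomposition. For each pair $i<j$ with $B\defeq\blk{\MM}{i,j}\neq 0$, define $L_{ij}$ to be the block matrix whose only nonzero blocks are $\blk{L_{ij}}{i,i}=\blk{L_{ij}}{j,j}=\|B\|\,\id_r$, $\blk{L_{ij}}{i,j}=B$, and $\blk{L_{ij}}{j,i}=B^{\dg}$. Restricted to its $2r$-dimensional support, this matrix has Schur complement $\|B\|\id_r - B^{\dg}B/\|B\|\succeq 0$ (since $B^{\dg}B\preceq\|B\|^2\id_r$), so $L_{ij}\succeq 0$, and it is \bdd{} by construction. For each $i$, the \bdd{} hypothesis on $\MM$ forces $E_i\defeq\blk{\MM}{i,i}-\id_r\sum_{j\neq i}\|\blk{\MM}{i,j}\|$ to be PSD; let $L_i$ be the matrix whose only nonzero block is $\blk{L_i}{i,i}=E_i$, a PSD and trivially \bdd{} piece. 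Summing contributions verifies $\MM = \sum_{i<j} L_{ij} + \sum_{i} L_i$.

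Next I would apply~\cite{CarliSilvaHS} to this decomposition, after embedding the Hermitian $nr\times nr$ matrix $\MM$ as a symmetric real $2nr\times 2nr$ matrix, and perturbing by $\delta\id$ and taking $\delta\downarrow 0$ if $\MM$ is singular. This will produce nonnegative weights $\{s_{ij}\},\{s_i\}$ with at most $O(nr/\epsilon^2)$ nonzero, so that $\MMtil\defeq\sum_{i<j}s_{ij}L_{ij}+\sum_i s_i L_i \approx_{\epsilon}\MM$. The \bdd{} property passes to conic combinations: the triangle inequality gives $\|\sum_k s_k\blk{L_k}{i,j}\|\le\sum_k s_k\|\blk{L_k}{i,j}\|$, so summing the per-piece diagonal-dominance inequalities yields the one for $\MMtil$. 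Finally, each surviving edge contributes two nonzero off-diagonal blocks and the diagonal contributes at most $n$ more, so $\MMtil$ has at most $O(nr/\epsilon^2) + n \le 10nr/\epsilon^2$ nonzero blocks after a suitable rescaling of the error parameter fed to~\cite{CarliSilvaHS}.

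The hard part will be bookkeeping constants precisely: pinning down the leading $10$ requires passing a rescaled error parameter into~\cite{CarliSilvaHS} to absorb the factor of $2$ from the real embedding and the conversion between $(1+\epsilon)/(1-\epsilon)$-type bounds and the paper's $\approx_{\epsilon}$ notation. Conceptually, once the edge decomposition is in place, everything else follows mechanically from~\cite{CarliSilvaHS} and from the closure of \bdd{} under conic combinations.
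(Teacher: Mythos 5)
The paper states this theorem with only the remark that it is ``implicit in \cite{CarliSilvaHS}'' and supplies no proof, so there is nothing in the text to compare against; your reconstruction is the natural one and it is correct in its structure. Your edge decomposition $\MM = \sum_{i<j} L_{ij} + \sum_i L_i$ is exactly the coarsening of the paper's own Lemma~\ref{lem:bsdd-factorization}, which writes $\MM = \XX + \BB\BB^{\dg}$ with $\BB$ a unitary edge-vertex transfer matrix: your $L_{ij}$ equals the sum $\BB^{(1)}_{\{i,j\}}(\BB^{(1)}_{\{i,j\}})^{\dg} + \BB^{(2)}_{\{i,j\}}(\BB^{(2)}_{\{i,j\}})^{\dg}$ of the two rank-$r$ outer products constructed there. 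Your PSD check via the Schur complement of the $2r\times 2r$ support, the PSD-ness of the diagonal residuals $E_i$, and the verification that the pieces sum back to $\MM$ are all right. The observation that the \bdd{} cone is closed under nonnegative linear combinations, via the triangle inequality for operator norms on the off-diagonal blocks, is the crucial structural point and you state it correctly; it is what certifies that the reweighted output remains \bdd{}.

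The one soft spot, which you explicitly flag, is the numerical constant $10$. Your estimate $O(nr/\epsilon^2) + n \leq 10 nr/\epsilon^{2}$ requires tracing the hidden constant through the real embedding (doubling the ambient dimension to $2nr$), the conversion from a $(1\pm\epsilon)$-type guarantee to the paper's $e^{\pm\epsilon}$ notion (roughly another factor of $2$ in the error parameter fed to \cite{CarliSilvaHS}, hence $4$ in the term count), and the accounting that each selected $L_{ij}$ adds two off-diagonal blocks on top of at most $n$ shared diagonal blocks. Since the paper itself never verifies this constant, I would not call it a gap in your argument, but pinning it down does require exactly the bookkeeping you acknowledge leaving open.
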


We use the identity \eqref{eqn:blockInverse} to reduce the problem of solving
  a system of equations in $\MM$ to that of solving equations in its Schur complement.
The easiest part of this is multiplication by $\blk{\MM}{C,F}$:
  the time is proportional to the number of nonzero entries in the submatrix,
  and we can sparsify $\MM$ to guarantee that this is small.

The costlier part of the reduction is the application of the   
  inverse of $\blk{\MM}{F,F}$ three times.
This would be fast if $\blk{\MM}{F,F}$ were block-diagonal, which corresponds
  to $F$ being an independent set.
We cannot find a sufficiently large independent set $F$,
  but we can find a large set that is almost independent.
This results in a matrix $\blk{\MM}{F,F}$ that is well approximated
  by its diagonal, and thus linear equations in this matrix
  can be quickly solved to high accuracy by a few Jacobi iterations
  (see Theorem \ref{thm:jacobi} and Section \ref{sec:jacobi}).

We can prove the existence of linear sized approximate inverses by
  explicitly computing $\schur{\MM}{F}$, sparsifying it,
  and then recursively applying the algorithm just described.
To make this algorithm efficient, we must compute a sparse approximation
  to $\schur{\MM}{F}$ without constructing $\schur{\MM}{F}$.
This is the problem of \textit{spectral vertex sparsification},
  and we provide a fast algorithm for this task in Sections
  \ref{sec:overviewSquaring} and
  \ref{sec:vertexSparsify}.

%%% Local Variables:
%%% mode: latex
%%% TeX-master: "bsdd-solver"
%%% End:

\subsection{Schur Complement Chains}
We encode this recursive algorithm by a \textit{Schur complement chain} (SCC).
An SCC defines a linear operator that can be used to approximately
  solve equations in an initial matrix $\MM^{(0)}$.
If the $\bdd$ matrix $\MM$ is sparse, then it is the same as $\MM^{(0)}$;
  if not, then $\MM^{(0)}$ is a sparse approximation to $\MM$.
Let $F_{1}$ be the first set of vertices eliminated,
 $\MM^{(1)}$ a sparse approximation to the Schur complement
  of $\MM^{(0)}$ with respect to $F_{1},$ and   $\ZZ^{(1)}$ an operators that approximates the inverse of
  $\blk{\MM}{F_1,F_1}^{(0)}.$
\begin{definition}[Schur Complement Chain]
\label{def:chain}
An $\eepsilon$-Schur complement chain ($\eepsilon$-SCC) for a matrix
  $\MM^{{0}}$ indexed by vertex set $V$ is a sequence of operators and subsets,
\[
\small
\left( (\MM^{(1)}, \ZZ^{(1)}) , \ldots, (\MM^{(d)},
  \ZZ^{(d)}); F_1,  \ldots, F_{d} \right) ,
\]
  so that for $C_{0} = V$ and 
  $C_{i+1} = C_{i} \setminus F_{i+1} $, $\sizeof{C_{d}} \leq 1000$ and
  for $1 \leq i \leq d$,
\[
\small
\textstyle\MM^{(i)} \approx_{\epsilon_{i}} \schur{\MM^{(i-1)}}{F_i}
\quad \text{ and } \quad 
0 \preceq (\ZZ^{(i)})^{-1} -
  \blk{\MM}{F_{i},F_{i}}^{(i-1)} \preceq \epsilon_{i}\cdot \schur{\MM^{(i-1)}}{C_{i}}.
\]
\end{definition}

% OLD INDEXING FOR CHAINS
% \begin{definition}[Schur Complement Chain]
% \label{def:chain}
% An $\eepsilon$-Schur complement chain ($\eepsilon$-SCC) for a matrix
%   $\MM^{{0}}$ indexed by vertex set $V$ is a sequence of operators and subsets,
% \[
% \left( (\MM^{(1)}, \ZZ^{(1)}) , \ldots, (\MM^{(d)},
%   \ZZ^{(d)});  \MM^{(d+1)} ; F_1,  \ldots, F_{d+1} \right) ,
% \]
%   so that for $C_{0} = V$ and 
%   $C_{i+1} = C_{i} \setminus F_{i+1} $, $\sizeof{C_{d+1}} \leq \sizeof{C_{0}}^{1/3}$ and
%   for $0 \leq i \leq d$,
% \[
% \MM^{(i+1)} \approx_{\epsilon_{i+1}} \schur{\MM^{(i)}}{F_i}
% \quad \text{ and } \quad 
% 0 \preceq (\ZZ^{(i)})^{-1} -
%   \blk{\MM^{(i)}}{F_{i},F_{i}} \preceq \epsilon_{i}\cdot \schur{\MM^{(i)}}{C_{i}}.
% \]
% \end{definition}

The algorithm \textsc{ApplyChain}, described in Section \ref{sec:chains},
  applies an SCC to solve equations in $\MM^{(0)}$ in the natural way
  and satisfies the following guarantee.
\begin{restatable}[]{lemma}{applychain}
\label{lem:apply_chain}
Consider an $\eepsilon$-SCC for $\MM^{(0)}$ where $\MM^{(i)}$ and
$\ZZ^{(i)}$ can be applied to a vector in work
$W_{\MM^{(i)}}, W_{\DD^{(i)}}$ and depth
$D_{\MM^{(i)}}, D_{\ZZ^{(i)}}$ respectively.

The algorithm
$\textsc{ApplyChain}\left(\small
 (\MM^{(1)}, \ZZ^{(1)}) , \ldots,
  (\MM^{(d)}, \ZZ^{(d)}); F_1, \ldots, F_{d} ; \bb \right)$
corresponds to a linear operator $\WW$ acting on $\bb$ such that
\begin{enumerate}
\item
$\small \WW^{-1} \approx_{\sum_{i = 1}^{d} 2\epsilon_i} \MM^{(0)},$ and
\item for any vector $\bb$, $\textsc{ApplyChain}\left( \small (\MM^{(1)}, \ZZ^{(1)}) , \ldots, (\MM^{(d)},
  \ZZ^{(d)}); F_1,
  \ldots, F_{d} ; \bb \right)$ runs in \\
$O\left(\small \sum_{i = 1}^{d} \left( D_{\MM^{(i)}} + D_{\ZZ^{(i)}} \right)\right)$ depth
and $O\left(\small \sum_{i = 1}^{d}  \left(  W_{\MM^{(i)}} + W_{\ZZ^{(i)}} \right)\right)$ work.
\end{enumerate}
\end{restatable}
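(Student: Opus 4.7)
The plan is to prove part~1 by reverse induction on $i$, establishing the invariant that the linear operator $\WW^{(i)}$ implicitly applied by \textsc{ApplyChain} starting at level $i$ satisfies $(\WW^{(i)})^{-1} \Approx{\alpha_i} \MM^{(i)}$ with $\alpha_i = \sum_{j > i} 2\epsilon_j$. The base case $i = d$ is immediate: since $\sizeof{C_d} \leq 1000$, the bottom of the recursion inverts $\MM^{(d)}$ exactly, so $\WW^{(d)} = (\MM^{(d)})^{-1}$ and $\alpha_d = 0$. The inductive step identifies one unwinding of \textsc{ApplyChain} as the block factorization \eqref{eqn:blockInverse} associated with $\MM^{(i-1)}$ and partition $F_i \cup C_i$, with $(\blk{\MM}{F_i, F_i}^{(i-1)})^{-1}$ replaced by $\ZZ^{(i)}$ and $\schur{\MM^{(i-1)}}{F_i}^{-1}$ replaced by the recursively-good $\WW^{(i)}$.

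The next step is to introduce the auxiliary matrix $\hat{\MM}$ obtained from $\MM^{(i-1)}$ by overwriting its $F_i F_i$-block with $(\ZZ^{(i)})^{-1}$. Expanding and then inverting the factorization above yields two clean identities. First, both $\hat{\MM}$ and $(\WW^{(i-1)})^{-1}$ factor as $\LL \, \DD \, \LL^{\dg}$ with a common outer factor
\[
\LL = \begin{pmatrix} \II & 0 \\ \blk{\MM}{C_i, F_i}^{(i-1)} \ZZ^{(i)} & \II \end{pmatrix}
\]
and central block-diagonals that agree on the $F_i F_i$-block and differ on the $C_i C_i$-block, which equals $\schur{\hat{\MM}}{F_i}$ for $\hat{\MM}$ and $(\WW^{(i)})^{-1}$ for $(\WW^{(i-1)})^{-1}$. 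Second, $\hat{\MM} - \MM^{(i-1)}$ is supported entirely on the $F_i F_i$-block and equals $(\ZZ^{(i)})^{-1} - \blk{\MM}{F_i, F_i}^{(i-1)}$.

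The main algebraic step combines these identities with the SCC conditions, taking care to exploit one-sidedness at every opportunity. The one-sided part $(\ZZ^{(i)})^{-1} \pgeq \blk{\MM}{F_i, F_i}^{(i-1)}$ of the SCC gives $\hat{\MM} \pgeq \MM^{(i-1)}$ for free. The companion upper bound $(\ZZ^{(i)})^{-1} - \blk{\MM}{F_i, F_i}^{(i-1)} \pleq \epsilon_i \schur{\MM^{(i-1)}}{C_i}$, together with the inequality $\MM^{(i-1)} \pgeq \bigl(\begin{smallmatrix} \schur{\MM^{(i-1)}}{C_i} & 0 \\ 0 & 0 \end{smallmatrix}\bigr)$ (which I would prove by minimizing the quadratic form $\xx^{\dg} \MM^{(i-1)} \xx$ over the $C_i$-block of $\xx$), yields $\hat{\MM} \pleq e^{\epsilon_i} \MM^{(i-1)}$. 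Monotonicity of the Schur complement in the PSD order transfers both bounds to $\schur{\hat{\MM}}{F_i}$ versus $\schur{\MM^{(i-1)}}{F_i}$. Chaining the SCC bound $\MM^{(i)} \Approx{\epsilon_i} \schur{\MM^{(i-1)}}{F_i}$ with the inductive hypothesis and these one-sided Schur inequalities gives the asymmetric bounds $e^{-\alpha_i - 2\epsilon_i} \schur{\hat{\MM}}{F_i} \pleq (\WW^{(i)})^{-1} \pleq e^{\alpha_i + \epsilon_i} \schur{\hat{\MM}}{F_i}$. Fact~\ref{fact:orderCAC} applied through the common $\LL$-factorization lifts these to the same asymmetric pair for $(\WW^{(i-1)})^{-1}$ and $\hat{\MM}$, and one final chain with $\MM^{(i-1)} \pleq \hat{\MM} \pleq e^{\epsilon_i} \MM^{(i-1)}$ adds $\epsilon_i$ to the upper side and nothing to the lower, closing the induction at $(\WW^{(i-1)})^{-1} \Approx{\alpha_i + 2\epsilon_i} \MM^{(i-1)}$.

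Part~2 is routine bookkeeping: one level of \textsc{ApplyChain} performs a constant number of products with off-diagonal blocks of $\MM^{(i-1)}$ (bounded by $W_{\MM^{(i-1)}}$ and $D_{\MM^{(i-1)}}$) and a constant number of applications of $\ZZ^{(i)}$, then recurses, so the work and depth telescope across the $d$ levels. The hard part of the whole argument is the asymmetric accounting in part~1: a naive two-sided chaining yields $3\epsilon_i$ per level rather than $2\epsilon_i$, and only the combined use of the one-sided SCC bound $(\ZZ^{(i)})^{-1} \pgeq \blk{\MM}{F_i, F_i}^{(i-1)}$ and the matching favorable direction of Schur complement monotonicity causes the two $\epsilon_i$ contributions (from the $F_i F_i$ replacement and from the Schur sparsification) to merge into a single $\epsilon_i$ in each direction of the final bound.
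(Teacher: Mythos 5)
Your proposal is correct and follows essentially the same route as the paper: the same backwards induction on the factorized operators $\WW^{(i)}$, the same auxiliary matrix $\hat{\MM}$ with the $F_iF_i$-block overwritten by $(\ZZ^{(i)})^{-1}$, the same use of the quadratic-form characterization of $\schur{\MM^{(i-1)}}{C_i}$ to get $\MM^{(i-1)} \pleq \hat{\MM} \pleq e^{\epsilon_i}\MM^{(i-1)}$, and the same appeal to Schur-complement monotonicity and Fact~\ref{fact:orderCAC}. The only difference is organizational — the paper packages the one-sided accounting into Lemmas~\ref{lem:block-sub-error} and~\ref{lem:block-sub-equiv}, whereas you inline it by tracking asymmetric upper and lower bounds explicitly; both yield the same $2\epsilon_i$ per level.
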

\noindent For an $\eepsilon$-SCC chain for $\MM^{(0)},$ we define the depth and
work of the chain to be the work and depth required by
$\textsc{ApplyChain}$ to apply the SCC.

\subsection{Choosing $F_{i}$: {$\alpha$-\bdd} matrices}

We must choose the set of vertices $F_{i}$ so that
  we can approximate the inverse of
  $\blk{\MM}{F_{i},F_{i}}^{(i)}$ by an operator
  $\ZZ^{(i)}$ that is efficiently computable.
We do this by requiring that the matrix
  $\blk{\MM}{F_{i},F_{i}}^{(i)}$
  be $\alpha$-block diagonally dominant (\abdd), a term
  that we now define.
\begin{definition}
\label{def:strongDD}
A Hermitian block-matrix $\MM$ is {\abdd} 
  if
\begin{equation}\label{eqn:abdd}
\textstyle \forall i, \quad
\blk{\MM}{i,i} \pgeq (1+\alpha) \id_{r} \cdot \text{$\sum_{j:j \neq i} \norm{\blk{\MM}{i,j}}$}.
\end{equation}
% We say that a subset $F$ of the block-rows of a block-matrix $\MM$ is {\abdd}
% %$\alpha$-strongly block diagonally dominant (denoted {\abdd})
% if $\blk{\MM}{F,F}$ is an
% {\abdd} matrix.
%\sushant{  check if this name/notation is okay}
\end{definition}
\noindent We remark that a $0$-{\bdd} matrix is
simply a {\bdd} matrix. In particular, for $r=1,$ Laplacian matrices are $0$-\bdd.

By picking a subset of rows at random and discarding those
  that violate condition \eqref{eqn:abdd}, the algorithm \textsc{bDDSubset} (described in
Section~\ref{sec:absdd}) finds a linear sized subset $F$ of the
block-rows of a {\bdd} matrix $\MM$ so that $\blk{\MM}{F,F}$ is
{\abdd}.
\begin{restatable}[]{lemma}{abddSubsetAlgo}
\label{lem:abddSubset}
  Given a {\bdd} matrix $\MM$ with $n$ block-rows, and an
  $\alpha \geq 0$, \textsc{bDDSubset} computes a subset $F$
  of size least $n/(8 (1+\alpha))$ such that $\blk{\MM}{F,F}$ is {\abdd}.
It runs in
 runs in $O(m)$ expected work and $O(\log{n})$ expected depth,
 where $m$
  is the number of nonzero blocks in $\MM$.
\end{restatable}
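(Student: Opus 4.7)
The plan is to use a Markov-based randomized sampler: include each index independently with a carefully chosen probability $p$, then discard those sampled rows whose restricted off-diagonal mass is too large to restore the $(1+\alpha)$-dominance. Concretely, I would have \textsc{bDDSubset} sample $S \subseteq [n]$ by including each $i$ independently with probability $p := \nicefrac{1}{2(1+\alpha)}$, compute
\[
Y_i \;:=\; \sum_{j \in S,\, j \neq i} \norm{\blk{\MM}{i,j}} \qquad \text{for each } i \in S,
\]
and return $F := \{\, i \in S \,:\, \blk{\MM}{i,i} \succeq (1+\alpha)\, Y_i\, \id_r \,\}$, restarting if $|F| < n/(8(1+\alpha))$. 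Correctness is then automatic: since $F \subseteq S$, each $i \in F$ satisfies $\blk{\MM}{i,i} \succeq (1+\alpha)\sum_{j \in F,\, j \neq i}\norm{\blk{\MM}{i,j}}\, \id_r$, so $\blk{\MM}{F,F}$ is $\abdd$.

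For the size lower bound I would apply Markov's inequality twice. First, fix $i$ and write $R_i := \sum_{j \neq i}\norm{\blk{\MM}{i,j}}$. Because $Y_i$ is a function only of the sampling decisions for indices $j \neq i$, it is independent of the event $\{i \in S\}$ and satisfies $E[Y_i] = p R_i$. Markov then gives $\Pr[Y_i > R_i/(1+\alpha)] = \Pr[Y_i > 2 p R_i] \leq 1/2$, so $\Pr[i \in F] \geq p/2 = \nicefrac{1}{4(1+\alpha)}$, and hence $E[|F|] \geq n/(4(1+\alpha))$. A second (reverse) Markov argument applied to the nonnegative, $n$-bounded quantity $n - |F|$ then shows that $|F| \geq n/(8(1+\alpha))$ with probability at least a positive constant, so only $O(1)$ restarts are needed in expectation.

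For the complexity bound, each iteration draws $n$ independent Bernoullis, evaluates the $Y_i$ by scattering each nonzero norm $\norm{\blk{\MM}{i,j}}$ into the appropriate bucket whenever both endpoints lie in $S$ via a standard parallel aggregation, and then performs a scalar comparison per row to decide membership in $F$ (the inequality is scalar because $Y_i \id_r$ is a multiple of the identity). That is $O(m)$ work and $O(\log n)$ depth per iteration, and multiplying by the $O(1)$ expected iteration count yields the claimed $O(m)$ expected work and $O(\log n)$ expected depth.

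The main obstacle will be the second step: guaranteeing that a single round produces $|F| \geq n/(8(1+\alpha))$ with constant probability. The reverse-Markov argument goes through precisely because the sampling rate $p = \nicefrac{1}{2(1+\alpha)}$ makes $E[|F|]$ exceed the acceptance threshold by a factor of two, which is the minimal slack Markov can exploit; any tighter analysis would require higher-moment concentration on $\sum_i \one[i \in F]$, which is awkward since the indicators are not independent, while a different choice of $p$ would either shrink the mean or weaken the per-row tail bound.
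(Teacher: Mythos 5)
Your construction of $F$ and the argument that $\blk{\MM}{F,F}$ is $\alpha$-{\bdd} are fine (your membership test compares $\blk{\MM}{i,i}$ directly against $(1+\alpha)Y_i\id_r$ rather than comparing $\sum_{j\neq i}\norm{\blk{\MM}{i,j}}$ against $(1+\alpha)Y_i$ as the paper does, but since $\MM$ is {\bdd} both tests certify the required dominance). The gap is in the second step, the claim that a single round succeeds with constant probability. Reverse Markov applied to $|F|$ with the trivial upper bound $n$ gives only
\[
\prob{}{|F| \geq \tfrac{n}{8(1+\alpha)}} \;\geq\; \frac{\expec{}{|F|} - \tfrac{n}{8(1+\alpha)}}{n - \tfrac{n}{8(1+\alpha)}} \;\geq\; \frac{1}{8(1+\alpha)},
\]
which is \emph{not} a positive constant independent of $\alpha$: the mean of $|F|$ is $\Theta(n/(1+\alpha))$ while the a.s.\ upper bound is $n$, so the factor-of-two slack between mean and threshold buys you only an $\Omega(1/(1+\alpha))$ success probability. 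This makes the expected number of restarts $O(1+\alpha)$ and the expected work $O(m(1+\alpha))$, which does not match the stated $O(m)$ bound for general $\alpha \geq 0$. A secondary issue is that under independent Bernoulli sampling $|S|$ is itself random, so even a good bound on the discard count $|S\setminus F|$ must be combined with a lower tail bound on $|S|$, and with your parameters ($p = \tfrac{1}{2(1+\alpha)}$, per-row conditional failure probability only $\leq 1/2$, hence $\expec{}{|S\setminus F|} \leq \tfrac12 \expec{}{|S|}$) a Markov-plus-union-bound accounting does not close.

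The paper sidesteps both problems by drawing $F'$ as a uniform random subset of \emph{fixed} size exactly $\tfrac{n}{4(1+\alpha)}$, so the only randomness in $|F| = |F'| - |F'\setminus F|$ is the discard count, and by using a sampling rate with a factor-$4$ (not factor-$2$) gap in the per-row Markov step: conditioned on $i \in F'$, each other $j$ lands in $F'$ with probability $< \tfrac{1}{4(1+\alpha)}$, so the conditional failure probability of row $i$ is $< 1/4$ and $\expec{}{|F'\setminus F|} \leq \tfrac{n}{16(1+\alpha)}$. Forward Markov on this discard count against the threshold $\tfrac{n}{8(1+\alpha)}$ then gives failure probability $< 1/2$ per round, hence at most $2$ expected rounds, for any $\alpha$. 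To repair your version you would need both to sharpen the per-row tail (e.g.\ take $p = \tfrac{1}{4(1+\alpha)}$) and to control the lower tail of $|S|$ by something stronger than Markov; adopting the fixed-size sample is the cleaner fix.
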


We can express an {\abdd} matrix as a sum of a block diagonal matrix
  and a {\bdd} matrix so that it is well-approximated by the diagonal.

\begin{restatable}[]{lemma}{split}
\label{lem:split}
Every {\abdd} matrix $\MM$ can be written in the form $\XX + \LL$
  where $\XX$ is block-diagonal, $\LL$ is {\bdd}, and
  $\XX  \pgeq \frac{\alpha}{2} \LL$.
\end{restatable}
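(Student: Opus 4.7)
The plan is to split $\MM$ in the most natural way: put into $\LL$ the off-diagonal blocks of $\MM$ together with the minimal diagonal needed to make $\LL$ a {\bdd} matrix, and assign the remainder to $\XX$. The $\alpha$-slack in the definition of {\abdd} then becomes the slack that makes $\XX$ dominate $\LL$.

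Concretely, for each $i$ let $s_{i} \defeq \sum_{j \ne i} \norm{\blk{\MM}{i,j}}$. Define $\LL$ to have the same off-diagonal blocks as $\MM$ and diagonal blocks $\blk{\LL}{i,i} = s_{i}\, \id_{r}$, and set $\XX \defeq \MM - \LL$. By construction $\XX$ is block-diagonal with $\blk{\XX}{i,i} = \blk{\MM}{i,i} - s_{i}\, \id_{r}$, and $\LL$ is {\bdd} because its diagonal blocks meet the dominance bound with equality. The {\abdd} assumption on $\MM$ gives $\blk{\XX}{i,i} \pgeq \alpha s_{i}\, \id_{r}$, so in particular $\XX \pgeq \alpha \cdot \diagop(s_{i} \id_{r})$.

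The remaining step, which is the only nontrivial point, is to prove $\XX \pgeq \tfrac{\alpha}{2} \LL$. I would test it on an arbitrary block vector $\vv = (\vv_{1}, \dots, \vv_{n})$. Using $\vv_{i}^{*} \blk{\MM}{i,j} \vv_{j} \le \norm{\blk{\MM}{i,j}} \norm{\vv_{i}} \norm{\vv_{j}}$ together with the AM-GM inequality $\norm{\vv_{i}}\norm{\vv_{j}} \le \tfrac{1}{2}(\norm{\vv_{i}}^{2} + \norm{\vv_{j}}^{2})$ and the symmetry $\norm{\blk{\MM}{i,j}} = \norm{\blk{\MM}{j,i}}$, the off-diagonal contribution to $\vv^{*} \LL \vv$ is bounded by $\sum_{i} s_{i} \norm{\vv_{i}}^{2}$. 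Adding the diagonal contribution yields
\[
\vv^{*} \LL \vv \;\le\; 2 \sum_{i} s_{i} \norm{\vv_{i}}^{2}.
\]
On the other hand, by the bound on $\blk{\XX}{i,i}$,
\[
\vv^{*} \XX \vv \;=\; \sum_{i} \vv_{i}^{*} \blk{\XX}{i,i} \vv_{i} \;\ge\; \alpha \sum_{i} s_{i} \norm{\vv_{i}}^{2}.
\]
Combining the two inequalities gives $\vv^{*} \XX \vv \ge \tfrac{\alpha}{2} \vv^{*} \LL \vv$ for every $\vv$, hence $\XX \pgeq \tfrac{\alpha}{2} \LL$.

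The main (and only) obstacle is the AM-GM bookkeeping that converts the operator-norm bounds on the off-diagonal blocks into a scalar quadratic form comparison; everything else is immediate from the definitions of {\bdd} and {\abdd}.
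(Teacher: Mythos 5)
Your proposal is correct and follows the same route as the paper: the identical decomposition ($\LL$ gets the off-diagonal blocks of $\MM$ plus the minimal diagonal $s_i\id_r$ making it {\bdd}, and $\XX = \MM - \LL$), followed by the same two-step comparison $\XX \pgeq \alpha \cdot \diagop(s_i\id_r) \pgeq \frac{\alpha}{2}\LL$. The only difference is that you establish the sub-step $\LL \pleq 2\,\diagop(s_i\id_r)$ directly via Cauchy--Schwarz and AM--GM on the quadratic form, whereas the paper cites its sign-flip corollary (writing $\LL = \YY - \AA$, noting $\YY + \AA$ is also {\bdd} hence PSD, so $2\YY \pgeq \LL$) --- your computation is essentially an unpacking of that corollary's proof.
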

As $\LL$ is positive semidefinite,
$
\left( 1 +\nfrac{2}{\alpha}  \right) \XX \pgeq \MM \pgeq \XX,
$
which means that  $\XX$ is a good approximation of $\MM$ when $\alpha$ is reasonably big.
As block-diagonal matrices like $\XX$ are easy to invert,
systems in these well-conditioned matrices
can be solved rapidly using preconditioned iterative methods.
In Section~\ref{sec:jacobi}, we show that a variant of Jacobi iteration
 provides an operator that 
  satisfies the requirements of Definition~\ref{def:chain}.

%%
%%\begin{lemma}
%%\label{lem:split}
%%Given an {\abdd} matrix $\MM.$ Express $\MM$ as $\XX +
%%\LL,$ where $\XX$ is block-diagonal, and $\LL$ has diagonal entries
%%$\blk{\LL}{i,i} = \id_r \sum_{j:j \neq i} \norm{\blk{\MM}{i,j}}.$
%%Then, $\XX \pgeq \frac{\alpha}{2} \LL.$
%%\end{lemma}
%%
%%Given a subset $F$ such that $\blk{\MM}{F,F}$ is {\abdd}, writing
%%$\blk{\MM}{F,F}$ as $\XX + \LL$ as given by the above lemma,
%%we can use the Neumann series to construct an operator
%%$\ZZ_{F,F}^{(k)}$ that approximates $\blk{\MM}{F,F}^{-1}$ and can be
%%applied quickly. Observe that
%%\[
%%  \blk{\MM}{F,F}^{-1} = \XX^{-1}
%%  - \XX^{-1} \LL \XX^{-1} + 
%%  \sum_{i \geq 2} (-1)^{i} \XX^{-1} (\LL \XX^{-1})^{i}.
%%\]
%%We approximate the above series using the first few terms to obtain
%%the following lemma:
%%% \begin{equation}
%%% \ZZ_{FF}^{(k)} \defeq  \sum_{i = 0}^{k} \XX_{FF}^{-1} \left(-\LL_{FF} \XX_{FF}^{-1} \right)^{i}.
%%% \label{eqn:defZ}
%%% \end{equation}

\begin{restatable}[]{theorem}{jacobi}
\label{thm:jacobi}
Let $\MM$ be a {\bdd} matrix with index set $V$, and let $F
\subseteq V$
%  an $\alpha$-block diagonally dominant subset of the block-rows 
such that $\blk{\MM}{F,F}$ 
 is {\abdd} for some $\alpha \ge 4,$ and has $m_{FF}$ nonzero blocks.
The algorithm
  $\textsc{Jacobi}(\epsilon, \blk{\MM}{F,F},\bb) $ 
  acts as a linear operator $\ZZ$ on $\bb$ that satisfies
\[
0 \preceq (\ZZ)^{-1} -
  \blk{\MM}{F,F} \preceq \epsilon \cdot \schur{\MM }{V\setminus F}.
 \]
The algorithm takes 
% $O(m_{FF}\log_{\alpha/2}(\frac{1}{\epsilon})) \leq
$O(m_{FF}\log(\frac{1}{\epsilon})) $ work and $O(\log{n} \log(\frac{1}{\epsilon}))$ depth.
% Then, for odd $k,$ and $\ZZ_{FF}^{(k)}$ as defined above,
% \[0 \pleq (\ZZ_{FF}^{(k)})^{-1} -
%   \blk{\MM}{F,F} \pleq 2 (2\alpha^{-1})^{k}\cdot \schur{\MM^{(i)}}{C_{i}}.\]
%
%  Let $\MM \in \left({\complex^{r \times r}}\right)^{|V| \times
%   |V|}$ be a {\bdd} matrix and let $F \subseteq V$ be subset of the indices of
% $\MM$, s.t. $\blk{\MM}{F,F}$ is {\abdd} for $\alpha \geq 4$.
%Let $C = V \setminus F$.
\end{restatable}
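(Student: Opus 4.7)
The plan is to run Jacobi iteration with the block-diagonal preconditioner from Lemma~\ref{lem:split}, and to bound the resulting error by $\schur{\MM}{V\setminus F}$ via a structural comparison. Applying Lemma~\ref{lem:split} to $\blk{\MM}{F,F}$ yields $\blk{\MM}{F,F} = \XX + \LL$, where $\XX$ is block-diagonal, $\LL$ is {\bdd} with $\blk{\LL}{i,i} = \id_r \sum_{j \in F, j \neq i} \norm{\blk{\MM}{i,j}}$, and $\XX \pgeq (\alpha/2) \LL$. Because the full $\MM$ (not just $\blk{\MM}{F,F}$) is {\bdd}, the slack in $\XX$ must absorb the weight of edges from $F$ to $C := V \setminus F$, giving the key diagonal bound
\[
\blk{\XX}{i,i} \pgeq \id_r \sum_{j \in C} \norm{\blk{\MM}{i,j}} \quad \text{for all } i \in F.
\]

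The algorithm performs $k$ steps of $\xx_{j+1} \leftarrow \xx_j + \XX^{-1}(\bb - \blk{\MM}{F,F} \xx_j)$ from $\xx_0 = 0$. Setting $\widetilde{\MM}_F := \XX^{-1/2} \blk{\MM}{F,F} \XX^{-1/2} = \id + \QQ$ with $\QQ := \XX^{-1/2} \LL \XX^{-1/2}$, one has $\QQ \pgeq 0$ and $\norm{\QQ} \leq 2/\alpha \leq 1/2$. A direct calculation shows that the resulting linear operator is $\ZZ = \XX^{-1/2} (\id - (\id - \widetilde{\MM}_F)^k) \widetilde{\MM}_F^{-1} \XX^{-1/2}$, which is Hermitian because it is a polynomial in $\widetilde{\MM}_F$ conjugated by $\XX^{-1/2}$, and
\[
\ZZ^{-1} - \blk{\MM}{F,F} = \XX^{1/2}(\id + \QQ)\bigl[\bigl(\id - (-\QQ)^k\bigr)^{-1} - \id\bigr]\XX^{1/2}.
\]
Choosing $k$ even makes $(-\QQ)^k = \QQ^k \pgeq 0$, so $\ZZ^{-1} \pgeq \blk{\MM}{F,F}$. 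Using $\norm{\QQ} \leq 1/2$ and the spectral inequality $\QQ^k \pleq 2^{1-k}\QQ$ then gives $\ZZ^{-1} - \blk{\MM}{F,F} \pleq O(2^{-k}) \LL$.

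The main obstacle is the structural inequality $\LL \pleq \schur{\MM}{V \setminus F}$, which I reduce to proving
\[
\blk{\MM}{F,C} \blk{\MM}{C,C}^{-1} \blk{\MM}{C,F} \pleq \XX.
\]
To establish this, introduce the block-diagonal matrix $\hat{\DD}$ on $C$ with $\hat{\DD}_{j,j} = \id_r \sum_{i \in F} \norm{\blk{\MM}{j,i}}$. The {\bdd} property of $\MM$ implies that $\blk{\MM}{C,C} - \hat{\DD}$ is still dominated by its diagonal on $C$ and is therefore PSD, so $\blk{\MM}{C,C}^{-1} \pleq \hat{\DD}^{-1}$. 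Expanding $\blk{\MM}{F,C} \hat{\DD}^{-1} \blk{\MM}{C,F} = \sum_{j \in C} \blk{\MM}{F,j} \hat{\DD}_{j,j}^{-1} \blk{\MM}{j,F}$ and applying Cauchy--Schwarz $\norm{\blk{\MM}{j,F} y}^2 \leq (\sum_i \norm{\blk{\MM}{j,i}})(\sum_i \norm{\blk{\MM}{j,i}} \norm{y_i}^2)$ to each term cancels the $\hat{\DD}_{j,j}^{-1}$ factor. Summing over $j$ produces the quadratic form $\sum_{i \in F} \norm{y_i}^2 \sum_{j \in C} \norm{\blk{\MM}{i,j}}$, which is $\leq y^* \XX y$ by the slack bound from the first paragraph.

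Finally, choosing $k = O(\log(1/\epsilon))$ makes $O(2^{-k}) \leq \epsilon$, giving $\ZZ^{-1} - \blk{\MM}{F,F} \pleq \epsilon \LL \pleq \epsilon \schur{\MM}{V \setminus F}$ as required. Each iteration costs one multiplication by $\blk{\MM}{F,F}$ ($O(m_{FF})$ work, $O(\log n)$ depth) and one diagonal solve against $\XX$ ($O(n)$ work, $O(1)$ depth), yielding the claimed bounds of $O(m_{FF} \log(1/\epsilon))$ work and $O(\log n \log(1/\epsilon))$ depth.
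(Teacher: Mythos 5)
Your proof is correct and follows the same overall strategy as the paper's: split $\blk{\MM}{F,F} = \XX + \LL$ via Lemma~\ref{lem:split}, run Jacobi iteration with the block-diagonal preconditioner $\XX$, bound the iteration error by a multiple of $\LL$, and then bound $\LL$ by $\schur{\MM}{V\setminus F}$. The algebraic part (bounding $\ZZ^{-1}-\blk{\MM}{F,F}$ by $O(2^{-k})\LL$) is essentially identical in substance to the paper's Lemma~\ref{lem:seriesError}; the paper phrases this with $k$ odd so that $k+1$ is even, whereas you take $k$ even directly, but both are exploiting the same fact that even powers of $\QQ$ are PSD.

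Where you diverge is the structural inequality $\LL \preceq \schur{\MM}{V\setminus F}$, equivalently $\blk{\MM}{F,C}\blk{\MM}{C,C}^{-1}\blk{\MM}{C,F} \preceq \XX$. The paper's route is to extend $\blk{\LL}{F,F}$ to the full index set by padding with zeros, observe that the extended matrix is $\preceq \MM$ (and PSD), and then invoke the Loewner monotonicity of Schur complements (Fact~\ref{fact:schurLoewner}, which is Lemma B.1 of Miller--Peng). You instead introduce the diagonal majorant $\hat{\DD}$ of $\blk{\MM}{C,C}$, bound $\blk{\MM}{C,C}^{-1} \preceq \hat{\DD}^{-1}$, and apply Cauchy--Schwarz blockwise to cancel the $\hat{\DD}^{-1}$ weights against the row sums, landing on the slack $\blk{\XX}{i,i} \pgeq \id_r\sum_{j\in C}\norm{\blk{\MM}{i,j}}$. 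This is a genuinely different proof of the same inequality: the paper's version is a one-liner given the cited monotonicity fact, while yours is self-contained and more elementary, at the price of a longer computation. One small point worth flagging in your version: the step $\blk{\MM}{C,C}^{-1}\preceq\hat{\DD}^{-1}$ requires $\hat{\DD}\succ 0$, which can fail for $j\in C$ with no edges to $F$. Those rows contribute nothing to $\blk{\MM}{C,F}$, so the argument can be restricted to the complementary subspace, but it should be said.
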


We now explain how Theorems \ref{thm:jacobi} and \ref{thm:blockBSS}
  allow us to construct Schur complement chains that can be applied
  in nearly linear time.
We optimize the construction in the next section.

Theorem \ref{thm:blockBSS} tells us that there is a $\bdd$ matrix
   $\MM^{(0)}$
  with $O (n / \epsilon^{2})$ nonzero blocks that
 that $\epsilon$-approximates $\MM$,
  and that for every $i$ there is a $\bdd$ matrix
  $\MM^{(i+1)}$ with $O (\sizeof{C_{i}} / \epsilon^{2})$ nonzero
  blocks that is an $\epsilon$-approximation of
  $\schur{\MM^{(i)}}{F_{i}}$.
We will pick $\epsilon$ later.
Lemma \ref{lem:abddSubset} provides a set $F_{i}$ containing
  a constant fraction of the block-rows of $\MM^{(i)}$ 
  so that $\blk{\MM}{F_{i}, F_{i}}^{(i)}$
  is $4$-{\bdd}.
Theorem \ref{thm:jacobi} then provides an operator
  that solves systems in $\blk{\MM}{F_{i},F_{i}}^{(i)}$ 
  to $\epsilon$ accuracy in time $O (\log \epsilon^{-1})$
  times the number of nonzero entries in $\blk{\MM}{F_{i}, F_{i}}^{(i)}$.
This is at most the number of nonzero entries in $\MM^{(i)}$,
  and thus at most $O (\sizeof{C_{i}} / \epsilon^{2})$.
As each $F_{i}$ contains at least a constant fraction of the rows of $\MM^{(i)}$,
  the depth of the recursion, $d$, is $O (\log n)$.
Thus, we can obtain constant accuracy by setting $\epsilon = \Theta (1/\log n)$.
The time required to apply the resulting SCC would thus be
  $O (n \log^{2} n \log \log n)$.

We can reduce the running time by setting $\epsilon_{1}$
  to a constant and allowing it to shrink as $i$ increases.
For example, setting 
  $\epsilon_{i} = 1/2 (i+1)^{2}$
  results in a linear time algorithm
  that produces a constant-factor approximation of the inverse of $\MM$.
We refine this idea in the next section.

%%% Local Variables:
%%% mode: latex
%%% TeX-master: "bsdd-solver"
%%% End:

\subsection{Linear Sized Approximate Inverses}\label{sec:overviewExists}

In this section we sketch the proof of Theorem \ref{thm:UDU}, which tells us that
  every $\bdd$ matrix has a linear-sized approximate inverse.
The rest of the details appear in Section \ref{sec:existence}.

The linear-sized approximate inverse of a $\bdd$ matrix $\MM$ with $n$
  block rows and columns
  is provided by a $3/4$-approximate of the form
  $\UU^{T} \DD \UU$ where $\DD$ is block diagonal
  and $\UU$ is block upper-triangular and has
  $O (n)$ nonzero blocks.
As systems of linear equations in block-triangular matrices
  like $\UU$ and $\UU^{T}$ can be solved in time proportional
  to their number of nonzero blocks, this provides
  a linear time algorithm for computing a $3/4$ approximation
  of the inverse of $\MM$.
By iteratively refining the solutions provided by the approximate inverse,
  this allows one to find $\epsilon$-accurate
  solutions to systems of equations in $\MM$ in time $O (m \log \epsilon^{-1})$.

The matrix $\UU$ that we construct has meta-block structure that allows
  solves in $\UU$ and $\UU^{T}$ to be performed with linear work and
  depth $O (\log^{2} n)$.
This results in a parallel algorithm for solving equations in $\MM$
  in work $O (m \log \epsilon^{-1})$ and depth $O (\log^{2} n \log \epsilon^{-1})$.
We remark that with some additional work (analogous to Section 7 of \cite{LeePengSpielman}),
  one could reduce this depth to $O (\log n \log \log n \log \epsilon^{-1})$.

The key to constructing $\UU$ is realizing that the algorithm \textsc{Jacobi}
  corresponds to multiplying by the matrix 
  $\ZZ^{(k)}$ defined in equation \eqref{eqn:defZ}.
Moreover, this matrix is a polynomial in 
  $\XX $
  and $\LL$ of degree $\log (3/\epsilon)$,
  where $\blk{\MM}{F_{i}, F_{i}}^{(i)} = \XX + \LL$
  where $\LL$ is  {\bdd}, and  $\XX$ block diagonal such that $\XX
  \pgeq 2\LL.$
To force $\ZZ^{k}$  to be a sparse matrix, we require
 % both
 %  that $k$ be small and 
that $\LL$ be sparse.

If we use the algorithm \textsc{bSDDSubset} to choose $F_{i}$,
  then $\LL$ need not be sparse.
However, this problem is easily remedied by 
  forbidding algorithm \textsc{bSDDSubset}
  from choosing any vertex of more than twice average degree.
% Since the degree is small,
%As we can also force  $k$ to be small, 
Thus, we can ensure
   that all $\ZZ^{(i)}$ are sparse.
\begin{lemma}
\label{lem:subsetLowDeg}
For every {\bdd} matrix $\MM$ and every $\alpha \geq 0$, 
  there is a subset $F$ of size at least $\frac{n}{16(1+\alpha)}$
  such that $\blk{M}{F, F}$ is {\abdd} and
   the number of nonzeros blocks in each block-row of $F$ at most twice
   the average number of nonzero blocks in a block-row of $\MM$.
\end{lemma}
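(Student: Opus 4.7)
The plan is to strengthen Lemma \ref{lem:abddSubset} by first restricting the candidate pool to low-degree block-rows, and then invoking the existing \textsc{bDDSubset} guarantee on the restricted principal submatrix.

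First, let $m$ denote the total number of nonzero blocks of $\MM$ and let $d_{\mathrm{avg}} = m/n$ be the average number of nonzero blocks per block-row. Define
\[
V_{\mathrm{low}} \defeq \{\, i \in [n] : \text{block-row } i \text{ of } \MM \text{ has at most } 2 d_{\mathrm{avg}} \text{ nonzero blocks}\,\}.
\]
By a simple counting argument (Markov's inequality), if more than $n/2$ block-rows had more than $2 d_{\mathrm{avg}}$ nonzero blocks, the total number of nonzero blocks would exceed $2 d_{\mathrm{avg}} \cdot (n/2) = m$, a contradiction. Hence $|V_{\mathrm{low}}| \geq n/2$.

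Next, I would observe that the principal block submatrix $\blk{\MM}{V_{\mathrm{low}},V_{\mathrm{low}}}$ is itself \bdd. Indeed, for every $i \in V_{\mathrm{low}}$,
\[
\blk{\MM}{i,i} \pgeq \id_{r} \cdot \sum_{j \neq i} \norm{\blk{\MM}{i,j}} \pgeq \id_{r} \cdot \sum_{j \in V_{\mathrm{low}},\, j \neq i} \norm{\blk{\MM}{i,j}},
\]
so the diagonal dominance condition is preserved under passing to a principal submatrix. Now I would apply Lemma \ref{lem:abddSubset} to $\blk{\MM}{V_{\mathrm{low}},V_{\mathrm{low}}}$ with parameter $\alpha$, producing a subset $F \subseteq V_{\mathrm{low}}$ such that
\[
|F| \;\geq\; \frac{|V_{\mathrm{low}}|}{8(1+\alpha)} \;\geq\; \frac{n}{16(1+\alpha)}, \qquad \text{and} \qquad \blk{\MM}{F,F} = \blk{(\blk{\MM}{V_{\mathrm{low}},V_{\mathrm{low}}})}{F,F} \text{ is } \alpha\text{-\bdd}.
\]
Since $F \subseteq V_{\mathrm{low}}$, every block-row of $\MM$ indexed by $F$ has at most $2 d_{\mathrm{avg}}$ nonzero blocks, which is exactly the degree condition required.

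There is no real obstacle here; the only point to verify carefully is the closure of the \bdd\ property under taking principal submatrices, which follows immediately from the definition. The argument is purely existential (no effort is made to preserve the near-linear runtime of \textsc{bDDSubset}), which is consistent with the use of this lemma in Section \ref{sec:overviewExists}, where the objective is merely the existence of a linear-sized approximate inverse.
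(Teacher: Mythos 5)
Your proof is correct and follows essentially the same route as the paper's: discard the block-rows of degree more than twice the average (leaving at least $n/2$ of them by Markov), note that the principal submatrix is still {\bdd}, and apply Lemma~\ref{lem:abddSubset} to it. You simply spell out the counting argument and the closure of the {\bdd} property under principal submatrices, both of which the paper leaves implicit.
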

\begin{proof}
Discard every block-row of $\MM$ that has more than twice the average
  number of nonzeros blocks per row-block.
Then remove the corresponding row blocks.
The remaining matrix has dimension at least $n/2$.
We can now use Lemma~\ref{lem:abddSubset} to find an {\abdd} submatrix.
% $\alpha$-strongly
%   diagonally subset of the columns of this matrix.
\end{proof}

We obtain the $\UU^{T} \DD \UU$  factorization by applying the
  inverse of the factorization \eqref{eqn:blockInverse}:
\begin{equation*}
\MM
=
\left[
\begin{array}{cc}
\II & 0\\
\blk{\MM}{F,F}^{-1} \blk{\MM}{F,C} & \II
\end{array}
\right]
\left[
\begin{array}{cc}
\blk{\MM}{F,F} & 0 \\
0 & \schur{\MM}{F}
\end{array}
\right]
\left[
\begin{array}{cc}
\II & \blk{\MM}{C,F} \blk{\MM}{F,F}^{-1}\\
0 & \II
\end{array}
\right].
\end{equation*}
In the left and right triangular matrices we replace
  $\blk{\MM}{F,F}^{-1}$ with the polynomial we obtain from
  \textsc{Jacobi}.
In the middle matrix, it suffices to approximate $\blk{\MM}{F,F}$
  by a block diagonal matrix, and  $\schur{\MM}{F}$ by a factorization
  of its sparse approximation given by Theorem~\ref{thm:blockBSS}.
The details, along with a careful setting of $\epsilon_{i}$, are carried
  out in Section \ref{sec:existence}.

%%% Local Variables:
%%% mode: latex
%%% TeX-master: "bsdd-solver"
%%% End:

\newcommand{\schurApx}{\textsc{ApproxSchur}}
\newcommand{\schurSquare}{\textsc{SchurSquare}}
\newcommand{\lastStep}{\textsc{LastStep}}

\subsection{Spectral Vertex Sparsification Algorithm}
\label{sec:overviewSquaring}
In this section, we outline a procedure {\schurApx} that efficiently
approximates the Schur complement of a {\bdd} matrix $\MM$ w.r.t. a
set of indices $F$ s.t. $\blk{\MM}{F,F}$ is {\abdd}. The following
lemma summarizes the guarantees of {\schurApx}.
\begin{restatable}[]{lemma}{approxSchur}
\label{lem:approxSchur}
Let $\MM$ be a {\bdd} matrix with index set $V$, and $m$ nonzero
blocks. Let $F \subseteq V$ be such that $\blk{\MM}{F,F}$ is {\abdd}
for some $\alpha \ge 4.$ The algorithm $\schurApx(\MM,F,\epsilon)$,
returns a matrix $\MMtil_{SC}$ s.t.
\begin{enumerate}
\item $\MMtil_{SC}$ has $O(m ( \epsilon^{-1}
  \log \log \epsilon^{-1} )^{O(\log \log \epsilon^{-1})} )$ nonzero blocks, and
\item  $\MMtil_{SC} \approx_{\epsilon} \schur{\MM}{F}$,
\end{enumerate}
in
$O(m ( \epsilon^{-1} \log \log \epsilon^{-1} )^{O(\log \log
  \epsilon^{-1})} )$
work and $O(\log n (\log \log \epsilon^{-1}) )$ depth.
\end{restatable}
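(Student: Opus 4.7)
The plan is to approximate $\blk{\MM}{F,F}^{-1}$ by a low-degree polynomial in the split $\blk{\MM}{F,F} = \XX+\LL$ furnished by Lemma~\ref{lem:split}, and then express $\schur{\MM}{F} = \blk{\MM}{C,C} - \blk{\MM}{C,F}\blk{\MM}{F,F}^{-1}\blk{\MM}{F,C}$ as a product of matrices that is built up by repeated squaring, with spectral sparsification inserted between consecutive multiplications to keep every intermediate matrix sparse. This mirrors the repeated-squaring paradigm of the Peng--Spielman solver \cite{PengS14}.

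Concretely, since $\alpha \geq 4$, Lemma~\ref{lem:split} gives $\XX \succeq 2\LL$, so the matrix $\YY \defeq \XX^{-1/2}\LL\XX^{-1/2}$ satisfies $\norm{\YY} \leq 1/2$, and
\[
\blk{\MM}{F,F}^{-1} \;=\; \XX^{-1/2}\bigl(\II+\YY\bigr)^{-1}\XX^{-1/2} \;=\; \XX^{-1/2}\Bigl(\textstyle\sum_{k\geq 0}(-\YY)^{k}\Bigr)\XX^{-1/2}.
\]
Truncating the Neumann series at degree $2^{K}-1$ contributes an operator-norm error of at most $2\cdot(1/2)^{2^K}$, so $K = O(\log\log\epsilon^{-1})$ rounds suffice for relative error $\epsilon$. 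We rewrite the truncated sum as the telescoping product $\prod_{j=0}^{K-1}(\II + \YY^{2^{j}})$, so that each factor is obtained from the previous by a single matrix squaring.

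The algorithm \schurApx{} evaluates this product from inside out, and after each squaring, and after each subsequent multiplication against the growing product, applies a spectral sparsifier---the near-linear-work implementation of Theorem~\ref{thm:blockBSS} obtained via the subsampling routine of Cohen~\emph{et al.} flagged in the introduction---to keep the number of nonzero blocks at $O(mr(\epsilon')^{-2})$. Choosing the per-round accuracy $\epsilon' = \Theta(\epsilon/K)$ and composing errors through Fact~\ref{frac:orderComposition} delivers the required $\MMtil_{SC} \approx_{\epsilon} \schur{\MM}{F}$ after the final products against $\blk{\MM}{C,F}$, $\XX^{-1/2}$, and their adjoints are folded into $\blk{\MM}{C,C}$. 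The depth bound $O(\log n \log\log\epsilon^{-1})$ follows because each of the $K$ rounds consists of a constant number of matrix multiplications and one sparsification call, and each can be parallelized to depth $O(\log n)$.

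The main obstacle is the bookkeeping of cost and error through the nested sparsifications. The sparsifier itself recursively solves systems in a subsampled \bdd{} matrix, so each of the $K$ rounds incurs cost $(\epsilon')^{-O(1)}$ per input nonzero, and these costs compound multiplicatively across the $K$ levels to yield the stated $(\epsilon^{-1}\log\log\epsilon^{-1})^{O(\log\log\epsilon^{-1})}$ factor in both the work and block-count bounds. A secondary technical checkpoint is verifying that every intermediate matrix fed into the sparsifier is (after conjugation by $\XX^{-1/2}$) \bdd{} with bounded spectral norm, so that Theorem~\ref{thm:blockBSS} applies; this invariant is preserved by the squaring step because $\norm{\YY}\le 1/2$ keeps every iterate well inside the convergence region of the Neumann series, and the bDD structure is preserved by the subsequent sparsification up to the $\epsilon'$ error absorbed into the final telescoped bound.
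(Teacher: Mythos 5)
Your skeleton---repeated squaring of the $F,F$ block interleaved with sparsification, with $O(\log\log\epsilon^{-1})$ rounds and per-round error $\Theta(\epsilon/K)$---matches the paper's \schurSquare{} loop at a high level. But the proposal is missing the idea that actually makes the lemma true with the stated work and depth bounds, and the substitute you offer does not work. You propose to sparsify each intermediate matrix using ``the near-linear-work implementation of Theorem~\ref{thm:blockBSS} obtained via the subsampling routine of Cohen et al.'' That routine requires \emph{solving $O(\log n)$ linear systems in a subsampled {\bdd} matrix} (Section~\ref{sec:undersampling}), i.e., it requires the very solver that {\schurApx} is a subroutine for building; invoking it here is circular, and even granting it, its cost ($\Omega(n\log n)$ work and polylogarithmic depth per call, times $K$ rounds, plus its own recursion) is not compatible with the claimed $O(\log n\,\log\log\epsilon^{-1})$ depth and solve-free work bound. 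Theorem~\ref{thm:blockBSS} by itself is only an existence statement with polynomial-time construction. Relatedly, your plan never explains how to avoid \emph{materializing} the dense matrix $\YY^{2^j}$ (or $\AA\DD^{-1}\AA$) before sparsifying it: a black-box sparsifier needs the matrix, and writing it down already costs up to $\Theta(m^2)$ blocks per round.

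The paper's resolution is structural: after one squaring step, the new dense blocks decompose as a sum, over pivots $j\in F$, of implicitly represented \emph{product demand block-Laplacians} and their bipartite analogs (Definitions~\ref{def:block-demand} and~\ref{def:bip-block-demand}), each specified by a sparse demand vector with $\deg_j$ nonzero blocks. These special cliques are sparsified \emph{deterministically, in linear work, with no linear-system solves}, via explicit weighted Ramanujan-type expanders (Lemmas~\ref{lem:product-clique-sparsify}, \ref{lem:bipartite-clique-sparsify}, and Section~\ref{sec:weightedExp}). That is what yields $O(m\epsilon^{-4})$ work per round and the multiplicative blow-up $(\epsilon^{-1}\log\log\epsilon^{-1})^{O(\log\log\epsilon^{-1})}$ across $O(\log\log\epsilon^{-1})$ rounds. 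A second point you elide: the paper does not finish by truncating a Neumann series for $\blk{\MM}{F,F}^{-1}$ and folding it into $\blk{\MM}{C,C}$; it runs a separate {\lastStep} that performs one more squaring with the split $\XX+\LL$ and only then replaces the $F,F$ block by a block diagonal, precisely because naively diagonalizing (or truncating) at the end produces errors that are not bounded relative to $\schur{\MM}{F}$ (the paper's $u,v,u',v'$ example). Your error accounting in operator norm would need to be converted to a two-sided Loewner bound against the Schur complement (as in Lemmas~\ref{lem:seriesError} and~\ref{lem:block-sub-error}); that conversion is doable but is not automatic from ``operator-norm error at most $2\cdot(1/2)^{2^K}$.''
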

\noindent We sketch a proof of the above lemma in this section. A complete proof
and pseudocode for {\schurApx} are given in
Section~\ref{sec:vertexSparsify}.

First consider a very simple special case: where $F$ is a singleton,
$F = \{i\}$. Let $C = V \setminus F$ be the remaining indices.
% Suppose we want to take the
% Schur complement of $\MM$ with respect to a single index $i$
% \todo{define C}. We see that
\[
\schur{\MM}{i} = \blk{\MM}{C,C} - \blk{\MM}{C,i} \blk{\MM}{i,i}^{-1} \blk{\MM}{i,C}
\]
Thus, if $\blk{\MM}{C,i}$ has $k$ nonzero blocks, $\schur{\MM}{i}$
% the resulting Schur complement
could have $k^2$ additional nonzero blocks compared to
$\blk{\MM}{C,C}$, potentially making it dense. 
If $\MM$ were a graph
Laplacian, then $\blk{\MM}{C,i} \blk{\MM}{i,i}^{-1} \blk{\MM}{i,C}$
would represent the adjacency matrix of a weighted clique.  In
Section~\ref{sec:weightedExp}, we construct weighted expanders that
allow us to $\epsilon$-approximate $\schur{\MM}{i}$ in this case using
$m + O(k\epsilon^{-4})$ edges. In
Section~\ref{sec:product-clique-sparsify}, we show how to use such
weighted expanders to sparsify $\schur{\MM}{i}$ when $\MM$ is a {\bdd}
matrix.

This reduction can also be performed in parallel.
If $F$ is such that $\blk{\MM}{F,F}$ is block diagonal, we can
approximate $\schur{\MM}{F}$ by expressing $\blk{\MM}{C,i}
\blk{\MM}{i,i}^{-1} \blk{\MM}{i,C}$ as $\sum_{i
 \in F} \blk{\MM}{C,i} \blk{\MM}{i,i}^{-1} \blk{\MM}{i,C},$ and using
$|F|$ weighted expanders.
However, $\blk{\MM}{F,F}$ may not be diagonal.
Instead, we give a procedure {\schurSquare} that generates $\MM'$
that is better approximated by its diagonal.

Invoking {\schurSquare} a few times leads a sequence
of matrices $\blk{\MM}{F,F}^{(0)}, \blk{\MM}{F,F}^{(1)}, \ldots, \blk{\MM}{F,F}^{(i)}$.
We will show that $\blk{\MM}{F,F}^{(i)}$ is $\epsilon$-approximated by its
diagonal and we call the procedure
{\lastStep} to approximate $\schur{\MM ^{(i)}}{F}$.
An additional caveat is that replacing $\blk{\MM}{F,F}^{(i)}$ by its
diagonal at this step gives errors that are difficult to bound.
We discuss the correct approximation below
%%.
%% returns $\MM^{(1)}$
%%that's more block diagonally dominant than $\MM$.
%% $\schur{\MM^{(1)}}{F} \approx \schur{\MM}{F}$ and
%%$\blk{\MM}{F,F}^{(1)}$ is better approximated by its diagonal.
% it as a sum of $|F|$
% \emph{``cliques''}. \sushant{Is this okay? Reword?}

% We show that that for Laplacians, the matrices that arise
% from Schur complementation can efficiently be $\epsilon$-approximated using $m +
% O(k\epsilon^{-4})$ edges. We show this by giving a new construction of
% weighted expander graphs. 

% Crucially, we also show that this method can
% be extended to {\bdd} matrices.
% This approach to
% approximation of a single-index Schur complement
% can also be directly applied to
% approximate the Schur complement $\schur{\MM}{F}$,
% whenever $\blk{\MM}{F,F}$ is block diagonal.
% \sushant{merge the above part, and make it terser}

%%
%%
%%More precisely, $\schurSquare\left(\MM, F,\epsilon \right)$
%%returns a {\bdd} matrix $\MM^{(1)}$ with the following properties: 1)
%%$\schur{\MM^{(1)}}{F} \approx_{\epsilon} \schur{\MM}{F}.$ 2)
%%If $\blk{\MM}{F,F}$ is $\alpha$-\bdd, $\blk{\MM}{F,F}^{(1)}$
%%is roughly $\alpha^2$-\bdd. 3) If $\MM$ had $m$ nonzero
%%blocks, $\MM^{(1)}$ has  $O(m \epsilon^{-4})$ nonzero blocks.

{\schurSquare} is based on a \emph{squaring} identity for matrix
inverse developed in \cite{PengS14}. 
Given a splitting of $\blk{\MM}{F,F}$ into
$\DD - \AA,$ where $\DD$ is block-diagonal, and $\AA$ has its
diagonal blocks as zero, it relies on the fact
that the matrix
\begin{equation}
\label{eq:schurSquare:overview}
\MM_{2} = \frac{1}{2}
\left[
\begin{array}{cc}
\DD - \AA \DD^{-1} \AA &
\blk{\MM}{F,C} + \AA \DD^{-1}\blk{\MM}{F,C}\\
\blk{\MM}{C,F} + \blk{\MM}{C,F}\DD^{-1} \AA  
&  2 \blk{\MM}{C,C} - \blk{\MM}{C,F} \DD^{-1} \blk{\MM}{F,C}
\end{array}
\right]
\end{equation}
satisfies $\schur{\MM}{F} = \schur{\MM_{2}}{F}$.
Furthermore, we can show that if $\MM$ is $\alpha$-{\bdd},
$\DD - \AA \DD^{-1} \AA $ is $\alpha^{2}$-\bdd,
which indicates that the block on $[F, F]$ rapidly approaches being diagonal.

%much more {\abdd} than $\blk{\MM}{F,F}$.
As $\MM_{2}$ may be dense, we construct a
sparse approximation to it.
Since $\DD$ is diagonal, we can construct sparse approximations to
the blocks $\blk{(\MM_{2})}{F,F}$ and $\blk{(\MM_{2})}{C,C}$
in a manner analogous to the case of diagonal $\blk{\MM}{F,F}$.
Similarly, we use bipartite expanders to construct sparse
approximations to $\blk{(\MM_{2})}{C,F}$ and $\blk{(\MM_{2})}{F,C}$.

% However, similar to our observation that the Schur complement w.r.t. a
% single index may be dense, $M_{2}$ may also be dense. The
% the off-diagonal terms come from $-\AA \DD^{-1} \AA$ which can be written as $\sum_{i
%  \in F} \blk{\AA}{F,i} \DD_{ii}^{-1} \blk{\AA}{i,F} $, and a similar
% structure can be found in the $FC$ and $CC$ submatrices.
% Using an
% approximation based on weighted expanders, we are able to compute a
% not too dense matrix that approximates $M_{2}$ well.

% {\schurApx} proceeds by first
% using a procedure {\schurSquare} to
%  compute a sequence of matrices
% $M^{(i)}$ which have three important properties:
% Firstly, the Schur complement $\schur{\MM^{(i)}}{FF}$ is a good spectral
% approximation of $\schur{\MM}{FF}$.
% Secondly, $\blk{\MM^{(i)}}{F,F}$ is $\alpha'$-{\bdd}
% where $\alpha'$ is much larger than $\alpha$.
% Thirdly, $\MM^{(i)}$ is not too dense. {\schurSquare} maintains this
% third condition by using weighted expanders to approximate parts of
% the intermediate matrices. 

Our sequence of calls to {\schurSquare} terminates with
$\blk{\MM}{F,F}^{(i)}$ being roughly $\epsilon^{-1}$-{\bdd}. We then
return $\lastStep(\MM^{(i)}, F,\epsilon^{-1}, \epsilon)$.
As mentioned above, we cannot just replace $\blk{\MM}{F,F}^{(i)}$
by its diagonal.
Instead, {\lastStep} performs one step of \emph{squaring}
similar to {\schurSquare} with a key difference: Rather than expressing
$\blk{\MM}{F,F}^{(i)}$ as $\DD - \AA,$ it expresses it as $\XX + \LL,$
where $\XX$ is block-diagonal, and $\LL$ is just barely {\bdd}.
With this splitting, it constructs $\MM^{\left(last\right)}$ after
performing one iteration similar to
Eq.~\eqref{eq:schurSquare:overview}.
After this step, it replaces the $\blk{\MM}{F,F}^{\left(last\right)}$ block
with the block-diagonal matrix $\XX$.
Again, we directly produce sparse approximations to
$\MM^{\left(last\right)}$ and its Schur complement
via weighted (bipartite) expanders.
A precise description and proofs are given in Section~\ref{sec:lastStep}.

\subsection{Sparsifying {\bdd} matrices}\label{sec:overviewSparse}

The main technical hurdle left to address is how we
  sparsify {\bdd} matrices.
We to do this both to approximate $\MM$ by $\MM^{(0)}$, if $\MM$ is not already sparse,
  and to ensure that all the matrices $\MM^{(i)}$ remain sparse.
While the spectral vertex sparsification algorithm described in the previous section
  allows us to compute an approximation to a Schur complement 
  $\schur{\MM^{(i)}}{F_{i}}$, it is sparse only when $\MM^{(i)}$ is already sparse.
As we iteratively apply this procedure, the density of the matrices
  produced will grow unacceptably.
We overcome this problem by occasionally applying another sparsification routine that
  substantially decreases the number of nonzero blocks.
The cost of this sparsification routine is that it requires solving systems of
  equations in sparse  ${\bdd}$ matrices.
We, of course, do this recursively.

Our sparsification procedure begins by generalizing the
  observation that graph Laplacians can be sparsified by
  sampling edges with probabilities determined by their effective 
  resistances \cite{SpielmanS08:journal}.
There is a block analog of leverage scores \eqref{eqn:tau} that provides
  probabilities of sampling blocks so that the resulting
  sampled matrix approximates the original and has $O (n \log n)$ nonzero blocks
  with high probability.
To compute this block analog of leverage scores we employ a recently
  introduced procedure of Cohen \textit{et. al.} \cite{cohen2014uniform}.
  
Once we generalize their results to block matrices, they show that we can obtain
  sufficiently good estimates of the block leverage scores by computing leverage
  scores in a {\bdd} matrix obtained by randomly subsampling blocks of the original.
The block leverage scores in this matrix are obtained by solving a logarithmic number
  of linear equations in this subsampled matrix.
Thus, our sparsification procedure requires constructing a solver for a subsampled
  matrix and then applying that solver a logarithmic number of times.
We compute this solver recursively.

There is a tradeoff between the number of nonzero blocks in the subsampled
  system and in the resulting approximation of the original matrix.
If the original matrix has $m$ nonzero blocks and we subsample to a system
  of $m / K$ nonzero blocks, then we obtain an $\epsilon$-approximation
  of the original matrix with $O (K \epsilon^{-2} n \log n )$
  nonzero blocks.

The details of the analysis of the undersampling procedure appear
  in Section \ref{sec:undersampling}.

\subsection{The main algorithm}\label{sec:overviewAlg}

We now explain how we prove Theorem~\ref{thm:recursive}.
The details supporting this exposition appear in Section~\ref{sec:recursive}.
Our main goal is to control the density of the Schur complement
chain as we repeatedly invoke Lemma~\ref{lem:approxSchur}.

%Sparsifying by reducing the edges by a factor of $k$  is useful because
%the final density produced does not depend on the number of edges,
%but rather just the number of vertices and the parameter $k$.
%However, the rate of these calls still need to be limited, as they
%lead to branching.

Starting from some $\MM^{(0)}$, we compute sets $F_{i}$
  (via calls to \textsc{bDDSubset}), approximate
  solvers $\ZZ^{(i)}$ (via \textsc{Jacobi}), 
  and approximations of Schur complements $\MM^{(i)}$
  (via \schurApx),
  until we obtain a matrix $\MM^{(i)}$ such that its dimension is a smaller
  than that of $\MM^{(0)}$ by a large constant factor (like 4).
While the dimension of $\MM^{(i)}$ is much smaller, its number of nonzero
  blocks is potentially larger by an even larger factor.
We use the procedure described in the previous section to
  sparsify it.
This sparsification procedure produces a sparse approximation of the matrix
  at the cost of solving systems of equations in a subsampled version of that matrix.
Some care is required to balance the cost of the resulting recursion.

We now sketch an analysis of a nearly linear time algorithm that results
  from a simple choice of parameters.
We optimize the parameter choice and analysis in Section~\ref{sec:recursive}.
Let $n$ be the dimension of $\MM^{(0)}$ and let $m$ be its number of nonzero blocks.
To begin, assume that $m \leq n \Delta \log^{3} n$, for a $\Delta$ to be specified later.
We call this the \textit{sparse} case, and address the case of dense $\MM$ later.

We consider fixing $\epsilon_{i} = c / \log n$ for all $i$, for some constant $c$.
As the depth of the Schur complement chain is $O (\log n)$, this results
  in a solver with constant accuracy.
A constant number of iterations of the procedure described above are required
  to produce an $\MM^{(i)}$ whose dimension is a factor of 4 smaller than $\MM^{(0)}$.
Lemma~\ref{lem:approxSchur} tells us that the edge density of this
$\MM^{(i)}$ is potentially
 higher than that of $\MM^{(0)}$ by a factor of
\[\textstyle
O\left( \left( \epsilon^{-1} \log\log \epsilon^{-1}  \right)
	^{O\left(\log\log \epsilon^{-1} \right)} \right)
= \exp \left( O( \log\log^2{n}) \right) = n^{o(1)}.
\]
Set $\Delta$ to be this factor.
We use the sparsification procedure from the previous section to guarantee
  that no matrix in the chain has density higher than that of this matrix,
  which is upper bounded by $(m/n) \Delta  = \Delta^{2} \log^{3} n$.
  
Setting $K = 2 \Delta$, the subsampling produces a matrix
  of density half that of $\MM^{(0)}$, and it produces a sparse approximation
  of $\MM^{(i)}$ of density $O (K  \epsilon_{i}^{-2} \log n) 
  \leq O (\Delta \log^{3} n)$, which, by setting constants appropriately,
  we can also force to be half that of $\MM^{(0)}$.
In order to perform the sparsification procedure, we need to construct
  a Schur complement chain for the subsampled matrix, and then use it to
  solve $O (\log n)$ systems of linear equations.
The cost of using this chain to solve equations in the subsampled system
  is at most $O (n \Delta^{2} \log^{4} n)$,
  and the cost of using the solutions to these equations to sparsify $\MM^{(i)}$
  is $O (m \log n)$.
The cost of the calls to \textsc{bDDSubset} and  \schurApx \ 
  are proportional to the number of edges in the matrices,
  which is $O (n \Delta \log^4{n})$.

We repeat this procedure all the way down the chain, only
  using sparsification when the dimension of $\MM^{(i)}$
  shrinks by a factor of $4$.
Since none of the matrices that we generate have density higher than
 $\Delta \log^{3}n$, we remain in the sparse case.
Let $T_{sparse} (n)$ be the time required to construct a solver chain
  on systems of size $n$  with $m \leq n \Delta \log^3{n}$.
We obtain the following recurrence
\[
  T_{sparse} (n) \leq 2 T_{sparse}  (n/4) + n \Delta^{2} \log^{4} n + m \log n + m \Delta 
  \leq  2 T_{sparse}  (n/4) + n \Delta^{2} \log^{4} n + n \Delta  \log n + n \Delta^{2},
\]
which gives
\[
  T (n)_{sparse} \leq O (n \Delta^{2} + n \Delta^{2} \log^{4} n)
  \leq n^{1 + o (1)}.
\]
To handle the case of dense $\MM$, we repeatedly sparsify while keeping $n$
  fixed until we obtain a matrix with fewer than $n \Delta \log^{3} n$ edges,
  at which point we switch to the algorithm described above.
The running time of this algorithm on a graph with $m$ edges, $T_{dense}(m)$,
  satisfies the recurrence
\[
T_{dense}(m) \leq
\begin{cases}
  T_{sparse} (n) & \qquad \text{if } m \leq n \Delta \log^3{n}, \text{ and}\\
2 T_{dense}\left(m / 2\right)  + n \Delta^{2} \log^{4} n + m \log n + m \Delta & \qquad \text{otherwise}.
\end{cases}
\]
Thus $T_{dense} (m)$  is upper bounded bounded by $O(mn^{o(1)} + n^{1 + o(1)})$.

We tighten this bound in Section~\ref{sec:recursive} 
  to prove  Theorem~\ref{thm:recursive}
by
  carefully choosing the parameters to accompany  a sequence $\eepsilon$
  that starts constant and decreases slowly.

%%  
%%Now we show how to put it together with under sampling as given by the
%%paper ``Uniform sampling ...''. This allows us to reduce it to two systems.
%%\begin{lemma}
%%\label{lem:uniformSampling}
%%Claim we can sparsify an $(n,m)$ system to $(n,C nr \log nr)$ by solving
%%$O(r\log n)$ systems in a fixed $(n,m/C)$ system.
%%\end{lemma}
%%
%%\begin{proof}
%%Given a sketch of how to give the recursion for the main theorem.
%%\end{proof}

%%% Local Variables:
%%% mode: latex
%%% TeX-master: "bsdd-solver"
%%% End:

\section{Summary}\label{sec:summary}

We introduce a new approach to solving systems of linear equations
  that gives the first nearly linear time algorithms for solving
  systems
 % systems
 %  of equations 
in connection Laplacians and the first proof that
  connection Laplacians have linear-sized approximate inverses.
This was unknown even for graph Laplacians.

Our algorithms build on ideas introduced in \cite{PengS14}
  and are a break from those used in the previous work on solving systems
  of equations in graph Laplacians \cite{Vaidya,SpielmanTengLinsolve,KMP1,KMP2,KOSZ,CohenKMPPRX}.
Those algorithms all rest on \textit{support theory} \cite{SupportGraph},
  originally introduced by Vaidya \cite{Vaidya}, and rely on the fact
  that the Laplacian of one edge is approximated by the Laplacian
  of a path between its endpoints.
No analogous fact is true for connection Laplacians, even those
  with complex entries for $r = 1$.

Instead, our algorithms rely on many new ideas, the first being
  that of sparsifying the matrices that appear during elimination.
Other critical ideas are finding {\abdd} subsets of vertices to eliminate
  in bulk, approximating Schur complements without computing
  them explicitly, and the use of sub-sampling to sparsify
  in a recursive fashion.
To efficiently compute approximations of the Schur complements,
  we introduce a new operation that transforms a matrix
  into one with the same 
  Schur complement but a much better conditioned upper block
  \eqref{eq:schurSquare:overview}.
To obtain the sharp bounds in Theorem \ref{thm:recursive},
  we exploit a new linear-time algorithm for 
  constructing linear-sized sparse approximations to
  implicitly represented weighted cliques whose edge weights
  are products of weights at vertices (Section \ref{sec:weightedExp}), and extend this to the analog for {\bdd} matrices   
  (Section \ref{sec:product-clique-sparsify}). % \rasmus{It might
  % be nice to explain that all sparsification in Laplacian solvers
  % previously has relied on the fact that an edge can be approximated
  % by a path between its endpoints... perhaps more formally, that a
  % short ring is approximated by a short path}

%%% Local Variables:
%%% mode: latex
%%% TeX-master: "bsdd-solver"
%%% End:

\bibliographystyle{alpha}
\bibliography{stoc16}

\appendix

\newpage
\tableofcontents
\thispagestyle{empty}

\newpage

\section{Background}
\begin{proof} (of Claim~\ref{clm:LZ3L_new})
Note that
\[
\norm{\xx-\ZZ b}^2_{\MM} = \xx^{\dg} \MM x - 2 \xx^{\dg} \MM \ZZ \MM \xx
	+  \xx^{\dg} \MM \ZZ \MM \ZZ \MM \xx
\]
Since all nonzero eigenvalues of $\MM$ are at least $\mu$, the eigenvalues of 
$\MM^{1/2} (\MM + \epsilon \mu \II)^{-1}\MM^{1/2}$ lie between $1/(1+\epsilon)$ and 1.
Using $\ZZ \approx_{\epsilon} (\MM + \epsilon \mu \II)^{-1}$, we see that
the eigenvalues of $\MM^{1/2} \ZZ \MM^{1/2}$ lie between $e^{-2\epsilon}$ and $e^{\epsilon}$.
Using $0<\epsilon<1/2$, we have
\[
\norm{\xx-\ZZ \bb}^2_{\MM} \leq (1 - 2 e^{-2\epsilon} + e^{2\epsilon}) \xx^{\dg} \MM \xx
	\leq 6 \epsilon \norm{\xx}^2_{\MM}.
\]
\end{proof}

\begin{claim}\label{clm:M3Z3}
Let $\AA$ be a matrix of condition number $\kappa$ and let $\AA \approx_{\epsilon} \BB$
  for $\epsilon \leq (56 \kappa^{3})^{-1}$.
Then,
  $\AA^{3} \approx_{28 \kappa^{3} \epsilon} \BB^{3}$.
\end{claim}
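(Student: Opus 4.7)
My plan is to prove the claim by expanding $\AA^3 - \BB^3$ via a telescoping identity in $\EE := \AA - \BB$, bounding this difference in operator norm, and then trading the operator-norm estimate back for a Loewner (PSD) inequality at the cost of the condition number of $\BB$. Throughout I treat $\AA$ and $\BB$ as Hermitian positive definite, as is implicit in the fact that $\AA$ has a well-defined condition number and appears in a $\approx_\epsilon$ relation.

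The first step is to extract from $\AA \approx_{\epsilon} \BB$ the two operator-norm bounds $\norm{\EE} \leq (e^\epsilon - 1)\norm{\BB}$ and $\norm{\AA} \leq e^\epsilon \norm{\BB}$. These follow directly from the Loewner sandwich $-(e^\epsilon - 1)\BB \pleq \EE \pleq (e^\epsilon - 1)\BB$ and from $\AA \pleq e^\epsilon \BB$, using the elementary fact that $\XX \pleq c \norm{\BB}\II$ bounds the largest eigenvalue of $\XX$ by $c\norm{\BB}$.

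The second step is the telescoping identity
\[
\AA^3 - \BB^3 \;=\; \AA^2 \EE + \AA \EE \BB + \EE \BB^2,
\]
which, combined with submultiplicativity of the operator norm, yields $\norm{\AA^3 - \BB^3} \leq 3(e^\epsilon - 1) e^{2\epsilon} \norm{\BB}^3$. Since $\AA^3 - \BB^3$ is Hermitian, this converts directly into the symmetric Loewner bound
\[
- 3(e^\epsilon - 1) e^{2\epsilon} \norm{\BB}^3 \, \II \;\pleq\; \AA^3 - \BB^3 \;\pleq\; 3(e^\epsilon - 1) e^{2\epsilon} \norm{\BB}^3 \, \II.
\]
The third step is to trade $\norm{\BB}^3 \II$ for $\BB^3$ using $\norm{\BB}^3 \II \pleq \kappa_\BB^3 \BB^3$, where $\kappa_\BB \leq e^{2\epsilon} \kappa$ because $\AA \approx_\epsilon \BB$ moves each extremal eigenvalue by at most a factor $e^\epsilon$. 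This produces
\[
\bigl(1 - 3(e^\epsilon - 1) e^{8\epsilon} \kappa^3 \bigr)\BB^3 \;\pleq\; \AA^3 \;\pleq\; \bigl(1 + 3(e^\epsilon - 1) e^{8\epsilon} \kappa^3 \bigr) \BB^3 .
\]
All that remains is a routine numerical check: using $e^\epsilon - 1 \leq \epsilon e^\epsilon$, $e^x \geq 1 + x$, and $e^{-x} \leq 1 - x + x^2/2$ for $x \geq 0$, together with the hypothesis $\epsilon \leq (56 \kappa^3)^{-1}$, both prefactors fit inside $e^{\pm 28 \kappa^3 \epsilon}$ with comfortable slack.

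The main obstacle is non-commutativity: because $\AA$ and $\BB$ need not commute, one cannot simply push $\AA \approx_\epsilon \BB$ through the map $x \mapsto x^3$ by functional calculus, and the telescoping decomposition of $\AA^3 - \BB^3$ is not summand-wise Hermitian. This forces the detour through an operator-norm estimate, and the subsequent conversion from $\norm{\BB}^3 \II$ back to $\BB^3$ costs exactly a factor of $\kappa_\BB^3 \sim \kappa^3$. That single step is what produces the $\kappa^3$ blow-up in the final approximation parameter; every other loss is a harmless constant that the slack in the hypothesis absorbs.
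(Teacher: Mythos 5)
Your proof is correct, and it follows the same high-level strategy as the paper: bound $\norm{\AA^3 - \BB^3}$ in operator norm via an algebraic decomposition, then convert that to a Loewner bound at a cost of $\kappa^3$ using the smallest eigenvalue of the reference matrix. The two genuine differences are in the bookkeeping. First, your telescoping $\AA^3 - \BB^3 = \AA^2\EE + \AA\EE\BB + \EE\BB^2$ is the straightforward one-shot expansion, whereas the paper instead uses a recursive symmetric decomposition, first writing $\AA^2-\BB^2 = \tfrac{1}{2}(\AA-\BB)(\AA+\BB)+\tfrac{1}{2}(\AA+\BB)(\AA-\BB)$ and then $\AA^3-\BB^3 = \tfrac{1}{2}(\AA-\BB)(\AA^2+\BB^2)+\tfrac{1}{2}(\AA+\BB)(\AA^2-\BB^2)$; both give operator-norm bounds of the same order, and yours is a little cleaner. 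Second, you normalize everything relative to $\BB$ (and transfer $\kappa$ via $\kappa_\BB \leq e^{2\epsilon}\kappa$), while the paper normalizes relative to $\AA$ directly; again a cosmetic choice. One minor point worth flagging: the paper's own proof actually concludes with $\AA^3 \approx_{56\kappa^3\epsilon} \BB^3$, not $\approx_{28\kappa^3\epsilon}$ as stated in the claim (the $56$ is what is used downstream in Claim~\ref{clm:LZ3L}, so the discrepancy is harmless to the paper), whereas your tighter constant tracking does deliver the stated $28\kappa^3\epsilon$. In short: same method, slightly different decomposition and normalization, and you actually close the small gap between the paper's proof and its claim statement.
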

\begin{proof}
First, observe that $\AA \approx_{\epsilon} \BB $
  implies that $\norm{\BB} \leq (1 + e^{\epsilon}) \norm{\AA} \leq 2 \norm{\AA}$.
It also implies that $\norm{\AA - \BB} \leq 2 \epsilon \norm{\AA}$.
As 
\[
\AA^{2} - \BB^{2} = 
 \frac{1}{2} (\AA - \BB) (\AA + \BB)
+ \frac{1}{2} (\AA + \BB) (\AA - \BB),
\]
$\norm{\AA^{2} - \BB^{2}} \leq 6 \epsilon  \norm{\AA}^{2}$.
Similarly, as
\[
\AA^{3} - \BB^{3} = 
 \frac{1}{2} (\AA - \BB) (\AA^{2} + \BB^{2})
+ \frac{1}{2} (\AA + \BB) (\AA^{2} - \BB^{2}),
\]
$\norm{\AA^{3} - \BB^{3}} \leq 28 \epsilon  \norm{\AA}^{3}$.

Let $\kappa = \norm{\AA} / \lambda_{min} (\AA)$.
The above relation implies that
\[
\norm{\AA^{3} - \BB^{3}} \leq 28 \epsilon  \kappa^{3} \lambda_{min} (\AA)^{3}
=
28 \epsilon  \kappa^{3} \lambda_{min} (\AA^{3}).
\]
Thus, as $28 \epsilon \kappa^{3} \leq 1/2$,
\[
  \AA^{3} \approx_{56 \epsilon \kappa^{3}} \BB^{3}.
\]
\end{proof}

\begin{proof}
(of Claim~\ref{clm:LZ3L})

By Claim~\ref{clm:M3Z3}, using $\delta = \frac{\epsilon}{56\kappa^{3}}$,
\[
\MM \ZZ^{3} \MM \approx_{\epsilon} \MM (\MM+\epsilon \mu I)^{-3} \MM.
\]

$\MM$ has an eigendecomposition in the same basis as $ (\MM+\epsilon
\mu I)^{-3}$, and so it follows that $\MM (\MM+\epsilon \mu I)^{-3}
\MM$ has the same eigenbasis, and the same null space as $\MM$.

When $\lambda^{-1}$ is the eigenvalue of $\MM^{+}$ of
an eigenvector $\vv$, the corresponding eigenvalue of  
$\MM (\MM+\epsilon \mu I)^{-3} \MM$ is
$
\beta \defeq 
\frac{\lambda^{2}}
{(\lambda + \epsilon \mu)^{3} }
$ and
\[
\lambda^{-1} > \beta
\geq\frac{\lambda^{2}}
{(1+\epsilon)^{3}\lambda^{3}  } = e^{-3\epsilon} \lambda^{-1}.
\]
So $\MM (\MM+\epsilon \mu I)^{-3} \MM \approx_{3 \epsilon} \MM^{+}$,
and $\MM \ZZ^{3} \MM \approx_{4 \epsilon} \MM^{+}$.
\end{proof}

\begin{fact}
\label{fact:sum-of-unitary}
For every $\dd \in \complex^{r \times r},$ we can find
$\QQ^{(1)}, \QQ^{(2)} \in \complex^{r \times r}$ where
\[(\QQ^{(1)}) (\QQ^{(1)})^{\dg} = (\QQ^{(1)})^{\dg} (\QQ^{(1)}) = (\QQ^{(2)})
(\QQ^{(2)})^{\dg} = (\QQ^{(2)})^{\dg} (\QQ^{(2)}) = \id_{r},\]
such that
\[\dd = \frac{1}{2}\norm{\dd}(\QQ^{(1)} + \QQ^{(2)}).\]
\end{fact}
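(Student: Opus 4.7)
The plan is to use the singular value decomposition to reduce the problem to a scalar statement about convex combinations on the unit circle. Write $\dd = \UU \Sigma \VV^{\dg}$, where $\UU, \VV \in \complex^{r \times r}$ are unitary and $\Sigma$ is diagonal with singular values $\sigma_1, \ldots, \sigma_r \geq 0$. Since $\norm{\dd}$ is the operator norm, each $\sigma_i \leq \norm{\dd}$, so the ratios $t_i \defeq \sigma_i/\norm{\dd}$ lie in $[0,1]$ (handle the trivial case $\dd = 0$ separately, e.g.\ by taking $\QQ^{(1)} = \II$ and $\QQ^{(2)} = -\II$).

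The key scalar fact is that every $t \in [0,1]$ admits a representation $t = \tfrac12(e^{i\theta} + e^{-i\theta})$ with $\theta = \arccos(t)$, since $\cos\theta = \tfrac12(e^{i\theta}+e^{-i\theta})$. Applying this coordinatewise, I would define diagonal matrices $\Sigma^{(1)}, \Sigma^{(2)}$ with entries $e^{i\theta_k}$ and $e^{-i\theta_k}$ respectively, so that each is unitary and $\Sigma = \tfrac12 \norm{\dd}(\Sigma^{(1)} + \Sigma^{(2)})$.

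Setting $\QQ^{(1)} \defeq \UU \Sigma^{(1)} \VV^{\dg}$ and $\QQ^{(2)} \defeq \UU \Sigma^{(2)} \VV^{\dg}$, each $\QQ^{(j)}$ is a product of three unitary matrices and is therefore unitary (in particular $(\QQ^{(j)})(\QQ^{(j)})^{\dg} = (\QQ^{(j)})^{\dg}(\QQ^{(j)}) = \id_r$). Plugging into the SVD gives
\[
\dd = \UU \Sigma \VV^{\dg} = \tfrac12\norm{\dd}\bigl(\UU \Sigma^{(1)} \VV^{\dg} + \UU \Sigma^{(2)} \VV^{\dg}\bigr) = \tfrac12\norm{\dd}\bigl(\QQ^{(1)} + \QQ^{(2)}\bigr),
\]
as required.

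There is essentially no obstacle here: the only thing one must be careful about is that $\norm{\dd}$ really is the largest singular value, which guarantees $t_i \in [0,1]$ and so allows the $\cos\theta = t$ decomposition on every diagonal entry. Everything else follows from the fact that products of unitaries are unitary.
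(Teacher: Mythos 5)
Your proof is correct and is essentially identical to the paper's: both normalize by $\norm{\dd}$, take the SVD, write each singular value in $[0,1]$ as $\cos\theta_j = \tfrac12(e^{i\theta_j}+e^{-i\theta_j})$, and conjugate the resulting unitary diagonal matrices by the SVD factors. Your explicit handling of the $\dd=0$ case is a minor point the paper glosses over, but otherwise the arguments coincide.
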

\begin{proof}
  Let $\hat{\dd} = \frac{1}{\norm{\dd}} \dd.$ Thus,
  $\norm{\hat{\dd}} = 1.$ Write $\hat{\dd}$ using its singular value
  decomposition as $\UU \DD \VV^{\dg},$ where
  $\DD, \UU, \VV \in \complex^{r \times r},$ $\DD$ is a real diagonal
  matrix with the singular values of $\dd$ on the diagonal, and
  $\UU \UU^{\dg} = \UU^{\dg} \UU = \VV \VV^{\dg} = \VV^{\dg} \VV = \id_{r}.$ Since
  $\norm{\hat{\dd}} = 1,$ we have $\DD_{j,j} \in [0,1]$ for all
  $j \in [r].$ Thus, there exists a real $\theta_{j}$ such that
  $\cos \theta_{j} = \DD_{j,j}.$ If we let $\DD^{(1)}, \DD^{(2)}$ be
  diagonal matrices usch that
  $\DD^{(1)}_{j,j} = \exp(i \theta_{j}), \DD^{(2)}_{j,j} = \exp(-i
  \theta_{j}),$
  we have $\DD = \frac{1}{2} (\DD^{(1)} + \DD^{(2)}).$ Moreover,
  \[(\DD^{(1)}) (\DD^{(1)})^{\dg} = (\DD^{(1)})^{\dg} (\DD^{(1)}) =
  (\DD^{(2)}) (\DD^{(2)})^{\dg} = (\DD^{(2)})^{\dg} (\DD^{(2)}) = \id_{r}.\]
  Letting $\QQ^{(k)} = \UU \DD^{(k)} \VV^{*}$ for $k = 1,2,$ we get
  $\hat{\dd} = \frac{1}{2} (\QQ^{(1)} + \QQ^{(2)})$ and hence
  $\dd = \frac{1}{2} \norm{\dd} (\QQ^{(1)} + \QQ^{(2)}).$ Moreover,
\[(\QQ^{(1)}) (\QQ^{(1)})^{\dg} = (\QQ^{(1)})^{\dg} (\QQ^{(1)}) = (\QQ^{(2)})
(\QQ^{(2)})^{\dg} = (\QQ^{(2)})^{\dg} (\QQ^{(2)}) = \id_{r}.\]
\end{proof}

\begin{fact}[Lemma B.1. from~\cite{MillerP13}]
\label{fact:schurLoewner}
If $\MM$ and $\Mtil$ are positive semidefinite
matrices satisfying $\MM \preceq \Mtil$,
then
\[
\schur{\MM}{F} \preceq \schur{\Mtil}{F}.
\]
\end{fact}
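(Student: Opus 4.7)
The plan is to prove Fact~\ref{fact:schurLoewner} via the variational (energy) characterization of the Schur complement. Writing the index set as $F \cup C$, the Schur complement $\schur{\MM}{F}$ acts on vectors indexed by $C$ and satisfies, for every such $\xx$:
\[
\xx^{\dg} \schur{\MM}{F} \xx \;=\; \min_{\yy}\, \begin{bmatrix} \yy \\ \xx \end{bmatrix}^{\dg} \MM \begin{bmatrix} \yy \\ \xx \end{bmatrix}.
\]
To establish this identity I would expand the right-hand side as $\yy^{\dg} \blk{\MM}{F,F} \yy + \yy^{\dg} \blk{\MM}{F,C} \xx + \xx^{\dg} \blk{\MM}{C,F} \yy + \xx^{\dg} \blk{\MM}{C,C} \xx$ and complete the square in $\yy$: when $\blk{\MM}{F,F}$ is positive definite the unique minimizer is $\yy^{\star} = -\blk{\MM}{F,F}^{-1} \blk{\MM}{F,C} \xx$, and substituting back gives exactly $\xx^{\dg} (\blk{\MM}{C,C} - \blk{\MM}{C,F} \blk{\MM}{F,F}^{-1} \blk{\MM}{F,C}) \xx$, which is $\xx^{\dg} \schur{\MM}{F} \xx$ by definition.

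With this characterization in hand, monotonicity is essentially immediate. The hypothesis $\MM \preceq \Mtil$ says $\vv^{\dg} \MM \vv \leq \vv^{\dg} \Mtil \vv$ for every $\vv$; specializing to $\vv = [\yy;\xx]$ and taking the minimum over $\yy$ on both sides preserves the inequality, so
\[
\xx^{\dg} \schur{\MM}{F} \xx \;=\; \min_{\yy}\, [\yy;\xx]^{\dg} \MM [\yy;\xx] \;\leq\; \min_{\yy}\, [\yy;\xx]^{\dg} \Mtil [\yy;\xx] \;=\; \xx^{\dg} \schur{\Mtil}{F} \xx
\]
for every $\xx$ indexed by $C$, which is exactly $\schur{\MM}{F} \preceq \schur{\Mtil}{F}$.

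The main obstacle is the singular case, when $\blk{\MM}{F,F}$ or $\blk{\Mtil}{F,F}$ fails to be strictly positive definite, since then $\blk{\MM}{F,F}^{-1}$ in the definition of $\schur{\MM}{F}$ is undefined and the completion-of-squares step above need not have a unique minimizer. The cleanest workaround is a limiting argument: for $\eta > 0$ replace $\MM$ and $\Mtil$ by $\MM + \eta\id$ and $\Mtil + \eta\id$, both of which are positive definite and still satisfy the Loewner ordering hypothesis. The argument above then gives $\schur{\MM + \eta\id}{F} \preceq \schur{\Mtil + \eta\id}{F}$, and letting $\eta \to 0^{+}$ yields the result, since the formula for $\schur{\cdot}{F}$ is continuous in the matrix on the open set where $\blk{\cdot}{F,F} \succ 0$, and the Loewner cone is closed under limits. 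Alternatively, if one interprets Schur complements via pseudoinverses, the variational identity still holds with the minimization restricted to $\yy \in \operatorname{range}(\blk{\MM}{F,F})$, and the same monotonicity argument goes through after a small bit of care to ensure the minimizing subspaces are compatible with the inequality $\MM \preceq \Mtil$. For the present paper, where bDD matrices are routinely regularized by an $\epsilon \mu \id$ term before Schur complements are taken (see the discussion around Claim~\ref{clm:LZ3L_new}), this regularized case is in fact the one that gets invoked.
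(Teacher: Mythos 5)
Your proposal is correct and follows exactly the route the paper indicates: the paper does not prove Fact~\ref{fact:schurLoewner} itself but remarks that it follows from the energy-minimization characterization of the Schur complement (citing Lemma B.1 of \cite{MillerP13}), which is precisely the variational identity you establish and then combine with the fact that pointwise domination of quadratic forms is preserved under taking minima. Your handling of the singular case by regularization is a reasonable completion of the details the paper leaves to the reference, and as you note the matrices to which the fact is applied in this paper are nonsingular anyway.
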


This fact can be proven via an energy minimization
definition of Schur complement.
More details on this formulation can be found in~\cite{MillerP13}.

\section{Block Diagonally Dominant Matrices}
In this section, we prove a few basic facts about {\bdd} matrices. The
following lemma gives an equivalent definition of {\bdd} matrices.
\begin{lemma}
\label{lem:extra-diagonal}
  A Hermitian block-matrix
  $\MM \in \left({\complex^{r \times r}}\right)^{n \times n}$ is
  {\bdd} iff it can be written as $\DD - \AA$ where $\DD$ is
  block-diagonal, $\AA$ is Hermitian, and
  $ \blk{\DD}{i,i} \succeq \id_{r}\sum_{j} \norm{\blk{\AA}{i,j}},$ for
  all $i \in V.$
\end{lemma}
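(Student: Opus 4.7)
The statement is a biconditional, so I would prove it by showing each direction separately. The forward direction is essentially trivial while the reverse direction requires a short Hermitian argument.

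For the $(\Rightarrow)$ direction, I would simply exhibit a decomposition. Given {\bdd} $\MM$, set $\blk{\DD}{i,i} = \blk{\MM}{i,i}$ (block-diagonal), and $\AA = \DD - \MM$, so that $\blk{\AA}{i,i} = 0$ and $\blk{\AA}{i,j} = -\blk{\MM}{i,j}$ for $i \neq j$. Then $\AA$ is Hermitian (since $\MM$ and $\DD$ are), and
\[
\textstyle \id_r \sum_j \norm{\blk{\AA}{i,j}} = \id_r \sum_{j\neq i} \norm{\blk{\MM}{i,j}} \preceq \blk{\MM}{i,i} = \blk{\DD}{i,i},
\]
using the {\bdd} hypothesis. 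This handles the forward direction immediately.

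For the $(\Leftarrow)$ direction, suppose $\MM = \DD - \AA$ satisfies the assumed inequality. I would compare block entries: $\blk{\MM}{i,j} = -\blk{\AA}{i,j}$ for $i \neq j$, and $\blk{\MM}{i,i} = \blk{\DD}{i,i} - \blk{\AA}{i,i}$. Substituting into the hypothesis gives
\[
\textstyle \blk{\MM}{i,i} + \blk{\AA}{i,i} \succeq \id_r \norm{\blk{\AA}{i,i}} + \id_r \sum_{j \neq i} \norm{\blk{\MM}{i,j}},
\]
and hence
\[
\textstyle \blk{\MM}{i,i} \succeq \bigl( \id_r \norm{\blk{\AA}{i,i}} - \blk{\AA}{i,i} \bigr) + \id_r \sum_{j\neq i} \norm{\blk{\MM}{i,j}}.
\]

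The key observation, and the one mildly nontrivial step, is that the parenthesized term is positive semidefinite. Since $\AA$ is Hermitian, its diagonal block $\blk{\AA}{i,i}$ is Hermitian, so its eigenvalues are bounded in absolute value by its operator norm; therefore $\blk{\AA}{i,i} \preceq \norm{\blk{\AA}{i,i}} \id_r$. Dropping this nonnegative term yields precisely $\blk{\MM}{i,i} \succeq \id_r \sum_{j \neq i} \norm{\blk{\MM}{i,j}}$, establishing that $\MM$ is {\bdd}. The only subtlety in the whole argument is remembering to invoke Hermiticity of $\AA$ to control the diagonal block, which is why the statement of the lemma specifies that $\AA$ is Hermitian.
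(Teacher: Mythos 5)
Your proof is correct and follows essentially the same route as the paper's: the forward direction takes $\DD$ to be the block diagonal of $\MM$ and $\AA$ the off-diagonal remainder, and the reverse direction hinges on exactly the observation the paper uses, namely that the Hermitian diagonal block satisfies $\blk{\AA}{i,i} \preceq \norm{\blk{\AA}{i,i}}\,\id_r$. The only cosmetic difference is a sign convention on $\AA$ in the forward direction, which is immaterial since only the norms $\norm{\blk{\AA}{i,j}}$ enter the condition.
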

\begin{proof}
The only if direction is easy. For {\bdd} matrix $\MM,$ if we let
$\DD$ be the block diagonal matrix such that $\blk{\DD}{i,i} =
\blk{\MM}{i,i},$ and $\AA$ be the matrix such that 
\begin{align*}
\blk{\AA}{i,j} = 
\begin{cases}
0 & i = j \\
\blk{\MM}{i,j} & i \neq j
\end{cases}
\end{align*}
It is immediate that $\AA$ is Hermitian. Moreover, $\MM$ is {\bdd} implies that $
\blk{\DD}{i,i} \succeq \id_{r}\sum_{j} \norm{\blk{\AA}{i,j}}.$

For the if direction. Suppose we have $\DD, \AA$ such that $\DD$ is
block-diagonal, and for all $i$ $
\blk{\DD}{i,i} \succeq \id_{r}\sum_{j} \norm{\blk{\AA}{i,j}}.$ Thus,
letting $\MM = \DD - \AA,$ we have, for all $i \in V,$
\begin{align*}
\id_{r} \cdot \sum_{j \neq i} \norm{\blk{\MM}{i,j}} 
= \id_{r} \cdot \sum_{j \neq i} \norm{\blk{\AA}{i,j}} 
\pleq \blk{\DD}{i,i} - \id_{r} \cdot \norm{\blk{\AA}{i,i}} 
\pleq \blk{\DD}{i,i} - \blk{\AA}{i,i}
= \blk{\MM}{i,i},
\end{align*}
where we used that since $\AA, \MM$ are Hermitian,
$\DD$ is also Hermitian, and thus $\blk{\DD}{i,i}, \blk{\AA}{i,i}$ are Hermitian.
\end{proof}

This immediately implies the corollary that flipping the sign
of off-diagonal blocks preserves {\bdd}-ness.
\begin{corollary}
\label{cor:bsdd-sign-flip}
  Given a {\bdd} matrix $\MM,$ write it as $\DD - \AA,$ where $\DD$
  is a block-diagonal, and $\AA$ has its diagonal blocks as
  zero. Then, $\DD + \AA$ is also PSD.
\end{corollary}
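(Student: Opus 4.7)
The plan is to reduce this immediately to Lemma~\ref{lem:extra-diagonal} by applying it to the pair $(\DD, -\AA)$ in place of $(\DD, \AA)$. Concretely, I would first apply the ``only if'' direction of Lemma~\ref{lem:extra-diagonal} to the given decomposition $\MM = \DD - \AA$ of the {\bdd} matrix $\MM$; this yields, for every index $i$, the inequality
\[
\blk{\DD}{i,i} \pgeq \id_{r} \cdot \sum_{j} \norm{\blk{\AA}{i,j}}.
\]

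Next I would observe that $-\AA$ is also Hermitian with zero diagonal blocks, and that $\norm{\blk{(-\AA)}{i,j}} = \norm{\blk{\AA}{i,j}}$ for every $i,j$ (flipping the sign of a matrix does not change its operator norm). Substituting this identity into the inequality above shows that the hypotheses of the ``if'' direction of Lemma~\ref{lem:extra-diagonal} are satisfied with $\AA$ replaced by $-\AA$, so $\DD - (-\AA) = \DD + \AA$ is {\bdd}.

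To finish, I would note that every {\bdd} matrix is PSD by a one-line block Cauchy-Schwarz argument: for any block vector $x = (x_1,\ldots,x_n)$, the inequality $|x_i^{\dg} \blk{\AA}{i,j} x_j| \le \norm{\blk{\AA}{i,j}} (\norm{x_i}^2 + \norm{x_j}^2)/2$ together with the dominance condition on $\blk{\DD}{i,i}$ implies $x^{\dg}(\DD + \AA) x \ge 0$. There is essentially no obstacle here; the whole content of the corollary is the elementary observation that the {\bdd} condition is stated in terms of the operator norms of the off-diagonal blocks, which are invariant under negation, so the ``sign flip'' does not change whether the hypothesis of Lemma~\ref{lem:extra-diagonal} holds.
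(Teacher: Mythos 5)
Your proof is correct and takes essentially the paper's approach: negating $\AA$ changes neither the diagonal blocks nor the operator norms of the off-diagonal blocks, so $\DD+\AA$ satisfies the same \bdd{} criterion as $\DD-\AA$ (you route this through both directions of Lemma~\ref{lem:extra-diagonal}, while the paper just verifies the definition directly, but it is the same observation). A minor bonus of your version is the self-contained block Cauchy--Schwarz argument for ``\bdd{} $\Rightarrow$ PSD,'' a fact the paper invokes here but only establishes later in Lemma~\ref{lem:bsdd-factorization}.
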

\begin{proof}
  First observe that for all $i,$
  $\blk{(\DD + \AA)}{i,i} = \blk{(\DD - \AA)}{i,i} = \blk{\MM}{i,i},$
  i.e., their diagonal blocks are identical. Moreover, for all $i \neq
  j,$ we have
  $\norm{\blk{(\DD + \AA)}{i,j}} = \norm{\blk{(\DD - \AA)}{i,j}} =
  \norm{\blk{\MM}{i,j}}.$
Thus $\DD + \AA$ is also {\bdd}, and hence PSD.
\end{proof}

Next, we show that the class of {\bdd} matrices is closed under Schur complement.
\begin{lemma}
\label{lem:schurClosed}
The class of {\bdd} matrices is closed under Schur complement.
\end{lemma}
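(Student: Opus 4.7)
My plan is to induct on $|F|$. The standard identity $\schur{\MM}{F_1 \cup F_2} = \schur{\schur{\MM}{F_1}}{F_2}$ reduces the lemma to the single-vertex case $F=\{k\}$, provided the relevant diagonal sub-blocks are invertible along the way. Invertibility of $\blk{\MM}{k,k}$ follows from the {\bdd} property if $\blk{\MM}{F,F}$ is non-singular, and the general case can be obtained by adding $\epsilon \id$ to $\MM$ and letting $\epsilon \to 0$, using the fact that the set of {\bdd} matrices is closed under taking limits. So I will focus entirely on showing that for a single vertex $k$, the Hermitian matrix $\MM'$ with blocks
\[
\blk{\MM'}{i,j} \;=\; \blk{\MM}{i,j} - \blk{\MM}{i,k}\,\blk{\MM}{k,k}^{-1}\,\blk{\MM}{k,j}
\]
(for $i,j\neq k$) is {\bdd} at every row $i$.

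To verify the {\bdd} condition at row $i$, I would introduce two families of nonnegative scalars: $\beta_j := \norm{\blk{\MM}{k,j}}$ for $j\neq k$, and $\alpha_j := \norm{\blk{\MM}{j,k}\,\blk{\MM}{k,k}^{-1/2}}$. The key leverage comes from the {\bdd} property at $k$: it yields $\blk{\MM}{k,k} \succeq S\,\id_r$ with $S := \sum_{j \neq k}\beta_j$, and hence the crucial scalar bound $\alpha_j \leq \beta_j/\sqrt{S}$ by submultiplicativity. These $\alpha_j$'s are designed so that submultiplicativity of the operator norm gives
\[
\norm{\blk{\MM}{i,k}\,\blk{\MM}{k,k}^{-1}\,\blk{\MM}{k,j}} \;\leq\; \alpha_i\alpha_j,
\]
while the PSD term $\blk{\MM}{i,k}\,\blk{\MM}{k,k}^{-1}\,\blk{\MM}{k,i}$ has operator norm exactly $\alpha_i^2$, so that $\blk{\MM'}{i,i} \succeq \blk{\MM}{i,i} - \alpha_i^2\,\id_r$.

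Combining these two estimates with triangle inequality on off-diagonal blocks and subtracting $\id_r \sum_{j\neq i,k}\norm{\blk{\MM'}{i,j}}$ from $\blk{\MM'}{i,i}$, the {\bdd} condition at $i$ reduces (after using the original {\bdd} inequality for $\MM$ at row $i$) to the purely scalar inequality
\[
\beta_i \;\geq\; \alpha_i^2 + \alpha_i \sum_{j\neq i,k}\alpha_j.
\]
This is where the main subtlety lies: it is \emph{not} enough to bound each $\alpha_j$ crudely; one must use the global relation $\sum_j \alpha_j^2 \lesssim 1$ implicit in $\alpha_j \leq \beta_j/\sqrt{S}$. Plugging in those bounds, the right-hand side telescopes to $\beta_i(\beta_i + (S - \beta_i))/S = \beta_i$, closing the induction.

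The one step I expect to be the main obstacle is choosing the right scalar surrogates $\alpha_j$: a naive bound like $\norm{\blk{\MM}{k,k}^{-1}} \leq 1/S$ coupled with $\norm{\blk{\MM}{i,k}\,\blk{\MM}{k,k}^{-1}\,\blk{\MM}{k,j}} \leq \beta_i\beta_j/S$ is too weak, as summing over $j$ produces a factor of $(S-\beta_i)/S$ which does not cancel the diagonal deficit $\alpha_i^2$. The symmetric ``half-splitting'' of $\blk{\MM}{k,k}^{-1}$ into $\blk{\MM}{k,k}^{-1/2}\cdot\blk{\MM}{k,k}^{-1/2}$ is what makes the telescoping work and is the only nontrivial ingredient.
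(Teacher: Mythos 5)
Your proof is correct and follows the same core strategy as the paper's: reduce to eliminating a single index via the quotient formula, then verify the \bdd{} condition row by row using submultiplicativity of the operator norm together with the \bdd{} inequality at the eliminated index, which is exactly what makes the sum telescope to $\beta_i$. The only real difference is bookkeeping. The paper invokes its equivalent characterization (Lemma~\ref{lem:extra-diagonal}), under which the diagonal correction $-\blk{\MM}{i,k}\blk{\MM}{k,k}^{-1}\blk{\MM}{k,i}$ is absorbed into the matrix $\AA$ and its norm simply counted in the row sum $\sum_{j}\norm{\blk{\AA}{i,j}}$; a single application of $\norm{\blk{\MM}{k,k}^{-1}}\sum_{j\neq k}\norm{\blk{\MM}{k,j}}\le 1$ then closes the argument with no square roots. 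You instead keep the original definition, treat the diagonal correction as a PSD decrement, and introduce the symmetric splitting $\blk{\MM}{k,k}^{-1}=\blk{\MM}{k,k}^{-1/2}\blk{\MM}{k,k}^{-1/2}$. That is valid, but your closing claim that the ``naive'' bound $\norm{\blk{\MM}{i,k}\blk{\MM}{k,k}^{-1}\blk{\MM}{k,j}}\le\beta_i\beta_j/S$ is too weak is not right: applying that same bound to the diagonal term gives a deficit of at most $\beta_i^2/S$, and $\beta_i^2/S+\beta_i(S-\beta_i)/S=\beta_i$, so the telescoping goes through without the half-splitting. One point in your favor: you address invertibility of $\blk{\MM}{k,k}$ (via the $\epsilon\id$ perturbation and closedness of the \bdd{} class under limits), a technicality the paper passes over silently.
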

\begin{proof}
  Since Schur complementation does not depend on the order of indices
  eliminated, it suffices to prove that for any {\bdd} matrix
  $\MM \in (\complex^{r \times r})^{n \times n},$ $\schur{\MM}{1}$ is
  a {\bdd} matrix. Let $C = V \setminus \{1\}.$

  We have
  $\schur{\MM}{1} = \blk{\MM}{C,C} - \blk{\MM}{C,1}\blk{\MM}{1,1}^{-1}
  \blk{\MM}{1,C}.$
  Let $\DD \in (\complex^{r \times r})^{C \times C}$ be the block
  diagonal matrix such that $\blk{\DD}{i,i} = \blk{\MM}{i,i}$ for
  $i \in C.$ Expressing $\schur{\MM}{1}$ as
  $\DD + (\blk{\MM}{C,C} - \DD + \blk{\MM}{C,1}\blk{\MM}{1,1}^{-1}
  \blk{\MM}{1,C}),$ we have for any $i \in C,$
\begin{align*}
& \sum_{j \in C} \norm{ \blk{(\blk{\MM}{C,C} - \DD + \blk{\MM}{C,1}\blk{\MM}{1,1}^{-1}
  \blk{\MM}{1,C})}{i,j} } 
\le  \left( \sum_{j \in C: j \neq i} \norm{\blk{\MM}{i,j}} \right) + \sum_{j \in C} \norm{ 
\blk{\MM}{i,1}\blk{\MM}{1,1}^{-1} \blk{\MM}{1,j}} \\
& \qquad \qquad \qquad \le  \left( \sum_{j \in C: j \neq i} \norm{\blk{\MM}{i,j}} \right) + \norm{\blk{\MM}{i,1} } 
\norm{\blk{\MM}{1,1}^{-1}}
\sum_{j \in C}
\norm{ \blk{\MM}{1,j}} \\
& \qquad \qquad \qquad \le \left( \sum_{j \in C: j \neq i}
  \norm{\blk{\MM}{i,j}} \right) + \norm{\blk{\MM}{i,1} } = \sum_{j \in V: j \neq i}
  \norm{\blk{\MM}{i,j}} , 
\end{align*}
where the last inequality uses
$\norm{\blk{\MM}{1,1}^{-1}} \left( \sum_{j \neq 1} \norm{
    \blk{\MM}{1,j}} \right) \le 1,$
since
$\blk{\MM}{1,1} \pgeq \id_{r} \cdot \sum_{j \neq 1}
\norm{\blk{\MM}{1,j}}.$
Thus using Lemma~\ref{lem:extra-diagonal}, we have $\schur{\MM}{1} =
\DD - ( - (\blk{\MM}{C,C} - \DD + \blk{\MM}{C,1}\blk{\MM}{1,1}^{-1}
  \blk{\MM}{1,C}))$ is {\bdd}.
\end{proof}
The next definition describes
 a special form that we can express
any {\bdd} matrix in, which will occasionally be useful.

\begin{definition}
\label{def:unitaryB}
A matrix $\BB \in (\complex^{r \times r})^{n \times m}$ is called a
\emph{unitary edge-vertex transfer matrix}, when each block column of
$\blk{\BB}{e}$ has
exactly two nonzero blocks
$\UU_{e}, \VV_{e} \in \complex^{r \times r} $ s.t.
$\UU_{e} \UU_{e}^{\dg} = \VV_{e} \VV_{e}^{\dg} = w_{e} \id_{r}$ where $w_{e}  \geq 0$.
\end{definition}

\begin{lemma}
\label{lem:bsdd-factorization}
  Every {\bdd} matrix $\MM \in (\complex^{r \times r})^{n \times n}$
  with $m$ nonzero off-diagonal blocks can be written as $\XX+\BB \BB^{\dg}$ where
  $\BB \in (\complex^{r \times r})^{n \times 2m}$ is a unitary
  edge-vertex transfer matrix
  and $\XX$ is a block diagonal PSD matrix.
  This implies that every {\bdd} matrix is PSD.
  Furthermore, for every block diagonal matrix $\YY$ s.t. $\MM-\YY$ is
  {\bdd}, we have $\XX \pgeq \YY$.
  This decomposition
can be found in $O(m)$ time and $O(\log n)$ depth.
  
\end{lemma}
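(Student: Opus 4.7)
The plan is to construct $\BB$ column-by-column, one pair of columns per off-diagonal pair, using Fact~\ref{fact:sum-of-unitary} to split each off-diagonal block into unitary pieces. Concretely, for each unordered pair $\{i,j\}$ with $\blk{\MM}{i,j}\neq 0$, let $s_{ij}=\norm{\blk{\MM}{i,j}}$ and apply Fact~\ref{fact:sum-of-unitary} to $-\blk{\MM}{i,j}$ to get unitary $\QQ^{(1)}_{ij},\QQ^{(2)}_{ij}$ with $-\blk{\MM}{i,j}=\tfrac{s_{ij}}{2}(\QQ^{(1)}_{ij}+\QQ^{(2)}_{ij})$. For each $k\in\{1,2\}$ I introduce a column $e=(ij,k)$ of $\BB$ whose only nonzero blocks are $\blk{\BB}{i,e}=\sqrt{s_{ij}/2}\,\id_r$ at row $i$ and $\blk{\BB}{j,e}=-\sqrt{s_{ij}/2}\,(\QQ^{(k)}_{ij})^{\dg}$ at row $j$. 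This matches Definition~\ref{def:unitaryB} with weight $w_e=s_{ij}/2$, and in total uses at most $2m$ columns.

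Next I verify that $\BB\BB^{\dg}$ has the correct off-diagonal and a controllable diagonal. For the off-diagonal at $(i,j)$, only the two columns for pair $\{i,j\}$ contribute, and they give
\[
\sum_{k=1,2}\blk{\BB}{i,e_k}\blk{\BB}{j,e_k}^{\dg}
=-\tfrac{s_{ij}}{2}\bigl(\QQ^{(1)}_{ij}+\QQ^{(2)}_{ij}\bigr)=\blk{\MM}{i,j},
\]
and the $(j,i)$ block follows by Hermitian symmetry. For the diagonal at $(i,i)$, each incident column contributes $w_e\,\id_r$, so
\[
\blk{(\BB\BB^{\dg})}{i,i}=\id_r\cdot\sum_{j:\,j\neq i}s_{ij}=\id_r\cdot\sum_{j:\,j\neq i}\norm{\blk{\MM}{i,j}}.
\]
Define $\XX\defeq\MM-\BB\BB^{\dg}$. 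Since the off-diagonals of $\BB\BB^{\dg}$ match those of $\MM$, $\XX$ is block-diagonal, and the \bdd\ condition $\blk{\MM}{i,i}\pgeq\id_r\cdot\sum_{j\neq i}\norm{\blk{\MM}{i,j}}$ gives $\blk{\XX}{i,i}\pgeq 0$. Hence $\MM=\XX+\BB\BB^{\dg}$ is a sum of two PSD pieces and is itself PSD.

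For the maximality claim, suppose $\YY$ is block-diagonal with $\MM-\YY$ \bdd. Since $\YY$ only changes the diagonal, the off-diagonal structure of $\MM-\YY$ equals that of $\MM$, so the \bdd\ inequality for $\MM-\YY$ reads
\[
\blk{\MM}{i,i}-\blk{\YY}{i,i}\pgeq\id_r\cdot\sum_{j:\,j\neq i}\norm{\blk{\MM}{i,j}}=\blk{(\BB\BB^{\dg})}{i,i},
\]
which rearranges to $\blk{\YY}{i,i}\pleq\blk{\XX}{i,i}$ for every $i$, giving $\XX\pgeq\YY$ since both sides are block-diagonal.

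Finally, for the complexity: obtaining $\QQ^{(1)}_{ij},\QQ^{(2)}_{ij}$ from Fact~\ref{fact:sum-of-unitary} requires an SVD of an $r\times r$ block, which is $O(r^3)=O(1)$ work per pair and can be done in $O(1)$ depth in parallel across all pairs; summing the norms to form $\blk{\XX}{i,i}$ is a standard parallel summation in $O(\log n)$ depth and $O(m)$ total work. The only place care is needed is that I use \emph{the same} $\BB$ when arguing about $\YY$, so I do not need to redo a decomposition for $\MM-\YY$; the inequality drops out purely from the \bdd\ definition. I do not anticipate a real obstacle here — the construction is a direct block-matrix analog of the familiar edge-vertex factorization of graph Laplacians, with Fact~\ref{fact:sum-of-unitary} supplying exactly what is needed to handle arbitrary complex off-diagonal blocks.
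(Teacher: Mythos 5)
Your construction is correct and matches the paper's proof essentially line for line: both build $\BB$ two columns per nonzero pair $\{i,j\}$ using Fact~\ref{fact:sum-of-unitary} to split the off-diagonal block into two unitaries, verify that $\BB\BB^{\dg}$ reproduces the off-diagonals and has diagonal blocks $\id_r\sum_{j\neq i}\norm{\blk{\MM}{i,j}}$, and define $\XX=\MM-\BB\BB^{\dg}$. The only cosmetic difference is that you apply the fact to $-\blk{\MM}{i,j}$ and absorb a minus sign into the second column block, whereas the paper applies it to $\blk{\MM}{i,j}$ directly; and for the maximality claim you argue directly from the \bdd\ inequality for $\MM-\YY$ rather than re-running the construction on $\MM-\YY$ as the paper does, which is a marginally cleaner way to reach the identical conclusion.
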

\begin{proof}
  Consider a pair $\{i,j\} \in V \times V$ such that $i \neq j,$
  and $\blk{\MM}{i,j} \neq 0.$ Using Fact~\ref{fact:sum-of-unitary},
  we can write such a $\blk{\MM}{i,j}$ as
$\frac{1}{2}\norm{\blk{\MM}{i,j}}(\QQ_{\{i,j\}}^{(1)} + \QQ_{\{i,j\}}^{(2)}),$ where
$\QQ_{\{i,j\}}^{(1)} (\QQ_{\{i,j\}}^{(1)})^{\dg} = \QQ_{\{i,j\}}^{(2)}
(\QQ_{\{i,j\}}^{(2)})^{\dg} = \id_{r}.$ we construct two vectors vector
  $\BB^{(1)}_{\{i,j\}}, \BB^{(2)}_{\{i,j\}} \in (\complex^{r \times
    r})^n$ such that for $k=1,2,$ $\blk{\left(\BB_{\{i,j\}}^{(k)} \right)}{i} =
  \frac{1}{\sqrt{2}}\norm{\blk{\MM}{i,j}}^{1/2} \id_r,$  $\blk{\left(\BB_{\{i,j\}}^{(k)} \right)}{j} =
  \frac{1}{\sqrt{2}}\norm{\blk{\MM}{i,j}}^{1/2} \left( \QQ^{(k)}_{\{i,j\}} \right)^{\dg},$ and all other
  blocks are zero. We can verify that for all $k,\ell \in V,$
\begin{align*}
\blk{\left( \BB^{(1)}_{\{i,j\}} (\BB^{(1)}_{\{i,j\}})^{\dg}  +
  \BB^{(2)}_{\{i,j\}} (\BB^{(2)}_{\{i,j\}})^{\dg} \right)}{k,\ell} = 
\begin{cases}
\norm{\blk{\MM}{i,j}} \id_r & k = \ell = i, \\
\norm{\blk{\MM}{i,j}} \id_r & k = \ell = j, \\
\blk{\MM}{i,j} & k = i, \ell = j, \\
\blk{\MM}{i,j}^{\dg} = \blk{\MM}{j,i} & k = j, \ell = i, \\
0 & \textrm{otherwise.}
\end{cases}
\end{align*}
We let
$\XX = \MM - \sum_{\{i,j\}: i \neq j} \left( \BB^{(1)}_{\{i,j\}}
  (\BB^{(1)}_{\{i,j\}})^{\dg} + \BB^{(2)}_{\{i,j\}}
  (\BB^{(2)}_{\{i,j\}})^{\dg} \right),$
which must be block-diagonal. We now show that for all $i \in V,$ the
block $\blk{\XX}{i,i}$ is PSD. We have for all $i \in V,$
\[\blk{\XX}{i,i} = \blk{\MM}{i,i} - \id_r \cdot \sum_{j: j \neq i}
\norm{\blk{\MM}{i,j}} \pgeq 0,\]
where the last inequality holds since $\MM$ is {\bdd}.

% For every $i
% \in V,$ we construct a vector $\BB_{\{i\}} \in (\complex^{r \times
%     r})^n$ such that $\blk{\left( \BB_{\{i\}}\right)}{i} =
%   \blk{\DD}{i,i}^{\nfrac{1}{2}}$ and all other blocks are zero. Thus,
%   we have
% \[ \MM =   \left( \sum_{\{i\}} \BB_{\{i\}}
%   \BB_{\{i\}}^{\dg} \right) + \sum_{\{i,j\}: i \neq j} \left(
%   \BB^{(1)}_{\{i,j\}}

%   (\BB^{(1)}_{\{i,j\}})^{\dg} + \BB^{(2)}_{\{i,j\}}
%   (\BB^{(2)}_{\{i,j\}})^{\dg} \right).\]

Thus, if we define $\BB \in  (\complex^{r \times r})^{n \times
  2m}$ such that its columns are all the vectors % $ \BB_{\{i\}},$ 
$\BB^{(1)}_{\{i,j\}}, \BB^{(2)}_{\{i,j\}}$ defined above, we have $\MM
=\XX +  \BB \BB^{\dg},$ and
every column of $\BB$ has exactly 2 nonzero blocks. 

To show that for every block diagonal $\YY$ s.t. $\MM-\YY$ is
{\bdd}, $\XX \pgeq \YY$, first consider applying
the decomposition described above to $\MM-\YY$ instead of $\MM$.
Since the construction of $\BB$
only depends on the off-diagonal blocks,
we get $\MM - \YY = \boldsymbol{\Lambda} + \BB{\BB^{\dg}}$, where
$\boldsymbol{\Lambda}$ is block diagonal and PSD.
So,  $\XX - \YY = \MM -\BB{\BB^{\dg}} - \YY = \boldsymbol{\Lambda} \pgeq 0$.

 It is immediate that the decomposition
can be found in $O(m)$ time and $O(\log n)$ depth.

\end{proof}

%%% Local Variables:
%%% mode: latex
%%% TeX-master: "rasmus_sandbox_bsdd-solver"
%%% End:

\section{Schur Complement Chains}
\label{sec:chains}
In this section, we give a proof of Lemma~\ref{lem:apply_chain}. We
restate the lemma here for convenience.
\applychain*

The pseudocode for procedure $\textsc{ApplyChain}$ that uses an
$\eepsilon$-vertex sparsifier chain to approximately solve a system of
equations in $\MM^{(0)}$ is given in Figure~\ref{fig:applyChain}.
\begin{figure}[h]
\begin{algbox} $\xx^{(0)} = \textsc{ApplyChain}\left( (\MM^{(1)}, \ZZ^{(1)}) , \ldots, (\MM^{(d)},
  \ZZ^{(d)}); F_1,  \ldots, F_{d} \right)$

\begin{enumerate}
  \item Initialize $\bb^{(0)} \leftarrow \bb.$
	\item For i = $1, \ldots, d$
		\begin{enumerate}
			\item $\blk{\xx}{F_i}^{(i-1)} \leftarrow \ZZ^{(i)} \blk{\bb}{F_i}^{(i-1)}$, 
			\item $\bb^{(i)} \leftarrow \blk{\bb}{C_i}^{(i-1)} - \blk{\MM}{C_i,F_i}^{(i-1)} \blk{\xx}{F_i}^{(i-1)}$.
		\end{enumerate}
	\item $\xx^{(d)} \leftarrow \left( \MM^{(d)} \right)^{-1} \bb^{(d)}$.
	\item For i = $d, \ldots, 1$
		\begin{enumerate}
			\item $\blk{\xx}{C_i}^{(i-1)} \leftarrow \xx^{(i)}$.
			\item $\blk{\xx}{F_i}^{(i-1)} \leftarrow \blk{\xx}{F_i}^{(i-1)} -  \ZZ^{(i)} \blk{\MM}{F_i, C_i}^{(i-1)} \xx^{(i)}$.
		\end{enumerate}
	\end{enumerate}
\end{algbox}

\caption{Solver Algorithm using Vertex Sparsifier Chain}

\label{fig:applyChain}
\end{figure}

\begin{proof}
We begin by observing that the output vector $\xx^{(0)}$ is a linear transformation
  of the input vector $\bb^{(0)}$.
Let $\WW^{(0)}$ be the matrix that realizes this transformation.
Similarly, for $1 \leq i \leq d$, define $\WW^{(i)}$  to be the matrix so that
\[
  \xx^{(i)} = \WW^{(i)} \bb^{(i)}.
\]
An examination of the algorithm reveals that
\begin{equation}\label{eqn:apply_chain1}
  \WW^{(d)} = \left(\MM^{(d)} \right)^{-1},
\end{equation}
and for $1 \le i \le d,$
\begin{equation}\label{eqn:apply_chaini}
  \WW^{(i-1)}
 =
  \left[
\begin{array}{cc}
\II & -\ZZ^{(i)} \blk{\MM}{F_{i}, C_{i}}^{(i-1)} \\
0 & \II
\end{array}
\right]
\left[
\begin{array}{cc}
\ZZ^{(i)} & 0 \\
0 & \WW^{(i)}
\end{array}
\right]
\left[
\begin{array}{cc}
\II & 0\\
-\blk{\MM}{C_{i}, F_{i}}^{(i-1)} \ZZ^{(i)}  & \II
\end{array}
\right].
\end{equation}

We will now prove by backwards induction on $i$ that
\[
\left(\WW^{(i)} \right)^{-1} \approx_{\sum_{j = i+1}^{d} 2\epsilon_j} \MM^{(i)}.
\]
The base case of $i = d$ follows from \eqref{eqn:apply_chain1}.
Using the definition of an $\eepsilon$-SCC, we know that
$0 \preceq (\ZZ^{(i)})^{-1} -
  \blk{\MM}{F_{i},F_{i}}^{(i-1)} \preceq \epsilon_{i}\cdot
  \schur{\MM^{(i-1)}}{C_{i}}.$ We show in
  Lemma~\ref{lem:block-sub-error} that this implies
\[
  \left[
\begin{array}{cc}
\II & -\ZZ^{(i)} \blk{\MM}{F_{i},C_{i}}^{(i-1)}\\
0 & \II
\end{array}
\right]
\left[
\begin{array}{cc}
\ZZ^{(i)} & 0 \\
0 & \schur{\MM^{(i-1)}}{F_{i}}^{-1}
\end{array}
\right]
  \left[
\begin{array}{cc}
\II & 0\\
- \blk{\MM}{C_{i}, F_{i}}^{(i-1)} \ZZ^{(i)} & \II
\end{array}
\right]
\approx_{\epsilon_{i}}
\left(\MM^{(i-1)} \right)^{-1}.
\]
As $\MM^{(i)} \approx_{\epsilon_{i}} \schur{\MM^{(i-1)}}{F_{i}}$,
\[
  \left[
\begin{array}{cc}
\II & -\ZZ^{(i)} \blk{\MM}{F_{i},C_{i}}^{(i-1)}\\
0 & \II
\end{array}
\right]
\left[
\begin{array}{cc}
\ZZ^{(i)} & 0 \\
0 & \left(\MM^{(i)}\right)^{-1}
\end{array}
\right]
  \left[
\begin{array}{cc}
\II & 0\\
- \blk{\MM}{C_{i}, F_{i}}^{(i-1)} \ZZ^{(i)} & \II
\end{array}
\right]
\approx_{2\epsilon_{i}}
\left(\MM^{(i-1)} \right)^{-1}.
\]
By combining this identity with \eqref{eqn:apply_chaini}
  and our inductive hypothesis, we obtain
\[
  \WW^{(i-1)}
\approx_{\sum_{j = i}^{d} 2 \epsilon_{j}}
\left(\MM^{(i-1)} \right)^{-1}.
\]
Thus, by induction, we obtain
\[
  \WW^{(0)}
\approx_{\sum_{j = 1}^{d} 2 \epsilon_{j}}
\left(\MM^{(0)} \right)^{-1}.
\]

The whole algorithm involves a constant number of applications of
$\ZZ^{(i)}, \blk{\MM}{F_{i}, C_{i}}^{(i-1)},$ and
$ \blk{\MM}{C_{i}, F_{i}}^{(i-1)}.$ We observe that in order to
compute $\blk{\MM}{F_{i}, C_{i}}^{(i-1)} x_{C_{i}},$ using a
multiplication procedure for $\MM^{(i-1)},$ we can pad $x_{C_{i}}$
with zeros, multiply by $\MM^{(i-1)},$ and read off the answer on the
indices in $C_{i}.$ Similarly, we can multiply vectors with
$ \blk{\MM}{C_{i}, F_{i}}^{(i-1)}.$ This immediately gives the claimed
bounds on work and depth.
\end{proof}

We now prove the deferred claims from the above proof.
\begin{lemma}\label{lem:block-sub-error}
Let $\MM \in (\complex^{r \times r})^{n \times n}$ be a {\bdd} matrix,
$F \subseteq V$ be a subset of the indices, and $\ZZ \in
(\complex^{r \times r})^{|F| \times |F|}$ be an hermitian operator satisfying
$0 \preceq \ZZ^{-1} -
  \blk{\MM}{F,F} \preceq \epsilon\cdot
  \schur{\MM}{C}.$
Then,
\[
\left[
\begin{array}{cc}
\II & -\ZZ \blk{\MM}{F,C}\\
0 & \II
\end{array}
\right]
\left[
\begin{array}{cc}
\ZZ & 0 \\
0 & \schur{\MM}{F}^{-1}
\end{array}
\right]
\left[
\begin{array}{cc}
\II & 0\\
-\blk{\MM}{C,F} \ZZ & \II
\end{array}
\right]
\approx_{\epsilon} \MM^{-1}.
\]
\end{lemma}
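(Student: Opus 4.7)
The plan is to prove the equivalent statement $\WW^{-1} \approx_{\epsilon} \MM$ (Fact~\ref{fact:orderInverse}) by directly computing $\WW^{-1}$. Writing $A = \blk{\MM}{F,F}$, $B = \blk{\MM}{F,C}$, $D = \blk{\MM}{C,C}$, $S = \schur{\MM}{F}$, and $E = \ZZ^{-1} - A$, inverting the three-factor product gives
\[
\WW^{-1} = \begin{bmatrix} \ZZ^{-1} & B \\ B^{\dg} & S + B^{\dg}\ZZ B \end{bmatrix},
\qquad
\WW^{-1} - \MM = \begin{bmatrix} E & 0 \\ 0 & -B^{\dg}(A^{-1} - \ZZ)B \end{bmatrix}.
\]
The hypothesis $E \succeq 0$ forces $\ZZ \preceq A^{-1}$ and hence $A^{-1} - \ZZ = A^{-1}E\ZZ \succeq 0$, so the $(F,F)$-block of the difference is PSD while the $(C,C)$-block is NSD.

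For the upper bound, dropping the NSD $(C,C)$-block and invoking the standard identity $\begin{bmatrix} \schur{\MM}{C} & 0 \\ 0 & 0 \end{bmatrix} \preceq \MM$ (immediate from the $C$-eliminating block LDU factorization of $\MM$), I obtain $\WW^{-1} - \MM \preceq \begin{bmatrix} E & 0 \\ 0 & 0 \end{bmatrix} \preceq \epsilon\begin{bmatrix} \schur{\MM}{C} & 0 \\ 0 & 0 \end{bmatrix} \preceq \epsilon\MM$, so $\WW^{-1} \preceq e^{\epsilon}\MM$.

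The lower bound is the main difficulty. To show $\MM - \WW^{-1} \preceq \epsilon\MM$, the Schur complement criterion applied at the positive definite top-left block $\epsilon A + E$ reduces the claim to $\epsilon D \succeq B^{\dg}[(A^{-1} - \ZZ) + \epsilon^{2}(\epsilon A + E)^{-1}]B$. I would pass to the $A$-rescaled variables $\widehat{B} = A^{-1/2}B$, $\widehat{E} = A^{-1/2}EA^{-1/2}$, and $\widehat{Z} = A^{1/2}\ZZ A^{1/2}$; under this rescaling $\widehat{Z}^{-1} = \II + \widehat{E}$, so $\widehat{E}$ and $\widehat{Z}$ commute, and the bracketed operator conjugates to $g(\widehat{E}) = \widehat{E}(\II + \widehat{E})^{-1} + \epsilon^{2}(\widehat{E} + \epsilon\II)^{-1}$, a scalar function of $\widehat{E}$ alone. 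The elementary spectral bounds $e/(1+e) \leq e$ and $\epsilon^{2}/(e + \epsilon) \leq \epsilon$ give $g(\widehat{E}) \preceq \widehat{E} + \epsilon\II$.

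The decisive step is then to combine the sharp hypothesis $\widehat{E} \preceq \epsilon(\II - \widehat{B}D^{-1}\widehat{B}^{\dg})$ with the algebraic identities $\widehat{B}^{\dg}(\II - \widehat{B}D^{-1}\widehat{B}^{\dg})\widehat{B} = (D - S)D^{-1}S = S - SD^{-1}S$ and $\widehat{B}^{\dg}\widehat{B} = D - S$, yielding
\[
\widehat{B}^{\dg}g(\widehat{E})\widehat{B} \;\preceq\; \epsilon(S - SD^{-1}S) + \epsilon(D - S) \;=\; \epsilon D - \epsilon SD^{-1}S \;\preceq\; \epsilon D,
\]
which is precisely the required Schur-complement positivity. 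The main obstacle is this last comparison: the weaker substitution $\widehat{E} \preceq \epsilon\II$ yields only $\widehat{B}^{\dg}g(\widehat{E})\widehat{B} \preceq 2\epsilon(D - S)$, whose comparison with $\epsilon D$ is of indefinite sign, so the Schur-complement form $\epsilon\schur{\MM}{C}$ of the hypothesis (as opposed to the crude $\epsilon A$) is genuinely necessary to produce the nonnegative residual $\epsilon SD^{-1}S$. Combining the two bounds gives $\WW^{-1} \approx_{\epsilon} \MM$, and Fact~\ref{fact:orderInverse} yields the stated conclusion.
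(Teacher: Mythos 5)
Your proof takes a genuinely different route from the paper's, and apart from one quantitative slip it is correct. You invert the three-factor product explicitly, observe that $\WW^{-1}$ and $\MM$ agree off the diagonal blocks and differ by $+(\ZZ^{-1}-\blk{\MM}{F,F})$ on the $F$-block and by the negative semidefinite term $-\blk{\MM}{C,F}(\blk{\MM}{F,F}^{-1}-\ZZ)\blk{\MM}{F,C}$ on the $C$-block, and then verify the two-sided comparison directly: the upper bound by discarding the negative block and using $\bigl[\begin{smallmatrix}\schur{\MM}{C}&0\\0&0\end{smallmatrix}\bigr]\pleq\MM$, and the lower bound by a Schur-complement positivity test resolved through spectral calculus in the rescaled variable $\widehat{E}$. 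The paper never inverts the product: it replaces $\blk{\MM}{F,F}$ by $\ZZ^{-1}$ inside $\MM$ to form $\MMhat$ (which is \emph{not} $\WW^{-1}$ --- they differ in the $(C,C)$ block), shows the hypothesis is equivalent to $\MM\pleq\MMhat\pleq(1+\epsilon)\MM$ (Lemma~\ref{lem:block-sub-equiv}), applies Fact~\ref{fact:orderInverse}, and then pays a second multiplicative $(1+\epsilon)$ to exchange $\schur{\MMhat}{F}^{-1}$ for $\schur{\MM}{F}^{-1}$ via monotonicity of Schur complements (Fact~\ref{fact:schurLoewner}). Your route is self-contained and computational, and it isolates exactly where the form $\epsilon\,\schur{\MM}{C}$ of the hypothesis (rather than $\epsilon\,\blk{\MM}{F,F}$) is indispensable; the paper's route buys cleaner constants, which matters here.

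The slip: your lower bound establishes $\MM-\WW^{-1}\pleq\epsilon\MM$, i.e.\ $\WW^{-1}\pgeq(1-\epsilon)\MM$, but under the paper's convention $\AA\approx_{\epsilon}\BB\iff e^{\epsilon}\BB\pgeq\AA\pgeq e^{-\epsilon}\BB$ the conclusion requires $\WW^{-1}\pgeq e^{-\epsilon}\MM$, and $1-\epsilon<e^{-\epsilon}$. As written you therefore only obtain $\approx_{\ln(1/(1-\epsilon))}$, which strictly exceeds $\epsilon$ (and degenerates as $\epsilon\to 1$), whereas the paper's multiplicative bookkeeping yields $\WW^{-1}\pgeq(1+\epsilon)^{-1}\MM$, which does suffice since $(1+\epsilon)^{-1}\geq e^{-\epsilon}$. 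The fix is not a one-line substitution: tightening your target to $\tfrac{\epsilon}{1+\epsilon}\MM$ with the same estimates $e/(1+e)\leq e$ and $\delta^{2}/(e+\delta)\leq\delta$ would require $SD^{-1}S\pgeq\tfrac{\epsilon}{1+\epsilon}S$ in your notation, which fails whenever $\schur{\MM}{F}$ is small relative to $\blk{\MM}{C,C}$. You should either rerun the Schur-complement test on $(1+\epsilon)\WW^{-1}-\MM$ with correspondingly sharper scalar bounds, or fall back to the inverse-side argument. For the lemma's downstream use (errors are summed along the chain) the weaker constant would be harmless, but it does not prove the statement as written.
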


\begin{proof}
Define
\[
  \MMhat = 
\left[
\begin{array}{cc}
(\ZZ)^{-1} &\blk{\MM}{F,C}\\
\blk{\MM}{C,F}   & \blk{\MM}{C,C}
\end{array}
\right].
\]
Using Lemma~\ref{lem:block-sub-equiv}, we know that the assumption on
$\ZZ$ is equivalent to
\[
\MM \pleq \MMhat
\pleq \left(1+ \epsilon\right) \MM.
\]
By Eq.~\eqref{eqn:blockInverse} and Fact~\ref{fact:orderInverse}, this implies
\[
\MM^{-1}
\pgeq 
\left[
\begin{array}{cc}
\II & -\ZZ \blk{\MM}{F,C}\\
0 & \II
\end{array}
\right]
\left[
\begin{array}{cc}
\ZZ & 0 \\
0 & \schur{\MMhat}{F}^{-1}
\end{array}
\right]
\left[
\begin{array}{cc}
\II & 0\\
-\blk{\MM}{C,F}  \ZZ & \II
\end{array}
\right]
\pgeq 
(1+\epsilon)^{-1} \MM^{-1}.
\]
From Facts \ref{fact:schurLoewner} and \ref{fact:orderInverse}, we know that
\[
\schur{\MM}{F}^{-1}
\pgeq 
\schur{\MMhat}{F}^{-1}
\pgeq 
(1+\epsilon)^{-1} \schur{{\MM}}{F}^{-1}.
\]
When we use Fact \ref{fact:orderCAC}  to substitute this inequality into the one above,
  we obtain
\[
(1+\epsilon )\MM^{-1}
\pgeq 
\left[
\begin{array}{cc}
\II & -\ZZ \blk{\MM}{F,C}\\
0 & \II
\end{array}
\right]
\left[
\begin{array}{cc}
\ZZ & 0 \\
0 & \schur{\MM}{F}^{-1}
\end{array}
\right]
\left[
\begin{array}{cc}
\II & 0\\
-\blk{\MM}{C,F}  \ZZ & \II
\end{array}
\right]
\pgeq 
(1+\epsilon)^{-1} \MM^{-1},
\]
which implies the lemma.
\end{proof}

\begin{lemma}
\label{lem:block-sub-equiv}
Given a {\bdd} matrix $\MM,$ a partition of its indices $(F,C)$ such
that $\blk{\MM}{F,F}, \blk{\MM}{C,C}$ are invertible, and
an invertible hermitian operator $\ZZ \in (\complex^{r \times r})^{|F|
  \times |F|},$ the following two conditions are equivalent:
\begin{enumerate}
\item $0 \preceq (\ZZ)^{-1} -
  \blk{\MM}{F,F} \preceq \epsilon\cdot \schur{\MM}{C}.$
\item \[
\MM  \pleq
\left[ 
\begin{array}{cc}
(\ZZ)^{-1} & \blk{\MM}{F, C}\\
\blk{\MM}{C,F}  & \blk{\MM}{C,C} 
\end{array} \right]
\pleq
(1+\epsilon) \MM.
\]
\end{enumerate}
\end{lemma}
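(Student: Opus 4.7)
The plan is to prove the equivalence by analyzing the difference matrices $\MMhat - \MM$ and $(1+\epsilon)\MM - \MMhat$ directly, exploiting that both differences have especially simple block structure.

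First I would observe that $\MMhat - \MM$ is zero everywhere except in the $(F,F)$ block, where it equals $\ZZ^{-1} - \blk{\MM}{F,F}$. Since a Hermitian block matrix that is zero outside a single diagonal block is PSD iff that block is PSD, the lower bound $\MM \pleq \MMhat$ is immediately equivalent to $\ZZ^{-1} - \blk{\MM}{F,F} \pgeq 0$. This disposes of one half of each direction at once.

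For the upper bound, I would write out
\[
(1+\epsilon)\MM - \MMhat
=
\left[
\begin{array}{cc}
(1+\epsilon)\blk{\MM}{F,F} - \ZZ^{-1} & \epsilon \blk{\MM}{F,C} \\
\epsilon \blk{\MM}{C,F} & \epsilon \blk{\MM}{C,C}
\end{array}
\right].
\]
By Lemma~\ref{lem:bsdd-factorization}, $\MM$ is PSD, so $\blk{\MM}{C,C}$ is PSD, and by hypothesis it is invertible, hence $\epsilon \blk{\MM}{C,C}$ is positive definite. The standard Schur complement characterization of PSDness then says $(1+\epsilon)\MM - \MMhat \pgeq 0$ iff its Schur complement with respect to the $(C,C)$ block is PSD, namely
\[
(1+\epsilon)\blk{\MM}{F,F} - \ZZ^{-1} - \epsilon \blk{\MM}{F,C}\blk{\MM}{C,C}^{-1}\blk{\MM}{C,F} \pgeq 0.
\]
Rewriting the left-hand side as $-(\ZZ^{-1} - \blk{\MM}{F,F}) + \epsilon \cdot \schur{\MM}{C}$ using the definition $\schur{\MM}{C} = \blk{\MM}{F,F} - \blk{\MM}{F,C}\blk{\MM}{C,C}^{-1}\blk{\MM}{C,F}$, this is exactly the upper bound $\ZZ^{-1} - \blk{\MM}{F,F} \pleq \epsilon \cdot \schur{\MM}{C}$. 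Combining both pieces gives the equivalence.

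I do not expect a significant obstacle here: the whole argument is essentially bookkeeping around the Schur complement identity, and the only mild point to check is that $\epsilon \blk{\MM}{C,C}$ is strictly positive definite so that the Schur-complement test applies. That follows at once from $\MM$ being bDD (hence PSD) together with the stated invertibility of $\blk{\MM}{C,C}$.
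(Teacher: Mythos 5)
Your proof is correct and is essentially the paper's argument, just packaged more abstractly: you invoke the standard criterion that a Hermitian block matrix with positive definite $(C,C)$ block is PSD iff its Schur complement with respect to that block is PSD, while the paper re-derives exactly that criterion on the spot by writing out the quadratic form and minimizing over the $C$-variables. Both proofs reduce the lower bound to the observation that $\MMhat - \MM$ is zero outside the $(F,F)$ block, and both reduce the upper bound to the identical algebraic identity relating $(1+\epsilon)\blk{\MM}{F,F} - \ZZ^{-1} - \epsilon\blk{\MM}{F,C}\blk{\MM}{C,C}^{-1}\blk{\MM}{C,F}$ to $\epsilon\,\schur{\MM}{C} - (\ZZ^{-1} - \blk{\MM}{F,F})$, so the mathematical content is the same.
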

\begin{proof}
Writing 
\[M = \left[ 
\begin{array}{cc}
\blk{\MM}{F,F} & \blk{\MM}{F, C}\\
\blk{\MM}{C,F}  & \blk{\MM}{C,C} 
\end{array} \right],
\]
condition 2 is equivalent to
\[
0  \pleq
\left[ 
\begin{array}{cc}
(\ZZ)^{-1} - \blk{\MM}{F,F} & 0\\
0  & 0
\end{array} \right]
\pleq
\epsilon\MM.
\]
The left inequality in this statement is equivalent to the left
inequality in condition 1. Thus, it suffices to prove the right sides
are equivalent.

To this end, using $\blk{\MM}{F,C} = \blk{\MM}{C,F}^{\dg},$ it
suffices to prove that
$\forall x \in (\complex^{r})^{|F|}, y \in (\complex^{r})^{|C|},$
\[x^{\dg}((\ZZ)^{-1} - \blk{\MM}{F,F} )x \le \epsilon
(x^{\dg}\blk{\MM}{F,F}x + 2 x^{\dg} \blk{\MM}{F,C} y +
y^{\dg}\blk{\MM}{C,C}y).\]
This is equivalent to proving $\forall x \in (\complex^{r})^{|F|},$
\[x^{\dg}((\ZZ)^{-1} - \blk{\MM}{F,F} )x \le \epsilon \inf_{y \in (\complex^{r})^{|C|}}
(x^{\dg}\blk{\MM}{F,F}x + 2 x^{\dg} \blk{\MM}{F,C} y +
y^{\dg}\blk{\MM}{C,C}y).\]
Since $\blk{\MM}{C,C} \pgeq 0,$ the rhs is a convex function of $y.$
The minimum is achieved at $y =
-\blk{\MM}{C,C}^{-1}\blk{\MM}{F,C}^{\dg}x,$ and we obtain the
equivalent condition
\[x^{\dg}((\ZZ)^{-1} - \blk{\MM}{F,F} )x \le \epsilon
x^{\dg}(\blk{\MM}{F,F}  -
\blk{\MM}{F,C}\blk{\MM}{C,C}^{-1}\blk{\MM}{F,C}^{\dg})x.\]
Since $\schur{\MM}{C} = \blk{\MM}{F,F} -
\blk{\MM}{F,C}\blk{\MM}{C,C}^{-1}\blk{\MM}{C,F} = \blk{\MM}{F,F}  -
\blk{\MM}{F,C}\blk{\MM}{C,C}^{-1}\blk{\MM}{F,C}^{\dg},$ we obtain our claim.
\end{proof}

%%% Local Variables:
%%% mode: latex
%%% TeX-master: "sushant_sandbox_bsdd-solver"
%%% End:

\section{Finding $\alpha$-{\bdd} Subsets}
\label{sec:absdd}

In this section we check that a simple randomized sampling procedure
leads to {\abdd} subsets. 
Specifically we will prove Lemma~\ref{lem:abddSubset}:

\abddSubsetAlgo*

%% of the algorithm that finds an {\abdd} subset containing at
%%least an $1/8 (1+\alpha)$ fraction of the rows of a {\bdd} matrix is given in
%%Figure~\ref{fig:randF}.
Pseudocode for this routine is given in Figure~\ref{fig:randF}.
\begin{figure}[h]
\begin{algbox} $F=\textsc{bDDSubset}(\MM , \alpha)$, where $\MM$
is a {\bdd} matrix with $n$ rows.
\begin{enumerate}
\item Let $F'$ be a uniform random subset of $\setof{1, \dots , n}$ of size
$\frac{n}{4(1+\alpha)}$.
\label{ln:randSubset}
\item Set \[
F=\left\{ i\in F'\text{ such that }
\sum_{j:j \neq i}
\norm{\blk{\MM}{i,j}}
\ge (1+\alpha)  \sum_{j \in F':j \neq i}
\norm{\blk{\MM}{i,j}} \right\}.
\]
\label{ln:pickSubset}
%\sum_{j \in F', j \neq i}\left|\MM_{ij}\right|\leq\frac{1}{1 + \alpha} \left|\MM_{ii}\right|\right\} .
\item If $|F| < \frac{n}{8(1 + \alpha)}$, goto Step~\ref{ln:randSubset}.
\item Return $F$
\end{enumerate}
\end{algbox}
\caption{Routine for finding an $\alpha$-strongly block diagonally dominant
submatrix}

\label{fig:randF}
\end{figure}
We first show that the set returned is guaranteed to be $\alpha$-strongly block diagonally dominant.

\begin{lemma}
If \textsc{bSDDSubset} terminates,
it returns $F$ such that $\blk{\MM}{F,F}$ is {\abdd}.
\end{lemma}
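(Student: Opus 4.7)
The plan is a short direct verification: for every $i \in F$, I will chain the defining {\bdd} inequality for $\MM$ with the filtering condition in Step~\ref{ln:pickSubset} and the trivial containment $F \subseteq F'$.

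First I would observe that the principal submatrix operation does not modify diagonal blocks, so $\blk{(\blk{\MM}{F,F})}{i,i} = \blk{\MM}{i,i}$ for every $i \in F$. Since $\MM$ is {\bdd}, we have
\[
\blk{\MM}{i,i} \pgeq \id_r \cdot \sum_{j : j \neq i} \norm{\blk{\MM}{i,j}}.
\]
By construction every $i \in F$ satisfies the filter
\[
\sum_{j : j \neq i} \norm{\blk{\MM}{i,j}} \;\ge\; (1+\alpha)\sum_{j \in F' : j \neq i} \norm{\blk{\MM}{i,j}},
\]
and because $F \subseteq F'$, the right-hand sum is in turn at least $\sum_{j \in F : j \neq i} \norm{\blk{\MM}{i,j}}$ (all summands are nonnegative).

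Chaining these three inequalities gives
\[
\blk{\MM}{i,i} \pgeq (1+\alpha)\,\id_r \cdot \sum_{j \in F : j \neq i} \norm{\blk{\MM}{i,j}},
\]
which is exactly the defining condition \eqref{eqn:abdd} for $\blk{\MM}{F,F}$ to be {\abdd}. Since this holds for every $i \in F$, the lemma follows. There is no real obstacle here — the entire statement amounts to checking that the explicit filter in Step~\ref{ln:pickSubset} enforces the {\abdd} property row by row; the harder parts of Lemma~\ref{lem:abddSubset} (the size bound $|F| \ge n/(8(1+\alpha))$ and the work/depth guarantees, which require a probabilistic analysis of the rejection step) are deferred to the rest of Section~\ref{sec:absdd}.
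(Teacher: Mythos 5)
Your proof is correct and is essentially identical to the paper's: both chain the {\bdd} inequality for $\MM$ with the Step~\ref{ln:pickSubset} filter and the containment $F \subseteq F'$ to verify the {\abdd} condition row by row. Nothing further is needed.
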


\begin{proof}
Consider some $i \in F$, the criteria for including $i$ in $F$
in Step~\ref{ln:pickSubset} gives:
\[
\sum_{j:j \neq i} \norm{\blk{\MM}{i,j}}
\geq (1+\alpha) \sum_{j \in F', j \neq i} \norm{\blk{\MM}{i,j}}
\geq (1+\alpha) \sum_{j \in F, j \neq i} \norm{\blk{\MM}{i,j}},
\]
where the last inequality follows since $F$ is a subset of $F'$.

Incorporating this into the definition of $\MM$ being {\bdd} gives
\[
\blk{\MM}{i,i} \pgeq \id_{r} \cdot \sum_{j:j \neq i} \norm{\blk{\MM}{i,j}}
\geq (1+\alpha) \id_{r} \cdot \sum_{j \in F, j \neq i} \norm{\blk{\MM}{i,j}}.
\]
which means $\blk{\MM}{F,F}$ is {\abdd}.
\end{proof}

It remains to show that the algorithm finds a big $F$ quickly.
This can be done by upper bounding the expected size of $F$,
or the probability of a single index being in $F$.

\begin{lemma}
\label{lem:expectedSize}
For any $i$, we have
\[
\prob{}{i \in F' \text{ and } i  \notin F} \leq \frac{1}{16 \left(1+\alpha\right)}.
\]
\end{lemma}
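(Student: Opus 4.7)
The plan is to condition on the event $i \in F'$, bound the conditional expectation of the weighted sum of off-diagonal entries falling inside $F'$, and then apply Markov's inequality.

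First I would set $S_i \defeq \sum_{j : j \neq i} \norm{\blk{\MM}{i,j}}$ and $X_i \defeq \sum_{j \in F', j \neq i} \norm{\blk{\MM}{i,j}}$. The event $\{i \in F' \text{ and } i \notin F\}$ is precisely $\{i \in F'\} \cap \{X_i > S_i/(1+\alpha)\}$, so I only need to bound the conditional probability $\Pr[X_i > S_i/(1+\alpha) \mid i \in F']$ and multiply by $\Pr[i \in F'] = \frac{1}{4(1+\alpha)}$.

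Next I would compute the conditional expectation. Conditioned on $i \in F'$, the remaining $|F'| - 1 = \frac{n}{4(1+\alpha)} - 1$ elements of $F'$ form a uniform random subset of $\{1,\dots,n\} \setminus \{i\}$, so each $j \neq i$ lies in $F'$ with conditional probability $\frac{|F'|-1}{n-1} \leq \frac{1}{4(1+\alpha)}$. By linearity of expectation,
\[
\expec{}{X_i \mid i \in F'} \;\leq\; \frac{S_i}{4(1+\alpha)}.
\]
Applying Markov's inequality conditionally then gives
\[
\prob{}{X_i > \tfrac{S_i}{1+\alpha} \;\big|\; i \in F'} \;\leq\; \frac{\expec{}{X_i \mid i \in F'}}{S_i/(1+\alpha)} \;\leq\; \frac{1}{4}.
\]

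Finally, combining with $\Pr[i \in F'] = \frac{1}{4(1+\alpha)}$ yields $\Pr[i \in F' \text{ and } i \notin F] \leq \frac{1}{16(1+\alpha)}$, as desired. I do not anticipate a serious obstacle here: the only subtlety is the conditional sampling probability, which follows from the fact that uniformly random fixed-size subsets, conditioned on containing a specified element, induce a uniformly random fixed-size subset on the remaining ground set; the $\leq \frac{1}{4(1+\alpha)}$ bound then follows by dropping the $-1$ in the numerator and $-1$ in the denominator, which only strengthens the inequality. The edge case $S_i = 0$ is trivial since then $X_i = 0$ and $i$ is always included in $F$ whenever $i \in F'$.
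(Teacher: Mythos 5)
Your proof is correct and follows essentially the same route as the paper's: condition on $i \in F'$, bound the conditional inclusion probability of each other index by $\frac{1}{4(1+\alpha)}$, apply linearity of expectation and Markov's inequality to get conditional failure probability at most $\nfrac{1}{4}$, and multiply by $\Pr[i \in F'] = \frac{1}{4(1+\alpha)}$. Your explicit treatment of the degenerate case $S_i = 0$ is a small but welcome addition that the paper leaves implicit.
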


\begin{proof}
This event only happens if $i \in F'$ and
% \[
% \blk{\MM}{i,i} \nsucceq (1+\alpha) \id_{r} \cdot \sum_{j \in F':j \neq i}
% \norm{\blk{\MM}{i,j}}.
% \]
%
% Thus, $A_{i}$ happens only if $i \in F'$ and 
\begin{equation}\label{eqn:subsetSimple}
  \sum_{j \in F', j \not = i} \norm{\blk{\MM}{i,j}}
>
  \frac{1}{1+\alpha }   \sum_{j : j \neq i}\norm{\blk{\MM}{i,j}}.
\end{equation}

Conditioning on $i$ being selected initially, or $i \in F'$,
the probability that each other $j \not = i$ is in $F'$ is
\[
  \frac{1}{n-1}  \left(\frac{n}{4 (1+\alpha)}-1 \right) < \frac{1}{4(\alpha+1)},
\]
which gives:
\[
  \expec{}{\sum_{j \in F', j \not = i } \norm{\blk{\MM}{i,j}} \Big| i \in F'}
% \leq 
% \frac{1}{n-1}
%   \left(\frac{n}{4 (1+\alpha)}-1 \right)  \sum_{j \not = i} \abs{\MM_{ij}}
<
  \frac{1}{4 (1+\alpha)}  \sum_{j : j \not = i} \norm{\blk{\MM}{i,j}}.
% \leq 
%   \frac{1}{4 (1+\alpha)}  \abs{\MM_{ii}},
\]
Markov's inequality then gives:
\[
  \prob{}{
\sum_{j \in F', j \not = i }  \norm{\blk{\MM}{i,j}} 
>
  \frac{1}{1+\alpha}  \sum_{j: j \neq i}\norm{\blk{\MM}{i,j}} 
\Big| i \in F'
  }
< 1/4,
\]
and thus
\[
  \prob{}{i \in F' \text{ and } i  \notin F} = \prob{}{i \not \in F | i \in F'} \prob{}{i \in F'} 
<
  \frac{1}{4} \frac{1}{4 (1+\alpha)}
=
  \frac{1}{16 (1+\alpha)}.
\]
\end{proof}

Combining these two bounds gives Lemma~\ref{lem:abddSubset}.

\begin{proof}(of Lemma~\ref{lem:abddSubset})
Applying Linearity of Expectation to Lemma~\ref{lem:expectedSize} gives
\[
\expec{}{\left| F' \setminus F \right|} \leq \frac{n}{16 \left( 1 + \alpha \right)}. 
\]
Markov's inequality then gives
\[
  \prob{}{\left| F' \setminus F \right| \geq \frac{n}{8 (1+\alpha)}} < 1/2.
\]
So, with probability at least $1/2$, $\sizeof{F} \geq n / (8 (1+\alpha))$,
  and the algorithm will pass the test in line 3.
Thus, the expected number of iterations made by the algorithm is at most $2$.
The claimed bounds on the expected work and depth of the algorithm follow.
\end{proof}
%%% Local Variables:
%%% mode: latex
%%% TeX-master: "bsdd-solver"
%%% End:

\section{Jacobi Iteration on {\abdd} Matrices}
\label{sec:jacobi}

%%
%%Strongly diagonally dominant subsets are useful because linear systems
%%involving them can be solved rapidly.
%%We remark that this power series is identical to
%%the Jacobi iteration for solving linear systems.
From an {\abdd} set $F$, we will construct 
  an operator $\ZZ$ that approximates
  $\blk{\MM}{F,F}^{-1}$  and that can be applied quickly.
 Specifically, we will show:

\jacobi*

Pseudocode of this routine is given in Figure~\ref{fig:jacobi}.
\begin{figure}[h]
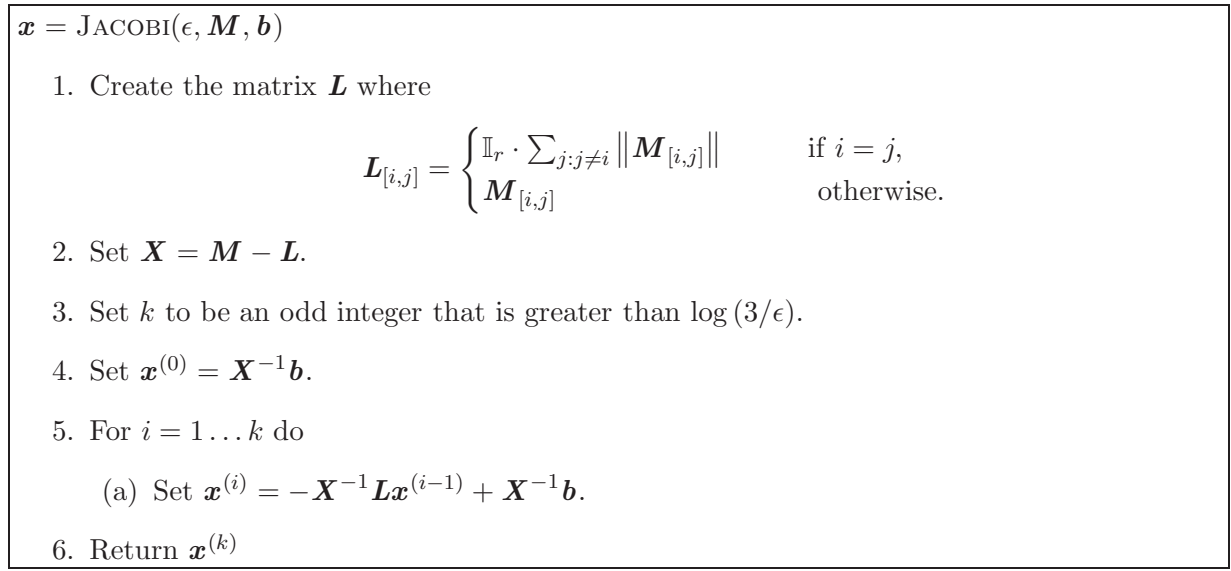

\begin{algbox} $\xx = \textsc{Jacobi}(\epsilon, \MM, \bb) $
\begin{enumerate}
	\item Create the matrix $\LL$ where
		\[
			\blk{\LL}{i, j} = 
			\begin{cases}
				\id_r \cdot \sum_{j:j \neq i} \norm{\blk{\MM}{i,j}} & \qquad \text{if } i = j, \\
				\blk{\MM}{i,j} & \qquad \text{ otherwise.}\\
			\end{cases}
		\]
	\item Set $\XX = \MM - \LL$.
	\item Set $k$ to be an odd integer that is greater than $\log \left( 3 / \epsilon \right)$.
	\item Set $\xx^{(0)} = \XX^{-1} \bb$.
	\item For $i = 1 \ldots k$ do
		\begin{enumerate}
			\item Set $\xx^{(i)} = -\XX^{-1} \LL \xx^{(i - 1)} + \XX^{-1} \bb$.
		\end{enumerate}
	\item Return $\xx^{(k)}$
\end{enumerate}
\end{algbox}
\caption{Jacobi Iteration for Solving Linear Systems in an {\abdd} Matrix}
\label{fig:jacobi}
\end{figure}

We first verify that any {\abdd} matrix has a good block-diagonal
preconditioner.

\split*

\begin{proof}
Write $\LL = \YY - \AA$ where $\YY$ is block-diagonal
and $\AA$ has its diagonal blocks as zeros. Note that $\LL$ is {\bdd}
by definition. Thus, by Corollary~\ref{lem:bsdd-factorization}, $\LL =
\YY - \AA \pgeq 0.$

% We first bound $\AA$ in terms of $\YY$ via a matrix generalization
% of the fact that positive-Laplacians are positive semidefinite.
% The matrix $\YY + \AA$ is a sum of matrices of the form
% \[
% \left[
% \begin{array}{cc}
% \id_{r} \cdot \norm{\MM_{ij}} & \MM_{ij} \\
% \MM_{ij}^{\dg} & \id_{r} \cdot \norm{\MM_{ij}}
% \end{array}
% \right]
% \]
% For any $\xx_1, \xx_2 \in \complex^{r}$, we have
% \begin{equation*}
% \xx_1^{\dg} \MM_{ij} \xx_2 \le \norm{x_{1}} \norm{\MM_{ij} x_{2}} \
% \leq \norm{\xx_1} \norm{\xx_2} \norm{\MM_{ij}}
% \leq \frac{\norm{\xx_1}^2 + \norm{\xx_2}^2}{2} \norm{\MM_{ij}},
% \end{equation*}
% % = \sqrt{\xx_1^{\dg} \MM_{ij} \xx_2 \xx_2^{\dg} \MM_{ij}^{\dg} \xx_1}
% % \leq \norm{\xx_2} \sqrt{\xx_1^{\dg} \MM_{ij}  \MM_{ij}^{\dg} \xx_1} \\
% which means this block is positive semi-definite.
% Summing over this means $\YY - \AA \pgeq 0$, or $\YY \pgeq -\AA$.
Using Lemma~\ref{cor:bsdd-sign-flip}, we know that
$\YY + \AA \pgeq 0$, or $\YY \pgeq -\AA$. This implies $2 \YY \pgeq \LL$.

As $\MM$ is $\alpha$-strongly diagonally dominant and
$\blk{\MM}{i, i} = \blk{\XX}{i,i} + \blk{\YY}{i, i}$, we have
\[
\blk{\left( \XX + \YY \right) }{i, i} \pgeq \left( 1 + \alpha \right)
\id_{r} \cdot \sum_{j \neq i} \norm{\blk{\MM}{i, j}}
= \left( 1 + \alpha \right) \blk{\YY}{i, i}.
\]
Manipulating this then gives:
\[
\XX \pgeq \alpha \YY \pgeq \frac{\alpha}{2} \LL .
\]
\end{proof}

We now move on to measuring the quality of the operator generated
by \textsc{Jacobi}.
It can be checked that running it $k$ steps gives the operator
\begin{equation}
\ZZ^{(k)} \defeq  \sum_{i = 0}^{k} \blk{\XX}{F,F}^{-1} \left(-\blk{\LL}{F,F} \blk{\XX}{F,F}^{-1} \right)^{i},
\label{eqn:defZ}
\end{equation}
which is equivalent to evaluating a truncation of the Neumann series for $\MM^{-1}$

\begin{lemma}\label{lem:seriesError}
Let $\MM$ be a matrix with splitting $\MM = \XX + \LL$
where $0 \pleq \LL \pleq \beta \XX$ for some parameter $1 > \beta > 0$.
Then, for odd $k$ and for $\ZZ^{(k)}$ as defined in \eqref{eqn:defZ} we have:
\begin{equation}\label{eqn:lightBlock2}
\XX + \LL
\preceq (\ZZ^{(k)})^{-1} \preceq
\XX  + \left( 1 + \delta  \right) \LL
\end{equation}
where
\[
  \delta = \beta^{k} \frac{1+\beta}{1-\beta^{k+1}}.
\]
\end{lemma}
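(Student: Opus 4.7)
The plan is to reduce the matrix inequality to a scalar one by conjugating into a basis that simultaneously symmetrizes $\XX$ and $\LL$. Define $\YY \defeq \XX^{-1/2} \LL \XX^{-1/2}$, so that the hypothesis $0 \pleq \LL \pleq \beta \XX$ becomes $0 \pleq \YY \pleq \beta \II$ with $\beta < 1$. Plugging into the definition~\eqref{eqn:defZ} of $\ZZ^{(k)}$ and collecting the $\XX^{-1/2}$ factors gives
\[
\XX^{1/2} \ZZ^{(k)} \XX^{1/2} = \sum_{i=0}^{k} (-\YY)^{i}.
\]

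Next I would use the telescoping identity $(\II + \YY) \sum_{i=0}^{k} (-\YY)^{i} = \II - (-\YY)^{k+1}$, which for odd $k$ simplifies to $\II - \YY^{k+1}$ (this is precisely where the odd-$k$ hypothesis is used: it keeps the remainder term positive). Since $\YY \pleq \beta \II$ with $\beta < 1$, the matrix $\II - \YY^{k+1}$ is invertible, so
\[
\XX^{-1/2} (\ZZ^{(k)})^{-1} \XX^{-1/2} = (\II + \YY)(\II - \YY^{k+1})^{-1}.
\]
Because both factors are polynomials in $\YY$, they commute and are simultaneously diagonalizable with $\YY$. Thus the asserted two-sided bound $\XX + \LL \pleq (\ZZ^{(k)})^{-1} \pleq \XX + (1+\delta)\LL$, after conjugating by $\XX^{-1/2}$, reduces to verifying the scalar inequality
\[
1 + \lambda \;\le\; \frac{1 + \lambda}{1 - \lambda^{k+1}} \;\le\; 1 + (1+\delta) \lambda
\]
for every eigenvalue $\lambda \in [0, \beta]$ of $\YY$.

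The left inequality is immediate since $\lambda^{k+1} \ge 0$. For the right inequality, clearing denominators and cancelling the $1 + \lambda$ term rearranges (for $\lambda > 0$) to the equivalent form $\delta (1 - \lambda^{k+1}) \ge \lambda^{k}(1 + \lambda)$, i.e.\ $\delta \ge \lambda^{k}(1+\lambda)/(1 - \lambda^{k+1})$. The right side is monotonically increasing in $\lambda$ on $[0, \beta]$ (numerator increasing, denominator decreasing), so it attains its maximum at $\lambda = \beta$, where it equals exactly the declared value $\delta = \beta^{k}(1+\beta)/(1-\beta^{k+1})$. The main (minor) subtlety is simply bookkeeping the parity of $k$; there is no deep obstacle, as the parity is exactly what guarantees the remainder $\II - \YY^{k+1}$ is a contraction of $\II$ and the resulting bound goes in the direction we want.
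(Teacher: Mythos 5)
Your proof is correct and follows essentially the same route as the paper's: both hinge on the telescoping identity $\bigl(\II+\YY\bigr)\sum_{i=0}^{k}(-\YY)^{i}=\II-\YY^{k+1}$ (with $k$ odd making $k+1$ even) and then reduce to the identical scalar inequality $\delta\ge \lambda^{k}(1+\lambda)/(1-\lambda^{k+1})$ on the spectrum of $\XX^{-1}\LL$. The only cosmetic difference is that you symmetrize explicitly via conjugation by $\XX^{\pm 1/2}$ and invert the Neumann series to a closed form, whereas the paper works with the non-symmetric products $\ZZ^{(k)}(\XX+c\LL)$ and cites the generalized-eigenvalue characterization of the Loewner order; your version is self-contained and you additionally verify the monotonicity that makes $\lambda=\beta$ the extremal case, which the paper leaves implicit.
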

\begin{proof}
The left-hand inequality is equivalent to the statement that
 all the eigenvalues of
 $\ZZ^{(k)} (\XX + \LL)$ are at most $1$
   (see \cite[Lemma 2.2]{SupportGraph} or
  \cite[Proposition 3.3]{SpielmanTengLinsolve}).
To see that this is the case, expand
\begin{align*}
\ZZ^{(k)} (\XX + \LL)
& = 
\left(\sum_{i=0}^{k} \XX^{-1} (-\LL \XX^{-1})^{i} \right)
(\XX + \LL)
\\
& = 
\sum_{i=0}^{k} (-\XX^{-1} \LL)^{i}
-
\sum_{i=1}^{k+1} (-\XX^{-1} \LL)^{i}
\\
& =
\II - (\XX^{-1} \LL)^{k+1} .
\end{align*}
% \richard{Should I use $\II$ here instead of $\id$?}
% \yintat{Yes, I agree and change coz this is for general matrix.}
As all the eigenvalues of an even power of a matrix are nonnegative,
 all of the eigenvalues of this last matrix are at most $1$.

Similarly, the other inequality is equivalent to the assertion
 that all of the eigenvalues of 
 $\ZZ^{(k)} (\XX + (1+\delta ) \LL)$
 are at least one.
Expanding this product yields
%\begin{multline*}
\[
\left(\sum_{i=0}^{k} \XX^{-1} (-\LL \XX^{-1})^{i} \right)
(\XX + (1+\delta ) \LL)
= 
\II - (\XX^{-1} \LL)^{k+1} 
+ \delta  \sum_{i=0}^{k} (-1)^{i} (\XX^{-1} \LL)^{i+1} 
\]
%\end{multline*}
The eigenvalues of this matrix are precisely the numbers
\begin{equation}\label{eqn:jacobiError}
1 - \lambda^{k+1} + \delta  \sum_{i=0}^{k} (-1)^{i} \lambda^{i+1},
\end{equation}
where $\lambda$ ranges over the eigenvalues of 
 $\XX^{-1} \LL$.
The assumption $\LL \pleq \beta \XX$ implies
 that the eigenvalues of  $\XX^{-1} \LL$
  are at most $\beta$, so $0 \leq \lambda \leq \beta$.
We have chosen the value of $\delta$ precisely to guarantee that,
 under this condition on $\lambda$, the value of \eqref{eqn:jacobiError}
 is at least $1$.
\end{proof}

This error crucially depends only on $\LL$, which for any choice of $F$
can be upper bounded by $\MM$.
Propagating the error this way allows us to prove the guarantees for
\textsc{Jacobi}

\begin{proof} (of Theorem~\ref{thm:jacobi})
Consider the matrix $\blk{\LL}{F, F}$ generated when calling
$\textsc{Jacobi}$ with $\blk{\MM}{F, F}$.
$\MM$ being {\bdd} means for each $i$ we have
\[
\blk{\MM}{i, i} \pgeq \id_r \cdot \sum_{j \neq i} \norm{\blk{\MM}{i, j}},
\]
which means for all $i \in F$
\[
\blk{\MM}{i, i} - \id_r \cdot \sum_{j \neq i, j \in F}  \norm{\blk{\MM}{i, j}}
\pgeq \id_r \cdot \sum_{j \notin F} \norm{\blk{\MM}{i, j}}.
\]
Therefore if we extend $\blk{\LL}{F, F}$ onto the full matrix by
putting zeros everywhere else, we have $\LL \pleq \MM$.
Fact~\ref{fact:schurLoewner} then gives
$\blk{\LL}{F, F} \pleq \schur{\MM }{V\setminus F}$.

Lemma~\ref{lem:split} gives that $\blk{\XX}{F, F} \pgeq \frac{\alpha}{2} \blk{\LL}{F, F}$.
As $\alpha \geq 4$, we can invoke Lemma~\ref{lem:seriesError} with
$\beta = \frac{1}{2}$.
Since $\frac{1+\beta}{1-\beta^{k+1}} \leq (\frac{3}{2}) / (\frac{1}{2}) \leq 3$,
our choice of $k = \log(3 / \epsilon)$ gives the desired error.
Each of these steps involve a matrix vector multiplication in $\blk{\LL}{F, F}$
and two linear system solves in $\blk{\XX}{F, F}$.
The former takes $O(m)$ work and $O(\log{n})$ depth since the blocks of
$\blk{\LL}{F, F}$ are a subset of the blocks of $\MM$, while the latter
takes $O(n)$ work and $O(\log{n})$ depth due to $\blk{\XX}{F,F}$ being
block-diagonal.
\end{proof}
%%
%%Note that in particular we have
%%\[
%%\ZZ_1 = \II - \XX^{-1} \YY \XX^{-1},
%%\]
%%and this operator satisfies
%%\[
%%\AA \preceq \ZZ_1 \preceq \AA + \frac{1}{\alpha} \YY.
%%\]

%%this is a combination of sections 6 and 7 from the previous version.
\section{Existence of Linear-Sized Approximate Inverses}
\label{sec:existence}

In this section we prove Theorem \ref{thm:UDU}, which tells us that
  every $\bdd$ matrix has a linear-sized approximate inverse.
In particular, this implies that for every $\bdd$ matrix
  there is a linear-time algorithm that approximately solves
  systems of equations in that matrix.
To save space, we will not dwell on this algorithm, but rather
  will develop the linear-sized approximate inverses directly.
There is some cost in doing so: there are very large constants in
  the linear-sized approximate inverses that are not present
  in a simpler linear-time solver.

To obtain a $\UU^{T} \DD \UU$ factorization from an $\eepsilon$-SCC
  in which each $\blk{\MM^{(i)}}{F_{i} F_{i}}$ is $4$-{\bdd},
  we employ the procedure in Figure~\ref{fig:UDU}.

\begin{figure}[h]
\begin{algbox}
$(\DD, \UU) = \textsc{Decompose}\left(\MM^{(1)}, \dots , \MM^{(d)}, F_{1}, \dots , F_{d-1}\right)$,
 where each $\MM^{(i)}$ is a {\bdd} matrix
 and each $\blk{\MM^{(i)}}{F_{i} F_{i}}$ is $4$-{\bdd}.
\begin{enumerate}

\item Use \eqref{eqn:defZ} to compute the matrix $\ZZ^{(i)}$ such that
  $\textsc{Jacobi}(\epsilon_{i}, \MM, \bb) = \ZZ^{(i)} \bb$.

\item For each $i < d$, write $\MM^{(i)} = \XX^{(i)} + \LL^{(i)}$,
  where $\XX^{(i)}$ is block diagonal and $\LL^{(i)}$ is $\bdd$,
   as in \textsc{Jacobi}.

\item Let $\XX^{(d)} = \id_{\abs{C_{d - 1}}}$
  and let $\UUhat$ be the upper-triangular Cholesky factor of $\MM^{(d)}$.

\item Let $\DD$ be the block diagonal matrix with $\blk{\DD}{F_{i},F_{i}} = \XX^{(i)}$,
  for $1 \leq i < d$, and $\blk{\DD}{C_{d-1} ,C_{d-1}} = \id_{\abs{C_{d-1}}}$.

\item Let $\UU$ be the upper-triangular matrix with $1s$ on the diagonal,
  $\blk{\UU}{C_{d-1}, C_{d-1}} =  \UUhat$, and
  $\blk{\UU}{F_{i}, C_{i}} = \ZZ^{(i)} \blk{\MM^{(i)}}{F_{i}, C_{i}}$,
  for $1 \leq i < d$.
\end{enumerate}
\end{algbox}

\caption{Converting a vertex sparsifier chain into $\UU$ and $\DD$.}
\label{fig:UDU}
\end{figure}

\begin{lemma}\label{lem:uduApprox}
On input an $\eepsilon$-SCC of $\MM^{(0)}$
  in which  each $\blk{\MM^{(i)}}{F_{i} F_{i}}$ is $4$-{\bdd},
  the algorithm \textsc{Decompose} produces matrices $\DD$ and $\UU$
  such that
\[
  \UU^{T} \DD \UU \approx_{\gamma} \MM ,
\]
where
\[
  \gamma \leq 2 \sum_{i=0}^{d-1} \epsilon_{i} + \max_{i} \epsilon_{i} + 1/2.
\]
\end{lemma}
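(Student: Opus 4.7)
The plan is to introduce an auxiliary block-diagonal matrix $\DDtil$ that agrees with $\DD$ everywhere except in the $F_i$ blocks, where it uses $(\ZZ^{(i)})^{-1}$ in place of $\XX^{(i)}$. I would then split the error $\gamma$ into two independent pieces: (a) the ``chain'' error $\UU^{T} \DDtil\, \UU \approx_{2\sum \epsilon_i} \MM^{(0)}$, and (b) the ``substitution'' error bounding $\DDtil$ against $\DD$. Transitivity of the $\approx$ relation (Fact~\ref{frac:orderComposition}) then combines the two into the claimed bound on $\gamma$.

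For (a), I would unfold $\UU^{T} \DDtil\, \UU$ level by level and observe that at each level $i$, $(\UU^{T} \DDtil\, \UU)|_{C_{i-1}} = (\UU^{(i)}_{\mathrm{outer}})^{T}\, \mathrm{diag}\bigl((\ZZ^{(i)})^{-1},\, (\UU^{T} \DDtil\, \UU)|_{C_i}\bigr)\, \UU^{(i)}_{\mathrm{outer}}$, where $\UU^{(i)}_{\mathrm{outer}}$ has identity diagonal blocks and $\ZZ^{(i)} \blk{\MM^{(i-1)}}{F_i,C_i}$ in its $F_i \times C_i$ off-diagonal block. Expanding this product yields exactly the formula for $(\WW^{(i-1)})^{-1}$ derived in the proof of Lemma~\ref{lem:apply_chain}, where $\WW$ is the linear operator associated with \textsc{ApplyChain}. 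With the base case handled by $\UUhat^{T} \UUhat = \MM^{(d)}$, this gives $\UU^{T} \DDtil\, \UU = \WW^{-1}$, and invoking Lemma~\ref{lem:apply_chain} directly yields $\UU^{T} \DDtil\, \UU \approx_{2 \sum \epsilon_i} \MM^{(0)}$.

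For (b), I would bound $(\ZZ^{(i)})^{-1}$ above and below by $\XX^{(i)}$ using the $4$-{\bdd}-ness of $\blk{\MM^{(i-1)}}{F_i,F_i}$. By Lemma~\ref{lem:split} with $\alpha=4$, $\XX^{(i)} \preceq \blk{\MM^{(i-1)}}{F_i,F_i} \preceq (3/2)\, \XX^{(i)}$, and by Theorem~\ref{thm:jacobi}, $(\ZZ^{(i)})^{-1} \preceq \blk{\MM^{(i-1)}}{F_i,F_i} + \epsilon_i \schur{\MM^{(i-1)}}{C_i}$. Since Schur complements satisfy $\schur{\MM^{(i-1)}}{C_i} \preceq \blk{\MM^{(i-1)}}{F_i,F_i} \preceq (3/2) \XX^{(i)}$, these combine to give $\XX^{(i)} \preceq (\ZZ^{(i)})^{-1} \preceq (3/2)(1+\epsilon_i)\XX^{(i)} \leq e^{1/2+\epsilon_i}\XX^{(i)}$. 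The crucial observation is that because the $F_i$ are pairwise disjoint in the block-diagonal $\DD$, these per-level factors do not accumulate: promoting to the block-diagonal level gives $\DD \preceq \DDtil \preceq e^{1/2 + \max_i \epsilon_i}\, \DD$, and conjugation by $\UU$ preserves this PSD inequality. Combining with (a) then gives $\UU^{T} \DD\, \UU \approx_{2 \sum \epsilon_i + \max_i \epsilon_i + 1/2} \MM^{(0)}$.

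The main technical obstacle will be the clean identification of $\UU^{T} \DDtil\, \UU$ with $\WW^{-1}$, which requires verifying Hermiticity of $\ZZ^{(i)}$ (an easy induction on the Neumann truncation using that $\XX^{(i)}$ and $\LL^{(i)}$ are both Hermitian) and correctly matching the ``base case'' where \textsc{Decompose} does a direct Cholesky $\UUhat^{T} \UUhat = \MM^{(d)}$ while \textsc{ApplyChain} inverts $\MM^{(d)}$ directly. The conceptual key is the non-compounding of the middle-block substitution error across levels, which is precisely what keeps the additive constant $1/2$ independent of the chain depth $d$.
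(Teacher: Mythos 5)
Your proposal is correct and follows essentially the same route as the paper: identify $\UU^{T}\DDtil\,\UU$ with $\WW^{-1}$ from \textsc{ApplyChain} to get the $2\sum_i \epsilon_i$ chain error, then bound the cost of swapping $(\ZZ^{(i)})^{-1}$ for $\XX^{(i)}$ in the block-diagonal middle factor by $\max_i \epsilon_i + 1/2$ using Lemma~\ref{lem:split} and the Jacobi guarantee, with the disjointness of the $F_i$ blocks preventing accumulation across levels. The only cosmetic difference is that you derive the per-block bound $\XX^{(i)} \preceq (\ZZ^{(i)})^{-1} \preceq e^{1/2+\epsilon_i}\XX^{(i)}$ from the SCC condition plus Schur-complement monotonicity, whereas the paper composes $\blk{\MM}{F_iF_i} \approx_{\epsilon_i} (\ZZ^{(i)})^{-1}$ (Lemma~\ref{lem:seriesError}) with $\XX^{(i)} \approx_{1/2} \blk{\MM}{F_iF_i}$; both yield the same constant.
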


%%
%%\begin{lemma}\label{lem:uduApprox}
%%Given a vertex sparsifier chain of $\MM$, $\alpha_{i} \geq 4$ and $\epsilon_{i}>0$,
%%  the algorithm \textsc{Decompose} produces matrices $\DD$ and $\UU$
%%  such that $  \UU^{T} \DD \UU \approx_{\gamma} \MM$
%%  with $ \gamma \leq 2 \sum_{i=0}^{d-1} \epsilon_{i} + 4 / \min_{i} \alpha_{i}$.
%%\end{lemma}

\begin{proof}
Consider the inverse of the operator $\WW = \WW^{(1)}$ realized by the algorithm
  \textsc{ApplyChain},
  and the operators $\WW^{(i)}$ that appear in the proof of Lemma~\ref{lem:apply_chain}.

We have
\[
 \left(  \WW^{(i)} \right)^{-1}
 =
\left[
\begin{array}{cc}
\II & 0\\
\blk{\MM}{C_{i},F_{i}} \ZZ^{(i)}  & \II
\end{array}
\right]
\left[
\begin{array}{cc}
\left(\ZZ^{(i)}  \right)^{-1} & 0 \\
0 & \left(\WW^{(i+1)} \right)^{-1}
\end{array}
\right]
  \left[
\begin{array}{cc}
\II & \ZZ^{(i)} \blk{\MM}{F_{i},C_{i}}\\
0 & \II
\end{array}
\right],
\]
and
\[
 \left(  \WW^{(d)} \right)^{-1}
  =
 \MM^{(d)}
 = \UUhat^{T}  \UUhat.
\]
After expanding and multiplying the matrices in this recursive factorization,
  we obtain
\[
 \left( \WW^{(1)} \right)^{-1}
= 
  \UU^{T}
\begin{bmatrix}
\left(\ZZ^{(1)} \right)^{-1} &  \dots & 0 & 0\\
0 & \ddots   & 0 & 0\\
0 &  \ldots & \left(\ZZ^{(d - 1)} \right)^{-1} & 0 \\
0 &  \ldots & 0 & \id_{\abs{C_{d-1}} }
\end{bmatrix}
  \UU.
\]
Moreover, we know that this latter matrix is a $2 \sum_{i=0}^{d-1} \epsilon_{i}$
  approximation of $\MM$.
It remains to determine the impact of replacing the matrix in the middle of
  this expression with $\DD$.

%%It suffices to examine how well each matrix 
%% $\left(\ZZ^{(i)} \right)^{-1}$
%%  is approximated by $\XX^{(i)}$.
%%From Lemma~\ref{lem:Xdom} we know that 
%%\[
%%\XX^{(i)} \pgeq (\alpha_{i} /2) \LL^{(i)}.
%%\]
%%Thus, we may use Lemma~\ref{lem:lightBlock2} with $\beta = \alpha_{i} /2$
%%  to conclude that
%%\[
%%\XX^{(i)}
%%\approx_{4 / \alpha_{i}}
%%\left(\ZZ^{(k_{i})}_{F_{i}F_{i}} \right)^{-1}.
%%\]
Lemma~\ref{lem:seriesError} implies that each
  $\blk{\MM}{F_{i}F_{i}} \approx_{\epsilon_{i}} \left(\ZZ^{(i)}\right)^{-1}$
  and Lemma \ref{lem:split}  implies that
  $\XX^{(i)} \approx_{1/2} \blk{\MM}{F_{i}F_{i}}$.
So, the loss in approximation quality when we substitute
  the diagonal matrices is $\max_{i} \epsilon_{i} + 1/2$.
\end{proof}

Invoking this  decomposition procedure in conjunction with the
the near-optimal sparsification routine from Theorem~\ref{thm:blockBSS}
gives a nearly-linear work routine.
Repeatedly picking subsets using Lemma~\ref{lem:subsetLowDeg}
gives then gives the linear sized decomposition.

\thmUDU*

\begin{proof}
We set $\alpha = 4$ throughout and
  $\epsilon_{i} = 1/8 (i+2)^{2}$.
Theorem~\ref{thm:blockBSS} then guarantees that the average number of
  nonzero blocks in each column of $\MM^{(i)}$ is at most
  $10 r / \epsilon_{i}^{2} = 640 r (i+2)^{4}$.
If we now apply Lemma~\ref{lem:subsetLowDeg} to find $4$-diagonally dominant
  subsets $F_{i}$ of each $\MM^{(i)}$,
  we find that each such subset contains at least a $1/80$ fraction of the block columns
  of its matrix and that each
  column and row of $\MM^{(i)}$ indexed by $F$ has at most $1280 r (i+2)^{4}$
  nonzero entries.
This implies that each row of
  $\ZZ^{(i)} \blk{\MM}{F_{i},C_{i}}$
  has at most
  $(1280 r (i+2)^{4})^{k_{i}+1}$ nonzero entries.

Let $n_{i}$ denote the number of block columns of $\MM^{(i)}$.
By induction, we know that
\[
  n_{i} \leq n \left(1 - \frac{1}{80} \right)^{i-1}.
\]
So, the total number of nonzero blocks in $\UU$ is at most
\[
  \sum_{i=1}^{d} n_{i} (1280 r (i+2)^{4})^{k_{i}+1}
\leq 
  n \sum_{i=1}^{d} \left(1 - \frac{1}{80} \right)^{i-1} (1280 r (i+2)^{4})^{k_{i}+1}.
\]
We will show that the term multiplying $n$ in this later expression is upper bounded by
  a constant.
To see this, note that $k_{i} \leq 1 + \log (2 \epsilon_{i}^{-1} )\leq \nu \log (i+1)$
  for some constant $\nu$.
So, there is some other constant $\mu$ for which
\[
(1280 r (i+2)^{4})^{k_{i}+1}
\leq 
\exp (\mu \log^{2} (i+1)).
\]
This implies that the sum is at most
\[
  \sum_{i \geq 1} \exp (\mu \log^{2} (i+1) - i / 80),
\]
which is bounded by a constant.

To bound the quality of the approximation, we compute
\[
  2 \sum_{i} \epsilon_{i} + \max_{i} \epsilon_{i} + 1/2
=
  2 \sum_{i} 1/8 (i+2)^{2} + 1/72 + 1/2
<
   3/4. 
\]
The claimed bound on the work to perform backwards and forwards substitution with $\UU$
  is standard: these operations require work linear in the number of nonzero entries of $\UU$.
The bound on the depth follows from the fact that the substitutions can be performed level-by-level,
  take depth $O (\log n)$ for each level, and the number of levels, $d$, is logarithmic in $n$.
\end{proof}

% %finding it
% % proving inverse approx

\newcommand{\eps}{\epsilon}
\newcommand{\twoCover}[1]{\ensuremath{{#1}^{(K2)}}}
\newcommand{\prodClique}{\textsc{CliqueSparsification}}
\newcommand{\bipClique}{\textsc{BipartiteCliqueSparsification}}

\section{Spectral Vertex Sparsification Algorithm}
\label{sec:vertexSparsify}

In this section, we give a proof of the following lemma that
immediately implies Lemma~\ref{lem:approxSchur}. 
\begin{lemma}
\label{lem:approxSchur:extended}
Let $\MM$ be a {\bdd} matrix with index set $V$, and $m$ nonzero
blocks. Let $F \subseteq V$ be such that $\blk{\MM}{F,F}$ is {\abdd}
for some $\alpha \ge 4.$ The algorithm $\schurApx(\MM,F,\epsilon)$,
returns a matrix $\MMtil_{SC}$ s.t.
\begin{enumerate}
\item $\MMtil_{SC}$ has $O(m ( \epsilon^{-1}
  \log \log \epsilon^{-1} )^{O(\log \log \epsilon^{-1})} )$ nonzero blocks, and
\item  $\MMtil_{SC} \approx_{\epsilon} \schur{\MM}{F}$,
\end{enumerate}
in
$O(m ( \epsilon^{-1} \log \log \epsilon^{-1} )^{O(\log \log
  \epsilon^{-1})} )$
work and $O(\log n (\log \log \epsilon^{-1}) )$ depth.

Moreover, if $\TT$ is a matrix such that only the submatrix
$\blk{\TT}{C,C}$ is nonzero, and $\MM + \TT$ is {\bdd}, then
$\schurApx(\MM+\TT, F, \epsilon) = \schurApx(\MM, F,
\epsilon)+\blk{\TT}{C,C}$.
\end{lemma}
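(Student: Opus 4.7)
The plan is to realize $\schurApx$ as a loop that repeatedly calls $\schurSquare$, followed by a final call to $\lastStep$, and to track (i) how fast the diagonal-dominance strength of $\blk{\MM^{(i)}}{F,F}$ grows, (ii) how the multiplicative errors compound, and (iii) how the block-density inflates.

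First, I would unroll the iteration. Starting from $\MM^{(0)} = \MM$ with $\blk{\MM^{(0)}}{F,F}$ being $\alpha_{0}$-$\bdd$ for $\alpha_{0}\ge 4$, each invocation $\MM^{(i+1)}\approx_{\delta_{i}} \schurSquare(\MM^{(i)},F,\delta_{i})$ preserves the Schur complement w.r.t.\ $F$ exactly in the ideal (unsparsified) case by Eq.~\eqref{eq:schurSquare:overview}, while the identity $\DD-\AA\DD^{-1}\AA$ boosts the $\bdd$ parameter of $\blk{\MM^{(i+1)}}{F,F}$ from $\alpha_{i}$ to essentially $\alpha_{i}^{2}$; this is the fact alluded to right after Eq.~\eqref{eq:schurSquare:overview}. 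Hence after $i^{\star} = O(\log\log\epsilon^{-1})$ steps, $\blk{\MM^{(i^{\star})}}{F,F}$ is $\Theta(\epsilon^{-1})$-$\bdd$, at which point I invoke $\lastStep(\MM^{(i^{\star})},F,\Theta(\epsilon^{-1}),\epsilon/2)$ to produce $\MMtil_{SC}$. Error composition via Fact~\ref{frac:orderComposition} applied to $\schur{\cdot}{F}$ (using Fact~\ref{fact:orderCAC} to pull the $\approx$ through the Schur-complement map, which is monotone and PSD-preserving by Fact~\ref{fact:schurLoewner} on both sides of the inequality $e^{-\delta_{i}}\MM^{(i)}\pleq \MM^{(i+1)}\pleq e^{\delta_{i}}\MM^{(i)}$) gives the overall bound $\schurApx(\MM,F,\epsilon) \approx_{\sum \delta_{i} + \epsilon/2} \schur{\MM}{F}$. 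Setting $\delta_{i}=\epsilon/(4i^{\star})$ makes the total error $\le \epsilon$.

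For density, each squaring step can be implemented with the weighted-expander and bipartite-expander sparsifiers from Sections~\ref{sec:product-clique-sparsify} and~\ref{sec:weightedExp} (invoked on the implicit clique $\blk{\MM}{C,F}\DD^{-1}\blk{\MM}{F,C}$ and on the $[F,F]$ block $\AA\DD^{-1}\AA$, and analogously for the off-diagonal pieces), so the number of nonzero blocks grows by at most a $\mathrm{poly}(\delta_{i}^{-1})=\mathrm{poly}(i^{\star}/\epsilon)$ factor and each step runs in time near-linear in the current density with depth $O(\log n)$. Iterating this $i^{\star}=O(\log\log\epsilon^{-1})$ times yields both the density bound $O(m(\epsilon^{-1}\log\log\epsilon^{-1})^{O(\log\log\epsilon^{-1})})$ and the matching work bound; the depth accumulates additively, giving $O(\log n\cdot \log\log\epsilon^{-1})$. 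The main obstacle I expect is verifying that the sparse approximations produced by the expander sparsifiers, when substituted into the recursion, do not blow up the $\bdd$ parameter faster than the squaring improves it; this requires checking that $\blk{\MMtil^{(i+1)}}{F,F}$ inherits $\Theta(\alpha_{i}^{2})$-$\bdd$-ness up to the multiplicative slack $e^{\delta_{i}}$, which is where the careful choice $\delta_{i}=o(1/i^{\star})$ is needed so that the doubling of the exponent is not wiped out.

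For the ``moreover'' claim, I would trace where the $[C,C]$ block enters the algorithm. Writing $\blk{(\MM_{2})}{C,C} = \blk{\MM}{C,C} - \tfrac{1}{2}\blk{\MM}{C,F}\DD^{-1}\blk{\MM}{F,C}$, the squaring step only updates the $[C,C]$ block by \emph{adding} a sparsified approximation of the quadratic term, which itself depends only on $\blk{\MM}{F,F}$, $\blk{\MM}{F,C}$, $\blk{\MM}{C,F}$. The same is true of $\lastStep$ by inspection of Section~\ref{sec:lastStep}. Consequently, if we replace $\MM$ by $\MM+\TT$ with $\TT$ supported on $[C,C]$, every intermediate $[F,F]$, $[F,C]$, $[C,F]$ block is identical to the run on $\MM$, while the $[C,C]$ block is shifted by exactly $\blk{\TT}{C,C}$ throughout. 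The output satisfies $\schurApx(\MM+\TT,F,\epsilon) = \schurApx(\MM,F,\epsilon)+\blk{\TT}{C,C}$, which matches the algebraic identity $\schur{\MM+\TT}{F}=\schur{\MM}{F}+\blk{\TT}{C,C}$ and thus is consistent with the approximation guarantee already established.
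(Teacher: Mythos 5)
Your overall plan matches the paper's: iterate \textsc{SchurSquare} for $d = O(\log\log\epsilon^{-1})$ steps to boost the $\alpha$-\bdd{} parameter of the $[F,F]$ block to $\Theta(\epsilon^{-1})$, then invoke \textsc{LastStep}, compose the errors, and multiply the per-step density inflation factors. The treatment of the ``moreover'' claim (trace where $\blk{\TT}{C,C}$ enters and observe it passes through additively) is also the argument the paper uses. However, there are two reasoning errors in your write-up worth fixing.

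First, the parenthetical justifying error composition asserts $e^{-\delta_i}\MM^{(i)}\pleq\MM^{(i+1)}\pleq e^{\delta_i}\MM^{(i)}$, and then pulls this through $\schur{\cdot}{F}$. This inequality is false: $\MM^{(i+1)}$ has a far more diagonally-dominant $[F,F]$ block than $\MM^{(i)}$ and is not Loewner-close to it. The correct chain is the one the paper encodes as item~1 of Lemma~\ref{lem:schurSquare}: the exact squared matrix $\MM_2$ of Eq.~\eqref{eq:schurSquare:overview} satisfies $\schur{\MM_2}{F}=\schur{\MM^{(i)}}{F}$ \emph{exactly} (Lemma~\ref{lem:schurSquare:equality}), and the sparsified output $\MM^{(i+1)}$ satisfies $\MM^{(i+1)}\approx_{\delta_i}\MM_2$, from which Fact~\ref{fact:schurLoewner} (not Fact~\ref{fact:orderCAC}) gives $\schur{\MM^{(i+1)}}{F}\approx_{\delta_i}\schur{\MM^{(i)}}{F}$. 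You mention the squaring identity is exact ``in the ideal case,'' so you clearly have the right idea in mind, but the justification as written is wrong.

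Second, the worry that $\delta_i=o(1/i^\star)$ is required ``so that the doubling of the exponent is not wiped out'' is misplaced. Lemma~\ref{lem:schurSquare} shows the $\bdd$ parameter goes from $\alpha$ to at least $\alpha^2/2$ \emph{regardless} of $\delta_i<1/2$: the $e^{\delta_i}$ loss is dominated by the $(1+\alpha)/\alpha>1$ gain, and the $1/2$ is a fixed constant that the recursion $\alpha_i/2 = (\alpha_{i-1}/2)^2$ absorbs cleanly, yielding $\alpha_d = 2(\alpha/2)^{2^d}$. The only reason to take $\delta_i = \Theta(\epsilon/d)$ is the error budget $\sum_i \delta_i \le \epsilon/2$, exactly as the paper's pseudocode sets $\delta_i = \epsilon/(2d)$.

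With these two fixes, your proof coincides with the paper's.
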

% is a matrix of the same dimensions as $\MM$ and $\TT$ is only
% non-zero inside the submatrix $\blk{\TT}{C,C}$, 
%
We
show how to sparsify the Schur complement of $\MM$ after eliminating a
set of indices $F$ such that $\blk{\MM}{F,F}$ is an {\abdd}
matrix. The procedure {\schurApx} is described in
Figure~\ref{fig:approxSchur} and uses two key subroutines
$\schurSquare$ and $\lastStep.$ $\schurSquare$ allows us to
approximate $\schur{\MM}{F}$ as the Schur complement of another matrix
$\MM_{1}$ such that $\blk{\MM_{1}}{F,F}$ is roughly
$\alpha^{2}$-\bdd. {\lastStep} allows us to approximate
$\schur{\MM}{F}$ with $O(\epsilon)$ error, when $\blk{\MM}{F,F}$ is
roughly $1/\epsilon$-\bdd.
\begin{figure}[ht]
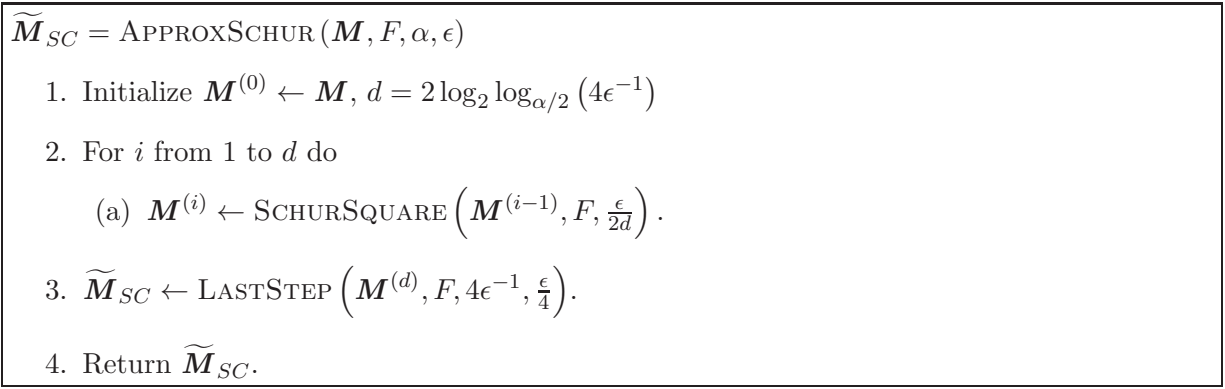


\begin{algbox}
$\MMtil_{SC} = \schurApx\left(\MM, F, \alpha, \epsilon \right)$
\begin{enumerate}
\item Initialize $\MM^{(0)} \leftarrow \MM$,
$d = 2 \log_2 \log_{\alpha/2}\left(4\epsilon^{-1}\right)$
\item For $i$ from $1$ to $d$ do
\begin{enumerate}
\item $\MM^{(i)} \leftarrow
  \schurSquare\left(\MM^{(i-1)},F, \frac{\epsilon}{2d} \right).$
\end{enumerate}
\item $\MMtil_{SC} \leftarrow \lastStep\left(\MM^{(d)}, F,
    4\epsilon^{-1}, \frac{\epsilon}{4}\right)$.
\item Return $\MMtil_{SC}$.
\end{enumerate}
\end{algbox}

\caption{Pseudocode for Computing Spectral Vertex Sparsifiers}

\label{fig:approxSchur}

\end{figure}
\begin{lemma}
\label{lem:schurSquare}
  Let $\MM$ be an {\bdd} matrix with index set $V$, and $m$ nonzero
  blocks and let $F \subseteq V$ be such that $\blk{\MM}{F,F}$ is
  {\abdd} for some $\alpha \ge 4.$ Given $\epsilon < \nfrac{1}{2},$ the algorithm
  $\schurSquare(\MM,F,\epsilon)$, returns a {\bdd} matrix
  $\MM_1$ in $O(m \epsilon^{-4})$ work and $O(\log n)$ depth,
  such that
\begin{enumerate}
\item  $\schur{\MM}{F} \approx_{\epsilon} \schur{\MM_{1}}{F},$
  and
\item $\blk{(\MM_{1})}{F,F}$ is $\nfrac{\alpha^2}{2}$-\bdd.
\item $\MM_1$ has $O(m\epsilon^{-4})$ nonzero blocks, 
\end{enumerate}
If $\TT$ is a matrix such that only the submatrix
$\blk{\TT}{C,C}$ is nonzero, and $\MM + \TT$ is {\bdd},
then $\schurSquare(\MM+\TT,F,\epsilon) =
\schurSquare(\MM,F,\epsilon)+\blk{\TT}{C,C}$.
\end{lemma}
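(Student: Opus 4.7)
The plan is to construct $\MM_2$ exactly via Eq.~\eqref{eq:schurSquare:overview}, verify both Schur-complement preservation and the $\alpha^2$-{\bdd} property of its $[F,F]$-block, then obtain $\MM_1$ as a sparse approximation of $\MM_2$ without ever materializing $\MM_2$.

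First I would verify the Schur-preserving identity $\schur{\MM_2}{F} = \schur{\MM}{F}$. Writing $\blk{\MM}{F,F} = \DD - \AA$ where $\DD$ is the block-diagonal of $\blk{\MM}{F,F}$ (invertible since $\alpha$-{\bdd} with $\alpha\ge 4$) and $\AA$ has zero diagonal, the key algebraic fact is the factorization
\[
\DD - \AA\DD^{-1}\AA = (\DD-\AA)\,\DD^{-1}(\DD+\AA) = (\DD+\AA)\,\DD^{-1}(\DD-\AA),
\]
from which $(\DD-\AA\DD^{-1}\AA)^{-1}$ telescopes through $\id+\AA\DD^{-1} = (\DD+\AA)\DD^{-1}$ and $\id+\DD^{-1}\AA = \DD^{-1}(\DD+\AA)$. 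Substituting into the block formula for $\schur{\MM_2}{F}$ yields cancellations that leave the standard expression $\blk{\MM}{C,C} - \blk{\MM}{C,F}(\DD-\AA)^{-1}\blk{\MM}{F,C}$.

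Next, to show $\frac{1}{2}(\DD - \AA\DD^{-1}\AA)$ is $\alpha^2$-{\bdd}, I would use that $\alpha$-{\bdd}-ness of $\DD-\AA$ gives $\|\blk{\DD^{-1}}{k,k}\|\sum_{j\neq k}\|\blk{\AA}{k,j}\| \leq 1/(1+\alpha)$. Two applications of the triangle inequality then yield
\[
\sum_{j\neq i}\|\blk{(\AA\DD^{-1}\AA)}{i,j}\| \leq \tfrac{1}{1+\alpha}\sum_{k\neq i}\|\blk{\AA}{i,k}\|, \qquad \blk{(\AA\DD^{-1}\AA)}{i,i} \preceq \tfrac{1}{1+\alpha}\id_r\sum_{k\neq i}\|\blk{\AA}{i,k}\|.
\]
Combined with $\blk{\DD}{i,i} \succeq (1+\alpha)\id_r\sum_{k\neq i}\|\blk{\AA}{i,k}\|$, the $\alpha^2$-{\bdd} inequality follows from $(1+\alpha) - 1/(1+\alpha) \geq (1+\alpha^2)/(1+\alpha)$, valid whenever $\alpha\geq 1/2$.

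Finally I would produce $\MM_1$ by sparsifying each implicit dense block of $\MM_2$ to Loewner error $\epsilon$: the terms $\AA\DD^{-1}\AA$ and $\blk{\MM}{C,F}\DD^{-1}\blk{\MM}{F,C}$ are implicit PSD weighted cliques whose weights are products of the vertex weights $\blk{\DD^{-1}}{k,k}$, and are sparsified via the routines of Section~\ref{sec:product-clique-sparsify}; the cross term $\blk{\MM}{C,F}\DD^{-1}\AA$ and its adjoint are sparsified via the bipartite analog. Each routine produces a {\bdd} approximation of its PSD target, with $O(m\epsilon^{-4})$ nonzero blocks, in $O(m\epsilon^{-4})$ work and $O(\log n)$ depth. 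Assembling the sparsified pieces with the preserved diagonal contributions yields $\MM_1 \approx_\epsilon \MM_2$, so by Fact~\ref{fact:schurLoewner} we conclude $\schur{\MM_1}{F} \approx_\epsilon \schur{\MM}{F}$. Moreover the $\epsilon$-perturbation of the $\alpha^2$-{\bdd} block only degrades the constant to $\alpha^2/2$ (using $\epsilon<1/2$). Invariance under adding $\TT$ supported on $[C,C]$ is immediate, since such a $\TT$ enters $\MM_2$ only as an additive $\blk{\TT}{C,C}$ inside $\blk{(\MM_2)}{C,C}$, which our construction passes through the sparsification untouched. The main obstacle I expect is the careful specification of the implicit clique-sparsifiers for the three dense blocks and the verification that their union remains {\bdd} with the required $\alpha^2/2$-{\bdd} upper-left block; this rests crucially on the constructions of Sections~\ref{sec:weightedExp} and \ref{sec:product-clique-sparsify}. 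Routing the $\epsilon$ error budget across the three pieces is then straightforward via Facts~\ref{fact:orderCAC} and \ref{fact:schurLoewner}.
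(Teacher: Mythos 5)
Your proposal follows essentially the same route as the paper's proof: the same splitting $\blk{\MM}{F,F}=\DD-\AA$, the same squaring identity (the paper verifies it through the Peng--Spielman inverse formula, which your factorization $\DD-\AA\DD^{-1}\AA=(\DD\pm\AA)\DD^{-1}(\DD\mp\AA)$ rederives), the same decomposition of the dense fill-in into product demand block-Laplacians and their bipartite analogs (one per $j\in F$), sparsification via the routines of Section~\ref{sec:product-clique-sparsify}, and Fact~\ref{fact:schurLoewner} to transfer the Loewner approximation to Schur complements. Your $((1+\alpha)^2-1)$-\bdd{} computation for $\DD-\AA\DD^{-1}\AA$ also matches the paper's.

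One step is stated too loosely to stand on its own: ``the $\epsilon$-perturbation of the $\alpha^2$-\bdd{} block only degrades the constant to $\alpha^2/2$.'' A Loewner $\epsilon$-approximation does not in general preserve block diagonal dominance, so this cannot be deduced from $\MM_1\approx_\epsilon\MM_2$ alone. The paper's argument is structural: the only off-diagonal $[F,F]$ blocks of $\MM_1$ come from the sparsified $F$-side cliques $\tilde{\LL}_{G(\dd_F^{(j)})}$; their diagonal blocks are bounded by $e^{\epsilon}$ times those of $\LL_{G(\dd_F^{(j)})}$, and since each sparsifier is itself \bdd{}, its row-wise off-diagonal norm sums are controlled by $e^{\epsilon}(1+\alpha)^{-1}\sum_{j}\norm{\blk{\AA}{i,j}}$, while the explicit remainder matrix contributes a diagonal of order $\alpha\sum_{j}\norm{\blk{\AA}{i,j}}$. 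Relatedly, you take for granted that the explicit leftover matrix (the paper's $\MM_3$, obtained by subtracting the cliques' diagonal mass from $\MM_2$) is itself \bdd{}; this requires the separate estimate \eqref{eq:schurSquare:smallremaining}. You flag the first of these as the main obstacle and correctly locate where the needed facts live, so the plan is sound, but these two verifications are where the actual work of the proof sits.
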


We can repeatedly applying the above lemma, to approximate
$\schur{\MM}{F}$ as $\schur{\MM_{1}}{F},$ where $\MM_{1}$
is {$O(\eps^{-1})$-\bdd}. {\lastStep} allows us to approximate the
Schur complement for such a strongly block diagonally
dominant matrix $\MM_{1}.$ The guarantees of {\lastStep} are given by
the following lemma.
\begin{lemma}
\label{lem:lastStep}
  Let $\MM$ be an {\bdd} matrix with index set $V$, and $m$ nonzero
  blocks and let $F \subseteq V$ be such that $\blk{\MM}{F,F}$ is
  {\abdd} for some $\alpha \ge 4.$
  There exist a procedure $\lastStep$ such that
  $\lastStep(\MM, F, \alpha, \epsilon)$
  returns in $O(m\epsilon^{-8})$ work and $O(\log n)$ depth a matrix
  $\MMtil_{SC}$ s.t. $\MMtil_{SC}$ has $O(m\epsilon^{-8})$
  nonzero blocks and $\MMtil_{SC} \approx_{\epsilon + 2/\alpha}
  \schur{\MM}{F}$.
If $\TT$ is a matrix such that only the submatrix
$\blk{\TT}{C,C}$ is nonzero, and $\MM + \TT$ is {\bdd},
then $\lastStep(\MM+\TT, F, \alpha, \epsilon) = \lastStep(\MM, F, \alpha, \epsilon)+\blk{\TT}{C,C}$.
\end{lemma}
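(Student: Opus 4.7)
The plan is to turn the informal description of $\lastStep$ from Section~\ref{sec:overviewSquaring} into a concrete algorithm and carefully account for all the approximation errors. First, I would invoke Lemma~\ref{lem:split} on the {\abdd} matrix $\blk{\MM}{F,F}$ to write it as $\XX + \LL$, where $\XX$ is block-diagonal, $\LL$ is {\bdd}, and $\LL \pleq \tfrac{2}{\alpha}\XX$. I would then instantiate the identity underlying \eqref{eq:schurSquare:overview} with $\DD = \XX$ and $\AA = -\LL$ to produce a matrix $\MM_{2}$ whose $(F,F)$-block is $\tfrac{1}{2}(\XX - \LL\XX^{-1}\LL)$ and for which $\schur{\MM_{2}}{F} = \schur{\MM}{F}$ holds exactly. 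The Schur-complement-preserving identity is purely algebraic (the same manipulation that underlies Lemma~\ref{lem:schurSquare}) and does not rely on the sign of $\AA$.

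Second, I would define $\MM^{(\text{last})}$ by replacing $\blk{(\MM_{2})}{F,F}$ with $\tfrac{1}{2}\XX$, leaving all other blocks of $\MM_{2}$ untouched. The inequality $\LL \pleq \tfrac{2}{\alpha}\XX$ implies $\LL\XX^{-1}\LL \pleq \tfrac{4}{\alpha^{2}}\XX$ by a standard congruence argument, so $\tfrac{1}{2}\XX \approx_{2/\alpha} \tfrac{1}{2}(\XX - \LL\XX^{-1}\LL)$. Propagating this through the Schur complement formula via Facts~\ref{fact:orderInverse} and~\ref{fact:orderCAC} yields $\schur{\MM^{(\text{last})}}{F} \approx_{2/\alpha} \schur{\MM_{2}}{F} = \schur{\MM}{F}$, accounting for the $2/\alpha$ term of the claimed error bound.

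Third, I would never form $\MM_{2}$ or $\MM^{(\text{last})}$ explicitly. Since $\blk{(\MM^{(\text{last})})}{F,F} = \tfrac{1}{2}\XX$ is block-diagonal, $\schur{\MM^{(\text{last})}}{F}$ expands as $\blk{(\MM^{(\text{last})})}{C,C} - \sum_{i \in F} 2\,\blk{(\MM^{(\text{last})})}{C,i}\,\blk{\XX}{i,i}^{-1}\,\blk{(\MM^{(\text{last})})}{i,C}$, i.e., a known explicit matrix on $C$ plus an implicit sum over $i \in F$ of product-weighted cliques on the $C$-neighbors of $i$. For each such clique I would invoke the $\epsilon$-sparsifier of Section~\ref{sec:weightedExp} extended to the {\bdd} setting in Section~\ref{sec:product-clique-sparsify}, using bipartite-expander sparsifiers for the cross blocks $\blk{(\MM_{2})}{F,C}, \blk{(\MM_{2})}{C,F}$ that feed into the products. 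Summing the sparsifiers and adding the (already sparse) correction inside $\blk{(\MM_{2})}{C,C}$ produces $\MMtil_{SC}$; by the triangle inequality on $\approx$, the total error is $\epsilon + 2/\alpha$.

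Finally, each of the $O(m)$ nonzero blocks of $\MM_{2}$ contributes $O(\epsilon^{-4})$ new blocks per expander, and the nested composition of a bipartite sparsifier with a clique sparsifier accumulates these factors to $O(\epsilon^{-8})$, giving $O(m\epsilon^{-8})$ total blocks and work; all expander constructions parallelize to depth $O(\log n)$. The preservation property under perturbations $\TT$ supported on $\blk{\TT}{C,C}$ holds because the splitting $\XX + \LL$ of $\blk{\MM}{F,F}$, the cross blocks, and the product-clique sparsifiers all depend only on $\blk{\MM}{F,F}, \blk{\MM}{F,C}, \blk{\MM}{C,F}$; the sole appearance of $\blk{\MM}{C,C}$ is a single additive occurrence in $\blk{(\MM_{2})}{C,C}$, so adding $\blk{\TT}{C,C}$ to $\blk{\MM}{C,C}$ shifts the output by exactly $\blk{\TT}{C,C}$. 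The main obstacle I anticipate is verifying that the composition of bipartite-expander and clique-expander sparsifiers genuinely combines to $O(\epsilon^{-8})$ without hidden extra factors and without disturbing the linear-in-$\blk{\TT}{C,C}$ preservation property.
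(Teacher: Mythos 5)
Your approach is genuinely different from the paper's, and the difference is where the gap lives. You apply the squaring identity to $\MM$ itself to obtain $\MM_{2}$ with $\blk{\MM_{2}}{F,F} = \tfrac{1}{2}(\XX - \LL\XX^{-1}\LL)$, and then replace that block with $\tfrac{1}{2}\XX$, arguing that since the two $(F,F)$-blocks satisfy $\tfrac{1}{2}\XX \approx_{2/\alpha} \tfrac{1}{2}(\XX - \LL\XX^{-1}\LL)$, this propagates to the Schur complement by Facts~\ref{fact:orderInverse} and~\ref{fact:orderCAC}. This propagation step does not hold. Those facts let you pass Loewner comparisons through inversion and congruence, but a relative Loewner bound between two versions of a single diagonal block does not give a relative bound between the two full matrices, because the cross blocks $\blk{\MM_{2}}{F,C}$ and the $(C,C)$-block are fixed and unchanged. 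The correct tool is Lemma~\ref{lem:block-sub-equiv}, which says that replacing $\blk{\MM_{2}}{F,F}$ by the larger matrix $\tfrac{1}{2}\XX$ yields a $(1+\delta)$-approximation of $\MM_{2}$ (and hence, via Fact~\ref{fact:schurLoewner}, a $\delta$-approximation of $\schur{\MM_{2}}{F}$) exactly when
\[
0 \pleq \tfrac{1}{2}\LL\XX^{-1}\LL \pleq \delta \cdot \schur{\MM_{2}}{C}.
\]
Your proposal never establishes the right inequality, and it does not follow from $\LL \pleq \tfrac{2}{\alpha}\XX$: the matrix $\schur{\MM_{2}}{C}$ eliminates the $C$-indices from $\MM_{2}$ and can be far smaller than $\XX$ (indeed arbitrarily small) when the coarse vertices carry most of the conductance incident to $F$. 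In short, you proved the $(F,F)$-blocks are spectrally close relative to $\XX$, but what is needed is closeness relative to $\schur{\MM_{2}}{C}$, a different and generally much weaker base matrix.

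The paper avoids this by running the two steps in the opposite order. It first replaces $\blk{\MM}{F,F}$ in the original matrix $\MM$ by $(\ZZ^{(last)})^{-1}$, where $\ZZ^{(last)}$ is the degree-five polynomial of Eq.~\eqref{eq:zzFinal}, and proves in Lemma~\ref{lem:threeStep} the two-sided operator inequality $\blk{\MM}{F,F} \pleq (\ZZ^{(last)})^{-1} \pleq \blk{\MM}{F,F} + \tfrac{2}{\alpha}(\DD-\AA)$. Crucially, the slack term $\tfrac{2}{\alpha}(\DD-\AA)$ is then dominated by $\tfrac{2}{\alpha}\MM$ itself (Lemma~\ref{lem:goodMlast}), giving a full-matrix bound $\MM \pleq \MM^{(last)} \pleq (1+\tfrac{2}{\alpha})\MM$. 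Only afterwards does the paper form $\MM^{(last)}_{2}$ (Eq.~\eqref{eqn:lastSecondHalf}), whose $(F,F)$-block is exactly $\tfrac{1}{2}\XX$ by construction with no further replacement needed — the identity $\schur{\MM^{(last)}}{F} = \schur{\MM^{(last)}_{2}}{F}$ holds because $\ZZ^{(last)}$ was built to expand to precisely that form. This ordering is what makes the $2/\alpha$ error provable: the slack is compared against a matrix that is \emph{known} to be dominated by $\MM$, rather than against the opaque quantity $\schur{\MM_{2}}{C}$. The rest of your plan — two rounds of product-demand clique sparsification accounting for $\epsilon^{-8}$, the $\blk{\TT}{C,C}$-preservation, parallel depth $O(\log n)$ — does track the paper, but the error-propagation step as written would fail, and repairing it requires either establishing a lower bound on $\schur{\MM_{2}}{C}$ in terms of $\XX$ (which is not true in general) or reorganizing the argument along the paper's lines.
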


Combining the above two lemmas, we obtain a proof of Lemma~\ref{lem:approxSchur}.
\begin{proof}\emph{(of Lemma~\ref{lem:approxSchur:extended}).}
By induction, after 
$i$ steps of the main loop in \textsc{ApproxSchur},
\[
\schur{\MM}{F} \approx_{\frac{\epsilon i}{2d}}\schur{\MM^{(i)}}{F}.
\]

Lemma~\ref{lem:schurSquare} also implies that $\MM^{(i)}$ is
$2(\frac{\alpha}{2})^{2^i}$-\bdd. Thus, we have that $\MM^{(d)}_{FF}$
is $8\epsilon^{-1}$-strongly diagonally dominant at the last
step. Hence, Lemma~\ref{lem:lastStep} then gives
\[
\schur{\MM^{(d)}}{F} \approx_{\frac{1}{2} \epsilon} \lastStep\left(\MM^{(d)}, F,
    4\epsilon^{-1}, \frac{\epsilon}{4}\right).
\]
Composing this bound with the guarantees of the iterations
then gives the bound on overall error.
%%
%%Therefore $\MM^{(d)}_{1, FF} \approx_{\frac{1}{6} \epsilon} \MM^{(d)}_{FF}$,
%%and the block-substitution rule from Fact~\ref{fact:blockSubstitute} gives
%%\[
%%\MM^{(d)}_{1} \approx_{\frac{1}{6} \epsilon} \MM^{(d)}.
%%\]

The property that if $\TT$ is a matrix such that only the submatrix
$\blk{\TT}{C,C}$ is nonzero, and $\MM + \TT$ is {\bdd}, then
$\schurApx(\MM+\TT, F, \epsilon) = \schurApx(\MM, F,
\epsilon)+\blk{\TT}{C,C}$ follows from Lemma~\ref{lem:schurSquare}
and~\ref{lem:lastStep},
which ensure that this property holds for all our calls to
{\schurSquare} and {\lastStep}.

The work of these steps, and the size of the output graph
follow from Lemma~\ref{lem:schurSquare} and~\ref{lem:lastStep}.
\end{proof}

\subsection{Iterative Squaring and Sparsification}
In this section, we give a proof of Lemma~\ref{lem:schurSquare}. At
the core of procedure is a \emph{squaring} identity that preserves
Schur complements, and efficient sparsification of special classes of
{\bdd} matrices that we call product demand block-Laplacians.
\begin{definition}
\label{def:block-demand}
The product demand block-Laplacian of a vector
$\dd \in (\complex^{r \times r})^{n}$, is a {\bdd} matrix
$\LL_{G(\dd)} \in (\complex^{r \times r})^{n \times n},$ defined as
\[
\blk{(\LL_{G(\dd)})}{i,j} =
\begin{cases}
-\blk{\dd}{i} \blk{\dd}{j}^{\dg} & \quad i\neq j,\\
\id_r \cdot \norm{\blk{\dd}{i}} \sum_{k: k \neq i} \norm{\blk{\dd}{k}} & \quad \textrm{otherwise.}
\end{cases}
\]
\end{definition}
\begin{definition}
%\rasmus{what's standard for how we introduce index sets?}
\label{def:bip-block-demand}
Given a vector $\dd \in (\complex^{r \times r})^{n}$ and an index set
$F \subseteq V,$ let $C = V \setminus F$. The bipartite product
demand block-Laplacian of $(\dd,F)$ is a {\bdd} matrix
$\LL_{G(\dd),F} \in (\complex^{r \times r})^{n \times n},$ defined as
% Given an index set $F \subseteq V$, 
% Given a block-vector  $\dd \in (\complex^{r \times r})^{n}$, the bipartite product demand block-Laplacian
% of $(\dd,F)$, $\LL_{G(\dd,F)}$ is defined to be the following {\bdd} matrix
\[ 
\blk{(\LL_{G(\dd,F)})}{i,j} =
\begin{cases}
\id_r \cdot \norm{\blk{\dd}{i}} \sum_{k \in C} \norm{\blk{\dd}{k}} & \quad i = j,
\text{ and } i \in F\\
\id_r \cdot \norm{\blk{\dd}{i}} \sum_{k \in F} \norm{\blk{\dd}{k}} & \quad i = j,
\text{ and } i \in C\\
0 & \quad i\neq j, \text{ and } ( i , j \in F  \text{ or }   i, j \in C)\\
-\blk{\dd}{i} \blk{\dd}{j}^{\dg} & \quad i\neq j, 
\text{otherwise.}
\end{cases}
\]
\end{definition}

In Section~\ref{sec:product-clique-sparsify}, we prove the following
lemmas that allows us to efficiently construct sparse approximations to
these matrices.
\begin{lemma}
\label{lem:product-clique-sparsify}
There is a routine $\prodClique(\dd,\eps)$ such that for any demand
vector $\dd \in (\complex^{r \times r})^{n},$ and $\epsilon > 0,$
$\prodClique(\dd, \epsilon)$ returns in $O(n \epsilon^{-4})$ work and
$O(\log{n})$ depth a {\bdd} matrix $\LL_{H}$ with
$O(n \epsilon^{-4})$ nonzero blocks such that
\[
\LL_{H} \approx_{\epsilon} \LL_{G(\dd)}.
\]
\end{lemma}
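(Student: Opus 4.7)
The plan is to build $\LL_H$ in two stages: first, sparsify the scalar product-demand graph on weights $w_i = \norm{\blk{\dd}{i}}$ using an explicit weighted-expander construction; then, lift each retained scalar edge to a block entry by multiplying through by a "unitary-twist" factor derived from $\blk{\dd}{i}$ and $\blk{\dd}{j}$. Throughout, let $\LL_{G(w)}$ denote the Laplacian of the scalar weighted clique with edge weights $w_i w_j$, and let $\hat\dd_i = \blk{\dd}{i}/w_i$ so that $\norm{\hat\dd_i} = 1$.

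For the first stage, the weighted-expander machinery of Section~\ref{sec:weightedExp} provides a scalar sparsifier $\LL_{H_{sc}} \approx_\epsilon \LL_{G(w)}$ with $O(n\epsilon^{-4})$ edges, constructed in $O(n\epsilon^{-4})$ work and $O(\log n)$ depth. This is the technical core and will be invoked as a black box; the key features I need are that the resulting edge weights $w'_{ij}$ match $\LL_{G(w)}$ in the Loewner sense, and that the construction is deterministic per weight profile (it will be used verbatim).

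For the second stage, define the off-diagonal blocks of $\LL_H$ by $\blk{\LL_H}{i,j} = -w'_{ij}\hat\dd_i\hat\dd_j^{\dg}$ for each edge $(i,j)$ of $\LL_{H_{sc}}$, and complete the diagonal by {\bdd}-closure: $\blk{\LL_H}{i,i} = \id_r \sum_{j \neq i} \norm{\blk{\LL_H}{i,j}} = \id_r \sum_{j \neq i} w'_{ij}$, since $\hat\dd_i\hat\dd_j^{\dg}$ has operator norm exactly $1$. This diagonal matches $(\LL_{H_{sc}} \otimes \id_r)_{i,i}$ exactly, and correspondingly $\blk{\LL_{G(\dd)}}{i,i} = w_i(W - w_i)\id_r = (\LL_{G(w)} \otimes \id_r)_{i,i}$, so both the clique and its sparsifier agree with their scalar analogues on diagonals.

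The approximation $\LL_H \approx_\epsilon \LL_{G(\dd)}$ is then obtained by applying Fact~\ref{fact:sum-of-unitary} to write $\hat\dd_i = \tfrac12(\QQ_i^{(1)} + \QQ_i^{(2)})$ with each $\QQ_i^{(k)}$ unitary. This expresses both $\LL_{G(\dd)}$ and $\LL_H$ as averages $\tfrac14 \sum_{a,b \in \{1,2\}}$ of four "pure-unitary-twist" matrices $\LL^{(a,b)}$. For the symmetric pieces $a = b$, the block matrix factors as a genuine congruence $\boldsymbol{\mathit{U}}^{(a)} (\LL_{G(w)} \otimes \id_r) (\boldsymbol{\mathit{U}}^{(a)})^{\dg}$ where $\boldsymbol{\mathit{U}}^{(a)}$ is block-diagonal with blocks $\QQ_i^{(a)}$, and similarly with $\LL_{H_{sc}}$ in place of $\LL_{G(w)}$; Fact~\ref{fact:orderCAC} immediately transfers $\LL_{H_{sc}} \approx_\epsilon \LL_{G(w)}$ through this congruence. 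The main obstacle is the off-diagonal pairs $a \neq b$, where the "congruence" is of the form $\boldsymbol{\mathit{U}}^{(a)}(\cdot)(\boldsymbol{\mathit{U}}^{(b)})^{\dg}$ with $\boldsymbol{\mathit{U}}^{(a)} \neq \boldsymbol{\mathit{U}}^{(b)}$ and so does not preserve the Loewner order on its own. I would handle this by pairing the $(1,2)$ and $(2,1)$ pieces, whose Hermitian sum can be written as a genuine congruence of $\mathrm{diag}(\LL_{G(w)}, \LL_{G(w)}) \otimes \id_r$ under a suitable $2 \times 2$ block-unitary dilation (the standard trick for representing a non-Hermitian bilinear form by a Hermitian one on a doubled space). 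After this pairing, Fact~\ref{fact:orderCAC} applies, and averaging the four pieces with weights $\tfrac14$ preserves the $\epsilon$-approximation, completing the proof.
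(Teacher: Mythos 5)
Your construction and the overall reduction are the same as the paper's: sparsify the scalar product--demand graph on the weights $w_i=\norm{\blk{\dd}{i}}$ via Lemma~\ref{lem:weightedExpander}, reweight each surviving edge by $\blk{\dd}{i}\blk{\dd}{j}^{\dg}/(w_iw_j)$, and transfer the scalar approximation to the block matrices using the unitary decomposition of Fact~\ref{fact:sum-of-unitary} (this transfer is Lemma~\ref{lem:product-block-graphs-sparsification}). The gap is in your treatment of the mixed pieces $a\neq b$. Write $\GG=[\UU^{(1)}\ \UU^{(2)}]$ with $\UU^{(a)}$ block-diagonal with blocks $\QQ_i^{(a)}$. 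A congruence $\GG\NN\GG^{\dg}$ with a \emph{block-diagonal} middle matrix $\NN=\mathrm{diag}(\NN_1,\NN_2)$ yields only $\UU^{(1)}\NN_1(\UU^{(1)})^{\dg}+\UU^{(2)}\NN_2(\UU^{(2)})^{\dg}$, i.e., only terms $\QQ_i^{(a)}(\cdot)(\QQ_j^{(a)})^{\dg}$ with matching superscripts; the mixed blocks $\QQ_i^{(1)}(\cdot)(\QQ_j^{(2)})^{\dg}$ can only arise from off-diagonal blocks of $\NN$. So $\mathrm{diag}(\LL_{G(w)},\LL_{G(w)})\otimes\id_r$ cannot be the congruence target for $\LL^{(1,2)}+\LL^{(2,1)}$. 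Putting the (negated) weighted adjacency $A$ of the scalar graph in the off-diagonal blocks of $\NN$ does generate the mixed off-diagonal blocks (and contributes nothing to the diagonal, since $A$ has zero diagonal), but then the degree part of $\LL^{(1,2)}+\LL^{(2,1)}$ must be supplied by the diagonal blocks of $\NN$, forcing $\NN=\left[\begin{smallmatrix}D&-A\\-A&D\end{smallmatrix}\right]\otimes\id_r$ --- the Laplacian of the $2$-lift $\twoCover{H}$ (each vertex doubled, each edge replaced by a $2\times 2$ bipartite clique), tensored with $\id_r$. This is exactly the paper's route, carried out in one shot via $\LL^{(\ell)}=\frac14\FF(\LL_{\twoCover{(H^{(\ell)})}}\otimes\id_r)\FF^{\dg}$, and it creates an obligation your sketch does not discharge: you must show that $\LL_{H^{(1)}}\approx_{\epsilon}\LL_{H^{(2)}}$ implies $\LL_{\twoCover{(H^{(1)})}}\approx_{\epsilon}\LL_{\twoCover{(H^{(2)})}}$. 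This is not free --- the adjacency matrices of spectrally approximating graphs do not approximate each other --- and is proved in Lemma~\ref{lem:lift-sparsifier} by splitting the lifted Laplacian into a Laplacian term and a degree term and observing that the quadratic-form approximation forces the (diagonal) degree matrices to approximate entrywise.

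A second, smaller slip: $\norm{\hat{\dd}_i\hat{\dd}_j^{\dg}}$ need not equal $1$ (it is only $\leq 1$; the top singular directions of $\hat{\dd}_i$ and $\hat{\dd}_j$ may be misaligned), so defining the diagonal by ``\bdd-closure'' $\id_r\sum_{j}\norm{\blk{\LL_H}{i,j}}$ does not in general give $\id_r\sum_j w'_{ij}$, and would break the exact congruence identity above. Set $\blk{\LL_H}{i,i}=\id_r\sum_j w'_{ij}$ directly, as the pseudocode for $\prodClique$ does; this still dominates the off-diagonal norms, so $\LL_H$ remains {\bdd}.
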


\begin{lemma}
\label{lem:bipartite-clique-sparsify}
There is a routine $\bipClique(\dd,F,\eps)$ such that for any demand
vector $\dd \in (\complex^{r \times r})^{n},$ $\epsilon > 0,$
and $F \subseteq V$,
$\bipClique(\dd, \epsilon)$ returns in $O(n \epsilon^{-4})$ work and
$O(\log{n})$ depth a {\bdd} matrix $\LL_{H}$ with
$O(n \epsilon^{-4})$ nonzero blocks such that
\[
\LL_{H} \approx_{\epsilon} \LL_{G(\dd,F)}.
\]
Moreover, $\blk{(\LL_{H})}{F,F},$ and $\blk{(\LL_{H})}{C,C}$ are
block-diagonal, where $C = V\setminus F.$
\end{lemma}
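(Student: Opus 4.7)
The plan is to adapt the construction from Lemma~\ref{lem:product-clique-sparsify} (which sparsifies the non-bipartite product demand block-Laplacian) to the bipartite setting. Observe that $\LL_{G(\dd, F)}$ is structurally identical to $\LL_{G(\dd)}$ except that all off-diagonal blocks within $F \times F$ and within $C \times C$ are forced to zero, with the diagonal blocks adjusted so that the matrix remains {\bdd}. Thus the task is to mimic $\prodClique$ but restrict the support of the sparsifier to pairs $(i,j)$ with $i \in F, j \in C$.

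First, I would reduce to the scalar case ($r = 1$). In this case $\LL_{G(\dd,F)}$ is the Laplacian of a weighted bipartite clique on $F \cup C$ with edge weight $\|\blk{\dd}{i}\| \cdot \|\blk{\dd}{j}\|$ between $i \in F$ and $j \in C$. I would construct a bipartite analog of the weighted expander from Section~\ref{sec:weightedExp}: after normalizing the vertex "demands" on each side separately (say by $W_F = \sum_{i \in F} \|\blk{\dd}{i}\|$ and $W_C = \sum_{j \in C} \|\blk{\dd}{j}\|$), place the sparse edges of a bipartite expander between the two sides and reweight each sampled edge by the product of its endpoint demands. Because the underlying bipartite graph is a simple product structure, a constant-degree bipartite expander (scaled up to account for $\epsilon$) gives the desired $O(n \epsilon^{-4})$ edge count and an $\epsilon$-spectral approximation of $\LL_{G(\dd,F)}$ with linear work and $O(\log n)$ depth.

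Next, I would lift this scalar sparsifier to the block case using exactly the argument behind Lemma~\ref{lem:product-clique-sparsify} in Section~\ref{sec:product-clique-sparsify}: for each sampled edge $(i,j)$ in the scalar bipartite sparsifier, place the rank-one block entry proportional to $\blk{\dd}{i} \blk{\dd}{j}^{\dg}$ (divided by $\|\blk{\dd}{i}\| \cdot \|\blk{\dd}{j}\|$ and multiplied by the scalar weight), and set the diagonal blocks to be the $\id_r$-scaled weighted degrees. The key observation is that $\blk{\dd}{i} \blk{\dd}{j}^{\dg}/(\|\blk{\dd}{i}\| \cdot \|\blk{\dd}{j}\|)$ has operator norm at most $1$, so the block operator is dominated blockwise by the scalar weighting, and the scalar spectral bound transfers to a block spectral bound by the same reduction used for $\prodClique$. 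The resulting matrix is {\bdd} by construction, and the block-diagonal property of $\blk{(\LL_H)}{F,F}$ and $\blk{(\LL_H)}{C,C}$ is immediate because the bipartite expander places no edges within $F$ or within $C$.

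The main obstacle is verifying that a bipartite analog of the weighted-expander construction of Section~\ref{sec:weightedExp} really does achieve the same $O(n\epsilon^{-4})$ nonzero count and parallel complexity; but bipartite expanders are a minor modification of the constructions there (one can even take the non-bipartite expander on $F \cup C$ and discard the intra-$F$ and intra-$C$ edges, absorbing them into the diagonal), so this is largely a bookkeeping exercise. The only other subtlety is confirming that the preservation of diagonal blocks does not break the {\bdd} property after sparsification, which follows because the diagonal entries of $\LL_H$ are set to exactly the weighted degrees of the sparsifier by definition.
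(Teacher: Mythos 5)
Your proposal is correct and follows essentially the same route as the paper: sparsify the scalar bipartite product demand graph on the norms $\norm{\blk{\dd}{i}}$ via the weighted bipartite expander of Lemma~\ref{lem:weightedBipartiteExpander}, then lift to blocks edge-by-edge and transfer the spectral guarantee via Lemma~\ref{lem:product-block-graphs-sparsification}, with bipartiteness of the expander giving the block-diagonal $F,F$ and $C,C$ blocks. One small caution: the actual transfer mechanism is not ``blockwise domination by the scalar weights'' (which alone would not give a two-sided Loewner approximation) but the unitary-decomposition/2-lift argument of Lemma~\ref{lem:product-block-graphs-sparsification}, which you correctly cite as the reduction used for \prodClique.
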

We now use these efficient sparse approximations to give a proof of
the guarantees of the procedure $\schurSquare.$

\begin{figure}[ht]
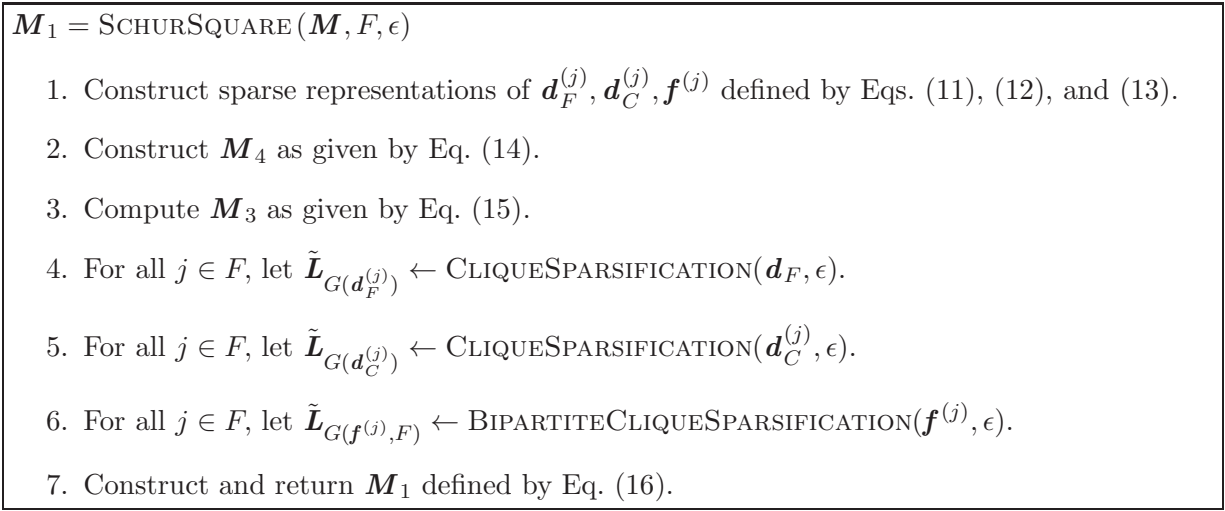


\begin{algbox}
$\MM_{1} = \schurSquare\left(\MM, F, \epsilon \right)$
\begin{enumerate}
\item Construct sparse representations of
  $\dd_F^{(j)}, \dd_C^{(j)}, \ff^{(j)}$
  defined by
  Eqs.~\eqref{eq:schurSquare:clique1},~\eqref{eq:schurSquare:clique2},
  and~\eqref{eq:schurSquare:bipartite}.
\item Construct $\MM_{4}$ as given by Eq.~\eqref{eq:schurSquare:diagonal}.
\item Compute $\MM_{3}$ as given by Eq.~\eqref{eq:schurSquare:remaining}.
\item For all $j \in F,$ let $\tilde{\LL}_{G(\dd_{F}^{(j)})}
  \leftarrow \prodClique(\dd_{F},\eps).$
\item For all $j \in F,$ let $\tilde{\LL}_{G(\dd_{C}^{(j)})}
  \leftarrow \prodClique(\dd^{(j)}_{C},\eps).$
\item For all $j \in F,$ let $\tilde{\LL}_{G(\ff^{(j)},F)} 
  \leftarrow \bipClique(\ff^{(j)},\eps).$
\item Construct and return $\MM_{1}$ defined by Eq.~\eqref{eq:schurSquare:output}.
\end{enumerate}
\end{algbox}

\caption{Pseudocode for procedure {\schurSquare} Iterative Squaring and Sparsification}

\label{fig:schurSquare}

\end{figure}

\begin{proof}\emph{(of Lemma~\ref{lem:schurSquare})}
Let $C$ denote $V\setminus F.$ Write $\blk{\MM}{F,F}$ as $\DD - \AA,$ where
$\DD$ is a block-diagonal, and $\AA$ has its diagonal blocks as
zero. The proof is based on the identity
$\schur{\MM}{F} = \schur{\MM_{2}}{F},$ where $\MM_{2}$ is the
following matrix.
\begin{equation}
\label{eq:schurSquare}
\MM_{2} = \frac{1}{2}
\left[
\begin{array}{cc}
\DD - \AA \DD^{-1} \AA &
\blk{\MM}{F,C} + \AA \DD^{-1}\blk{\MM}{F,C}\\
\blk{\MM}{C,F} + \blk{\MM}{C,F}\DD^{-1} \AA  
&  2 \blk{\MM}{C,C} - \blk{\MM}{C,F} \DD^{-1} \blk{\MM}{F,C}
\end{array}
\right].
\end{equation}
It is straightforward to prove that $\MM_{2}$ satisfies
$\schur{\MM}{F} = \schur{\MM_{2}}{F}$ (see
Lemma~\ref{lem:schurSquare:equality}).  
% Lemma~\ref{lem:improve} shows that since $\DD - \AA$ is
% {\abdd}, $\DD - \AA \DD^{-1} \AA$ is {$((\alpha+1)^2-1)$-\bdd}. 
It is also straightforward to show that $\DD - \AA \DD^{-1} \AA$ is
{$((\alpha+1)^2-1)$-\bdd}. Thus, $\MM_{2}$ satisfies the first two
requirements.

However,
$\MM_{2}$ is likely to be a dense matrix, and we will not construct it
in full. The key observation is that
$\MM_{2}$ can be written as a sum of an explicit sparse {\bdd}
matrix, and several product demand block-Laplacians. Formally, for
every $j \in F,$ we define $\dd_F^{(j)}, \dd_{C}^{(j)} \in (\complex^{r \times
  r})^{n}$ as follows
\begin{align}
\label{eq:schurSquare:clique1}
\blk{(\dd_{F}^{(j)})}{i} 
& = 
\begin{cases}
\blk{\AA}{i,j}\blk{\DD}{j,j}^{-\nfrac{1}{2}} & i \in F, \\
0 & i \in C.
\end{cases}
\\
\label{eq:schurSquare:clique2}
 \blk{(\dd_{C}^{(j)})}{i} 
& = 
\begin{cases}
0 & i \in F, \\
\blk{\MM}{i,j}\blk{\DD}{j,j}^{-\nfrac{1}{2}} & i \in C.
\end{cases}
\end{align}
For every $j \in F,$ we need to define a bipartite product demand
block-Laplacian given by
$\ff^{(j)} \in (\complex^{r \times r})^{n} $ defined as
\begin{equation}
\label{eq:schurSquare:bipartite}
\blk{\ff}{i}^{(j)} = 
\begin{cases}
\blk{\AA}{i,j}\blk{\DD}{j,j}^{-\nfrac{1}{2}} & i \in F, \\
-\blk{\MM}{i,j}\blk{\DD}{j,j}^{-\nfrac{1}{2}} & i \in C.
\end{cases}
\end{equation}
Letting $\deg_j$ denote the number of nonzero blocks in
$\blk{\MM}{j},$ the number of nonzero blocks in each of
$\dd_F^{(j)}, \dd_C^{(j)}, \ff^{(j)}$ is at most $\deg_j.$ We
construct a sparse representation of each of them using $O(\deg_j)$
work. Thus, the total number of nonzero blocks in all these
block-vectors is at most $O(m)$ and we can explicitly construct
sparse representations for them using $O(m)$ work and $O(\log n)$
depth.

We can now express $\MM_{2}$ as 
\[\textstyle \MM_{2} = \MM_{3} + \frac{1}{2}\sum_{j \in F}\LL_{G(\dd_{F}^{(j)})}
+ \frac{1}{2}\sum_{j \in F}\LL_{G(\dd_{C}^{(j)})} + \frac{1}{2}\sum_{j \in
  F}\LL_{G(\ff^{(j)},F)}.\]

For all $i,j \in V,$ we compute
$\beta_{i,j} = \norm{\ff_{i}^{(j)}}.$ 
% i.e.,
% \begin{equation}
% \label{eq:schurSquare:rho}
% \beta_{i,j} = 
% \begin{cases}
% \norm{\blk{\DD}{j,j}^{-\nfrac{1}{2}}\blk{\AA}{i,j}} & i \in F, \\
% \norm{\blk{\DD}{j,j}^{-\nfrac{1}{2}} \blk{\MM}{i,j}} & i \in C.
% \end{cases}
% \end{equation}
%$\beta_{i,j} = \norm{\blk{\DD}{j,j}^{-\nfrac{1}{2}} \blk{\MM}{j,i} }.$
Define $\MM_{4}$ to be the block diagonal matrix such that
\begin{equation}
\label{eq:schurSquare:diagonal}
  \textstyle 
\blk{\MM_{4}}{i,i} = \frac{1}{2} \id_r \cdot
  \sum_{j \in F} 
  \beta_{i,j}
\left( \left(\sum_{k \in V}
\beta_{k,j}\right) - \beta_{i,j}\right).
\end{equation}
Since at most $m$ of $\beta_{i,j}$ are nonzero, and we can compute
$\beta_{i,j}$ and construct $\MM_{4}$ using using $O(m)$ work and
$O(\log n)$ depth. It is easy to verify that we can express $\MM_{3}$ explicitly as
\begin{equation}
\label{eq:schurSquare:remaining}
  \textstyle 
\MM_{3} = \frac{1}{2}
\left[
\begin{array}{cc}
\DD &
\blk{\MM}{F,C}\\
\blk{\MM}{C,F} 
&  2 \blk{\MM}{C,C}
\end{array}
\right] - \MM_{4}.
\end{equation}

For each $j,$ we use Lemmas~\ref{lem:product-clique-sparsify}
and~\ref{lem:bipartite-clique-sparsify} to construct, in
$O(\deg_j\eps^{-4})$ work and $O(\log n)$ depth, {\bdd} matrices
$\tilde{\LL}_{G(\dd_{F}^{(j)})}, \tilde{\LL}_{G(\dd_{C}^{(j)})},
\tilde{\LL}_{G(\ff^{(j)},F)},$
each with $O(\deg_j\eps^{-4})$ nonzero blocks such that
\[\tilde{\LL}_{G(\dd_{F}^{(j)})} \approx_{\epsilon} \LL_{G(\dd_{F}^{(j)})},
\quad \tilde{\LL}_{G(\dd_{C}^{(j)})} \approx_{\epsilon}
\LL_{G(\dd_{C}^{(j)})},\ \textrm{  and  }
\  \tilde{\LL}_{G(\ff^{(j)},F)} \approx_{\epsilon} \LL_{G(\ff^{(j)},F)}.\]

We can now construct the matrix 
\begin{equation}
\label{eq:schurSquare:output}
  \textstyle 
\MM_{1} = \MM_{3} + \frac{1}{2}\sum_{j \in F}\tilde{\LL}_{G(\dd_{F}^{(j)})}
+ \frac{1}{2}\sum_{j \in F}\tilde{\LL}_{G(\dd_{C}^{(j)})} + \frac{1}{2}\sum_{j \in
  F}\tilde{\LL}_{G(\ff^{(j)},F)} ,
\end{equation}
in $O(m\eps^{-4})$ time and $O(\log n)$ depth. $\MM_{1}$ has $O(m\eps^{-4})$
nonzero blocks and is our required matrix.

We first show that $\MM_{3}$ is {\bdd}.  Using
$\beta_{k,j} \le \norm{\blk{\DD}{j,j}^{-\nfrac{1}{2}} \blk{\MM}{i,k}}$
for all $j \in F, k \in V,$ we have for all $i \in F,$
\begin{equation}
\label{eq:schurSquare:smallremaining}
 2 \norm{\blk{\MM_{4}}{i,i}} \leq
  \sum_{j \in F} 
  \beta_{i,j}
\sum_{k \in V}
\beta_{k,j}
  \leq
    \sum_{j \in F} \norm{\blk{\AA}{i,j}} \norm{\blk{\DD}{j,j}^{-\nfrac{1}{2}}}^{2} 
    \sum_{k \in V} \norm{\blk{\MM}{j,k}} 
  \leq    \sum_{j \in F} \norm{\blk{\AA}{i,j}}
,
\end{equation}
where the last inequality uses that $\MM$ is {\bdd}. Again, using
$\MM$ is {\bdd}, for all $i \in F,$ we have 
\[\blk{\DD}{i,i} \pgeq \id_r \sum_{j \in F} \norm{\blk{\AA}{i,j}} +
\id_r \sum_{k \in C} \norm{\blk{\MM}{i,k}}.\]
Now combining with Eqs.~\eqref{eq:schurSquare:remaining}
and~\eqref{eq:schurSquare:smallremaining}, we obtain
$\blk{\MM_{3}}{i,i} \pgeq \id_r \sum_{k \in C} \norm{\blk{\MM}{i,k}}.$
Thus, $\MM_{3}$ is {\bdd}, and hence $\MM_{1}$ is {\bdd}.

We also have $\MM_{1} \approx_{\eps} \MM_2.$ Thus, using
Fact~\ref{fact:schurLoewner}, we obtain
$\schur{\MM_1}{F} \approx_{\eps} \schur{\MM_2}{F}.$ It remains to show
that $\blk{\MM_1}{F,F}$ is $\alpha^2/2$-\bdd. 

The key observation is that only the matrices
$\tilde{\LL}_{G(\dd_{C}^{(j)})}$ contribute to off-diagonal blocks in
$\blk{\MM_{1}}{F,F}.$ By construction,
\begin{align*}
\forall i,j \in F, \quad \blk{\LL_{G(\dd_{F}^{(j)})}}{i,i}
= \id_{r}\sum_{k \in F: k \neq i} \beta_{i,j}\beta_{k,j}.
\end{align*}
Thus, for all $i \in F$
\begin{align}
\label{eq:clique-diags}
\sum_{j \in
  F}\norm{\blk{\LL_{G(\dd_{F}^{(j)})}}{i,i}}
   \leq
     \sum_{j \in F} \sum_{k \in F}
    \beta_{i,j}\beta_{k,j}
& \leq
    \sum_{j \in F} \norm{\blk{\AA}{i,j}} \norm{\blk{\DD}{j,j}^{-\nfrac{1}{2}}}^{2} 
    \sum_{k \in F} \norm{\blk{\AA}{j,k}} \nonumber\\
& \leq
(1+\alpha)^{-1} \id_r \cdot    \sum_{j \in F} \norm{\blk{\AA}{i,j}},
\end{align}
% \begin{align}
% \label{eq:clique-diags}
% \sum_{j \in
%   F}\blk{\LL_{G(\dd_{F}^{(j)})}}{i,i} \nonumber
%   & \pleq
% \id_r \cdot     \sum_{j \in F} \sum_{k \in F}
%     \beta_{i,j}\beta_{k,j}\nonumber\\
%   & \pleq
% \id_r \cdot 
%     \sum_{j \in F} \norm{\blk{\AA}{i,j}} \norm{\blk{\DD}{j,j}^{-\nfrac{1}{2}}}^{2} 
%     \sum_{k \in F} \norm{\blk{\AA}{j,k}}   \nonumber\\
%   & \pleq
% (1+\alpha)^{-1} \id_r \cdot    \sum_{j \in F} \norm{\blk{\AA}{i,j}},
% \end{align}
where the last inequality follows since $\DD - \AA = \blk{\MM}{F,F}$
is {\abdd}.  Moreover $\DD - \AA$ being {\abdd} implies that for each
$i \in F,$
$\blk{\DD}{i,i} \pgeq (\alpha+1)\id_{r} \sum_{j \in F}
\norm{\blk{\AA}{i,j}}.$
Thus, using Eqs.~\eqref{eq:schurSquare:remaining}
and~\eqref{eq:schurSquare:smallremaining}, we obtain,
$2\blk{\MM_3}{i,i} \pgeq \alpha \id_{r} \sum_{j \in F}
\norm{\blk{\AA}{i,j}}.$

Since for each $j,$
$\tilde{\LL}_{G(\dd^{(j)})} \approx_{\eps}
\tilde{\LL}_{G(\dd^{(j)})},$
we have
$\blk{(\tilde{\LL}_{G(\dd^{(j)})})}{i,i} \pleq e^{\eps}
\blk{({\LL}_{G(\dd^{(j)})})}{i,i}.$
Thus, $\sum_{j} \tilde{\LL}_{G(\dd_{F}^{(j)})}$ is {\bdd} with sums of
norms of off-diagonal blocks at most
$e^{\eps} (1+\alpha)^{-1} \sum_{j \in F} \norm{\blk{\AA}{i,j}}.$ Since
these are the only matrices that contribute to off-diagonal blocks in
$\blk{\MM_{1}}{F,F},$ we get that $\blk{\MM_{1}}{F,F}$ is at least
$\frac{\alpha } {e^{\eps} (1+\alpha)^{-1} }$-\bdd. Using
$\eps < \nfrac{1}{2}$ gives us our claim.

It is easy to verify that our transformations maintain that if $\TT$
is a matrix that is only
nonzero inside the submatrix $\blk{\TT}{C,C}$, and $\MM + \TT$ is {\bdd},
then $\schurSquare(\MM+\TT,F,\epsilon) =
\schurSquare(\MM,F,\epsilon)+\blk{\TT}{C,C}$.
\end{proof}

We now prove the claims deferred from the above proof.
\begin{lemma}
\label{lem:schurSquare:equality}
The matrix $\MM_{2}$ defined by Eq.~\eqref{eq:schurSquare} satisfies
$\schur{\MM_{2}}{F} = \schur{\MM}{F}.$
\end{lemma}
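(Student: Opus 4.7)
My plan is to prove the identity by a direct algebraic computation, exploiting a very convenient factorization of the $(F,F)$-block of $\MM_{2}$. Set $\blk{\MM}{F,F} = \DD - \AA$ as in the hypothesis. The key observation is that
\[
\DD - \AA\DD^{-1}\AA \;=\; (\DD - \AA)\DD^{-1}(\DD + \AA) \;=\; (\DD + \AA)\DD^{-1}(\DD - \AA),
\]
which follows by direct expansion (the cross terms $\pm\AA$ cancel and $\AA\DD^{-1}\AA$ appears with the right sign). Consequently
\[
\bigl(\blk{\MM_{2}}{F,F}\bigr)^{-1} \;=\; 2\,(\DD+\AA)^{-1}\DD(\DD-\AA)^{-1}.
\]

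Next, read off from \eqref{eq:schurSquare} that
\[
\blk{\MM_{2}}{F,C} = \tfrac{1}{2}(\II + \AA\DD^{-1})\blk{\MM}{F,C},
\qquad
\blk{\MM_{2}}{C,F} = \tfrac{1}{2}\blk{\MM}{C,F}(\II + \DD^{-1}\AA),
\]
and $\blk{\MM_{2}}{C,C} = \blk{\MM}{C,C} - \tfrac{1}{2}\blk{\MM}{C,F}\DD^{-1}\blk{\MM}{F,C}$. Plugging these into the Schur complement of $\MM_{2}$ yields
\[
\schur{\MM_{2}}{F} \;=\; \blk{\MM}{C,C} - \tfrac{1}{2}\blk{\MM}{C,F}\DD^{-1}\blk{\MM}{F,C} - \tfrac{1}{2}\,\blk{\MM}{C,F}(\II + \DD^{-1}\AA)(\DD+\AA)^{-1}\DD(\DD-\AA)^{-1}(\II + \AA\DD^{-1})\blk{\MM}{F,C}.
\]

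Two cancellations then finish the job. First, $(\II + \DD^{-1}\AA)(\DD+\AA)^{-1} = \DD^{-1}(\DD+\AA)(\DD+\AA)^{-1} = \DD^{-1}$, so the middle product collapses to $(\DD-\AA)^{-1}(\II + \AA\DD^{-1})$. Second, using $\AA = \DD - (\DD - \AA)$, we get
\[
(\DD-\AA)^{-1}\AA\DD^{-1} \;=\; (\DD-\AA)^{-1} - \DD^{-1},
\]
so $(\DD-\AA)^{-1}(\II + \AA\DD^{-1}) = 2(\DD-\AA)^{-1} - \DD^{-1}$. Substituting back, the third term becomes $-\blk{\MM}{C,F}(\DD-\AA)^{-1}\blk{\MM}{F,C} + \tfrac{1}{2}\blk{\MM}{C,F}\DD^{-1}\blk{\MM}{F,C}$, and the $\DD^{-1}$ piece exactly cancels the second term, leaving
\[
\schur{\MM_{2}}{F} \;=\; \blk{\MM}{C,C} - \blk{\MM}{C,F}(\DD-\AA)^{-1}\blk{\MM}{F,C} \;=\; \schur{\MM}{F}.
\]

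This is entirely routine linear algebra; there is no genuine obstacle, only bookkeeping. The one thing to be careful about is that $\DD$ is block-diagonal and in particular invertible (which follows from $\blk{\MM}{F,F}$ being $\alpha$-\bdd{} with $\alpha \geq 4$, forcing each diagonal block to be strictly positive definite), and that $\DD - \AA = \blk{\MM}{F,F}$ is invertible for the same reason, so all inverses appearing in the computation are well-defined.
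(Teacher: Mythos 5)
Your proof is correct and follows essentially the same route as the paper: a direct computation of $\schur{\MM_{2}}{F}$ that collapses the middle product to $(\DD-\AA)^{-1}$. The only difference is that the paper cites the squaring identity $(\DD - \AA)^{-1} = \tfrac{1}{2}\bigl(\DD^{-1} + (\II + \DD^{-1}\AA)(\DD - \AA\DD^{-1}\AA)^{-1}(\II + \AA\DD^{-1})\bigr)$ from \cite{PengS14} as a black box, whereas you rederive it inline from the factorization $\DD - \AA\DD^{-1}\AA = (\DD-\AA)\DD^{-1}(\DD+\AA)$, making the argument self-contained.
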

\begin{proof}
  We need the following identity from~\cite{PengS14}:
\begin{equation}\label{eqn:schurSquare:PSidentity}
  (\DD - \AA)^{-1}
 =
 \nfrac{1}{2} \cdot
(
  \DD^{-1}
 + 
  \left(\II + \DD^{-1} \AA\right)
  \left(\DD - \AA \DD^{-1} \AA\right)^{-1}
  \left(\II + \AA \DD^{-1}\right)
 ).
\end{equation}
We have,
\begin{align*}
\schur{\MM_{2}}{F} & = \blk{\MM}{C,C} - \nfrac{1}{2}\cdot
                     \blk{\MM}{C,F} \DD^{-1} \blk{\MM}{F,C}  \\
& \qquad - \nfrac{1}{2}\cdot\blk{\MM}{C,F} (\II + \DD^{-1} \AA
  )(\DD - \AA \DD^{-1} \AA)^{-1}(\II + \AA \DD^{-1}) \blk{\MM}{F,C} \\
& = \blk{\MM}{C,C} \\
& \qquad - \nfrac{1}{2}\cdot\blk{\MM}{C,F} (\DD^{-1}  + (\II + \DD^{-1} \AA  )(\DD - \AA \DD^{-1} \AA)^{-1}(\II + \AA \DD^{-1})) \blk{\MM}{F,C} \\
& = \blk{\MM}{C,C} - \blk{\MM}{C,F} (\DD - \AA)^{-1} \blk{\MM}{F,C} \\
& = \schur{\MM}{F}.
\end{align*}
\end{proof}

% \begin{lemma}
% \label{lem:improve}
% If $\DD - \AA$ is {\abdd} and the diagonal blocks of $\AA$ are $0$s, then
% $\DD - \AA \DD^{-1} \AA$ is $ ((1 + \alpha)^2-1)$-\bdd.
% \end{lemma}
% \begin{proof}
%   Since $\DD - \AA$ is {\abdd}, we know that
%   $ \blk{\DD}{i,i} \pgeq (1 + \alpha) \id_{r}\sum_{j}
%   \norm{\blk{\AA}{i,j}}.$
%   Thus,
%   $\norm{\blk{\DD}{j,j}^{-1}} \le (1+\alpha)^{-1} (\sum_{j}
%   \norm{\blk{\AA}{i,j}})^{-1}.$
%   Now consider the sum of norms in block-row $i$ in $\AA \DD^{-1} \AA:$ 
% \[
% \sum_{j } \norm{\sum_{k} \blk{\AA}{i,j} \blk{\DD}{j,j}^{-1} \blk{\AA}{j,k}}
% \le \sum_{j} \norm{\blk{\AA}{i,j}} \norm{\blk{\DD}{j,j}^{-1}} \sum_{k} \norm{ \blk{\AA}{j,k}}
% \leq \left(1 + \alpha\right)^{-1} \sum_{j} \norm{\blk{\AA}{i,j}}.
% \]
% The result then follows from
% $\id_{r}\sum_{j} \norm{\blk{\AA}{i,j}} \pleq (1 + \alpha)^{-1}
% \blk{\DD}{i,i}$.
% \end{proof}

%%% Local Variables:
%%% mode: latex
%%% TeX-master: "rasmus_sandbox_bsdd-solver"
%%% End:
\subsection{Schur Complement w.r.t. Highly {\abdd} Submatrices}
\label{sec:lastStep}

\begin{figure}[ht]

\begin{algbox}
$\MMtil_{SC}  = \lastStep\left(\MM, F, \alpha, \epsilon \right)$
\begin{enumerate}
\item Compute $\XX$, $\DD$, and $\AA$ as given by equations
  \eqref{eq:laststepX}, \eqref{eq:laststepD}, and \eqref{eq:laststepA}.
\item Construct $\YY$ as defined by equations
  \eqref{eq:lastStepYCross}, \eqref{eq:lastStepYF}, and \eqref{eq:lastStepYC}.
\item Construct sparse vectors 
  $\dd^{(j)}, \gg^{(j)}$ for all $j \in F$ as
  defined by equations~\eqref{eq:lastStepBipVec}
and~\eqref{eq:lastStepProdVec}.
\item For all $j \in F$,
 let $\tilde{\LL}_{G(\dd^{(j)},F)} \leftarrow \bipClique( \dd^{(j)}, F, \epsilon / 2)$
\item For all $j \in F$,
let $\tilde{\LL}_{G(\gg^{(j)} )} \leftarrow \prodClique( \gg^{(j)}, \epsilon / 2)$.
\item Compute $\RR$ as given by equation~\eqref{eq:lastStepR}.
\item Construct sparse vectors 
  $\rr^{(j)}$ for all $j \in F$ as
  defined by equations~\eqref{eq:lastStepRVec}.
\item For all $j \in F$,
let $\tilde{\LL}_{G(\rr^{(j)} )} \leftarrow \prodClique( \rr^{(j)}, \epsilon / 2)$.
\item
Compute $\SS$ as given by equation~\eqref{eq:lastStepS}.
\item
Return
\[
\MMtil_{SC} = \SS+\sum_{i}\tilde{\LL}_{G(\rr^{(i)} )}.
\]
\end{enumerate}
\end{algbox}

\caption{Pseudocode for procedure {\lastStep}: Computing an
  approximate Schur complement  w.r.t. a highly {\abdd} submatrices.}

\label{fig:lastStep}

\end{figure}

In this section, we describe the {\lastStep} procedure for computing an
approximate Schur complement of a {\bdd} matrix $\MM$ with a highly {\abdd}
submatrix $\blk{\MM}{F,F}$.
{\lastStep} is the final step of the {\schurApx} algorithm.
The key element of the procedure is a formula for approximating the
inverse of $\blk{\MM}{F,F}$ that is leveraged to approximate
the Schur complement of  $\MM$ as the Schur complement of matrix with
the $FF$ submatrix being block diagonal.

We prove guarantees for the {\lastStep} algorithm as stated in
Lemma~\ref{lem:lastStep}.

One could attempt to deal with the highly {\abdd} matrix at the last
step by directly replacing it with its diagonal, but this is
problematic.
Consider the case where $F$ contains $u$ and $v$
with a weight $\epsilon$ edge between them, and
$u$ and $v$ are connected to $u'$ and $v'$ in $C$
by weight $1$ edges respectively.
Keeping only the diagonal results in a Schur complement
that disconnects $u'$ and $v'$.
This however can be fixed by taking a step of random
walk within $F$.

Given a {\bdd} matrix $\MM$, s.t. $\blk{\MM}{F,F}$ is {\abdd} we
define a block diagonal matrix $\XX \in (\complex^{r \times r})^{|F|
  \times |F|} $ s.t. for each $i \in F$
\begin{equation}
\label{eq:laststepX}
\blk{\XX}{i,i} = \frac{\alpha}{\alpha+1} \blk{\MM}{i,i}
\end{equation}
and another  block diagonal matrix $\DD \in (\complex^{r \times r})^{|F|
  \times |F|} $ s.t. for each $i \in F$
\begin{equation}
\label{eq:laststepD}
\blk{\DD}{i,i} = \frac{1}{\alpha+1} \blk{\MM}{i,i},
\end{equation}
and we define a matrix $\AA \in (\complex^{r \times r})^{|F|
  \times |F|} $
\begin{align}
\label{eq:laststepA}
\blk{\AA}{i,j} =
\begin{cases}
0 & i = j
\\
-\blk{\MM}{i,j} & \text{ otherwise.}
\end{cases}
\end{align}

Thus $\blk{\MM}{F,F}  = \XX + \DD - \AA$.
One can check that because $\MM$ is {\bdd} and
$\blk{\MM}{F,F}$ is {\abdd}, it follows that $\DD - \AA$ is {\bdd}
and the matrix 
\[
\begin{bmatrix}
  \XX & \blk{\MM}{F,C} \\
   \blk{\MM}{C,F}  &  \blk{\MM}{C,C} 
\end{bmatrix}
\]
is also {\bdd}.

We will consider the linear operator
\begin{equation}
\ZZ^{(last)}
\defeq
\frac{1}{2} \XX^{-1}
	+ \frac{1}{2} \XX^{-1} \left(  \XX - \DD + \AA \right) \XX^{-1}
		\left( \XX - \DD + \AA \right) \XX^{-1}
\label{eq:zzFinal}.
\end{equation}
We define
\[
\MM^{(last)}
=
\begin{pmatrix}
\left(\ZZ^{(last)} \right)^{-1} & \blk{\MM}{F,C}\\
\blk{\MM}{F,C} & \blk{\MM}{C,C}
\end{pmatrix}
\]
\begin{lemma}\label{lem:goodMlast}
\[
\MM \pleq 
\MM^{(last)}
\pleq 
\left(1 +\frac{2}{\alpha}\right) \MM .
\]
\end{lemma}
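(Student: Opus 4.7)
The plan is to invoke Lemma~\ref{lem:block-sub-equiv} with $\ZZ = \ZZ^{(last)}$ and $\epsilon = 2/\alpha$. This reduces the claim to verifying
\[
0 \pleq (\ZZ^{(last)})^{-1} - \blk{\MM}{F,F} \pleq \tfrac{2}{\alpha}\,\schur{\MM}{C}.
\]
I will split the upper bound into two pieces: first bound the middle expression by $(2/\alpha)\LL$, where $\LL = \DD - \AA$, and then show $\LL \pleq \schur{\MM}{C}$.

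For the first piece, I diagonalize in the basis of $\XX = \alpha \DD$. Set $P = \XX^{-1/2}\LL\XX^{-1/2}$. The matrix $\LL = \DD - \AA$ is {\bdd} by the choice of $\DD$, so Corollary~\ref{cor:bsdd-sign-flip} gives $\pm \AA \pleq \DD$, whence $0 \pleq \LL \pleq 2\DD$. Using $\XX = \alpha \DD$, this yields $0 \pleq P \pleq (2/\alpha)\II \pleq (1/2)\II$ because $\alpha \ge 4$. Since $\XX - \DD + \AA = \XX - \LL$, direct computation gives
\[
\XX^{1/2} \ZZ^{(last)} \XX^{1/2} = \tfrac{1}{2}\bigl(\II + (\II - P)^2\bigr), \qquad \XX^{-1/2} \blk{\MM}{F,F} \XX^{-1/2} = \II + P,
\]
so that $(\ZZ^{(last)})^{-1} - \blk{\MM}{F,F} = \XX^{1/2}\, g(P)\, \XX^{1/2}$ with
\[
g(\lambda) = \frac{2}{1 + (1-\lambda)^2} - (1 + \lambda) = \frac{\lambda^2 (1 - \lambda)}{1 + (1-\lambda)^2}.
\]
On $[0, 1/2]$ we have $0 \le g(\lambda) \le \lambda^2$, so $0 \pleq (\ZZ^{(last)})^{-1} - \blk{\MM}{F,F} \pleq \XX^{1/2} P^2 \XX^{1/2} = \tfrac{1}{\alpha}\LL \DD^{-1} \LL \pleq \tfrac{2}{\alpha} \LL$, where the last step uses $\LL \pleq 2 \DD$ a second time (equivalently, $\LL \DD^{-1} \LL \pleq 2 \LL$).

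The second piece is the step I expect to require the most care. To show $\LL \pleq \schur{\MM}{C}$, I will define $\LL'$ on index set $V$ by $\blk{\LL'}{F,F} = \LL$ with all other blocks zero, and show $\MM \pgeq \LL'$. The difference $\MM - \LL'$ has $\XX$ in its $FF$-block and inherits $\blk{\MM}{F,C}, \blk{\MM}{C,F}, \blk{\MM}{C,C}$ elsewhere. To check $\MM - \LL'$ is {\bdd}, the only nontrivial condition is for $i \in F$: one needs $\blk{\XX}{i,i} = \tfrac{\alpha}{1+\alpha} \blk{\MM}{i,i} \pgeq \id_r \sum_{j \in C} \norm{\blk{\MM}{i,j}}$, which follows by subtracting the {\abdd} condition on $\blk{\MM}{F,F}$ (giving $\tfrac{1}{1+\alpha} \blk{\MM}{i,i} \pgeq \id_r \sum_{j \in F, j \neq i}\norm{\blk{\MM}{i,j}}$) from the {\bdd} condition on $\MM$. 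Hence $\MM - \LL'$ is {\bdd} and therefore PSD, so $\MM \pgeq \LL'$. Applying the energy-minimization form of Fact~\ref{fact:schurLoewner}: $\schur{\MM}{C} \pgeq \schur{\LL'}{C} = \LL$, the second equality because $\LL'$ has no $F$--$C$ coupling, so minimizing its quadratic form over the $C$-coordinates returns $\LL$ on $F$.

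Chaining the two pieces gives $0 \pleq (\ZZ^{(last)})^{-1} - \blk{\MM}{F,F} \pleq (2/\alpha) \LL \pleq (2/\alpha) \schur{\MM}{C}$, and Lemma~\ref{lem:block-sub-equiv} then yields the desired conclusion $\MM \pleq \MM^{(last)} \pleq (1 + 2/\alpha)\, \MM$.
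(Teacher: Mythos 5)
Your first half---the bound $0 \pleq (\ZZ^{(last)})^{-1} - \blk{\MM}{F,F} \pleq \tfrac{2}{\alpha}\LL$---is correct and is essentially the paper's Lemma~\ref{lem:threeStep}, proved the same way: whiten by $\XX^{1/2}$, reduce to a scalar function of $P = \XX^{-1/2}\LL\XX^{-1/2}$, and use $0 \pleq P \pleq \tfrac{2}{\alpha}\II$. Your second half is also the paper's argument in disguise: your matrix $\MM - \LL'$ is exactly the matrix with $\XX$ in the $F,F$ corner that the paper asserts is {\bdd} and hence PSD, and passing from $\LL' \pleq \MM$ to $\LL \pleq \schur{\MM}{C}$ via Fact~\ref{fact:schurLoewner} and then invoking Lemma~\ref{lem:block-sub-equiv} is equivalent to the paper's zero-padding step. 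So structurally you have reproduced the paper's proof.

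The gap is in your justification that $\MM - \LL'$ is {\bdd}. You cannot ``subtract'' the rearranged {\abdd} condition $\tfrac{1}{1+\alpha}\blk{\MM}{i,i} \pgeq \id_r \sum_{j \in F, j\neq i}\norm{\blk{\MM}{i,j}}$ from the {\bdd} condition $\blk{\MM}{i,i} \pgeq \id_r\sum_{j\neq i}\norm{\blk{\MM}{i,j}}$: subtracting two inequalities that point the same way is not a valid inference (you would need the one being subtracted to point the other way). Moreover, the target inequality does not actually follow from the stated hypotheses. Writing $b = \sum_{j\in F, j\neq i}\norm{\blk{\MM}{i,j}}$ and $c = \sum_{j\in C}\norm{\blk{\MM}{i,j}}$, the hypotheses give only $\blk{\MM}{i,i} \pgeq (b+c)\id_r$ and $\blk{\MM}{i,i} \pgeq (1+\alpha)b\,\id_r$; if some $i \in F$ has no neighbors inside $F$ (so $b=0$, and $\blk{\MM}{F,F}$ is vacuously {\abdd} at that row) and its row is tight, $\blk{\MM}{i,i} = c\,\id_r$, then the needed conclusion $\tfrac{\alpha}{1+\alpha}\blk{\MM}{i,i} \pgeq c\,\id_r$ is false. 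To be fair, the paper asserts the identical fact with only ``one can check,'' so you have correctly isolated the load-bearing step---but the check you supply does not go through, and closing it requires some additional property of $F$ beyond $\blk{\MM}{F,F}$ being {\abdd}.
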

We defer the proof of Lemma~\ref{lem:goodMlast} to Section~\ref{sec:lastStepDeferred}.

To utilize $\ZZ^{(last)}$, define
\begin{equation}
\MM^{\left(last\right)}_{2} \defeq
\left[
\begin{array}{cc}
 \frac{1}{2} \XX &  \frac{1}{2}\left( \XX - \DD + \AA\right) \XX^{-1} \blk{\MM}{F,C}\\
 \frac{1}{2} \blk{\MM}{C,F}   \XX ^{-1} \left( \XX - \DD + \AA\right) &
\blk{\MM}{C,C} - \frac{1}{2}\blk{\MM}{C,F}  \XX^{-1} \blk{\MM}{F,C}.
\end{array}
\right]
\label{eqn:lastSecondHalf}
\end{equation}
and note
\begin{equation}
  \label{eq:lastStepSchurEquals}
  \schur{\MM^{\left(last\right)}}{F} = 
   \schur{\MM^{\left(last\right)}_{2}}{F} 
\end{equation}
Lemma~\ref{lem:goodMlast} tells us that for large enough $\alpha$, we can
approximate the Schur complement of $\MM$ by approximating the the Schur
complement of $\MM^{(last)}_{2}$.

The next lemma tells us that $M_{2}$ is {\bdd} and that we can write
the matrix as a sum of an explicit {\bdd} matrix and sparse
implicitly represented product demand
block-Laplacians and bipartite product demand
block-Laplacians.

\begin{lemma}
\label{lem:lastCliqueSplit}
Consider a {\bdd} matrix $\MM$, where $\blk{\MM}{F,F}$ is
  {\abdd} for some $\alpha \ge 4$, and
let $\MM^{(last)}_{2}$ be the associated matrix defined by
equation~\eqref{eqn:lastSecondHalf}. 
Let $m$ be the number of nonzero blocks of $\MM$.

For $j \in F$, we define $\dd^{(i)} \in (\complex^{r
    \times r})^{n}$
\begin{equation}
\label{eq:lastStepBipVec}
\blk{\dd}{i}^{(j)} = 
\begin{cases}
\blk{ \AA  }{i,j} {\blk{\XX}{j,j}^{-\nfrac{1}{2}}} & \textrm{ for } i \in F \\
\blk{ \MM }{i,j} {\blk{\XX}{j,j}^{-\nfrac{1}{2}}} & \textrm{ for } i \in C
\end{cases}
\end{equation}
For $j \in F$, we define $\gg^{(j)} \in (\complex^{r
    \times r})^{n}$
\begin{equation}
\label{eq:lastStepProdVec}
\blk{\gg}{i}^{(j)} = 
\begin{cases}
0 & \textrm{ for } i \in F \\
\blk{ \MM }{i,j} {\blk{\XX}{j,j}^{-\nfrac{1}{2}}} & \textrm{ for } i
\in C
\end{cases}
\end{equation}
Then
\[
\MM^{(last)}_{2} = \YY + \frac{1}{2} \sum_{j \in F} \LL_{G(\gg^{(i)} )} +
\frac{1}{2} \sum_{j \in F} \LL_{G(\dd^{(i)},F)}
\]
where $\YY$ is {\bdd} and has $O(m)$ nonzero blocks, and the total
number of nonzero blocks in $\dd^{(i)}$ and $\gg^{(i)}$ for all $i$
combined is also $O(m)$.

$\YY$ as well as $\dd^{(i)}$ and $\gg^{(i)}$ for all $i$ can be
computed in $O(m)$ time and $O(\log n)$ depth.

If $\TT$ is a matrix that is only
nonzero inside the submatrix $\blk{\TT}{C,C}$,
then if we apply the transformation of Eq.~\ref{eqn:lastSecondHalf},
to $\MM+\TT$ instead of $\MM$, we find $(\MM+ \TT)^{(last)}_{2}  =
\MM^{(last)}_{2} + \blk{\TT}{C,C}$, and in
particular $\YY(\MM + \TT) = \YY(\MM) + \blk{\TT}{C,C}$.

\end{lemma}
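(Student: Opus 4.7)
The plan is to verify, by direct block-by-block expansion, that subtracting the two clique sums from $\MM^{(last)}_{2}$ leaves a residual $\YY$ whose off-diagonal support is contained in the support of $\MM$, and then to check that $\YY$ is bDD using the $\alpha$-bDD property of $\blk{\MM}{F,F}$.

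First I would use the algebraic identity $\XX - \DD + \AA = 2\XX - \blk{\MM}{F,F}$, which follows from the decomposition $\blk{\MM}{F,F} = \XX + \DD - \AA$ in equations~\eqref{eq:laststepX}--\eqref{eq:laststepA}. This rewrites the cross block of $\MM^{(last)}_{2}$ as $\blk{(\MM^{(last)}_{2})}{F,C} = \blk{\MM}{F,C} - \tfrac{1}{2} \blk{\MM}{F,F} \XX^{-1} \blk{\MM}{F,C}$ and the bottom-right block as $\blk{(\MM^{(last)}_{2})}{C,C} = \blk{\MM}{C,C} - \tfrac{1}{2}\blk{\MM}{C,F}\XX^{-1}\blk{\MM}{F,C}$, exposing exactly the rank-one terms $\MM_{i,j}\XX_{j,j}^{-1}\MM_{j,k}$ (for each $j\in F$) that are supposed to be absorbed into the clique sums. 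The block $\blk{(\MM^{(last)}_{2})}{F,F} = \tfrac{1}{2}\XX$ is already block diagonal, so no off-diagonal cancellation is needed there.

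Next I would plug the definitions of $\dd^{(j)}$ and $\gg^{(j)}$ from equations~\eqref{eq:lastStepBipVec}--\eqref{eq:lastStepProdVec} into Definitions~\ref{def:block-demand} and~\ref{def:bip-block-demand}. Because $\XX$ is block diagonal, $\dd^{(j)}_i (\dd^{(j)}_k)^{\dg}$ contributes exactly $\AA_{i,j}\XX_{j,j}^{-1}\MM_{j,k}$ to the cross block, and $\gg^{(j)}_i(\gg^{(j)}_k)^{\dg}$ contributes exactly $\MM_{i,j}\XX_{j,j}^{-1}\MM_{j,k}$ to the $CC$ block; the $FF$ block receives nothing because $\gg^{(j)}$ vanishes on $F$ and the bipartite-demand Laplacian has no off-diagonal blocks within $F$. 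Summing over $j \in F$ and matching signs with the expanded expressions for $\MM^{(last)}_{2}$ shows that every rank-one ``dense'' term cancels, leaving $\YY$ with $\YY_{i,k} = 0$ for distinct $i,k \in F$, $\YY_{i,k} = \MM_{i,k}$ for distinct $i,k \in C$, and $\YY_{i,k}$ equal to an explicit constant multiple of $\MM_{i,k}$ for $i\in F, k\in C$. Hence $\YY$ has at most $O(m)$ nonzero blocks; the vectors $\dd^{(j)}, \gg^{(j)}$ each have sparsity at most $\deg_j$, so the total representation size is $O(m)$, and $\XX$ being block diagonal makes $\YY$ constructible in $O(m)$ work and $O(\log n)$ depth.

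The main obstacle will be checking that $\YY$ is bDD. For $i \in F$, the subtracted diagonal from the bipartite cliques is $\tfrac{1}{2}\II\sum_{j\in F}\|\dd^{(j)}_i\|\sum_{k\in C}\|\dd^{(j)}_k\|$; I would bound $\|\dd^{(j)}_i\|\|\dd^{(j)}_k\| \le \|\MM_{i,j}\|\,\|\MM_{j,k}\|\,\|\XX_{j,j}^{-1}\|$, then use bDD of $\MM$ at row $j$ together with $\XX_{j,j} = \tfrac{\alpha}{\alpha+1}\MM_{j,j}$ to get $\|\XX_{j,j}^{-1}\|\sum_{k\in C}\|\MM_{j,k}\| \le (\alpha+1)/\alpha$, and finally invoke the $\alpha$-bDD inequality $\sum_{j\in F, j\ne i}\|\MM_{i,j}\| \le (1+\alpha)^{-1}\lambda_{\min}(\MM_{i,i})$ to conclude that the subtracted diagonal is dominated by a small multiple of $\MM_{i,i}$, leaving $\YY_{i,i}$ strictly larger than the constant-factor-scaled $C$-row-sum coming from the surviving cross off-diagonals (via bDD of $\MM$) once $\alpha \ge 4$. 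The case $i \in C$ is symmetric and uses that the $CC$ diagonal of $\MM^{(last)}_{2}$ is $\blk{\MM}{C,C}$ minus rank-one terms whose diagonal magnitudes are exactly controlled by the product-clique diagonals. The shift invariance is then immediate: $\XX,\DD,\AA$ and the vectors $\dd^{(j)},\gg^{(j)}$ depend only on $\blk{\MM}{F,F}$ and $\blk{\MM}{F,C}$, which are unchanged when a $\blk{\TT}{C,C}$-supported matrix is added; the only block of $\MM^{(last)}_{2}$ that changes is the $CC$ block, which gains exactly $\blk{\TT}{C,C}$, and hence so does $\YY$.
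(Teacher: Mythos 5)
Your proposal is correct and follows essentially the same route as the paper: identify $\YY$ by explicit block-by-block cancellation of the rank-one terms $\blk{\MM}{i,j}\blk{\XX}{j,j}^{-1}\blk{\MM}{j,k}$ against the product and bipartite demand Laplacians (so that $\blk{\YY}{F,F}$ is block diagonal, $\blk{\YY}{F,C}$ is a scalar multiple of $\blk{\MM}{F,C}$, and $\blk{\YY}{C,C}$ agrees with $\blk{\MM}{C,C}$ off the diagonal), and then verify bDD-ness row by row using exactly the bounds you name, namely $\norm{\blk{\XX}{j,j}^{-1}}\sum_{k}\norm{\blk{\MM}{j,k}}\le (\alpha+1)/\alpha$ and the {\abdd} inequality for rows in $F$. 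The only caveat is that the $i\in C$ case is not literally symmetric --- its diagonal receives corrections from both clique families plus the explicit $\blk{\MM}{i,F}\XX^{-1}\blk{\MM}{F,i}$ term, and its off-diagonal row sum mixes $CF$ and $CC$ contributions --- but the estimate goes through by the same mechanism you describe.
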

We defer the proof of Lemma~\ref{lem:lastCliqueSplit} to Section~\ref{sec:lastStepDeferred}.
\begin{proof}
\emph{(of Lemma~\ref{lem:lastStep})}
The procedure $\textsc{LastStep}(M, F, \alpha, \epsilon)$ first computes 
$\YY$ and $\dd^{(i)}$ and $\gg^{(i)}$ for all $i$ s.t.
\begin{equation}
\label{eq:mlast2def}
\MM^{(last)}_{2} = \YY + \frac{1}{2} \sum_{j \in F} \LL_{G(\gg^{(i)} )} +
\frac{1}{2} \sum_{j \in F} \LL_{G(\dd^{(i)},F)},
\end{equation}
where $\MM^{(last)}_{2} \approx_{2/\alpha} \MM$.
Let $n_{\dd^{(i)}}$ and $n_{\gg^{(i)}}$ denote the number of nonzero
blocks in each $\dd^{(i)}$ and $\gg^{(i)}$.
By Lemma~\ref{lem:product-clique-sparsify},
we may call $\prodClique( \gg^{(i)}, \epsilon / 2)$
in $O(n_{\gg^{(i)}}\epsilon^{-4})$ time to sparsify
$\LL_{G(\gg^{(i)} )}$, producing a {\bdd} matrix
$\tilde{\LL}_{G(\gg^{(i)} )} \approx_{\epsilon/2} \LL_{G(\gg^{(i)} )} $ with
$O(n_{\gg^{(i)}}\epsilon^{-4})$ nonzero blocks.
By Lemma~\ref{lem:bipartite-clique-sparsify},
we may call $\bipClique( \dd^{(i)}, F, \epsilon / 2)$
in $O(n_{\dd^{(i)}}\epsilon^{-4})$ time to sparsify
$\LL_{G(\dd^{(i)},F )}$, producing a {\bdd} matrix
$\tilde{\LL}_{G(\dd^{(i)},F)} \approx_{\epsilon/2} \LL_{G(\dd^{(i)},F )}$ with
$O(n_{\dd^{(i)}}\epsilon^{-4})$ nonzero blocks.
The total running time of this is
$O(\epsilon^{-4}\sum_{i} (n_{\dd^{(i)}} + n_{\gg^{(i)}} )) =
O(\epsilon^{-4} m)$,
and the total number of nonzero blocks in 
\[
\frac{1}{2} \sum_{j \in F} \tilde{\LL}_{G(\gg^{(i)} )} +
\frac{1}{2} \sum_{j \in F} \tilde{\LL}_{G(\dd^{(i)},F)},
\]
is $O(\epsilon^{-4}(\sum_{i} n_{\dd^{(i)}} + n_{\gg^{(i)}} )) =
O(\epsilon^{-4} m)$. 
We define 
\begin{equation}
\label{eq:lastStepR}
\RR = \YY + \frac{1}{2} \sum_{j \in F} \tilde{\LL}_{G(\gg^{(i)} )} +
\frac{1}{2} \sum_{j \in F} \tilde{\LL}_{G(\dd^{(i)},F)}.
\end{equation}
which we can compute in $O(\epsilon^{-4} m)$ time and
$O(\log n)$ depth.
We have $\MM \approx_{2/\alpha} \MM^{(last)}_{2}$ and
$\MM^{(last)}_{2} \approx_{\epsilon/2} \RR$,
so that $\MM \approx_{2/\alpha+\epsilon/2}\RR$.
It follows from Fact~\ref{fact:schurLoewner} that $\schur{\MM}{F}
\approx_{2/\alpha+\epsilon/2} \schur{\RR}{F}$.

Because the sparsifiers computed by \bipClique preserve the graph bipartition, 
 $\blk{\RR}{F,F}$ is block diagonal.

We can use the block diagonal structure of 
$\blk{\RR}{F,F}$ quickly compute a
sparse approximation to $\schur{\RR}{F}$.

For $j \in F$, we define $\gg^{(j)} \in (\complex^{r
    \times r})^{|C|}$
\begin{equation}
\label{eq:lastStepRVec}
\rr^{(j)}_{i} = 
\blk{ \RR }{i,j}{\blk{\RR}{j,j}^{-\nfrac{1}{2}}} \textrm{ for } i
\in C
\end{equation}
Then
\[
\schur{\RR}{F}
=
\blk{\RR}{C,C} - \blk{\RR}{C,F} \blk{\RR}{F,F}^{-1} \blk{\RR}{F,C}
=
 \SS + \sum_{j \in F} \LL_{G(\rr^{(i)} )}.
\]
where
\begin{align}
\label{eq:lastStepS}
 \SS =
&
\blk{\RR}{C,C} - \diagop_{i \in C}\left( \id_{r} \sum_{j \in F}
      \sum_{k \in C\setminus{\setof{i}}} \norm{ \blk{\RR}{i,j} {\blk{\RR}{j,j}^{-\nfrac{1}{2}}}
      } \norm{ \blk{\RR}{k,j} {\blk{\RR}{j,j}^{-\nfrac{1}{2}}} }
      \right)
  \\
&-
\diagop_{i \in C}\sum_{j \in F}\left( \blk{\RR}{i,j} \blk{\RR}{j,j}^{-1}
      \blk{\RR}{j,i}\right).
\end{align}
Let us define, for each block row $i \in F$, 
$\rho_{i} = \sum_{j \in C} \norm{\blk{ \RR
  }{i,j} }$, and for each block row $i \in C$, 
$\rho_{i} = \sum_{j \in F} \norm{\blk{ \RR
  }{i,j} }$.
From $\RR$ being {\bdd}, we then conclude  
for each $i \in F$
\[
\blk{\RR}{i,i} \pgeq \id_{r} \rho_{i} 
\]
and for each $i \in C$
\[
\blk{\RR}{i,i} \pgeq \id_{r} \left( \rho_{i} + \sum_{j \in C}\norm{\blk{\RR}{i,j}} \right).
\]

With this in mind, we check that each for $i \in C$ of $\SS$
is {\bdd}.

The diagonal block satisfies
\begin{align*}
\blk{\SS}{i,i}
&\pgeq \id_{r} \left(
\sum_{j \in C}\norm{\blk{\RR}{i,j}}+
\rho_{i}
- \sum_{j \in F}
      \sum_{k \in C} \norm{ \blk{\RR}{i,j} \blk{\RR^{-1/2}}{j,j}
      } \norm{ \blk{\RR}{k,j} \blk{\RR^{-1/2}}{j,j} }
\right)
\\
&\pgeq \id_{r} \left(
\sum_{j \in C}\norm{\blk{\RR}{i,j}}+
\rho_{i}
- \sum_{j \in F}
      \sum_{k \in C} 
        \frac{1}{\rho_{j}} \norm{ \blk{\RR}{i,j} }
      \norm{ \blk{\RR}{k,j}} 
\right)
\\
&\pgeq \id_{r} \left(
\sum_{j \in C}\norm{\blk{\RR}{i,j}}+
\rho_{i}
- \sum_{j \in F}
        \frac{1}{\rho_{j}} \norm{ \blk{\RR}{i,j} }
\right)
\\
&\pgeq
\id_{r} \sum_{j \in C}\norm{\blk{\RR}{i,j}}
\end{align*}

We can compute $\SS$ and all $\rr^{(j)}_{i}$ in $O(m\epsilon^{-4})$
time and $O(\log n)$ depth, since this is an upper bound to the number of nonzero blocks
in $\RR$.

Let $n_{\rr^{(i)}}$ denote the number of nonzero
blocks in $\rr^{(i)}$. By Lemma~\ref{lem:product-clique-sparsify},
we may call $\prodClique( \rr^{(i)}, \epsilon / 2)$
in $O(n_{\rr^{(i)}}\epsilon^{-4})$ time to sparsify
$\LL_{G(\rr^{(i)} )}$, producing a {\bdd} matrix
$\tilde{\LL}_{G(\rr^{(i)} )} \approx_{\epsilon/2} \LL_{G(\rr^{(i)} )} $ with
$O(n_{\rr^{(i)}}\epsilon^{-4})$ nonzero blocks.

The total number of nonzero blocks in 
\[
\SS+\sum_{i}\tilde{\LL}_{G(\rr^{(i)} )}
\]
is also $O(\epsilon^{-8} m)$, and 
$
\SS+\sum_{i}\tilde{\LL}_{G(\rr^{(i)} )}
\approx_{\epsilon/2}
\schur{\RR}{F}.
$
So
\[
\MMtil_{SC} = \SS+\sum_{i}\tilde{\LL}_{G(\rr^{(i)} )}
\approx_{\epsilon+2/\alpha}
\schur{\MM}{F}
\]

The total amount of work to compute these $\tilde{\LL}_{G(\rr^{(i)}
  )}$ is  $O(\epsilon^{-4}(\sum_{i} n_{\rr^{(i)}})) =
O(\epsilon^{-8} m)$. The depth for this computation is
$O(\log n)$.

Suppose $\TT$ is a matrix that is only
nonzero inside the submatrix $\blk{\TT}{C,C}$, and $\MM + \TT$ is
{\bdd}.
We can show that $\lastStep(\MM+\TT, F, \alpha, \epsilon) = \lastStep(\MM, F,
\alpha, \epsilon)+\blk{\TT}{C,C}$, by first noting that this type of property
holds for $\YY$ by Lemma~\ref{lem:lastCliqueSplit}, and
from this concluding that 
similarly if we consider $\RR$ as a function of $\MM$ then 
$\blk{\RR}{C,C}(\MM + \TT) = \blk{\RR}{C,C}(\MM) + \blk{\TT}{C,C} $,
and finally considering $\MMtil_{SC}$ as a function of $\MM$, 
we can then easily show that $(\boldsymbol{\widetilde{M + T}})_{SC} =
\MMtil_{SC} + \blk{\TT}{C,C} $.
\end{proof}

%%% Local Variables:
%%% mode: latex
%%% TeX-master: "rasmus_sandbox_bsdd-solver"
%%% End:
\subsection{Deferred Proofs from Section~\ref{sec:lastStep}}
\label{sec:lastStepDeferred}

To help us prove Lemma~\ref{lem:goodMlast}, we first prove the next lemma.
\begin{lemma}
\label{lem:threeStep}
If $\blk{\MM}{F,F} = \XX +\DD - \AA $ be a {\abdd} matrix for some $\alpha \geq 4$, then the operator
$\ZZ^{(last)}$ as defined in Equation~\ref{eq:zzFinal} satisfies:
\[
\blk{\MM}{F,F} \pleq \left( \ZZ^{(last)} \right)^{-1}
\pleq \blk{\MM}{F,F} + \frac{2}{\alpha}  \left(\DD - \AA\right).
\]
\end{lemma}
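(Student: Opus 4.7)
The plan is to reduce both Loewner-order inequalities to scalar inequalities via a joint diagonalization by conjugation with $\XX^{1/2}$. Abbreviate $\LL = \DD - \AA$, so that $\blk{\MM}{F,F} = \XX + \LL$; here $\XX$ is a positive definite block-diagonal matrix, $\LL$ is a $\bdd$ matrix (hence PSD by Lemma~\ref{lem:bsdd-factorization}), and by Lemma~\ref{lem:split} applied to the $\alpha$-{\bdd} matrix $\blk{\MM}{F,F}$ we get $\XX \pgeq \tfrac{\alpha}{2}\LL$.

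First I would introduce the Hermitian PSD matrix $\KK = \XX^{-1/2}\LL\XX^{-1/2}$. By the inequality $\XX \pgeq \tfrac{\alpha}{2}\LL$, every eigenvalue $\mu$ of $\KK$ lies in $[0, 2/\alpha] \subseteq [0, 1/2]$. A direct computation using the identity $\XX^{-1/2}(\XX-\LL)\XX^{-1/2} = \II-\KK$ collapses $\XX^{1/2}\ZZ^{(last)}\XX^{1/2}$ (whose second summand involves two copies of $\XX^{-1/2}(\XX-\LL)\XX^{-1/2}$ sandwiched around $\XX^{-1/2}\XX\XX^{-1/2} = \II$) to the clean polynomial in $\KK$:
\[
\XX^{1/2}\ZZ^{(last)}\XX^{1/2} \;=\; \tfrac{1}{2}\bigl(\II + (\II - \KK)^{2}\bigr).
\]
Likewise $\XX^{1/2}(\XX+\LL)^{-1}\XX^{1/2} = (\II+\KK)^{-1}$ and $\XX^{1/2}\bigl(\XX + (1+\tfrac{2}{\alpha})\LL\bigr)^{-1}\XX^{1/2} = \bigl(\II + (1+\tfrac{2}{\alpha})\KK\bigr)^{-1}$.

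Next, using Facts~\ref{fact:orderInverse} and~\ref{fact:orderCAC}, conjugating the desired bound $\XX + \LL \pleq (\ZZ^{(last)})^{-1} \pleq \XX + (1+\tfrac{2}{\alpha})\LL$ by $\XX^{1/2}$ and taking inverses turns it into
\[
\bigl(\II + (1+\tfrac{2}{\alpha})\KK\bigr)^{-1} \;\pleq\; \tfrac{1}{2}\bigl(\II + (\II - \KK)^{2}\bigr) \;\pleq\; (\II + \KK)^{-1}.
\]
Since $\KK$ is Hermitian, all three matrices above share its eigenbasis, so it suffices to verify the analogous scalar inequalities for every $\mu \in [0, 2/\alpha]$.

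Finally I would finish with elementary algebra. The upper scalar inequality $(1+\mu)\cdot\tfrac{1+(1-\mu)^2}{2}\leq 1$ expands to $2 - \mu^2 + \mu^3 \leq 2$, i.e.\ $\mu^3 \leq \mu^2$, which holds on $[0,1]$. The lower scalar inequality, after clearing denominators, becomes $(1+(1+\tfrac{2}{\alpha})\mu)(2-2\mu+\mu^2) \geq 2$; subtracting the $(1+\mu)$-part using the computation above reduces it to $(2\mu/\alpha)(2 - 2\mu + \mu^2) \geq \mu^2(1-\mu)$, which after dividing by $\mu>0$ follows from $2/\alpha \geq \mu$ together with $2 - 2\mu + \mu^2 \geq 1 \geq 1-\mu$ on this range. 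The only non-routine step is spotting the conjugation that turns the non-commuting product $\XX^{-1}(\XX-\LL)\XX^{-1}(\XX-\LL)\XX^{-1}$ into a polynomial in the single Hermitian matrix $\KK$; once that is done, the rest is eigenvalue bookkeeping.
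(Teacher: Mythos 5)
Your proof is correct and follows essentially the same route as the paper's: conjugate by $\XX^{1/2}$, reduce to a polynomial inequality in the single Hermitian matrix $\KK=\XX^{-1/2}(\DD-\AA)\XX^{-1/2}$ whose eigenvalues lie in $[0,2/\alpha]$, and verify the two scalar inequalities on that range. The one imprecision is the appeal to Lemma~\ref{lem:split} for $\XX \pgeq \frac{\alpha}{2}(\DD-\AA)$ --- that lemma only asserts the \emph{existence} of some splitting with this property, not that the particular splitting of equations \eqref{eq:laststepX}--\eqref{eq:laststepA} used to define $\ZZ^{(last)}$ satisfies it --- but the bound does hold here (since $\DD+\AA$ is {\bdd} one gets $2\DD \pgeq \DD-\AA$, and $\XX = \alpha\DD$), which is exactly the short argument the paper gives.
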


\begin{proof}
Composing both sides by $\XX^{-1/2}$
and substituting in $\mathcal{L}= \XX^{-1/2} \left(\DD - \AA\right) \XX^{-1/2}$ means it
suffices to show
\[
\II + \mathcal{L}
\pleq \left( \frac{1}{2} \II
	+ \frac{1}{2} \left( \II - \mathcal{L}\right)^{2} \right)^{-1}
	\pleq \II + \mathcal{L}+  \frac{2}{\alpha} \mathcal{L}.
\]
We can use the fact that $\blk{\MM}{F,F}$ is $\alpha$-strongly diagonally dominant
to show $0 \pleq \LL \pleq \frac{2}{\alpha} \XX$, and equivalently
$0 \pleq \mathcal{L}\pleq \frac{2}{\alpha} \II$, as follows:

Firstly, $\DD - \AA \pgeq 0$ as $\DD - \AA$  is {\bdd},
similarly, $\DD + \AA \pgeq 0$ as  $\DD + \AA$ is also {\bdd}.
From the latter $\DD \pgeq - \AA$, so   $2 \DD \pgeq \DD - \AA$.
Finally, $\XX \pgeq \alpha \DD \pgeq \frac{\alpha}{2} \left( \DD - \AA \right)$.

As $\mathcal{L}$ and $\II$ commute, the spectral theorem
means it suffices to show this for any scalar $0 \leq t \leq \frac{2}{\alpha}$.
Note that
\[
\frac{1}{2}
	+ \frac{1}{2} \left( 1 - t \right)^{2}
	= 1 - t + \frac{1}{2} t^2
\]
Taking the difference between the inverse of this
and the `true' value of $1 + t$ gives:
\[
\left( 1 - t + \frac{1}{2} t^2 \right)^{-1} - \left( 1 + t \right)
= \frac{1 - \left( 1 + t \right) \left( 1 - t + \frac{1}{2} t^2\right)}{1 - t + \frac{1}{2}t^2 }
= \frac{\frac{1}{2 } t^2 \left( 1 - t \right) }
	{1 - t + \frac{1}{2} t^2}
\]
Incorporating the assumption that $0 \leq t \leq \frac{2}{\alpha}$
and $\alpha \geq 4$ gives
that the denominator is at least
\[
1 - \frac{2}{\alpha} \geq \frac{1}{2},
\]
and the numerator term can be bounded by
\[
0 \leq \frac{t^2}{2} \left( 1 - t\right) \leq \frac{t}{\alpha}.
\]
Combining these two bounds then gives the result.
\end{proof}

Lemma~\ref{lem:goodMlast} allows us to extend the approximation of $\blk{\MM}{F,F}$
  by the inverse of $\ZZ^{(last)}$ to the entire matrix $\MM$.

\begin{proof}
\emph{(of Lemma~\ref{lem:goodMlast})}
Recall that when a matrix $\TT$ is PSD,
\begin{equation}
\label{eq:blockSubstitute}
\begin{pmatrix}
 \TT  & 0\\
0 & 0
\end{pmatrix}
\pgeq 0.
\end{equation}

The left-hand inequality of our lemma follows immediately from
  Eq.~\ref{eq:blockSubstitute} and
  the left-hand side of the guarantee of Lemma~\ref{lem:threeStep}.
To prove the right-hand inequality we apply
  Eq.~\ref{eq:blockSubstitute} and the right-hand side of the guarantee of Lemma~\ref{lem:threeStep}.
  to conclude
\[
\begin{pmatrix}
\left(\ZZ^{(last)} \right)^{-1} & \blk{\MM}{F,C}\\
\blk{\MM}{F,C} & \blk{\MM}{C,C}
\end{pmatrix}
\pleq 
\begin{pmatrix}
\blk{\MM}{F,F} + \frac{2}{\alpha}  \LL   & \blk{\MM}{F,C}\\
\blk{\MM}{F,C} & \blk{\MM}{C,C}
\end{pmatrix}
=
\MM + 
\frac{2}{\alpha}  \begin{pmatrix}
 \LL   & 0\\
0 & 0
\end{pmatrix}.
\]
The matrix 
\[
\begin{bmatrix}
  \XX & \blk{\MM}{F,C} \\
   \blk{\MM}{C,F}  &  \blk{\MM}{C,C} 
\end{bmatrix}
\]
is {\bdd} and hence PSD. It follows that 
\[
\begin{pmatrix}
 \DD - \AA  & 0\\
0 & 0
\end{pmatrix}
\pleq \MM,
\]
by which we may conclude that
\[
\MM + 
\frac{2}{\alpha}  \begin{pmatrix}
 \DD - \AA  & 0\\
0 & 0
\end{pmatrix}
\pleq 
\MM + \frac{2}{\alpha}  \MM .
\]
\end{proof}
\begin{proof}
\emph{(of Lemma~\ref{lem:lastCliqueSplit})}

Each product demand clique $\LL_{G(\gg^{(i)} )}$ and bipartite product
demand clique $\LL_{G(\dd^{(i)},F)}$ is
{\bdd}.

We now have to find an expression for $\YY$ and show that $\YY$ is {\bdd}.
Let us write $\YY$ in terms of its blocks
From
\begin{align*}
\YY
& = 
\MM-
\left(
\frac{1}{2} \sum_{j \in F} \LL_{G(\gg^{(i)} )} +
\frac{1}{2} \sum_{j \in F} \LL_{G(\dd^{(i)},F)}
\right)
\end{align*}
it then follows by a simple check that
\begin{equation}
\label{eq:lastStepYCross}
\blk{\YY}{C,F}^{\dg} =
\blk{\YY}{F,C} =\frac{1}{2} (\XX-\DD) \XX^{-1} \blk{\MM}{F,C} 
=\frac{1}{2}  \left(1-\frac{1}{\alpha}\right) \blk{\MM}{F,C}
\end{equation}
and that
\begin{align}
\label{eq:lastStepYF}
\blk{\YY}{F,F} 
= & 
\frac{1}{2} \XX
- \frac{1}{2}\diagop_{i \in F}\left( \id_{r} \sum_{j \in F}
  \sum_{k \in C}
  \norm{\blk{ \AA }{i,j}{\blk{\XX}{j,j}^{-\nfrac{1}{2}}}  }
  \norm{ \blk{\MM}{k,j} \blk{\XX^{-\nfrac{1}{2}}}{j,j}} 
      \right)
\end{align}
and
\begin{align}
\label{eq:lastStepYC}
\blk{\YY}{C,C} 
= & 
\blk{\MM}{C,C}
- \frac{1}{2}\diagop_{i \in C}\left( \blk{\MM}{i,F} \XX^{-1}
      \blk{\MM}{F,i} \right)
\nonumber
\\
&
- \frac{1}{2}\diagop_{i \in C}\left( \id_{r} \sum_{j \in F}
      \sum_{k \in C\setminus{\setof{i}}} \norm{ \blk{\MM}{i,j} {\blk{\XX}{j,j}^{-\nfrac{1}{2}}}
      } \norm{ \blk{\MM}{k,j} {\blk{\XX}{j,j}^{-\nfrac{1}{2}}} }
      \right)
\nonumber
\\
&
- \frac{1}{2}\diagop_{i \in C}\left( \id_{r} \sum_{j \in F}
      \sum_{k \in F} \norm{ \blk{\MM}{i,j} {\blk{\XX}{j,j}^{-\nfrac{1}{2}}}
      } \norm{\blk{ \AA }{k,j}\blk{\XX^{-\nfrac{1}{2}}}{j,j}  }
      \right).
\end{align}

Next, we check that $\YY$ is {\bdd}.
First, let us define, for each block row $i \in F$, 
$\rho_{i} = \sum_{j \in F} \norm{\blk{ \AA
  }{i,j} }$.
From $\blk{\MM}{F,F}$ being {\abdd}, we then conclude  
$ \blk{\MM}{i,i} \pgeq \id_{r} (1+\alpha) \rho_{i}$.
And from $\MM$ being {\bdd}, we find that
$ \sum_{j \in C} \norm{\blk{ \MM
  }{i,j} } \leq \alpha \rho_{i}$.
We now check that each block row $i \in F$ of $\YY$ is {\bdd}.
The off-diagonal block norm sum is at most $\frac{1}{2}
\left(1-\frac{1}{\alpha}\right) \alpha \rho_{i}$.

The diagonal satisfies
\begin{align*}
\blk{\YY}{i,i}
& \pgeq \id_{r} \left( \frac{1}{2}\alpha \rho_{i}
-\frac{1}{2}\left( \sum_{j \in F}
  \sum_{k \in C}
  \norm{\blk{ \AA }{i,j}{\blk{\XX}{j,j}^{-\nfrac{1}{2}}}  }
  \norm{ \blk{\MM}{k,j} {\blk{\XX}{j,j}^{-\nfrac{1}{2}}}} 
      \right)
\right)
\\
& \pgeq \id_{r} \left(
\frac{1}{2} \alpha \rho_{i}
-\frac{1}{2}\left( \sum_{j \in F}
  \sum_{k \in C}
  \frac{1}{\alpha \rho_{j}}\norm{\blk{ \AA }{i,j} }
  \norm{ \blk{\MM}{k,j}} 
\right)
\right)
\\
& \pgeq \id_{r} \left(
\frac{1}{2} \alpha \rho_{i}
-\frac{1}{2}\left( \sum_{j \in F}
\norm{\blk{ \AA }{i,j} }
\right)
\right)
=
\id_{r} \left(
\frac{1}{2}
\left(1-\frac{1}{\alpha}\right) \alpha \rho_{i}.
\right)
\end{align*}
So the block rows with $i \in F$ are {\bdd}.

Next we check the block rows for $i \in C$.
Let us define for each $i \in C$,
$
\rho_{i} = \sum_{j \in F} \norm{\blk{ \MM
  }{i,j} }.
$
Thus for $\YY$, the sum of block norms of row $i$ over columns $j \in F$ 
\[
\sum_{j \in F} \norm{\blk{ \YY
  }{i,j} } = \frac{1}{2}  \left(1-\frac{1}{\alpha}\right)  \sum_{j \in F} \norm{\blk{ \MM
  }{i,j} } = \frac{1}{2}  \left(1-\frac{1}{\alpha}\right) \rho_{i}.
\]
For $\YY$, the sum of block norms of row $i$ over columns $j \in C$
is
$\sum_{j \in C\setminus\setof{i}} \norm{\blk{ \MM
  }{i,j} }$.
So the total sum of the blocks norms of off-diagonals is
\[
\frac{1}{2} \rho_{i} \left(1-\frac{1}{\alpha}\right)
+
\sum_{j \in C\setminus\setof{i}} \norm{\blk{ \MM }{i,j} }
\]
The diagonal block satisfies
\begin{align*}
\blk{\YY}{i,i}
&\pgeq \id_{r} \left(
\rho_{i}
- \frac{1}{2}\norm{\blk{\MM}{i,F} \XX^{-1}
      \blk{\MM}{F,i}}
\right.
\\
&
- \frac{1}{2}\left( \sum_{j \in F}
      \sum_{k \in C\setminus{\setof{i}}} \norm{ \blk{\MM}{i,j} {\blk{\XX}{j,j}^{-\nfrac{1}{2}}}
      } \norm{ \blk{\MM}{k,j} {\blk{\XX}{j,j}^{-\nfrac{1}{2}}} }
      \right)
\\
&
\left.
- \frac{1}{2}\left( \sum_{j \in F}
      \sum_{k \in F} \norm{ \blk{\MM}{i,j} {\blk{\XX}{j,j}^{-\nfrac{1}{2}}}
      } \norm{\blk{ \AA }{k,j}{\blk{\XX}{j,j}^{-\nfrac{1}{2}}}  }
      \right)
\right)
\\
&
\pgeq \id_{r} \left(
\rho_{i}
+
\sum_{j \in C\setminus\setof{i}} \norm{\blk{ \MM }{i,j} }
\right.
\\
&
- \frac{1}{2}\left( \sum_{j \in F}
      \sum_{k \in C} \frac{1}{\alpha\rho_{j}}\norm{ \blk{\MM}{i,j} }\norm{ \blk{\MM}{k,j} } 
      \right)
\\
&
\left.
- \frac{1}{2}\left( \sum_{j \in F}
  \sum_{k \in F} \frac{1}{\alpha\rho_{j}} \norm{ \blk{\MM}{i,j} }
  \norm{\blk{\AA}{k,j}  }
      \right)
\right)
\\
&
\pgeq \id_{r} \left(
\rho_{i}
+
\sum_{j \in C\setminus\setof{i}} \norm{\blk{ \MM }{i,j} }
- \frac{1}{2} \rho_{i}
- \frac{1}{2}\frac{1}{\alpha}  \rho_{i}
\right)
\\
&= \id_{r} \left(
\frac{1}{2} \rho_{i} \left(1-\frac{1}{\alpha}\right)
+
\sum_{j \in C\setminus\setof{i}} \norm{\blk{ \MM }{i,j} }
\right) 
.
\end{align*}
For these block rows are also {\bdd}, and hence $\YY$ is {\bdd}.
It is clear from the definitions that 
$\YY$ as well as $\dd^{(i)}$ and $\gg^{(i)}$ for all $i$ can be
computed in $O(m)$ time and $O(\log n)$ depth.

It is easy to verify that if $\TT$ is a matrix that is only
nonzero inside the submatrix $\blk{\TT}{C,C}$,
then if we apply the transformation of Eq.~\ref{eqn:lastSecondHalf},
to $\MM+\TT$ instead of $\MM$, we find $(\MM+ \TT)^{(last)}_{2}  =
\MM^{(last)}_{2} + \blk{\TT}{C,C}$, and in
particular $\YY(\MM + \TT) = \YY(\MM) + \blk{\TT}{C,C}$.
\end{proof}

%%% Local Variables:
%%% mode: latex
%%% TeX-master: "rasmus_sandbox_bsdd-solver"
%%% End:
\subsection{Sparsifying Product Demand Block-Laplacians}
\label{sec:product-clique-sparsify}
In this section, we show how to efficiently sparsify product demand
block-Laplacians and their bipartite analogs. We prove the following
key lemma later in this section that allows us to transfer results on
graph sparsification to sparsifying these product block-Laplacians.
\begin{lemma}
\label{lem:product-block-graphs-sparsification}
  Suppose we have two graphs on $n$ vertices, $H^{(1)}$ and $H^{(2)}$
  such that $H^{(1)} \approx_{\eps} H^{(2)}.$ Given
  $\dd \in (\complex^{r \times r})^{n},$ define the {\bdd} matrix
  $\LL^{(\ell)} \in (\complex^{r \times r})^{n \times n}$ for each
  $\ell=1,2,$ as
\[
\blk{\LL^{(\ell)}}{i,j} =
\begin{cases}
-\frac{h^{(\ell)}_{i,j}}{\norm{\blk{\dd}{i}} \norm{\blk{\dd}{j}}}\blk{\dd}{i} \blk{\dd}{j}^{\dg} & \quad i\neq j,\\
\id_r \cdot \sum_{k: k \neq i} h^{(\ell)}_{i,k} & \quad \textrm{otherwise,}
\end{cases}
\]
where $h^{(\ell)}_{i,j}$ denotes the weight of the edge $i,j$ in
$H^{(\ell)}.$ Then, $\LL^{(1)} \approx_{\eps} \LL^{(2)}.$
\end{lemma}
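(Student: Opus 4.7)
The plan is to factor each $\LL^{(\ell)}$ as the conjugation of the scalar graph Laplacian $\LL_{H^{(\ell)}}$ by a fixed block-diagonal matrix (independent of $\ell$), plus a PSD block-diagonal correction whose weights depend linearly on the weighted degrees in $H^{(\ell)}$. Define $\SS \in (\complex^{r\times r})^{n\times n}$ to be the block-diagonal matrix with $\blk{\SS}{i,i} = \blk{\dd}{i}/\norm{\blk{\dd}{i}}$ when $\blk{\dd}{i}\neq 0$ and $\blk{\SS}{i,i}=0$ otherwise, so that $\blk{\SS}{i,i}\blk{\SS}{i,i}^{\dg} \pleq \id_r$ for every $i$. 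A direct computation of the $(i,j)$-block of $\SS(\LL_{H^{(\ell)}} \otimes \id_r)\SS^{\dg}$ yields $(\LL_{H^{(\ell)}})_{i,j}\, \blk{\SS}{i,i}\blk{\SS}{j,j}^{\dg}$, which for $i\neq j$ reproduces the off-diagonal block of $\LL^{(\ell)}$ exactly, and for $i=j$ equals $d_i^{(\ell)}\, \blk{\SS}{i,i}\blk{\SS}{i,i}^{\dg}$, where $d_i^{(\ell)} = \sum_{k\neq i} h^{(\ell)}_{i,k}$. Thus I can write $\LL^{(\ell)} = \SS(\LL_{H^{(\ell)}} \otimes \id_r)\SS^{\dg} + \DD^{(\ell)}$, where $\DD^{(\ell)}$ is the block-diagonal matrix with $\blk{\DD^{(\ell)}}{i,i} = d_i^{(\ell)}\bigl(\id_r - \blk{\SS}{i,i}\blk{\SS}{i,i}^{\dg}\bigr) \pgeq 0$.

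With this decomposition in hand, the approximation follows by handling the two summands separately. For the first, the hypothesis $H^{(1)} \approx_{\eps} H^{(2)}$ is exactly $\LL_{H^{(1)}} \approx_{\eps} \LL_{H^{(2)}}$ as scalar Laplacians; tensoring with $\id_r$ preserves the inequality, and then applying Fact~\ref{fact:orderCAC} with $\CC=\SS$ gives $\SS(\LL_{H^{(1)}}\otimes\id_r)\SS^{\dg} \approx_{\eps} \SS(\LL_{H^{(2)}}\otimes\id_r)\SS^{\dg}$. For the second, evaluating the scalar Laplacian inequality on the standard basis vector $\ee_i$ yields $e^{-\eps} d_i^{(2)} \leq d_i^{(1)} \leq e^{\eps} d_i^{(2)}$; multiplying the (fixed) PSD block $\id_r - \blk{\SS}{i,i}\blk{\SS}{i,i}^{\dg}$ by these scalar bounds preserves the inequalities block by block, so $\DD^{(1)} \approx_{\eps} \DD^{(2)}$.

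Adding the two PSD comparisons gives $\LL^{(1)} \pleq e^{\eps} \LL^{(2)}$ and, by symmetry, the reverse direction, so $\LL^{(1)} \approx_{\eps} \LL^{(2)}$ as claimed. The only conceptual step is spotting the decomposition: the key observation is that when the normalized ``endpoints'' $\blk{\SS}{i,i}$ fail to be unitary, the missing mass on each diagonal block is exactly the weighted degree of $i$ times the PSD matrix $\id_r - \blk{\SS}{i,i}\blk{\SS}{i,i}^{\dg}$, so the correction $\DD^{(\ell)}$ is itself a function of vertex degrees of $H^{(\ell)}$. Since scalar Laplacian approximation preserves degrees, this correction is automatically $\eps$-approximated and no additional argument is needed. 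Edge cases where some $\blk{\dd}{i}=0$ are handled by the same formula with $\blk{\SS}{i,i}=0$, in which case $\blk{\DD^{(\ell)}}{i,i} = d_i^{(\ell)}\,\id_r$ absorbs the full diagonal contribution and the argument goes through unchanged.
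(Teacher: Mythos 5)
Your proof is correct, and it takes a genuinely different route from the paper's. The paper invokes Fact~\ref{fact:sum-of-unitary} to write each $\blk{\dd}{i}$ as $\tfrac{1}{2}\norm{\blk{\dd}{i}}(\QQ_i^{(1)}+\QQ_i^{(2)})$ with $\QQ_i^{(k)}$ unitary, assembles these into an $n\times 2n$ block matrix $\FF$, and establishes the exact identity $\LL^{(\ell)}=\tfrac{1}{4}\FF\bigl(\LL_{\twoCover{(H^{(\ell)})}}\otimes\id_r\bigr)\FF^{\dg}$, where $\twoCover{H}$ is the $2$-lift of $H$; the exactness of the diagonal blocks there comes precisely from the unitarity of the $\QQ$'s. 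It then needs a separate lemma (Lemma~\ref{lem:lift-sparsifier}) asserting that $2$-lifts preserve $\approx_{\eps}$, whose proof decomposes $L_{\twoCover{G}}$ into a conjugated copy of $L_G$ plus a degree term and uses $D_{H^{(1)}}\approx_{\eps}D_{H^{(2)}}$. Your decomposition $\LL^{(\ell)}=\SS(\LL_{H^{(\ell)}}\otimes\id_r)\SS^{\dg}+\DD^{(\ell)}$ conjugates by the contractive (generally non-unitary) normalization $\SS$ directly and pushes the resulting diagonal deficit into the PSD degree-weighted correction $\DD^{(\ell)}$, then applies the same degree-preservation observation that the paper buries inside Lemma~\ref{lem:lift-sparsifier}. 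The two arguments thus rest on the same two primitives (Fact~\ref{fact:orderCAC} and the fact that $e_i^{\top}L_He_i$ recovers weighted degrees), but yours avoids the unitary splitting, the matrix $\FF$, and the $2$-lift machinery entirely, so it is shorter and self-contained; the paper's version has the mild advantage of reusing Fact~\ref{fact:sum-of-unitary}, which it needs elsewhere anyway (e.g.\ Lemma~\ref{lem:bsdd-factorization}). One small point worth stating explicitly in your write-up: the positivity $\id_r-\blk{\SS}{i,i}\blk{\SS}{i,i}^{\dg}\pgeq 0$ holds because $\norm{\blk{\SS}{i,i}}=1$, and it is exactly this positivity that lets you multiply the block by the scalar degree bounds without reversing the Loewner inequalities.
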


We now introduce scalar versions of product block-Laplacian matrices
that will be useful.
\begin{definition}
\label{def:demand}
The product demand graph of a vector $\dd \in (\Re_{>0})^{n}$,
$G(\dd)$, is a complete weighted graph on $n$ vertices
whose weight between vertices $i$ and $j$ is given by
$\ww_{ij} = \dd_{i} \dd_{j}.$

The Laplacian of $G(\dd),$ denoted $\LL_{G(\dd)}$ is called the
product demand Laplacian of $\dd.$
\end{definition}
\begin{definition}
\label{def:demand2}
The bipartite product demand graph of two vectors
$\dd^{A} \in (\Re_{>0})^{|A|},$ $\dd^{B} \in (\Re_{>0})^{|B|},$
$G(\dd^{A}, \dd^{B})$, is a weighted bipartite graph on vertices
$A \cup B,$ whose weight between vertices $i \in A$ and $j \in B$ is
given by
$\ww_{ij} = \dd^{A}_{i} \dd^{B}_{j}.$

The Laplacian of $G(\dd^{A}, \dd^{B}),$ denoted
$\LL_{G(\dd^{A}, \dd^{B})}$ is called the bipartite product demand
Laplacian of $(\dd^{A},\dd^{B}).$
\end{definition}

In Section~\ref{sec:weightedExp}, we give results on efficiently
constructing approximations to product demand Laplacians
that can be summarized as follows:
\begin{restatable}[]{lemma}{weightedExp}
\label{lem:weightedExpander}
There is a routine $\textsc{WeightedExpander}(\dd, \epsilon)$ such
that for any $\epsilon > 0,$ and a demand vector
$\dd \in (\Re_{>0})^{n},$
$\textsc{WeightedExpander}(\dd, \epsilon)$ returns in
$O(n \epsilon^{-4})$ work and $O(\log{n})$ depth a graph $H$ with
$O(n \epsilon^{-4})$ edges such that
\[
\LL_{H} \approx_{\epsilon} \LL_{G(\dd)}.
\]
\end{restatable}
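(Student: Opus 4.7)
The plan is to exploit the multiplicative structure $w_{ij} = d_i d_j$ to reduce sparsifying $G(\dd)$ to combining explicit unweighted expanders, each of which handles a collection of edges whose weights are uniform up to a constant factor. A general-purpose spectral sparsifier (e.g. via effective-resistance sampling) would not directly give the claimed $O(\log n)$ depth, so the construction must be explicit and exploit the special structure of $\dd$.

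First I would normalize so that $\max_i d_i = 1$, and argue that any vertex with $d_i \leq \epsilon/n^{c}$ (for a suitable constant $c$) can be discarded: the total weight of the edges incident to such vertices is a negligible multiple of the trace of $\LL_{G(\dd)}$ and does not affect any directions in the range of $\LL_{G(\dd)}$ up to an $\epsilon$-factor. This leaves $O(\log n)$ geometric \emph{buckets}
\[
V_k = \{ i : d_i \in (2^{-k-1}, 2^{-k}] \}.
\]
Within any bucket $V_k$ all edge weights of $G(\dd)$ are uniform up to a factor of $4$, and between any two buckets $V_a, V_b$ they are uniform up to a factor of $16$.

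For each non-empty bucket $V_k$ I would build an explicit $O(1/\epsilon^2)$-regular expander on $V_k$ whose normalized second eigenvalue is at most $\epsilon$ (either a random regular graph via a Chernoff-style argument, or an explicit Ramanujan construction), and scale all its edges by a single common weight so that the resulting Laplacian $\epsilon$-approximates the induced weighted clique on $V_k$. For each pair of buckets $(V_a, V_b)$ with $a \leq b$ I would use the bipartite analog. Each expander is constructible in linear work and $O(\log n)$ depth, and the union $H$ of all these expanders $\epsilon$-approximates $\LL_{G(\dd)}$ because $\approx_{\epsilon}$ approximations add under disjoint edge supports, using Fact~\ref{frac:orderComposition} piecewise.

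The hard part will be controlling the total edge count at $O(n/\epsilon^4)$. A naive bucket-by-bucket union has $O(\log^2 n)$ bipartite pieces, giving $O(n \log^2 n / \epsilon^2)$ edges, which misses the target by a polylog factor. To remove this, I would use a two-level construction: first group the $O(\log n)$ buckets into a constant number of super-buckets of contiguous scales; within each super-bucket, construct a single $O(1/\epsilon^2)$-degree expander approximation (to $\epsilon/2$ accuracy), so that each vertex incurs $O(1/\epsilon^2)$ intra-super-bucket edges. Then approximate the super-bucket-to-super-bucket interactions with a second round of $O(1/\epsilon^2)$-degree bipartite expanders, again scaled to the average product weight. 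Each level loses a factor $\epsilon^{-2}$ in per-vertex degree, and by composing the two levels' errors through Fact~\ref{frac:orderComposition} the overall approximation stays $\epsilon$-close, while the per-vertex degree becomes $O(1/\epsilon^4)$. Summing gives the claimed $O(n/\epsilon^4)$ edges, and since each expander is built in linear work and $O(\log n)$ depth, the total work is $O(n/\epsilon^4)$ and the depth is $O(\log n)$.
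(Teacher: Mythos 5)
Your approach---geometric bucketing by demand plus explicit expanders within and between buckets---is a natural first thing to try, but it has several genuine gaps that the paper avoids by a fundamentally different reduction (vertex \emph{splitting} rather than bucketing).

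The most basic problem appears already at the single-bucket level. Within a bucket $V_k = \{i : d_i \in (2^{-k-1}, 2^{-k}]\}$, the edge weights $d_i d_j$ vary by a factor of $4$. A uniformly weighted expander can $\epsilon$-approximate an \emph{unweighted} clique, but replacing the weighted clique on $V_k$ by that expander (scaled by a single common weight) only gives a constant-factor spectral approximation, not a $(1+\epsilon)$-approximation. To get the per-bucket error down to $\epsilon$ you would need $(1+\epsilon)$-fine buckets, which inflates the bucket count to $O(\epsilon^{-1}\log n)$ and the number of bipartite pieces to $O(\epsilon^{-2}\log^2 n)$, making the edge count even worse than your intermediate $O(n\log^2 n/\epsilon^2)$ estimate. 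The paper sidesteps this entirely: by \emph{splitting} each vertex $i$ into $\lceil d_i/t\rceil$ copies (with $t = \sum_k d_k / \hat n$ chosen so $\hat n = \Theta(n/\epsilon^2)$), it arranges that almost every vertex of the split graph has \emph{identical} demand $t$, so a single unweighted expander on that set is an exact match in scale, and only a small leftover set $L$ of size at most $n$ (compared to $|H| = \hat n \geq 2n/\epsilon^2$) needs separate treatment. The correctness of the collapse is Proposition~\ref{pro:collapseLoewner}, which is a simple consequence of Fact~\ref{fact:orderCAC}.

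Two secondary steps in your proposal also fail as stated. First, discarding vertices with $d_i \leq \epsilon/n^c$ does not preserve a multiplicative ($\approx_\epsilon$) guarantee: for a test vector supported on such a low-demand vertex (orthogonal to the all-ones vector), the quadratic form of $\LL_{G(\dd)}$ is positive but the quadratic form of the pruned graph is zero, so the multiplicative error is unbounded. Trace-based negligibility does not translate to spectral negligibility here. The paper never discards vertices; the low-demand vertices $L$ are kept and connected to $H$ by stars (Lemma~\ref{lem:replaceLH}), and only the $L$--$L$ edges are dropped, which is justified by the Poincar\'e inequality routing through $H$ (Lemma~\ref{lemma:light_vertex_not_important}), exploiting $|L| \ll |H|$. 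Second, your ``two-level'' fix groups $\Theta(\log n)$ buckets into a constant number of super-buckets; but then demands inside a super-bucket vary by a factor of $2^{\Theta(\log n)} = n^{\Theta(1)}$, and no single $O(1/\epsilon^2)$-degree expander can $\epsilon$-approximate a weighted clique whose weights vary polynomially. So the two-level construction does not actually bring the per-vertex degree down to $O(1/\epsilon^4)$ with $\epsilon$ accuracy. The splitting-into-uniform-copies trick is exactly what is missing from your argument.
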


\begin{restatable}[]{lemma}{weightedBipartiteExp}
\label{lem:weightedBipartiteExpander}
There is a routine
$\textsc{WeightedBipartiteExpander}(\dd^{A}, \dd^{B} , \epsilon)$ such
that for any demand vectors $\dd^{A}$ and $\dd^{B}$ of total length
$n,$ and a parameter $\epsilon$, it returns in
$O(n \epsilon^{-4})$ work and $O(\log{n})$ depth a bipartite graph $H$
between $A$ and $B$ with $O(n \epsilon^{-4})$ edges such that
\[
\LL_{H} \approx_{\epsilon} \LL_{G(\dd^{A}, \dd^{B})}.
\]
\end{restatable}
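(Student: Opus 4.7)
The plan is to follow the same bucketing-plus-expander approach that underlies the non-bipartite companion Lemma~\ref{lem:weightedExpander}: partition the two demand vectors into dyadic buckets of nearly-uniform magnitude, observe that each pair of buckets contributes a bipartite subgraph of $G(\dd^A,\dd^B)$ that is close to a uniformly-weighted complete bipartite graph, and sparsify each such piece with an explicit bipartite spectral expander.

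First, I would rescale so that $\max_i \dd^A_i = \max_j \dd^B_j = 1$, and truncate any entries smaller than $\epsilon/n$. The total weight contributed by truncated rows or columns is at most an $O(\epsilon)$ fraction of the whole Laplacian, which can be absorbed into the final approximation via Fact~\ref{frac:orderComposition}. The surviving entries are grouped into dyadic buckets $A_p = \{ i : 2^{-p-1} < \dd^A_i \le 2^{-p} \}$ and analogously $B_q$, yielding $O(\log(n/\epsilon))$ nonempty buckets on each side.

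For each nonempty pair $(A_p, B_q)$, all edge weights $\dd^A_i \dd^B_j$ with $i \in A_p$, $j \in B_q$ lie within a factor of $4$ of $2^{-p-q}$, so the restriction of $G(\dd^A,\dd^B)$ to $A_p \cup B_q$ is spectrally within a constant factor of $2^{-p-q} K_{A_p, B_q}$. Constructing an explicit bipartite expander on $A_p \cup B_q$ with $O((|A_p|+|B_q|)/\epsilon^2)$ edges gives an $\epsilon$-spectral approximation of this uniform complete bipartite graph; reweighting its edges by the corresponding $\dd^A_i \dd^B_j$ then yields an $O(\epsilon)$-approximation of the true restricted subgraph. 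Summing these bipartite expanders over all bucket pairs produces $H$, and composing the three sources of error (truncation, within-bucket rounding, and per-pair expander sparsification) via Fact~\ref{frac:orderComposition} yields $\LL_H \approx_{\epsilon} \LL_{G(\dd^A,\dd^B)}$.

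The main obstacle I expect is the edge count: naively summing $O((|A_p|+|B_q|)/\epsilon^2)$ over all $k_A k_B$ bucket pairs gives $O(n(k_A+k_B)/\epsilon^2) = O(n\log(n/\epsilon)/\epsilon^2)$ edges, which overshoots the target $O(n\epsilon^{-4})$ by a logarithmic factor unless $\log(n/\epsilon) = O(\epsilon^{-2})$. To close the gap, I would either (i) trade one factor of $\epsilon^{-2}$ for $\log(n/\epsilon)$ by sparsifying each bucket-pair to precision $\epsilon/\sqrt{\log(n/\epsilon)}$, so that Fact~\ref{frac:orderComposition} composed over pairs still yields total error $O(\epsilon)$, or (ii) route cross-bucket mass from low-magnitude buckets through a small collection of aggregate ``portal'' vertices, restricting each vertex to $O(1/\epsilon)$ active bucket pairs. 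Either variant stays within $O(n\epsilon^{-4})$ edges, and since each bipartite expander can be constructed in work linear in its size and depth $O(\log n)$, the overall work is $O(n\epsilon^{-4})$ and depth is $O(\log n)$.
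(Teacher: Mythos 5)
Your high-level instinct (reduce to uniform demands, then plug in an explicit bipartite expander) matches the paper's, but the dyadic-bucketing reduction has a genuine gap at its central step. Within a bucket pair $(A_p,B_q)$ the products $d^A_i d^B_j$ only agree up to a factor of $4$, so replacing the restricted subgraph by $2^{-p-q}K_{A_p,B_q}$ is only a $\ln 4$-approximation, not an $O(\epsilon)$-one; and the repair you propose --- keeping the expander's edge set but reweighting each surviving edge by the true product $d^A_i d^B_j$ --- is not a congruence transformation of the Laplacian (the diagonal entry of a product-weighted Laplacian at $i$ is $d_i\sum_j d_j$, not $d_i^2$ times a degree), so it does not transport the guarantee $\LL_H\approx_\epsilon \LL_{K}$ from the uniform graphs to the reweighted ones. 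This is exactly the difficulty the paper's construction is built to avoid: it \emph{splits} each vertex $i$ into $\lceil d_i/t\rceil$ copies so that all but at most $n$ of the $\hat n = \Theta(n/\epsilon^2)$ resulting vertices have literally identical demand $t$; a single unweighted bipartite Ramanujan graph then $\epsilon$-approximates the complete bipartite graph on the high-demand part, the low--low edges are discarded at cost $O(\epsilon^2)$ via a Poincar\'e/path-embedding inequality (Lemma~\ref{lemma:light_vertex_not_important2}), the low--high edges are replaced by weighted stars (Lemma~\ref{lem:light_can_be_merge_2}), and the split vertices are collapsed back (Proposition~\ref{pro:collapseLoewner}).

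Two further points. First, truncating demands below $\epsilon/n$ is not harmless for a \emph{multiplicative} guarantee: for a truncated vertex $v$ the quadratic form $e_v^{\dg}\LL e_v$ is positive before truncation and zero after, so no $\approx_\epsilon$ statement can survive; the paper instead keeps low-demand vertices and routes their demand through high-demand ones. Second, on the edge count: errors do not accumulate across bucket pairs (the pieces are summed, not composed), so your fix (i) of refining the per-pair precision to $\epsilon/\sqrt{\log(n/\epsilon)}$ is both unnecessary for accuracy and counterproductive for size --- finer precision means \emph{more} edges per pair. The extra $\log(n/\epsilon)$ factor is real whenever $\epsilon \gg 1/\sqrt{\log n}$ (and the lemma is invoked in the paper with near-constant $\epsilon$), and eliminating it essentially forces something like your fix (ii), which is the star/routing construction the paper carries out in detail.
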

Finally, we need to define an operation on graphs:
Given a graph $G,$ define $G^{(K2)}$ to be the graph obtained by
duplicating each vertex in $G,$ and for each edge $(i,j)$ in $G,$ add a $2
\times 2$ bipartite clique between the two copies of $i$ and $j.$

\begin{figure}[ht]

\begin{algbox}
${\LL} = \prodClique\left(\dd, \epsilon \right)$
\begin{enumerate}
\item Initialize $\LL \in (\complex^{r \times r})^{n \times n}$ to 0.
\item Construct $\ww$ where $\ww_i = \norm{\blk{\dd}{i}}.$
\item $H \leftarrow \textsc{WeightedExpander}(\ww,\epsilon).$
\item For each edge $(i,j) \in H,$ with weight $h_{i,j},$
\begin{enumerate}
\item $\blk{\LL}{i,i} \leftarrow \blk{\LL}{i,i} + h_{i,j}\id_r.$ 
\item $\blk{\LL}{j,j} \leftarrow \blk{\LL}{j,j} + h_{i,j}\id_r.$
\item $\blk{\LL}{i,j} \leftarrow \blk{\LL}{i,j} -
  \frac{h_{i,j}}{\ww_{i}\ww_{j}} \blk{\dd}{i}\blk{\dd}{j}^{\dg}.$ 
\item $\blk{\LL}{j,i} \leftarrow \blk{\LL}{j,i} -
  \frac{h_{i,j}}{\ww_{i}\ww_{j}} \blk{\dd}{j}\blk{\dd}{i}^{\dg}.$ 
\end{enumerate}
\item Return $\LL$.
\end{enumerate}
\end{algbox}

\caption{Pseudocode for Sparsifying Product Demand Block-Laplacians}

\label{fig:prodClique}

\end{figure}

\begin{figure}[ht]

\begin{algbox}
${\LL} = \bipClique\left(\bb, F, \epsilon \right)$
\begin{enumerate}
\item Initialize $\LL \in (\complex^{r \times r})^{n \times n}$ to 0.
\item Construct $\ww$ where $\ww_i = \norm{\blk{\dd}{i}}.$ Let $C = V
  \setminus F.$
\item $H \leftarrow \textsc{WeightedBipartiteExpander}(\ww |_{F}, \ww|_{C},\epsilon).$
\item For each edge $(i,j) \in H,$ with weight $h_{i,j},$
\begin{enumerate}
\item $\blk{\LL}{i,i} \leftarrow \blk{\LL}{i,i} + h_{i,j}\id_r.$ 
\item $\blk{\LL}{j,j} \leftarrow \blk{\LL}{j,j} + h_{i,j}\id_r.$
\item $\blk{\LL}{i,j} \leftarrow \blk{\LL}{i,j} -
  \frac{h_{i,j}}{\ww_{i}\ww_{j}}\blk{\dd}{i}\blk{\dd}{j}^{\dg}.$ 
\item $\blk{\LL}{j,i} \leftarrow \blk{\LL}{j,i} -
  \frac{h_{i,j}}{\ww_{i}\ww_{j}} \blk{\dd}{j}\blk{\dd}{i}^{\dg}.$ 
\end{enumerate}
\item Return $\LL$.
\end{enumerate}
\end{algbox}

\caption{Pseudocode for Sparsifying Bipartite Product Demand Block-Laplacians}

\label{fig:prodBipClique}

\end{figure}

We now combine the above construction of sparsifiers fo product demand
graphs with Lemma~\ref{lem:product-block-graphs-sparsification} to
efficiently construct sparse approximations to product demand
block-Laplacians.
\begin{proof}\emph{(of Lemma~\ref{lem:product-clique-sparsify})}
The procedure $\prodClique(\dd,\epsilon)$ returns the matrix $\LL$
where 
\[
\blk{\LL}{i,j} =
\begin{cases}
-\frac{h_{i,j}}{\norm{\blk{\dd}{i}} \norm{\blk{\dd}{j}}}\blk{\dd}{i} \blk{\dd}{j}^{\dg} & \quad i\neq j,\\
\id_r \cdot \sum_{k: k \neq i} h_{i,k} & \quad \textrm{otherwise.}
\end{cases}
\]
By construction $\LL_{H} \approx_{\eps} \LL_{G(\ww)}.$ Thus, applying
Lemma~\ref{lem:product-block-graphs-sparsification}, with $H^{(1)} =
H$ and $H^{(2)} = G(\ww),$ we know that $\LL
\approx_{\eps} \LL^{(2)},$ where $\LL^{(2)}$ is given by 
\[
\blk{\LL^{(2)}}{i,j} =
\begin{cases}
-\frac{\ww_i \ww_j}{\norm{\blk{\dd}{i}} \norm{\blk{\dd}{j}}}\blk{\dd}{i} \blk{\dd}{j}^{\dg} & \quad i\neq j,\\
\id_r \cdot \ww_i \sum_{k: k \neq i} \ww_k & \quad \textrm{otherwise.}
\end{cases}
\]
Since $\ww_i = \norm{\blk{\dd}{i}},$ we have $\LL^{(2)} = \LL_{G(\dd)},$
proving our claim. 
\end{proof}

\begin{proof}\emph{(of Lemma~\ref{lem:bipartite-clique-sparsify})}
The procedure $\bipClique(\dd,F,\epsilon)$ returns the matrix $\LL$
where 
\[
\blk{\LL}{i,j} =
\begin{cases}
-\frac{h_{i,j}}{\norm{\blk{\dd}{i}}\norm{\blk{\dd}{j}}}\blk{\dd}{i} \blk{\dd}{j}^{\dg} & \quad i\neq j,\\
\id_r \cdot \sum_{k: k \neq i} h_{i,k} & \quad \textrm{otherwise.}
\end{cases}
\]
Since $H$ returned by $\textsc{WeightedBipartiteExpander}$ is
guaranteed to be bipartite using
Lemma~\ref{lem:weightedBipartiteExpander}, we obtain that
$\blk{(\LL_{H})}{F,F},$ and $\blk{(\LL_{H})}{C,C}$ are block-diagonal.

By construction $\LL_{H} \approx_{\eps} \LL_{G(\ww|_{F}, \ww|_{C})}.$ Thus, applying
Lemma~\ref{lem:product-block-graphs-sparsification}, with $H^{(1)} =
H$ and $H^{(2)} = G(\ww|_{F}, \ww|_{C}),$ we know that $\LL
\approx_{\eps} \LL^{(2)},$ where $\LL^{(2)}$ is given by 
\[
\blk{\LL^{(2)}}{i,j} =
\begin{cases}
\id_r \cdot \ww_i \sum_{k \in C} \ww_{k} & \quad i = j,
\text{ and } i \in F,\\
\id_r \cdot \ww_{i} \sum_{k \in F} \ww_{k} & \quad i = j,
\text{ and } i \in C,\\
0 & \quad i\neq j, \text{ and } ( i , j \in F  \text{ or }   i, j \in C),\\
-
\frac{\ww_{i}\ww_{j}}{\norm{\blk{\dd}{i}}\norm{\blk{\dd}{j}}}\blk{\dd}{i}
\blk{\dd}{j}^{\dg} & \quad i\neq j, 
\text{otherwise.}
\end{cases}
\]
Since $\ww_i = \norm{\blk{\dd}{i}},$ we have $\LL^{(2)} = \LL_{G(\dd|_F,\dd|_C)},$
proving our claim. 
\end{proof}

Finally, we give a proof of
Lemma~\ref{lem:product-block-graphs-sparsification}.
\begin{proof}\emph{(of Lemma~\ref{lem:product-block-graphs-sparsification})}
Using Fact~\ref{fact:sum-of-unitary}, we can write each $\blk{\dd}{i}$ as
$\frac{1}{2}\norm{\blk{\dd}{i}}(\QQ_{i}^{(1)} + \QQ_{i}^{(2)}),$ where
$\QQ_{i}^{(1)} (\QQ_{i}^{(1)})^{\dg} = \QQ_{i}^{(2)}
(\QQ_{i}^{(2)})^{\dg} = \id_{r}.$ Construct the matrix $\FF \in
(\complex^{r \times r})^{2n \times 2n}$ as follows:
\begin{align*}
  \FF & =
      \left[
\begin{array}{ccccc}
\QQ_{1}^{(1)} & \QQ_{1}^{(2)} &  0 & 0 & \ldots \\
0 & 0 & \QQ_{2}^{(1)} & \QQ_{2}^{(2)} & \ldots \\
\vdots & \vdots & & & \ddots 
\end{array}
\right].
\end{align*}
Now, observe that for each $\ell = 1,2,$
\begin{align*}
\blk{\LL^{(\ell)}}{i,i} 
=
 \frac{1}{2}\left(\sum_{k: k
  \neq i} h^{(\ell)}_{i,k}\right) \cdot \left( \QQ_{i}^{(1)} (\QQ_{i}^{(1)})^{\dg} +
  \QQ_{i}^{(2)} (\QQ_{i}^{(2)})^{\dg} \right) 
= 
 \frac{1}{4} \blk{\left(\FF(\LL_{\twoCover{(H^{(\ell)})}} \otimes \id_{r}) \FF^{\dg} \right)}{i,i},
\end{align*}
where $\LL_{\twoCover{(H^{(\ell)})}}$ is the Laplacian of the graph $\twoCover{(H^{(\ell)})}$
defined above, and $\otimes$ tensor product. Also, for $i \neq j,$
\begin{align*}
\blk{\LL^{(\ell)}}{i,j} 
= -\frac{h^{(\ell)}_{i,j}}{\ww_{i}\ww_{j}}\blk{\dd}{i} \blk{\dd}{j}^{\dg} 
=
 -\frac{h^{(\ell)}_{i,j}}{4} \cdot 
\left( \QQ_{i}^{(1)} +
    \QQ_{i}^{(2)}\right)\left( \QQ_{j}^{(1)} +    \QQ_{j}^{(2)}\right)^{\dg}
= 
 \frac{1}{4} \blk{\left(\FF(\LL_{\twoCover{(H^{(\ell)})}} \otimes \id_{r}) \FF^{\dg} \right)}{i,j},
\end{align*}
Thus, 
\[\LL^{(\ell)} = \frac{1}{4}\FF(\LL_{\twoCover{(H^{(\ell)})}} \otimes \id_{r}) \FF^{\dg}.\]

By assumption, we have $\LL_{H^{(1)}} \approx_{\epsilon} \LL_{H^{(2)}}.$
Using Lemma~\ref{lem:lift-sparsifier}, we get that
$\LL_{\twoCover{H^{(1)}}} \approx_{\epsilon} L_{\twoCover{H^{(2)}}}.$
This implies
\[\LL_{\twoCover{(H^{(1)})}} \otimes \id_{r} \approx_{\epsilon}
\LL_{\twoCover{(H^{(2)})}} \otimes \id_{r},\]
and thus, $\LL^{(1)} \approx_{\eps} \LL^{(2)}.$
\end{proof}
\subsection{Constructing sparsifiers for lifts of graphs}
Given a graph $G,$ define $\twoCover{G}$ to be the graph obtained by
duplicating each vertex in $G,$ and for each edge $(i,j)$ in $G,$ add a $2
\times 2$ bipartite clique between the two copies of $i$ and $j.$
\begin{lemma}
\label{lem:lift-sparsifier}
If $H$ is a sparsifier for $G,$ i.e., $L_{G} \approx_{\eps} L_{H},$
then $\twoCover{H}$ is a sparsifier for $\twoCover{G},$ i.e. $L_{\twoCover{G}} \approx_{\eps} L_{\twoCover{H}}.$
\end{lemma}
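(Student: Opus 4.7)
The plan is to decompose $L_{\twoCover{G}}$ additively in a way that depends on $L_G$ through a linear, monotone map, and then transfer the spectral approximation through that map.

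The setup I would use introduces two linear maps $\phi, \psi : \mathbb{R}^{2n} \to \mathbb{R}^{n}$. If $i_1, i_2$ denote the two copies of vertex $i$ in $\twoCover{G}$, set $\phi(x)_i = x_{i_1} + x_{i_2}$ and $\psi(x)_i = x_{i_1} - x_{i_2}$. A one-line algebraic identity
\[
\sum_{a,b \in \{1,2\}} (x_{i_a} - x_{j_b})^2 = (\phi(x)_i - \phi(x)_j)^2 + \psi(x)_i^2 + \psi(x)_j^2
\]
then gives, after summing over edges of $G$ weighted by $w_{ij}$,
\[
L_{\twoCover{G}} \;=\; \phi^T L_G \phi \;+\; \psi^T D_G \psi,
\]
where $D_G = \mathrm{diag}(L_G)$ is the weighted degree matrix of $G$. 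The same decomposition holds verbatim for $H$.

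Next I would observe that the map $L \mapsto \phi^T L \phi + \psi^T \mathrm{diag}(L) \psi$, sending PSD matrices on $n$ vertices to PSD matrices on $2n$ vertices, is linear and monotone with respect to the Loewner order. For the first summand this is immediate from Fact~\ref{fact:orderCAC}. For the second summand, extracting the diagonal preserves $\pleq$ (this is checked by testing $e^{-\epsilon} L_H \pleq L_G \pleq e^{\epsilon} L_H$ against the standard basis vectors $e_i$, giving $e^{-\epsilon} \deg_H(i) \le \deg_G(i) \le e^{\epsilon} \deg_H(i)$), and then Fact~\ref{fact:orderCAC} applies once more with $C = \psi^T$. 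Since the hypothesis $L_G \approx_{\epsilon} L_H$ unfolds to $e^{-\epsilon} L_H \pleq L_G \pleq e^{\epsilon} L_H$, applying this linear monotone map and summing the two contributions (which carry the same scalar factors $e^{\pm \epsilon}$) yields $L_{\twoCover{G}} \approx_{\epsilon} L_{\twoCover{H}}$.

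The only non-mechanical step is spotting the additive decomposition; once $L_{\twoCover{G}}$ is split into the $\phi^T L_G \phi$ piece (controlled by $L_G$ itself) and the $\psi^T D_G \psi$ piece (controlled by the diagonal of $L_G$), the rest is routine application of facts already gathered in Section~2. I expect no real obstacle beyond verifying the quadratic identity and keeping the vertex-copy bookkeeping straight; in particular, the argument needs no probabilistic ingredient and no structural property of $G$ or $H$ beyond the definitions.
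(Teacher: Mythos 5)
Your proof is correct and is essentially the paper's argument in different notation: the decomposition $L_{\twoCover{G}} = \phi^T L_G \phi + \psi^T D_G \psi$ is exactly the paper's $L_G \otimes \left[\begin{smallmatrix}1&1\\1&1\end{smallmatrix}\right] + D_G \otimes \left[\begin{smallmatrix}1&-1\\-1&1\end{smallmatrix}\right]$, with $\phi$ and $\psi$ being the factorizations of those rank-one $2\times 2$ blocks. Both proofs then transfer the approximation through the first term via Fact~\ref{fact:orderCAC} and through the second by observing $D_G \approx_\epsilon D_H$ from testing against standard basis vectors, so there is no substantive difference.
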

\begin{proof}
Since $L_{G} \approx_{\eps} L_{H},$ we have $e_i^{\top} L_{G} e_i
\approx_{\eps} e_i^{\top} L_{H} e_i.$ Thus, if $D_{G}$ denotes the
diagonal matrix of degrees of $G,$ we have $D_{G} \approx_{\eps} D_{H}.$
The Laplacian for $\twoCover{G}$ is 
\[L_{\twoCover{G}} = L_{G} \otimes 
\left[
\begin{array}{cc}
1 & 1 \\
1 & 1 
\end{array}
\right] 
+ D_{G} \otimes 
\left[
\begin{array}{cc}
1 & -1 \\
-1 & 1 
\end{array}
\right].
\]
Similarly, 
\[L_{\twoCover{H}} = L_{H} \otimes 
\left[
\begin{array}{cc}
1 & 1 \\
1 & 1 
\end{array}
\right] 
+ D_{H} \otimes 
\left[
\begin{array}{cc}
1 & -1 \\
-1 & 1 
\end{array}
\right].
\]
Observe that we can write 
\[
L_{G} \otimes 
\left[
\begin{array}{cc}
1 & 1 \\
1 & 1 
\end{array}
\right] = 
\left[
\begin{array}{ccccc}
1 & 1 & 0 & 0 & \ldots \\
0 & 0 & 1 & 1 & \ldots \\
\vdots & \vdots & & \ddots 
\end{array}
\right]
L_{G}
\left[
\begin{array}{ccccc}
1 & 1 & 0 & 0 & \ldots \\
0 & 0 & 1 & 1 & \ldots \\
\vdots & \vdots & & \ddots 
\end{array}
\right]^{\top}.
\]
Since $L_{G} \approx_{\eps} L_{H},$ this implies 
\[L_{G} \otimes 
\left[
\begin{array}{cc}
1 & 1 \\
1 & 1 
\end{array}
\right] \approx_{\eps}
L_{H} \otimes 
\left[
\begin{array}{cc}
1 & 1 \\
1 & 1 
\end{array}
\right].
\]
Similarly, we get,
\[D_{G} \otimes 
\left[
\begin{array}{cc}
1 & -1 \\
-1 & 1 
\end{array}
\right] \approx_{\eps}
D_{H} \otimes 
\left[
\begin{array}{cc}
1 & -1 \\
-1 & 1 
\end{array}
\right].
\]
Adding the above two, we get  $L_{\twoCover{G}} \approx_{\eps} L_{\twoCover{H}}.$
\end{proof}

%%% Local Variables:
%%% mode: latex
%%% TeX-master: "bsdd-solver"
%%% End:

%%% Local Variables:
%%% mode: latex
%%% TeX-master: "rasmus_sandbox_bsdd-solver"
%%% End:

\section{Estimating Leverage Scores by Undersampling}\label{sec:undersampling}

We will control the densities of all intermediate {\bdd} matrices using
the uniform sampling technique introduced by~\cite{cohen2014uniform}.
It relies on sampling columns
\footnote{The randomized numerical linear algebra literature, e.g. ~\cite{cohen2014uniform},
 typically samples rows instead of columns of matrices.
 We sample columns instead in order to use a more natural set of notations.}
 of matrices by upper bounds of their true leverage scores,
\[
\ttau_{i}\left( \AA \right)  = \aa_i^T \left( \AA \AA^{\dg} \right)^{-1} \aa_i.
\]
These upper bounds are measured w.r.t. a different matrix, giving
generalized leverage scores of the form:
\[
\tau^{\BB}_i(\AA) =
\begin{cases}
\aa_i^{\dg} (\BB \BB^{\dg})^{-1} \aa_i & \text{if } \aa_i\perp\ker(\BB),\\
1 & \text{otherwise}.
\end{cases}
\]
We introduced \emph{unitary edge-vertex transfer matrices} in Definition~\ref{def:unitaryB}.
Sparsifying {\bdd} matrices can be transformed into the more general
setting described above via a unitary edge-vertex transfer matrix,
which is analogous to the edge-vertex incidence matrix. 

Lemma~\ref{lem:bsdd-factorization} proves that every {\bdd} matrix $\MM \in (\complex^{r \times r})^{n \times n}$
  with $m$ nonzero off-diagonal blocks can be written as $\XX+\BB \BB^{\dg}$ where
  $\BB \in (\complex^{r \times r})^{n \times 2m}$ is a unitary
  edge-vertex transfer matrix
  and $\XX$ is a block diagonal PSD matrix. Additionally, for every block diagonal matrix $\YY$ s.t. $\MM-\YY$ is
  {\bdd}, we have $\XX \pgeq \YY$.
  This decomposition
can be found in $O(m)$ time and $O(\log n)$ depth.
We rely on this $\XX$ to detect some cases of high leverage scores
as samples of $\BB$ may have lower rank.

We will reduce the number of nonzero blocks in $\MM$ by
sampling columns blocks from this matrix.
This is more restrictive than sampling individual columns.
Nonetheless, it can be checked via matrix concentration bounds \cite{ahlswede2002strong, tropp2012user}
that it suffices to sample the block by analogs of leverage scores: 
% \begin{equation}\label{eqn:tau}
% \ttau_{[i]} = \norm{\BB_{[i]}^{\dg} (\XX + \BB \BB^{*})^{-1} \BB_{[i]}}_2.
% \end{equation}

\begin{equation}
\label{eqn:tau}
\ttau_{[i]} = \mathrm{tr}\left(\BB_{[i]}^{\dg} (\XX + \BB
  \BB^{*})^{-1} \BB_{[i]}\right)
\end{equation}

As in~\cite{cohen2014uniform}, we  recursively estimate upper bounds
for these scores, leading to the pseudocode given in Figure~\ref{fig:uniformSample}.

\begin{figure}[ht]
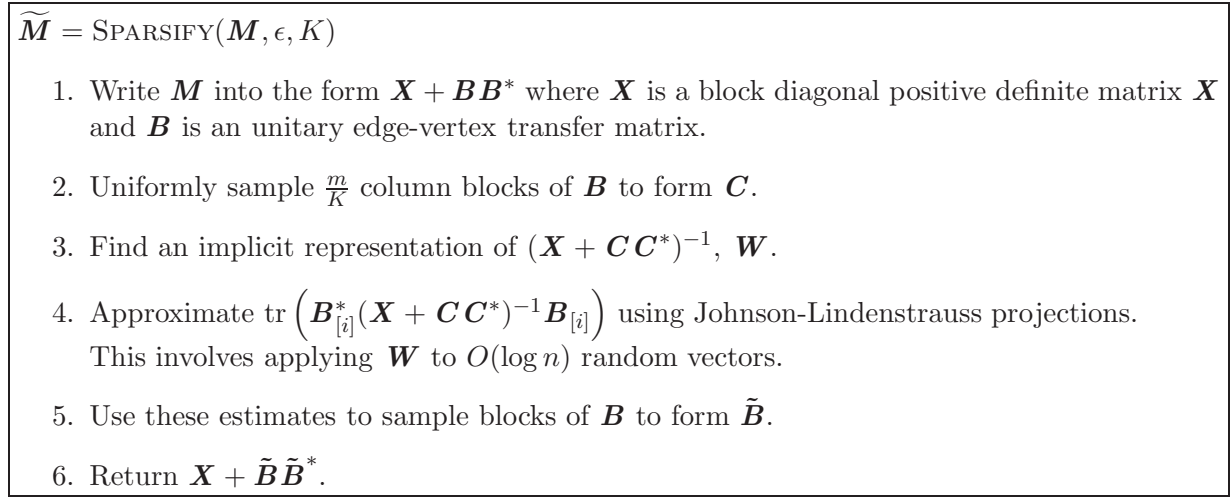


\begin{algbox} $\MMtil = \textsc{Sparsify}(\MM, \epsilon, K) $
%%{\bf input}: an {\bsdd} matrix of the form $\XX + \BB \BB^{\dg}$, parameter $K$.\\
%%{\bf output}: spectral approximation $\BBtil$ consisting of $O(m K )$ rescaled
%%blocks of $\BB$.
\begin{enumerate}
    \item Write $\MM$ into the form $\XX + \BB \BB^{\dg}$ where $\XX$ is a block diagonal
    positive definite matrix $\XX$ and $\BB$ is an unitary edge-vertex transfer matrix.
	\item	Uniformly sample $\frac{m}{K}$ column blocks of $\BB$ to form $\CC$.
	\item	Find an implicit representation of $(\XX + \CC \CC^{\dg})^{-1}$, $\WW$.
	\item Approximate $\mathrm{tr}\left(\BB_{[i]}^{\dg} (\XX + \CC \CC^{\dg})^{-1} \BB_{[i]} \right)$ using Johnson-Lindenstrauss projections. \\This involves applying $\WW$ to $O (\log n)$ random vectors.
          %\richard{It's ok to loose a factor of $r$} 
	%% \todo{say that this involves solving $O (\log n)$ systems}
	\item	Use these estimates to sample blocks of $\BB$ to form $\BBtil$.
	\item	Return $\XX + \BBtil \BBtil^{\dg}$.
\end{enumerate}

\end{algbox}

\caption{Sparsification via. Uniform Sampling}
\label{fig:uniformSample}
\end{figure}

\begin{lemma}
\label{lem:sparsify}
Given a positive definite {\bdd} matrix $\MM \in (\complex^{r \times r})^{n \times n}$ with $m$ nonzero blocks.
%, where $\BB$ is a unitary edge-vertex transfer matrix.
% if want to be more specific, also need to mention X is block diagonal..but seems too long
%Let $m$ be the number of columns in $\BB$.

Assume that for any positive definite {\bdd} matrix $\MM' \in (\complex^{r \times
  r})^{n \times n}$ with $\widehat{m}$ nonzero blocks, 
%where $\CC$ is a unitary edge-vertex transfer
% if want to be more specific, also need to mention X is block diagonal..but seems too long
%matrix with $\widehat{m}$ columns
we can find an implicit representation
of a matrix $\WW$ such that $\WW\approx_{1} \left(\MM'\right)^{-1}$
in depth $d_{con}(\widehat{m},n)$ and work $w_{con}(\widehat{m},n)$ and for any vector
$b$, we can evaluate $\WW \bb$ in depth $d_{eval}(\widehat{m},n)$ and work $w_{eval}(\widehat{m},n)$. 

For any $K \geq1$, $1\geq\epsilon>0$,
the algorithm $\textsc{Sparsify}(\MM, \epsilon, K)$ outputs
an explicit positive definite {\bdd} matrix $\MMtil$ with 
%where $\BBtil$ is a unitary edge-vertex transfer matrix with
$O(K n\log n/\epsilon^{2})$ nonzero blocks and $\MMtil \approx_{\epsilon} \MM$.

Also, this algorithm
runs in $$d_{con}\left(\frac{m}{K},n\right)+O\left(d_{eval} \left(\frac{m}{K},n\right)+\log n \right)$$
depth and $$w_{con}\left(\frac{m}{K},n\right)+O\left(w_{eval} \left(\frac{m}{K},n\right) \log n +m\log n\right)$$
work.
\end{lemma}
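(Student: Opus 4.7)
The approach adapts the uniform-sampling leverage-score framework of Cohen et al.~\cite{cohen2014uniform} to the block setting.

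First I would apply Lemma~\ref{lem:bsdd-factorization} to write $\MM=\XX+\BB\BB^{\dg}$ with $\XX$ block diagonal PSD and $\BB$ a unitary edge-vertex transfer matrix having $2m$ column blocks. The block leverage scores of interest, $\ttau_{[i]}=\trace{\blk{\BB}{i}^{\dg}\MM^{-1}\blk{\BB}{i}}$, satisfy $\sum_i\ttau_{[i]}=\trace{\MM^{-1}\BB\BB^{\dg}}\le nr=O(n)$ because every eigenvalue of $\MM^{-1}\BB\BB^{\dg}$ lies in $[0,1]$ (since $\BB\BB^{\dg}\pleq\MM$).

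The heart of the argument is to show that, with constant probability over the uniform subsample $\CC$, the generalized scores $\tilde\ttau_{[i]}=\trace{\blk{\BB}{i}^{\dg}(\XX+\CC\CC^{\dg})^{-1}\blk{\BB}{i}}$ (i) dominate the true scores $\ttau_{[i]}$ and (ii) have sum $O(Kn)$. Claim (i) is immediate: $\CC\CC^{\dg}\pleq\BB\BB^{\dg}$ gives $(\XX+\CC\CC^{\dg})^{-1}\pgeq\MM^{-1}$ by Fact~\ref{fact:orderInverse}. Claim (ii) is the block analogue of the main lemma of~\cite{cohen2014uniform}: a symmetrization argument bounds $\trace{(\XX+\CC\CC^{\dg})^{-1}\BB\BB^{\dg}}$ by $O(K)$ times $\sum_i\ttau_{[i]}$; the deterministic $\XX$ term can only decrease this quantity, since it provides unconditional support on any direction where $\CC\CC^{\dg}$ is weak.

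Next, following the Spielman--Srivastava trick, I would estimate each $\tilde\ttau_{[i]}$ to within a constant factor using Johnson--Lindenstrauss: apply the implicit solver $\WW\approx_1(\XX+\CC\CC^{\dg})^{-1}$ to $O(\log n)$ random $\pm 1$ vectors and read off per-block norms, at a cost of $O(\log n)$ solver calls and $O(m\log n)$ bookkeeping to aggregate the block norms. Because $\WW$ is only a constant-factor approximation to the inverse, the resulting estimates lose only an additional constant factor, so they remain valid upper bounds on $\ttau_{[i]}$ with total sum $O(Kn)$. A standard matrix Bernstein bound (e.g.~\cite{tropp2012user}) then shows that rescaled block-column sampling of $\BB$ with probabilities proportional to these estimates, taking $O(Kn\log n/\epsilon^2)$ blocks, produces $\BBtil$ with $\BBtil\BBtil^{\dg}\approx_\epsilon\BB\BB^{\dg}$. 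Since $\XX$ is preserved exactly and each column block of $\BB$ is by itself a 2-vertex {\bdd} contribution, $\MMtil=\XX+\BBtil\BBtil^{\dg}$ is {\bdd}, $\epsilon$-approximates $\MM$, and (being $e^{\epsilon}$-close to the positive definite $\MM$) is itself positive definite.

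Summing the contributions gives the claimed work and depth: $O(m)$ work and $O(\log n)$ depth for the decomposition and the final sampling, $d_{con}(m/K,n)$ depth and $w_{con}(m/K,n)$ work to build $\WW$, and $O(d_{eval}(m/K,n)+\log n)$ depth and $O(w_{eval}(m/K,n)\log n+m\log n)$ work for the JL-based estimation. The main obstacle is claim (ii) above: proving that \emph{uniform} subsampling, with no a priori information about leverage scores, produces a regularizer whose generalized block scores sum to $O(Kn)$ with constant probability. The argument of~\cite{cohen2014uniform} carries over, but care is required to work with the block trace rather than the scalar trace and to verify that the block-diagonal term $\XX$ does not spoil the concentration inequality.
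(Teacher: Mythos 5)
Your overall skeleton matches the paper's: decompose $\MM = \XX + \BB\BB^{\dg}$ via Lemma~\ref{lem:bsdd-factorization}, uniformly subsample to get $\CC$, use generalized leverage scores w.r.t.\ $\XX + \CC\CC^{\dg}$ as overestimates, estimate them by applying the implicit solver $\WW$ to $O(\log n)$ random vectors \`a la Johnson--Lindenstrauss, and resample by these estimates with a matrix concentration bound. Your claim (i) (domination, via $\CC\CC^{\dg} \pleq \BB\BB^{\dg}$ and Fact~\ref{fact:orderInverse}) is fine.

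The gap is exactly the step you flag as ``the main obstacle'': you propose to prove the bound $\sum_i \tilde{\ttau}_{[i]} = O(Kn)$ by redoing the symmetrization/concentration argument of \cite{cohen2014uniform} in the block setting, and you do not actually carry this out --- you only assert that it ``carries over'' with care. The paper avoids re-proving anything: it invokes the scalar Theorem~\ref{thm:leverage_score_undersampling} verbatim for a sample $\CChat$ of $m/K$ \emph{individual columns} of $\BB$, which bounds the total column leverage scores w.r.t.\ $\CChat$ by $O(nKr^{2})$; it then observes that sampling $m/K$ \emph{column blocks} yields $\XX + \CC\CC^{\dg} \pgeq \XX + \CChat\CChat^{\dg}$, so the scores computed w.r.t.\ $\CC$ are only smaller, and that each block score $\mathrm{tr}(\blk{\BB}{i}^{\dg}(\XX+\CC\CC^{\dg})^{-1}\blk{\BB}{i})$ is just the sum of the $r$ column scores $\tau_{[i],j}$. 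This reduction is what makes the lemma follow from the existing scalar result (at the cost of the $r^{2}$ factor the paper remarks on), and it also sidesteps your hand-wavy claim that ``the deterministic $\XX$ term can only decrease this quantity'' --- in the paper's reduction $\XX$ is simply carried along on both sides of the comparison. Without either this reduction or a worked-out block version of the Cohen et al.\ theorem, your claim (ii) is unsupported, and it is the load-bearing step of the whole lemma.
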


The guarantees of this process can be obtained from the main
result from~\cite{cohen2014uniform}:

\begin{theorem}[Theorem 3 from~\cite{cohen2014uniform}]\label{thm:leverage_score_undersampling}
Let $\AA$ be a $n$ by $m$ matrix, and $\alpha\in(0,1]$ a density parameter.
Consider the matrix $\AA'$ consisting of a random $\alpha$ fraction of the columns of $\AA$.
Then, with high probability we have
$\uu_i = \tau_{i}^{\AA'}(\AA)$ is a vector of leverage score overestimates, i.e.
$\tau_{i}(\AA) \le \uu_i$, and
\[
\sum_{i=1}^{m} \uu_{i}\leq\frac{3n}{\alpha}.
\]
\end{theorem}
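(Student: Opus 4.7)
I would prove the theorem in two parts: a deterministic pointwise overestimate, and a sum bound that holds with high probability over the random subset $S \subseteq [m]$ of size $\alpha m$.

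For the pointwise overestimate $\tau_i(\AA) \le \uu_i$: note that $\AA\AA^{\dg} - \AA'\AA'^{\dg} = \sum_{j \notin S}\aa_j\aa_j^{\dg} \succeq 0$. When $\aa_i$ lies in the column span of $\AA'$, Loewner monotonicity of the pseudoinverse restricted to this common range gives $\aa_i^{\dg}(\AA'\AA'^{\dg})^{+}\aa_i \ge \aa_i^{\dg}(\AA\AA^{\dg})^{+}\aa_i$, i.e.\ $\uu_i \ge \tau_i(\AA)$. When $\aa_i$ lies outside $\mathrm{col}(\AA')$ the definition sets $\uu_i = 1 \ge \tau_i(\AA)$. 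This half holds deterministically.

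For the sum bound I work with the \emph{augmented} quantity $\sigma_i := \tau_i^{\AA_{S \cup \{i\}}}(\AA_{S \cup \{i\}})$, the true leverage score of $\aa_i$ in the sample extended by $\aa_i$ itself. A Sherman--Morrison computation shows
\[
\uu_i \;=\; \frac{\sigma_i}{1 - \sigma_i} \quad\text{when }i\notin S\text{ and }\aa_i\in\mathrm{col}(\AA_S),
\]
with $\uu_i = \sigma_i \le 1$ when $i \in S$, and $\uu_i = 1 = \sigma_i$ when $\aa_i\notin\mathrm{col}(\AA_S)$. The point is that $\sigma_i \in [0,1]$ and satisfies the rank identity $\sum_{j \in T}\tau_j(\AA_T) = \mathrm{rank}(\AA_T) \le n$ for every fixed $T$. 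Viewing $T := S\cup\{i\}$ as a uniformly random subset conditioned on $i \in T$, a symmetry / double-counting argument gives
\[
\sum_i \mathbb{E}[\sigma_i] \;=\; \sum_i \mathbb{E}\bigl[\tau_i(\AA_T)\,\bigm|\, i\in T\bigr] \;\le\; \frac{\mathbb{E}[\mathrm{rank}(\AA_T)]}{\Pr[i\in T]} \;\le\; \frac{n}{\alpha}.
\]

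The main obstacle is converting this bound on $\sum\sigma_i$ into a bound on $\sum \uu_i$, because of the blow-up $1/(1-\sigma_i)$ as $\sigma_i \to 1$. I would split columns into \emph{light} ($\sigma_i \le 1/2$, so $\uu_i \le 2\sigma_i$) and \emph{heavy} ($\sigma_i > 1/2$). Light columns contribute at most $2n/\alpha$ to $\mathbb{E}[\sum \uu_i]$ by the display above. Heavy columns are constrained by the rank: in any single $T$ at most $2n$ columns can satisfy $\sigma_i > 1/2$; a second symmetry argument across the random sample bounds the expected contribution of heavy columns to $\sum\uu_i$ by another $O(n/\alpha)$. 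Finally, a concentration step—a Doob martingale on sequentially revealing the coordinates of $S$, or equivalently matrix Freedman applied to $\AA_S\AA_S^{\dg}$—upgrades these expectation bounds to the required high-probability statement, with the constants absorbed into the factor~$3$ in $3n/\alpha$.
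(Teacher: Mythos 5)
The paper never proves this statement---it is imported verbatim as Theorem~3 of \cite{cohen2014uniform} and used as a black box---so your proposal has to stand on its own. The first half is fine: $\AA'\AA'^{\dg}\preceq\AA\AA^{\dg}$ plus monotonicity of the pseudoinverse quadratic form on the common range gives $\tau_i(\AA)\le\uu_i$, and your Sherman--Morrison identity $\uu_i=\sigma_i/(1-\sigma_i)$ together with the rank identity and the double-counting bound $\sum_i\mathbb{E}[\sigma_i]\le n/\alpha$ is exactly the engine of the original argument.

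The gap is the heavy-column step, and it is not repairable as proposed. Bounding the \emph{number} of indices with $\sigma_i>1/2$ does not bound their \emph{contribution} to $\sum_i\uu_i$, because $\uu_i=\sigma_i/(1-\sigma_i)$ blows up as $\sigma_i\to1$ while $\aa_i$ stays in $\mathrm{col}(\AA')$. Concretely, take $n=1$, $\aa_1=1$ and $\aa_j=\delta$ for $j\ge2$: with constant probability $1-\alpha$ the sample misses column $1$, and then $\uu_1=1/(\alpha m\delta^2)$, which exceeds $3n/\alpha$ once $\delta^2<1/(3m)$; higher-dimensional versions put arbitrarily many such columns in the heavy class. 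So no ``second symmetry argument'' can bound the uncapped sum---the statement as transcribed here is actually false without a truncation. The resolution in \cite{cohen2014uniform} is that the overestimates are capped at $1$, i.e.\ one uses $\min(1,\tau_i^{\AA'}(\AA))$, which is still a valid overestimate since $\tau_i(\AA)\le1$; then $\min\bigl(1,\sigma_i/(1-\sigma_i)\bigr)\le2\sigma_i$ for every $\sigma_i\in[0,1]$, your light/heavy split disappears, and the bound follows from $\sum_i\mathbb{E}[\sigma_i]\le n/\alpha$ plus Markov (which is where the factor $3$ and the probability guarantee come from). Your final concentration step is also shaky---swapping one element of $S$ can change every $\sigma_i$ simultaneously, so a Doob martingale with bounded differences gives nothing useful---but the capping is the essential missing idea.
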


\begin{proof} (Sketch of Lemma~\ref{lem:sparsify})
Instead of sampling $m / K$ column blocks, consider sampling $m / K$ individual columns
to form $\CChat$.
Theorem~\ref{thm:leverage_score_undersampling} gives that the total leverage scores
of all the columns of $\BB$ w.r.t. $\CChat$ is at most $O(nK r^2)$.

As $\CC$ is formed by taking blocks instead of columns, we have
\[
\XX + \CC \CC^{\dg} \pgeq \XX + \CChat \CChat^{\dg},
\]
so the individual leverage scores computed w.r.t. $\CC$ (and in turn $\WW$)
sums up to less than the ones computed w.r.t. $\CChat$.

Let us use $\bb_{[i],j} \in \complex^{n}$ to denote the $j^{\text{th}}$ \emph{column}
of the $\blk{\BB}{i}$ block column.
Note that 
\[
\BB_{[i]}
\BB_{[i]}^{\dg} = \sum_{j} \bb_{[i],j}\bb_{[i],j}^{\dg}.
\]
 Define
  $\tau_{[i], j}^{\XX + \CC \CC^{*}} = \mathrm{tr}\left(\bb_{[i],j}^{\dg}
    (\XX + \CC \CC^{*})^{-1}  \bb_{[i],j}\right) $.

We then have
\[
 \mathrm{tr}\left(\BB_{[i]}^{\dg} (\XX + \CC \CC^{*})^{-1} \BB_{[i]} \right)
 = \sum_{j} \tau_{[i], j}^{\XX + \CC \CC^{*}},
\]
so the total sum of the estimates we obtain is at most $O(n r^2 K)$,
which gives $\BBtil$
with $O(K n \log{n} \epsilon^{-2})$ blocks.

To approximate $\sum_{j} \tau_{[i], j}^{\XX + \CC \CC^{*}}$, we apply a standard technique given by \cite{spielman2011graph, avron2011effective, li2013iterative}. The rough idea is to write
\[
\tau_{[i], j}^{\XX + \CC \CC^{*}} = \norm{\CC^{*} (\XX + \CC \CC^{*})^{-1} \bb_{[i],j}}^2 + \norm{\sqrt{\XX} (\XX + \CC \CC^{*})^{-1} \bb_{[i],j}}^2
\]
By Johnson-Lindenstrauss Lemma, for a random Gaussian matrix $\GG$ with $\Theta(\log(n))$ rows, we know that
\[
\tau_{[i], j}^{\XX + \CC \CC^{*}} \approx_2 \norm{\GG \CC^{*} (\XX + \CC \CC^{*})^{-1} \bb_{[i],j}}^2 + \norm{\GG \sqrt{\XX} (\XX + \CC \CC^{*})^{-1} \bb_{[i],j}}^2
\]
for high probability. Since $\GG$ has $O(\log(n))$ rows, this can be approximated by applying the approximate inverse $\WW$ to $O(\log(n))$ vectors. Hence, step 3 runs in $O(d_{eval}(\frac{m}{K},n)+\log n)$
depth and $O(w_{eval}(\frac{m}{K},n) \log n +m\log n)$
work.
\end{proof}

We remark that the extra factor of $r^2$ can likely be improved
by modifying the proof of Theorem~\ref{thm:leverage_score_undersampling}
to work with blocks.
We omit this improvement for simplicity, especially since we're already
treating $r$ as a constant.
%%% Local Variables:
%%% mode: latex
%%% TeX-master: "rasmus_sandbox_bsdd-solver"
%%% End:

% % maybe a section with more on:
% % \input{recursive-construction}

% \input{LU-factor}

% \input{low-depth-existence}

\section{Recursive Construction of Schur Complement Chains}\label{sec:recursive}

We now give the full details for the recursive algorithm that proves
the running times as stated in Theorem~\ref{thm:recursive}:
\recursive*
%%\begin{restatable}[]{theorem}{recursive}
%%\label{thm:recursive}
%%Given any {\bdd} matrix $\MM$ with $n$ block rows/columns
%%and $m$ non-zeros blocks,
%%we can preprocess $\MM$ in $O(m\log n+n\log^{2+o(1)}n)$ work and
%%$O(n^{o(1)})$ depth such that for any vector $b$, we can compute an $\epsilon$ approximation
%%solution to $\MM^{-1}b$ in $O((m+n\log^{1+o(1)}n)\log(1/\epsilon))$
%%work and $O(\log^{2}n\log\log n\log(1/\epsilon))$ depth.
%%\end{restatable}

This argument can be viewed as a more sophisticated version of
the one in Section~\ref{sec:overviewAlg}.
The main idea is to invoke routines for reducing edges and vertices in
a slightly unbalanced recursion, where several steps of vertex
reductions take place before a single edge reduction.
The routines that we will call are:
\begin{enumerate}
\item \textsc{bDDSubset} given by
Lemma~\ref{lem:abddSubset} proven in Section~\ref{sec:absdd}
for finding a large set of $\alpha$-{\bdd} subset.
\item \textsc{ApproxSchur} given by
Lemma~\ref{lem:approxSchur:extended} proven in Section~\ref{sec:vertexSparsify}
that gives sparse approximations to Schur complements.
\item \textsc{Sparsify} given by Lemma~\ref{lem:sparsify} proven in
Section~\ref{sec:undersampling} that allows us to sparsify a {\bdd}
matrix by recursing on a uniform subsample of its non-zero blocks.
\end{enumerate}
An additional level of complication comes from the fact
that the approximation guarantees of our constructions
rely on gradually decreasing errors down the Schur complement chain.
This means that the density increases faster and larger
reduction factors $K$ are required as the iteration goes on.
%This means the values of $C$, and in turn value of $K$,
% get bigger as the iteration goes on.
 Pseudocode of our algorithm is given in
 Figure~\ref{fig:recursiveConstruct}.

\begin{figure}[ht]

\begin{algbox}

$(\MM^{(1)},\MM^{(2)},\cdots;F_1,F_2,\cdots)=\textsc{RecursiveConstruct}(\MM^{(0)})$
\begin{enumerate}
\item Initialize:
\begin{enumerate}
\item $i\leftarrow 0$, and
\item $k$ and $c$ are parameters to be decided.
\end{enumerate}
\item While $\MM^{(i)}$ has more than $\Theta(1)$ blocks,
\begin{enumerate}
\item If $i \text{ mod } k=0$, Then
\begin{enumerate}
\item $\MM^{(i)}\leftarrow\textsc{Sparsify}(\MM^{(i)},(i+8)^{-2}, 2^{2c k\log^{2}(i+1)}).$
\end{enumerate}
\item $F_{i  + 1} \leftarrow \textsc{bDDSubset} \left( \MM^{(i)}, 4 \right)$.
\item $\MM^{(i + 1)} \leftarrow \textsc{ApproxSchur}(\MM^{(i)}, F_{i + 1}, 4, (i+8)^{-2})$.
\item $i\leftarrow i+1$.
\end{enumerate}
\end{enumerate}
\end{algbox}

\caption{Pseudocode for Recursively Constructing Schur Complement Chains, 
the recursive calls happen through \textsc{Sparsify},
which constructs its own Schur Complement Chains to perform
the sparsification}

\label{fig:recursiveConstruct}

\end{figure}

Note that since $\MM^{(0)}$ may be initially dense, we
first make a recursive call to \textsc{Sparsify} before
computing the approximate Schur complements.

In our analysis, we will use $n^{(i)}$ to denote the number of non-zero
column/row blocks in $\MM^{(i)}$, and $m^{(i)}$ to denote the number
of non-zero blocks in $\MM^{(i)}$.
These are analogous to dimension and number of non-zeros in the matrix.
It is also useful to refer to the steps between
calls to \textsc{Sparsify} as phases.
Specifically, phase $j$ consists of iterations $(j - 1)k$ to $jk - 1$.
We will also use $K_j$ to denote the reduction factor used to
perform the sparsification at the start of phase $j$, aka.
\[
K_j = 2^{2c k\log^{2}((j - 1)k +1)}.
\]

A further technicality is that Lemma~\ref{lem:sparsify} require
a strictly positive definite block-diagonal part to facilitate the detection
of vectors in the null space of the sample.
We do so by padding $\MM$ with a small copy of the identity,
and will check below that this copy stays throughout the course
of this algorithm.
\begin{lemma}
For any {\bdd} matrix $\MM^{(0)}$ expressible as $\xi \id + \BB^{(0)} ( \BB^{(0)})^{\dg}$
for some $\xi > 0$ and unitary edge-vertex transfer matrix $\BB$,
all intermediate matrices $\MM'$ generated during the algorithm
$\textsc{RecursiveConstruct}(\MM^{(0)})$ can be expressed
as $\xi \id + \BB' ( \BB' )^{\dg}$.
\end{lemma}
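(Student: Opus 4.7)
The plan is to prove by induction on the iterations of $\textsc{RecursiveConstruct}$ the stronger invariant that $\MM^{(i)} - \xi \id$ is a {\bdd} matrix; since every {\bdd} matrix is PSD by Lemma~\ref{lem:bsdd-factorization}, this immediately implies $\MM^{(i)} \pgeq \xi \id$, and so $\MM^{(i)} - \xi \id$ admits a factorization $\BB' (\BB')^{\dg}$ (for instance via its PSD square root), which is the claimed conclusion. The base case follows from the hypothesis: $\MM^{(0)} - \xi \id = \BB^{(0)} (\BB^{(0)})^{\dg}$ is {\bdd} because each column of the unitary edge-vertex transfer matrix $\BB^{(0)}$ contributes $w_{e} \id$ to the diagonal block at each of its two endpoints while producing off-diagonal blocks of operator norm exactly $w_{e}$, so diagonal dominance holds with equality edge-by-edge.

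For the inductive step the three operations performed in the main loop must be checked. $\textsc{bDDSubset}$ does not alter $\MM^{(i)}$ and is trivial. For $\textsc{Sparsify}$, I invoke Lemma~\ref{lem:bsdd-factorization} with $\YY = \xi \id$: the ``moreover'' clause requires $\MM^{(i)} - \xi \id$ to be {\bdd} (supplied by the inductive hypothesis) and concludes that the decomposition $\MM^{(i)} = \XX + \BB \BB^{\dg}$ used by $\textsc{Sparsify}$ satisfies $\XX \pgeq \xi \id$, so the block-diagonal part is in particular strictly positive definite as Lemma~\ref{lem:sparsify} demands. The output is $\XX + \BBtil (\BBtil)^{\dg}$, where $\BBtil$ is a reweighted column subsample of $\BB$ and thus still unitary edge-vertex transfer; subtracting $\xi \id$ gives $(\XX - \xi \id) + \BBtil (\BBtil)^{\dg}$, a sum of a block diagonal PSD matrix and a {\bdd} matrix, which is itself {\bdd}. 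For $\textsc{ApproxSchur}$, I apply the ``moreover'' property of Lemma~\ref{lem:approxSchur:extended} with $\TT$ defined to have $\blk{\TT}{C_{i+1},C_{i+1}} = \xi \id$ as its only nonzero sub-block. To invoke this clause one needs $\MM^{(i)} - \TT$ to be {\bdd}, which follows because $\MM^{(i)} - \TT$ equals $\MM^{(i)} - \xi \id$ (a {\bdd} matrix by induction) plus $\xi$ times the identity on the $F_{i+1} \times F_{i+1}$ diagonal block, a block diagonal PSD correction. The moreover clause then gives $\schurApx(\MM^{(i)}, F_{i+1}, 4, \epsilon) = \schurApx(\MM^{(i)} - \TT, F_{i+1}, 4, \epsilon) + \blk{\TT}{C_{i+1},C_{i+1}}$, and since $\schurApx$ always returns a {\bdd} matrix, subtracting the $\xi \id$ on the new index set $C_{i+1}$ leaves a {\bdd} matrix, closing the induction. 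The same reasoning propagates into the recursive call launched inside $\textsc{Sparsify}$, since $(\XX + \CC \CC^{\dg}) - \xi \id = (\XX - \xi \id) + \CC \CC^{\dg}$ is again block diagonal PSD plus {\bdd}, so the input to the recursion satisfies the hypothesis of the lemma.

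The main obstacle I anticipate is the book-keeping in the $\textsc{ApproxSchur}$ step: one must select $\TT$ as $\xi \id$ restricted to the $C \times C$ submatrix (rather than to the full index set, which would not satisfy the hypothesis of the moreover clause), check the side condition that $\MM^{(i)} - \TT$ remains {\bdd} so that the moreover clause legitimately applies, and observe that the $\blk{\TT}{C_{i+1},C_{i+1}}$ left behind after the Schur complement is exactly the $\xi \id$ that seeds the invariant for $\MM^{(i+1)}$ on its new ground set $C_{i+1}$.
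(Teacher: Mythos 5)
Your proof is correct and follows essentially the same approach as the paper's: induct down the algorithmic calls, handle \textsc{Sparsify} by observing that the block-diagonal part $\XX$ (which dominates $\xi\id$ by the "furthermore" clause of Lemma~\ref{lem:bsdd-factorization}) is carried unchanged through the subsample and its output, and handle \textsc{ApproxSchur} by invoking its additivity clause with $\TT = \xi\id$ restricted to the $C\times C$ block. Your version is in fact somewhat more careful than the paper's terse argument, in that it makes the inductive invariant precise ($\MM^{(i)} - \xi\id$ is {\bdd}, which is exactly the condition that forces $\XX \pgeq \xi\id$ inside \textsc{Sparsify}) and explicitly verifies the side condition of the additivity clause.
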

\begin{proof}
There are two mechanisms by which this recursive algorithm generates
new matrices: through uniform sampling within \textsc{Sparsify}
and via \textsc{ApproxSchur}.
We will show inductively down the algorithmic calls that all matrices satisfy
this property.

Lemma~\ref{lem:sparsify} gives that this $\XX$ is preserved in the sample.

For calls to $\textsc{ApproxSchur}(\MM, C)$, the inductive hypothesis gives
that $\blk{\MM}{C, C}$ is expressible as $\MMhat + \xi \id_{C}$.
The last condition in Lemma~\ref{lem:approxSchur} then gives that the result
also has a $\xi \id_{C}$ part, which gives the inductive hypothesis for it as well.
\end{proof}
As the interaction between this padding and our recursion is minimal,
we will simply state the input conditions as $\MM = \xi \id + \BB \BB^{\dg}$
for some $\xi > 0$.
We will first bound the sizes of the matrices in the Schur complement
chain produced by $\textsc{RecursiveConstruct}$.
\begin{lemma}
\label{lem:recursiveSize}
For any {\bdd} matrix $\MM^{(0)} = \xi \id + \BB^{(0)} ( \BB^{(0)})^{\dg}$
for some $\xi > 0$,
the algorithm $\textsc{RecursiveConstruct}(\MM^{(0)})$ returns a Schur complement
chain $(\MM^{(1)},\MM^{(2)},\cdots;F_1,F_2,\cdots)$ such that:
\begin{enumerate}
\label{item:diagExtra}
\item $n^{(i)} \leq \beta ^{i} n^{(0)}$ for some absolute constant $\beta<1$.
\label{item:vertexBound}
\item The number of non-zero blocks in any iteration $i$ of phase $j$ is at most
  $2^{3ck\log^{2}(jk)} n^{((j - 1)k)}\log n^{((j - 1)k)}$.
\label{item:edgeBound}
\end{enumerate}
\end{lemma}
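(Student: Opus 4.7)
Plan for proving Lemma~\ref{lem:recursiveSize}.

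\textbf{Part (1) — the vertex bound.} The plan is to apply Lemma~\ref{lem:abddSubset} at each step. With $\alpha = 4$, \textsc{bDDSubset} returns $F_{i+1}$ of size at least $n^{(i)}/(8(1+\alpha)) = n^{(i)}/40$. Since taking a Schur complement onto $C_{i+1} = V^{(i)} \setminus F_{i+1}$ reduces the number of block rows by exactly $|F_{i+1}|$, and because both \textsc{ApproxSchur} and \textsc{Sparsify} leave the index set unchanged, we immediately get $n^{(i+1)} \leq (39/40)\, n^{(i)}$, so $\beta = 39/40$ works. A tiny sanity check is needed to verify that the sparsify step (which occurs once every $k$ iterations) does not introduce extra blocks indexed outside the current vertex set; this follows from the fact that \textsc{Sparsify} only re-weights an edge-vertex factorization of $\MM^{(i)}$, leaving the block-diagonal padding $\xi \id$ intact.

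\textbf{Part (2) — the edge bound.} The plan is to track the number of blocks through a phase. Let $i_0 = (j-1)k$ mark the start of phase $j$. Lemma~\ref{lem:sparsify}, applied with $K = K_j = 2^{2ck\log^2(i_0+1)}$ and $\epsilon = (i_0+8)^{-2}$, gives
\[
m^{(i_0)} \;=\; O\!\left(K_j \cdot n^{(i_0)} \log n^{(i_0)} \cdot (i_0+8)^4\right) \;\leq\; 2^{2ck\log^2(jk)+O(\log(jk))} \cdot n^{(i_0)} \log n^{(i_0)}.
\]
For the $k-1$ subsequent iterations within the phase, each call to \textsc{ApproxSchur} multiplies the block count by at most $(\epsilon_i^{-1}\log\log\epsilon_i^{-1})^{O(\log\log\epsilon_i^{-1})}$ by Lemma~\ref{lem:approxSchur:extended}. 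With $\epsilon_i = (i+8)^{-2}$ and $i \leq jk$, each factor is at most $\exp(C\log(jk)\log\log(jk))$ for an absolute constant $C$, so after at most $k$ applications the total blowup is $\exp(Ck\log(jk)\log\log(jk)) \leq \exp(Ck\log^2(jk))$.

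Combining these two bounds, for any $i$ in phase $j$,
\[
m^{(i)} \;\leq\; 2^{2ck\log^2(jk) + Ck\log^2(jk) + O(\log(jk))} \cdot n^{(i_0)} \log n^{(i_0)}.
\]
Choosing $c$ large enough that $2c + C + o(1) \leq 3c$ (e.g., $c \geq C+1$), this is bounded by $2^{3ck\log^2(jk)} \cdot n^{((j-1)k)} \log n^{((j-1)k)}$, as required. We also need to record that \textsc{ApproxSchur} is invoked with the correct $\alpha = 4$ hypothesis, which is ensured by having just called \textsc{bDDSubset} with $\alpha = 4$ on the (possibly sparsified) $\MM^{(i)}$.

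\textbf{Main obstacle.} Neither step is deep; the main delicate point is lining up constants in the exponent. The ApproxSchur blowup uses big-$O$ inside a double exponent, so I need to be careful that (i) the constant $C$ absorbed there does not depend on $k$ or the phase number, and (ii) the additional $O(\log(jk))$ term from the $(i_0+8)^4$ factor after \textsc{Sparsify} is dominated by $ck\log^2(jk)$ (which holds for all $jk \geq 2$ once $c$ is an absolute constant). The statement of the lemma fixes $c$ and $k$ as parameters chosen in the algorithm, so I will make explicit at the outset that $c$ is chosen large enough for these absorptions to go through, independent of $\MM^{(0)}$.
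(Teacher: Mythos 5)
Your proposal is correct and follows essentially the same route as the paper: part (1) uses Lemma~\ref{lem:abddSubset} with $\alpha=4$ to obtain the geometric decrease, and part (2) combines the density bound from \textsc{Sparsify} (Lemma~\ref{lem:sparsify}) at the start of each phase with the per-iteration blowup from \textsc{ApproxSchur} (Lemma~\ref{lem:approxSchur:extended}), absorbing constants by choosing $c$ large. If anything, you are a bit more careful than the paper's own write-up — you track the $1/\epsilon^2 = (i_0+8)^4$ factor in the \textsc{Sparsify} output explicitly (as an $O(\log(jk))$ term in the exponent) rather than silently folding it into the choice of $c$ — but the overall argument and the resulting bound $2^{3ck\log^2(jk)}$ are identical.
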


\begin{proof}

Lemma~\ref{lem:abddSubset} and the choice of $\alpha = 4$
ensures $\left|F_k\right|=\Omega(n^{(i)})$,
which means there is constant $\beta <1$ such that $n^{(i)} \leq \beta^{k} n$.
Furthermore, we do not increase vertex count at any point in this recursion,
and all recursive calls are made to smaller graphs.
Therefore, the recursion terminates.

Lemma \ref{lem:approxSchur:extended} shows that after computing
each approximate Schur complement,
\begin{eqnarray*}
m^{(i+1)} & = & O(m^{(i)}(i^{2}\log(i+8))^{O(\log(i+8))})\\
 & \leq & 2^{O\left(\log^{2}(i+1)\right)}m^{(i)}.
\end{eqnarray*}
Hence, by picking $c$ appropriately we can guarantee that 
the density increases by a factor of most $2^{c \log^{2}(i+1)}$ during each iteration.
This size increase is controlled by calls to \textsc{Sparsify}.
Specifically, Lemma~\ref{lem:sparsify} gives that at the start of phase $j$ we have:
\[
\frac{m^{((j - 1)k)}}{n^{((j - 1)k)}\log n^{((j - 1)k)}}\leq 2^{2c k\log^{2}((j - 1)k )}.
\]
Then, since we go at most $k$ steps without calling \textsc{Sparsify},
this increase in density can be bounded by:
\[
2^{2ck\log^{2}((j - 1) k)}2^{c\log^{2}(jk-1)}
	\cdots2^{c\log^{2}((j - 1)k + 1)} \leq 2^{3ck\log^{2}(jk)}.
\]
\end{proof}

\begin{lemma}
\label{lem:recursiveGuarantee}
For any {\bdd} matrix  $\MM^{(0)} = \xi \id + \BB^{(0)} ( \BB^{(0)})^{\dg}$
for some $\xi > 0$ that has $n$ blocks, the algorithm
$\textsc{RecursiveConstruct}(\MM^{(0)})$ returns a Schur Complement Chain
 $(\MM^{(1)},\MM^{(2)},\cdots;F_1,F_2,\cdots)$ whose
corresponding the linear operator $\WW$ satisfies
\[
\WW\approx_{O(1)}\left(\MM^{(0)}\right)^{-1}.
\]
Also, we can evaluate $\WW \bb$ in $O(\log^{2}n\log\log n)$ depth
and $2^{O(k \log^{2}k)}n\log n
$ work for any vector $\bb$.\end{lemma}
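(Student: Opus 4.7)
The plan is to augment the chain returned by $\textsc{RecursiveConstruct}$ with the operators $\ZZ^{(i)}:=\textsc{Jacobi}(\epsilon_i,\blk{\MM^{(i-1)}}{F_i,F_i},\cdot)$ used during evaluation, verify the result is an $\eepsilon$-SCC in the sense of Definition~\ref{def:chain} with summable errors $\epsilon_i=O((i+8)^{-2})$, and then invoke Lemma~\ref{lem:apply_chain} to read off both the approximation quality and the depth/work bounds.

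\textbf{SCC verification.} Lemma~\ref{lem:abddSubset} guarantees each $\blk{\MM^{(i)}}{F_{i+1},F_{i+1}}$ is $4$-\bdd, so Theorem~\ref{thm:jacobi} with $\epsilon_{i+1}=(i+8)^{-2}$ supplies $\ZZ^{(i+1)}$ satisfying the operator inequality of Definition~\ref{def:chain}. Lemma~\ref{lem:approxSchur:extended} gives $\MM^{(i+1)}\approx_{(i+8)^{-2}}\schur{\MM^{(i)}}{F_{i+1}}$. The $\textsc{Sparsify}$ call at iterations $i\equiv 0\pmod k$ perturbs $\MM^{(i)}$ multiplicatively by $(i+8)^{-2}$, which composes with the prior $\textsc{ApproxSchur}$ error into an overall per-level budget $\epsilon_i=O((i+8)^{-2})$. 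Since $\sum_{i}\epsilon_i=O(1)$, Lemma~\ref{lem:apply_chain} yields $\WW^{-1}\approx_{O(1)}\MM^{(0)}$, hence $\WW\approx_{O(1)}(\MM^{(0)})^{-1}$ by Fact~\ref{fact:orderInverse}.

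\textbf{Depth.} Lemma~\ref{lem:recursiveSize} gives chain length $d=O(\log n)$. Applying each sparse $\MM^{(i)}$ has depth $O(\log n)$, and $\ZZ^{(i)}$ via Theorem~\ref{thm:jacobi} has depth $O(\log n\cdot\log(1/\epsilon_i))=O(\log n\log i)$. Summing over the chain, the total depth is $\sum_{i=1}^{d}O(\log n\log i)=O(\log n\cdot\log(d!))=O(\log^2 n\log\log n)$.

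\textbf{Work, and the main obstacle.} The per-iteration cost is $O(m^{(i)}\log i)$ by Theorem~\ref{thm:jacobi}, and Lemma~\ref{lem:recursiveSize} bounds $m^{(i)}\le 2^{O(k\log^2(jk))}\beta^{(j-1)k}n\log n$ for $i$ in phase $j$. The main obstacle is bounding
\[
\sum_{j\ge 1}\ k\cdot 2^{O(k\log^2(jk))}\,\beta^{(j-1)k}\,n\log n\log(jk)
\]
by $2^{O(k\log^2 k)}n\log n$. I expand $\log^2(jk)=\log^2 k+2\log j\log k+\log^2 j$, so that the $j$-dependent part of the log-summand is $2k\log j\log k+k\log^2 j-(j-1)k\log(1/\beta)$. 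Differentiating in $j$ shows the maximizer is at $j^\ast=\Theta(\log k)$, where the $j$-dependent contribution to the exponent is only $O(\log k\log\log k)=O(\log^2 k)$, yielding peak value $2^{O(k\log^2 k)}$. For $j>j^\ast$ the geometric decay $\beta^{(j-1)k}$ outweighs the polylog growth of $\log^2(jk)$, so the tail sums into the peak. The $O(\log k)$ early phases each contribute at most the peak, and together they are absorbed into the $2^{O(k\log^2 k)}$ factor, completing the work bound.
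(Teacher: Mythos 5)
Your proposal is correct and follows essentially the same route as the paper: verify the chain is an $\eepsilon$-SCC with per-level error $O((i+8)^{-2})$, invoke Lemma~\ref{lem:apply_chain} for the approximation guarantee, and sum the Jacobi depth/work over the chain using the size bounds of Lemma~\ref{lem:recursiveSize}. The only difference is that you carry out the work-bound summation explicitly (locating the maximizing phase at $j^\ast=\Theta(\log k)$ and checking the geometric tail), where the paper simply asserts the bound from the geometric decrease of $n^{(i)}$; your added detail is sound and, if anything, makes the step more convincing.
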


\begin{proof}

We first bound the quality of approximation
between $\WW$ and $\left( \MM^{(0)} \right)^{-1}$.
The approximate Schur complement $\MM^{(i+1)}$ was constructed
so that $\MM^{(i+1)}\approx_{(i+8)^{-2}}\schur{\MM^{(i)}}{F_i}$.
The other source of error, \textsc{Sparsify}, is called only for some $i$.
In those iterations, Lemma~\ref{lem:sparsify} guarantee
that $\MM^{(i)}$ changes only by $(i+8)^{-2}$ factor.
This means overall we have
$\MM^{(i+1)}\approx_{2(i+8)^{-2}}\schur{\MM^{(i)}}{F_i}$.
By Lemma \ref{lem:apply_chain}, we have:
\[
\WW\approx_{1/2+4\sum_{i}(i+8)^{-2}}\left(\MM^{(0)}\right)^{-1},
\]
and it can be checked that $\sum_{i}(i+8)^{-2}$ is a constant.

The cost of \textsc{ApplyChain} is dominated by the sequence of calls
to \textsc{Jacobi}.
As each $F_i$ is chosen to be $4$-{\bdd}, the number of iterations
required is $O( \log ( \epsilon_i )) = O( \log i)$.
As matrix-vector multiplications take $O(\log{n})$ depth, the total
depth can be bounded by.
\[
O\left(\sum_{i}\log i\log n^{(i)}\right)=O\left(\log^{2}n\log\log n\right)
\]

The total work of these steps depend on the number of non-zero
blocks, $m^{(i)}$.
Substituting the bounds from Lemma~\ref{lem:recursiveSize} into 
\textsc{Jacobi} gives a total of:
\[
O\left(\sum_{i}2^{3ck\log^{2}(i+k)}n^{(i)}\log n^{(i)}\log i\right)
\leq 2^{O(k\log^{2}k)}n\log n
\]
where the inequality follows from the fact that the
$n^{(i)}$s are geometrically decreasing.
\end{proof}

This allows us to view the additional overhead of \textsc{Sparsify} as a black
box, and analyze the total cost incurred by the non-recursive parts of
\textsc{RecursiveConstruct} during each phase.

\begin{lemma}
\label{lem:recursivePhase}
The total cost of \textsc{RecursiveConstruct} during phase $j$,
including the linear system solves made by $\textsc{Sparsify}$
at iteration $(j - 1) k$ (but not its recursive invocation to \textsc{RecursiveConstruct}) is
\[
2^{O(k\log^{2}(jk))}n^{((j - 1)k)}\log^2 n^{((j - 1)k)}
\]
in work and
\[
O\left(k\log(jk)\log^2 n^{((j - 1)k)}\right)+O\left(\log^{2}n^{((j - 1)k)}\log\log n^{((j - 1)k)}\right)
\]
in depth.
\end{lemma}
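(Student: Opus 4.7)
The plan is to split the cost of phase $j$ into two parts: (a) the single call to \textsc{Sparsify} at iteration $(j-1)k$, counting its non-recursive internal work together with the $O(\log n^{((j-1)k)})$ applications of the pre-built approximate inverse used for Johnson--Lindenstrauss leverage-score estimation; and (b) the $k$ consecutive pairs of (\textsc{bDDSubset}, \textsc{ApproxSchur}) calls at iterations $(j-1)k, \ldots, jk-1$. The three cited lemmas supply plug-in cost bounds for each piece, and combining them with Lemma~\ref{lem:recursiveSize} will yield the claimed totals.

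For part~(b), Lemma~\ref{lem:recursiveSize} gives the uniform bound $m^{(i)} \le 2^{3ck\log^{2}(jk)}\, n^{((j-1)k)}\log n^{((j-1)k)}$ throughout phase $j$. Lemma~\ref{lem:abddSubset} gives \textsc{bDDSubset} work $O(m^{(i)})$ and depth $O(\log n^{(i)})$, while Lemma~\ref{lem:approxSchur:extended} with $\epsilon^{-1} = (i+8)^{2} \le (jk+8)^{2}$ gives \textsc{ApproxSchur} work $m^{(i)}\cdot 2^{O(\log^{2}(jk))}$ and depth $O(\log n^{(i)} \log\log(jk))$. Summing sequentially over the $k$ iterations, the work contribution is $2^{O(k\log^{2}(jk))}\, n^{((j-1)k)}\log n^{((j-1)k)}$ and the depth contribution is $O(k\log n^{((j-1)k)} \log\log(jk))$, the latter trivially dominated by the claimed $O(k\log(jk)\log^{2} n^{((j-1)k)})$.

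For part~(a), Lemma~\ref{lem:sparsify} charges non-recursive work $O(w_{eval}\log n^{((j-1)k)} + m^{((j-1)k)}\log n^{((j-1)k)})$ and depth $O(d_{eval} + \log n^{((j-1)k)})$, where $(w_{eval}, d_{eval})$ are the cost of a single application of the solver $\WW$ produced by the recursive \textsc{RecursiveConstruct} call on the subsampled matrix. That matrix has at most $n^{((j-1)k)}$ block rows, so Lemma~\ref{lem:recursiveGuarantee} gives $w_{eval} = 2^{O(k\log^{2}k)}\, n^{((j-1)k)}\log n^{((j-1)k)}$ and $d_{eval} = O(\log^{2} n^{((j-1)k)}\log\log n^{((j-1)k)})$. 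Bounding the \textsc{Sparsify} input size $m^{((j-1)k)} \le 2^{O(k\log^{2}(jk))}\, n^{((j-1)k)}\log n^{((j-1)k)}$ by Lemma~\ref{lem:recursiveSize}, the resulting contribution is $2^{O(k\log^{2}(jk))}\, n^{((j-1)k)}\log^{2} n^{((j-1)k)}$ work and $O(\log^{2} n^{((j-1)k)}\log\log n^{((j-1)k)})$ depth. Adding (a) and (b) produces the lemma.

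The only genuine subtlety is the bookkeeping of the recursion: the statement explicitly excludes the cost of the recursive invocation of \textsc{RecursiveConstruct} inside \textsc{Sparsify}, yet includes the subsequent $O(\log n^{((j-1)k)})$ applications of the resulting solver used to estimate generalized leverage scores. The recursion therefore enters our accounting solely through the $w_{eval}$ and $d_{eval}$ slots of Lemma~\ref{lem:sparsify}, and since the subsampled matrix is strictly smaller than $\MM^{((j-1)k)}$, those evaluations fit comfortably within the stated bounds.
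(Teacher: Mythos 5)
Your proposal is correct and follows essentially the same route as the paper's proof: the same split into the $k$ per-iteration (\textsc{bDDSubset}, \textsc{ApproxSchur}) costs bounded via Lemma~\ref{lem:recursiveSize}, plus the \textsc{Sparsify} overhead obtained by plugging the $(w_{eval}, d_{eval})$ bounds from Lemma~\ref{lem:recursiveGuarantee} into Lemma~\ref{lem:sparsify}. The minor difference in how you state the per-iteration depth ($O(\log n^{(i)}\log\log(jk))$ versus the paper's $O(\log i\log n^{(i)})$) is immaterial since both are absorbed by the claimed bound.
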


\begin{proof}
Let $m^{(i)}$ and $n^{(i)}$ be the number of non zeros and variables
in $\MM^{(i)}$ before the $\textsc{Sparsify}$ call if there is.
Lemmas~\ref{lem:abddSubset}~and~\ref{lem:approxSchur:extended} show
that the $i^{th}$ iteration takes $O(m^{(i)}+m^{(i+1)})$
work and $O(\log i\log n^{(i)})$ depth. 
By Lemma \ref{lem:recursiveSize}
 the cost during these iterations excluding
the call to \textsc{Sparsify} is:
\begin{eqnarray*}
\sum_{i = (j-1)k}^{jk - 1}
O\left(m^{(i)}+m^{(i+1)}\right) & \leq &
\sum_{i = (j-1)k}^{jk - 1}
2^{O(k\log^{2}i)}n^{(i)}\log n^{(i)}\\
 & \leq & 2^{O(k\log^{2}(jk))}n^{((j - 1)k)}\log n^{((j - 1)k)}
\end{eqnarray*}
work and
\begin{eqnarray*}
\sum_{i = (j-1)k}^{jk - 1}
O\left(\log i\log n^{(i)}\right) & \leq & O(k\log((j -1)k)\log n^{((j - 1)k)})
\end{eqnarray*}
depth.

We now consider the call to \textsc{Sparsify} made during iteration $(j-1)k$.
Access to a fast solver for the sampled {\bdd} matrix is
obtained via recursive calls to $\textsc{RecursiveConstruct}$.
The guarantees of the chain given by Lemma \ref{lem:recursiveGuarantee}
above means each solve takes depth
\[
d_{eval} = \log^{2}n^{((j-1)k)}\log\log n^{((j-1)k)},
\]
and work
\[
w_{eval} = n^{((j-1)k)}\log n^{((j-1)k)}
\]
Incorporating these parameters into Lemma \ref{lem:sparsify} allows
us to bound the overhead from these solves by
\[
2^{O(k \log^{2}(jk)))} n^{((j-1)k)}\log^2 n^{((j-1)k)}
\]
work and
\[
O\left( k\log\left(jk\right)\log n^{(jk)}\right)
+ O\left(\log^{2}n^{((j - 1)k)}\log\log \left( n^{((j - 1)k)} \right) \right)
\]
depth.
\end{proof}

Note that at the end of the $j^{th}$ phase, the time required to construct an extra 
Schur complement chain for the $\textsc{Sparsify}$ call is less than the
remaining cost after the $j^{th}$ phase. This is the reason why we
use $2^{2ck\log^{2}(i + 1)}$ as the reduction factor for the $\textsc{Sparsify}$
call. The following theorem takes account for the recursive call and
show the total running time for the algorithm. 

\begin{lemma}
\label{lem:recursiveCost}
With high probability, $\textsc{RecursiveConstruct}\left(\MM^{(0)}\right)$
takes $2^{O(\log n/k)}$ depth and $m\log n+2^{O(k\log^{2}k)}n\log^2 n$
work.
\end{lemma}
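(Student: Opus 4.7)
The plan is to set up and solve recurrences for the work $W(n,m)$ and depth $D(n)$ of $\textsc{RecursiveConstruct}$, using Lemma~\ref{lem:recursivePhase} for per-phase non-recursive costs, Lemma~\ref{lem:recursiveGuarantee} for the cost of applying the chain built inside each $\textsc{Sparsify}$ call, and Lemma~\ref{lem:recursiveSize} for the vertex and block-nonzero counts along the chain. By Lemma~\ref{lem:recursiveSize}(1), the vertex counts shrink geometrically by factor $\beta^k$ per phase, so there are $J = O(\log n / k)$ phases total; write $n_j := n^{((j-1)k)}$, so $n_j \leq \beta^{(j-1)k} n$. At phase $j$ the $\textsc{Sparsify}$ call uses reduction factor $K_j = 2^{2ck\log^{2}((j-1)k+1)}$, subsamples to a matrix with $m_j/K_j$ nonzero blocks, and recurses on it. By Lemma~\ref{lem:recursiveSize}(2) combined with the guarantee of Lemma~\ref{lem:sparsify}, every matrix that becomes the argument of a recursive invocation has $O(n_j \log n_j)$ nonzero blocks, so the only $m$-dependence ever enters through the initial leverage-score pass at the root of the recursion.

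For the depth recurrence, Lemma~\ref{lem:recursivePhase} bounds the non-recursive depth of phase $j$ by $O(k\log(jk)\log n_j + \log^{2} n_j \log\log n_j)$; summed over the $J$ phases this contributes only polylog in $n$. Each phase also adds a recursive-call term $D(n_j)$, so
\[
D(n) \leq O\!\left(\log^{2} n \log\log n \cdot \log n / k\right) + \sum_{j=1}^{J} D\!\left(\beta^{(j-1)k} n\right).
\]
The ansatz $D(n) = \gamma^{\log n / k}$ with $\gamma$ a large enough constant depending on $\beta$ satisfies this recurrence: the sum on the right geometrically decays by $\gamma^{-|\log\beta|}$ per term and is therefore at most a constant times $\gamma^{\log n / k}$, while the additive polylog term is dominated since $2^{\log n / k} \ge \log n / k$ once $k \leq \log n$. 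This yields $D(n) = 2^{O(\log n / k)}$.

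For the work recurrence, Lemma~\ref{lem:recursivePhase} gives non-recursive work $2^{O(k\log^{2}(jk))} n_j \log^{2} n_j$ at phase $j$; the recursive call in phase $j$ runs on a matrix with $n_j$ block rows and $O(n_j \log n_j)$ nonzero blocks, contributing at most $2^{O(k\log^{2}k)} n_j \log^{2} n_j$ work by the inductive hypothesis (no $m$-dependent term, because the subsampled matrix is already sparse). Summing both over $j$ and adding the $O(m\log n)$ one-time cost that the very first $\textsc{Sparsify}$ invocation pays to touch all $m$ blocks during leverage-score estimation gives the stated bound $m\log n + 2^{O(k\log^{2}k)} n \log^{2} n$. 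The high-probability guarantee is inherited from the high-probability correctness of $\textsc{Sparsify}$ (Lemma~\ref{lem:sparsify}) applied a $\mathrm{polylog}(n)$ number of times across the whole recursion tree, via a union bound.

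The main obstacle is the collapse of the summation
\[
\sum_{j=1}^{J} 2^{O(k\log^{2}(jk))} n_j \log^{2} n_j \;=\; O\!\left(2^{O(k\log^{2}k)} n \log^{2} n\right),
\]
where the super-polynomial prefactor $2^{O(k\log^{2} j)}$ that grows with $j$ must be dominated by the geometric decay $\beta^{(j-1)k}$ of $n_j$; writing $\log^{2}(jk) = O(\log^{2}k + \log^{2}j)$ and using $\log^{2} j = o(j)$ shows that the $j=1$ term is the dominant one. Once this calculation is in hand, the rest of the argument is a clean combination of the three cited lemmas with the recursion unrolling above.
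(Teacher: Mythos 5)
Your overall strategy---per-phase costs from Lemma~\ref{lem:recursivePhase}, the geometric decay of $n^{(jk)}$ absorbing the $2^{O(k\log^2(jk))}$ density growth, and a union bound over the $\textsc{Sparsify}$ calls for the high-probability claim---is the same as the paper's. But there is a genuine gap in how you account for the recursion. $\textsc{Sparsify}$ reduces the number of nonzero \emph{blocks}, not the number of \emph{vertices}: the recursive invocation spawned at iteration $(j-1)k$ is on a matrix with exactly $n^{((j-1)k)}$ block rows. For $j=1$ this is $n$ itself, so your depth recurrence $D(n)\le \cdots+\sum_{j=1}^{J}D(\beta^{(j-1)k}n)$ contains $D(n)$ as its first summand, and likewise your work recurrence contains the full term $2^{O(k\log^2 k)}n\log^2 n$ (i.e.\ the claimed bound for the whole problem) as the $j=1$ contribution of its right-hand side. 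Neither recurrence is well-founded in $n$ alone, and your verification of the ansatz $D(n)=\gamma^{\log n/k}$ does not close: you only derive that the right-hand side is ``a constant times $\gamma^{\log n/k}$,'' and that constant exceeds $1$ precisely because of the $j=1$ term. Relatedly, the first $\textsc{Sparsify}$ call is not a one-time $O(m\log n)$ pass: it spawns an entire recursive chain construction on a dimension-$n$ (merely sparser) matrix, which in turn spawns another, and so on.

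The paper closes this with two devices your proposal lacks. For dense inputs it writes a recurrence in $m$, $W_{dense}(m)\le W_{dense}(m/2)+m\log n+2^{O(k\log^2k)}n\log^2 n$, which is well-founded because each nested $\textsc{Sparsify}$ strictly reduces the block count; its geometric sum is where the $O(m\log n)$ term actually comes from. In the sparse regime it does not set up a recurrence in $n$ at all: it argues that the chain built by the recursive call spawned at phase $j$ costs no more than the remainder of the parent's construction from phase $j$ onward, so the entire recursion tree can be charged by weighting the root's phase-$j$ cost by a branching factor $2^{j}$, and then bounds $\sum_j 2^j\cdot 2^{O(k\log^2(jk))}n^{(jk)}\log^2 n^{(jk)}$ (and the analogous depth sum) directly, using $n^{(jk)}\le\beta^{jk}n$ to absorb both the $2^j$ multiplicity and the density blow-up. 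Your summability computation for that last step is correct, but without the branching-factor accounting---or an equivalent two-parameter induction on $(n,m)$---the argument as written does not go through.
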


\begin{proof}

%%First consider the case $\MM^{(0)}$
%%has only $2^{cr}n\log n$ non-zeros. In that case, the algorithm skips 
%%step 1 because the matrix is already sparse.

Lemma \ref{lem:recursiveSize} shows that the call to \textsc{Sparsify}
at the start of each phase $j$ requires the construction of an extra
Schur complement chain on a matrix with $n^{((j - 1)k)}$ row/column blocks
and at most $2^{ck\log^{2}((j - 1)k)}n^{((j - 2)k)}\log n^{((j - 2)k)}$ non-zeros blocks.
The guarantees of Lemma~\ref{lem:sparsify} gives that the number of
non-zero block in the sparsified matrix is at most
\[
C 2^{2ck\log^{2}((j - 1)k)}n^{((j - 1)k)}\log n^{((j - 1)k)}
\]
for some absolute constant $C$.
Therefore the cost of this additional call is less than the cost
of constructing the rest of the phases during the construction process.
Therefore, every recursive call except the first one
can be viewed as an extra branching factor of $2$ at each subsequent phase.

Depth can be bounded by the total number of recursive invocations
to \textsc{RecursiveConstruct}.
The fact that $n^{(i)}$ is geometrically decreasing means
there are $O(\log n/k)$ phases.
Choosing $k$ so that $k\log^{2}k=o(\log n)$ gives a depth of:
\begin{eqnarray*}
 &  & \sum_{j=1}^{O(\log n/k)}2^{j}\left(k\log(jk)\log n^{(jk)}+\log^{2}n^{(jk)}\log\log n^{(jk)}\right)\\
 & = & 2^{O(\log n/k)}O\left(k\log\log n\log n^{(last)}+\log^{2}n^{(last)}\log\log n^{(last)}\right)\\
 & = & 2^{O(\log n/k)}O\left(k^{2}\log\log n+k^{2}\log k\right)\\
 & = & 2^{O(\log n/k)}k^{2}\log\log(n)\\
 & = & 2^{O(\log n/k)}.
\end{eqnarray*}

For bounding work, we start with the sparse case
since all intermediate matrices generated during the
construction process have density at most $2^{O(k \log^2{k})}$.
In this case, the extra branching factor of $2$ at each phase
can be accounted for by weighting the cost of iteration $j$ by $2^{j}$, giving:
\begin{eqnarray*}
 &  & \sum_{j=1}^{O(\log n/k)}2^{j}
 	\left(2^{O(k \log^{2}(jk))}n^{(jk)}\log^2 n^{(jk)}\right)\\
 & = & 2^{O(k\log^{2}k)}n\log^2 n.
\end{eqnarray*}

For the dense case, the first recursive call to $\textsc{Sparsify}(\MM^{(0)}, O(1), 2^{2ck})$
is made to a graph whose edge count is much less.
This leads to a geometric series, and an overhead of $O(m \log{n})$ work at
each step.
As this can happen at most $O(\log{n})$ times, it gives an additional factor
of $O(\log{n})$ in depth, which is still $2^{O(\log{n} / k)}$.
%%\begin{eqnarray*}
%% &  & m\log n+2^{O(k\log^{2}k)}n\log^2 n\log_{2^{k}}\left(\frac{m}{n\log n}\right)\\
%% & = & m\log n+2^{O(k\log^{2}k)}n\log^2 n\log\left(\frac{m}{n\log n}\right).
%%\end{eqnarray*}
%%Note that the first term dominate if $\frac{m}{n}\geq2^{O(k\log^{2}k)}$
%%and hence we can simplify the term to
The work obeys the recurrence:
\[
W(m)_{dense} \leq
\begin{cases}
 2^{O(k\log^{2}k)}n \log^2{n} & \qquad \text{if } m \leq  2^{O(k\log^{2}k)}n\log{n}, \text{ and}\\
W_{dense}\left(m / 2\right)  + m \log{n}  +  2^{O(k\log^{2}k)}n \log^2{n}& \qquad \text{otherwise}.
\end{cases}
\]
which solves to:
\[
W_{dense}(m) \leq O(m\log n)+2^{O(k\log^{2}k)}n\log^2 n.
\]
\end{proof}

To obtain Theorem~\ref{thm:recursive}, we simply choose an appropriate initial
padding and set the parameter $k$.

\begin{proof}(of Theorem~\ref{thm:recursive})
Lemma~\ref{clm:LZ3L_new} allows us to solve the linear system
$\MM + \epsilon \mu \id$ instead.
Invoking Lemma~\ref{lem:recursiveCost} with $k=\log\log\log n$
gives a depth of $2^{O( \log{n} / \log\log\log{n} )}$,
and work of $m \log{n} + 2^{O(\log\log\log{n} \log\log\log\log^4{n})}n \log^2{n}$.
The depth term can be simplified to $n^{o(1)}$ while the work term
can be simplified to $2^{O(\log\log\log^2{n})} = \log^{o(1)}n$.
\end{proof}
%giving $O( k \log^2{k} ) \leq \exp\left( \log\log\log{n} \log \log\log\log{n} \right)
%\leq \log\log^{O(1)}n = \log^{o(1)}n$.

\section{Weighted Expander Constructions}
\label{sec:weightedExp}

\def\edg#1{\pmb{\boldsymbol{(}} #1 \pmb{\boldsymbol{)}}_2}
\def\edgu#1{\pmb{\boldsymbol{(}} #1 \pmb{\boldsymbol{)}}}

In this section, we give a linear time algorithm for computing linear
  sized spectral sparsifiers of complete and bipartite
  product demand graphs.
These routines give Lemmas~\ref{lem:weightedBipartiteExpander}~and~\ref{lem:weightedExpander},
which are crucial for controlling the densities of intermediate matrices
in the spectral vertex sparsification algorithm from Section~\ref{sec:vertexSparsify}.
Recall that the \textit{product demand graph} with vertex set $V$ and demands $\dd : V \rightarrow \mathbb{R}_{> 0}$
  is the complete graph
  in which the weight of edge $(u,v)$ is the product $d_{u} d_{v}$.
Similarly, the \textit{bipartite demand graph} with vertex set $U \union V$
  and demands $\dd : U \union V \rightarrow \mathbb{R}_{> 0}$ is the
  complete bipartite graph on which the weight of the edge $(u,v)$ is the product $d_{u} d_{v}$.
Our routines are based on reductions to the unweighted, uniform case.
In particular, we
\begin{itemize}
\item [1.] Split all of the high demand vertices into many vertices that all have the same demand.
  This demand will still be the highest.

\item [2.] Given a graph in which almost all of the vertices have the same highest demand,
  we \begin{itemize}
\item [a.] drop all of the edges between vertices of lower demand,
\item [b.] replace the complete graph between the vertices of highest demand with an expander, and
\item [c.] replace the bipartite graph between the high and low demand vertices with
  a union of stars.
\end{itemize}
\item [3.] To finish, we merge back together the vertices that split off from each original vertex.
\end{itemize}

We start by showing how to construct the expanders that we need for step (2b).
We state formally and analyze the rest of the algorithm for the
  complete case in the following two sections.
We explain how to handle the bipartite case in Section \ref{subsec:bipartite}.

Expanders give good approximations to unweighted complete graphs,
  and our constructions will use the spectrally best expanders---Ramanujan graphs.
These are defined in terms of the eigenvalues of their adjacency matrices.
We recall that the adjacency matrix of every $d$-regular graph has eigenvalue $d$
  with multiplicity $1$ corresponding to the constant eigenvector.
If the graph is bipartite, then it also has an eigenvalue of $-d$ corresponding
  to an eigenvector that takes value $1$ on one side of the bipartition and $-1$
  on the other side.
These are called the \textit{trivial} eigenvalues. 
A $d$-regular graph is called a Ramanujan graph if all of its non-trivial eigenvalues
  have absolute value at most $2 \sqrt{d-1}$.
Ramanujan graphs were constructed independently by Margulis~\cite{Margulis88}
  and Lubotzky, Phillips, and Sarnak~\cite{LPS}.
The following theorem and proposition summarizes part of their results.

\begin{theorem}\label{thm:LPS}
Let $p$ and $q$ be unequal primes congruent to $1$ modulo 4.
If $p$ is a quadratic residue modulo $q$, then there is a non-bipartite
  Ramanujan graph of degree $p+1$ with $q^{2} (q-1)/2$ vertices.
If $p$ is not a quadratic residue modulo $q$, then there is a bipartite
  Ramanujan graph of degree $p+1$ with $q^{2} (q-1)$ vertices.
\end{theorem}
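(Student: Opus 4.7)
The plan is to recognize that Theorem~\ref{thm:LPS} is a classical result whose proof we would not reproduce in full, so the proposal is really a sketch of the Lubotzky--Phillips--Sarnak construction together with the arithmetic input that controls the spectrum. First, I would fix the two primes $p$ and $q$ with $p,q \equiv 1 \pmod 4$ and work with the Hamilton quaternion order over $\mathbb{Z}$. By a classical theorem of Jacobi, the equation $a_0^2 + a_1^2 + a_2^2 + a_3^2 = p$ has exactly $8(p+1)$ integer solutions, and by collecting one representative per unit orbit one obtains a set $S$ of $p+1$ quaternions of norm $p$ closed under conjugation.

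The key construction step is to reduce these quaternions modulo $q$. Passing through the isomorphism between the quaternions modulo $q$ (with a nontrivial norm) and $M_2(\mathbb{F}_q)$, each element of $S$ projects to an element of $PGL_2(\mathbb{F}_q)$; I would then define $X^{p,q}$ as the Cayley graph of either $PSL_2(\mathbb{F}_q)$ or $PGL_2(\mathbb{F}_q)$ with respect to the image of $S$. The dichotomy in the statement corresponds to whether the elements of $S$ have square or non-square determinant in $\mathbb{F}_q^{*}$, which is controlled precisely by the Legendre symbol $(p/q)$ via quadratic reciprocity. In the square case one lands inside $PSL_2(\mathbb{F}_q)$ of order $q(q^2-1)/2$ and the graph is non-bipartite; in the non-square case one gets the bipartite double cover sitting inside $PGL_2(\mathbb{F}_q)$ of order $q(q^2-1)$. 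I would carefully match these vertex counts to the $q^2(q-1)/2$ and $q^2(q-1)$ claimed in the statement, noting that this requires either a renormalization or acknowledging that the statement is quoted from a summary and the exact LPS count differs by the factor $(q+1)/q$.

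The heart of the proof, and the step I would simply cite rather than reconstruct, is the eigenvalue bound. One identifies $L^2$ of the Cayley graph with a space of automorphic functions on a quaternion algebra, so that nontrivial eigenvalues of the adjacency operator correspond to Hecke eigenvalues of weight-$2$ automorphic forms at the prime $p$. The Ramanujan--Petersson conjecture for such forms, proved by Eichler and Deligne via the Weil conjectures for modular curves, yields that each such Hecke eigenvalue has absolute value at most $2\sqrt{p}$, which translates into the bound $2\sqrt{(p+1)-1} = 2\sqrt{p}$ on all nontrivial eigenvalues of the $p+1$ regular graph. This Deligne input is the main obstacle: without it one would only obtain the weaker Selberg-type bound $\sqrt{3(p+1)/2}$ or similar, insufficient for the Ramanujan conclusion. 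In practice I would simply invoke Theorem 4.3.1 of Lubotzky's monograph \emph{Discrete Groups, Expanding Graphs and Invariant Measures}, or the original Margulis paper, as a black box, since the arithmetic geometry machinery behind Deligne's theorem is far outside the scope of this work.
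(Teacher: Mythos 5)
The paper does not actually prove Theorem~\ref{thm:LPS} -- it is stated as a summary of the classical constructions of Margulis and Lubotzky--Phillips--Sarnak, with a citation and no proof. Your proposal correctly recognizes this and gives an accurate sketch of the standard LPS argument: Jacobi's four-square count, reduction modulo $q$ through the split quaternion algebra to $PGL_2(\mathbb{F}_q)$, the $(p/q)$ dichotomy determining whether the image generates $PSL_2$ (non-bipartite) or all of $PGL_2$ (bipartite), and the Ramanujan bound via Eichler--Deligne. That matches both the literature and the description the paper itself gives a few lines later in Proposition~\ref{pro:LPS}. You also correctly caught that the vertex counts as written in the statement are off: $|PSL_2(\mathbb{F}_q)| = q(q^2-1)/2$ and $|PGL_2(\mathbb{F}_q)| = q(q^2-1)$, so the paper's $q^2(q-1)/2$ and $q^2(q-1)$ appear to be a transcription slip (writing $q^2(q-1)$ where $q(q^2-1)$ is meant). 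This discrepancy is harmless for the way the theorem is used downstream (Lemma~\ref{lem:explicitExpanders} only needs some $n' \in [n, 8n]$), but it is a real, if minor, error in the stated theorem, and your flagging of it is correct.
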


The construction is  explicit.

\begin{proposition}\label{pro:LPS}
If $p < q$, then
  the graph guaranteed to exist by Theorem~\ref{thm:LPS} can be constructed in
  parallel depth $O (\log n)$ and work $O (n)$, where $n$ is its number of vertices.
\end{proposition}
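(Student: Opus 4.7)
The plan is to give the construction that realizes Theorem~\ref{thm:LPS} explicitly and then verify that each step is embarrassingly parallel. Following Lubotzky--Phillips--Sarnak, the vertex set is taken to be the group $PGL(2, \mathbb{Z}/q\mathbb{Z})$ in the bipartite case, or $PSL(2, \mathbb{Z}/q\mathbb{Z})$ in the non-bipartite case; these groups have orders $q^{2}(q-1)$ and $q^{2}(q-1)/2$ respectively, matching the vertex counts. Elements are represented by $2 \times 2$ matrices mod $q$ (with entries taken in canonical form to factor out scalars), and the complete list of elements can be enumerated in $O(\log n)$ depth and $O(n)$ work by a straightforward parallel scan over residues, since arithmetic mod $q$ uses $O(\log q) = O(\log n)$ bit operations.

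Next, I would construct the $p+1$ generators. By Jacobi's four-square theorem and the hypothesis $p \equiv 1 \pmod 4$, there are exactly $p+1$ integer solutions $(a_0, a_1, a_2, a_3)$ to $a_0^2 + a_1^2 + a_2^2 + a_3^2 = p$ with $a_0 > 0$ odd and $a_1, a_2, a_3$ even. Each such solution gives a generator matrix in $PGL(2,\mathbb{Z}/q\mathbb{Z})$ of the form
\[
  g = \begin{pmatrix} a_0 + i a_1 & a_2 + i a_3 \\ -a_2 + i a_3 & a_0 - i a_1 \end{pmatrix} \pmod q,
\]
where $i$ is a fixed square root of $-1$ mod $q$ (which exists because $q \equiv 1 \pmod 4$, and can be found by trying $O(\log q)$ random elements). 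Since $p < q \le n^{1/3}$, finding all four-square representations by brute-force enumeration over $a_0, a_1, a_2$ with $a_3$ determined takes $O(p^{3/2}) = O(n^{1/2})$ work, which is subsumed by $O(n)$; computing $i \bmod q$ takes $O(\log n)$ depth and $O(n^{o(1)})$ work. This entire preprocessing phase fits in the budget.

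Finally, for each vertex $v$ and each generator $g$, I would compute the neighbor $g \cdot v$ by a single $2 \times 2$ matrix multiplication mod $q$, writing the result into the adjacency list. Each such multiplication requires $O(1)$ arithmetic operations mod $q$, and all $n(p+1)$ of them can be done in parallel. Treating the degree $p+1$ as a constant (which is the regime in which this lemma will be invoked in Section~\ref{sec:weightedExp}), this produces the full adjacency list in $O(\log n)$ depth and $O(n)$ work.

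The main thing to get right is the bipartite-vs-non-bipartite distinction: which group to take, and in which case the graph ends up bipartite. This is controlled by whether $p$ is a quadratic residue mod $q$, which can be decided in $O(\log q)$ depth by a single Legendre symbol computation; once the group and generators are fixed, the adjacency construction is identical in both cases, so this branch only affects a small preprocessing test rather than the structure of the parallel computation.
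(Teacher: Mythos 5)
Your proposal is correct and follows essentially the same route as the paper's sketch: construct the LPS Cayley graph on $PSL(2,\mathbb{Z}/q\mathbb{Z})$ or $PGL(2,\mathbb{Z}/q\mathbb{Z})$, enumerate the $p+1$ generators from the four-square representations of $p$ (your brute-force $O(p^{3/2})$ enumeration versus the paper's $O(p^2)$ meet-in-the-middle, both well within the budget since $p < q = O(n^{1/3})$), obtain a square root of $-1$ modulo $q$, and emit all adjacency lists by independent $2\times 2$ matrix multiplications in parallel. The only differences are implementation-level (random trials for $\sqrt{-1}$ instead of reading it off a precomputed multiplication table, and direct modular arithmetic instead of a $q\times q$ table), neither of which changes the argument.
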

\begin{proof}[Sketch of proof.]
When $p$ is a quadratic residue modulo $q$, the graph is a Cayley graph of
  $PSL (2,Z/qZ)$.
In the other case, it is a Cayley graph of $PGL (2,Z/qZ)$.
In both cases, the generators are determined by the $p+1$ solutions
  to the equation $p = a_{0}^{2} + a_{1}^{2} + a_{2}^{2} + a_{3}^{2}$
  where $a_{0} > 0$ is odd and $a_{1}, a_{2}$, and $a_{3}$ are even.
Clearly, all of the numbers $a_{0}$, $a_{1}$, $a_{2}$ and $a_{3}$
  must be at most $\sqrt{p}$.
So, we can compute a list of all sums $a_{0}^{2} + a_{1}^{2}$
  and all of the sums $a_{2}^{2} + a_{3}^{2}$
  with work $O (p)$, and thus a list of all $p+1$
  solutions with work $O (p^{2}) < O (n)$.

As the construction requires arithmetic modulo $q$, it is convenient
  to compute the entire multiplication table modulo $q$.
This takes time $O (q^{2}) < O (n)$.
The construction also requires the computation of a square root of $-1$
  modulo $q$, which may be computed from the multiplication table.
Given this data, the list of edges attached to each vertex of the graph
  may be produced using linear work and logarithmic depth.
\end{proof}

For our purposes, there are three obstacles to using these graphs:
\begin{itemize}
\item [1.] They do not come in every degree.
\item [2.] They do not come in every number of vertices.
\item [3.] Some are bipartite and some are not.
\end{itemize}
We handle the first two issues by observing that the primes
  congruent to 1 modulo 4 are sufficiently dense.
To address the third issue, we give a procedure to convert a non-bipartite expander into a bipartite expander, and \textit{vice versa}.

An upper bound on the gaps between consecutive primes congruent to 1 modulo 4 can
  be obtained from the following theorem of Tchudakoff.

\begin{theorem}[\cite{Tchudakoff}]
For two integers $a$ and $b$, let
 $p_{i}$ be the $i$th prime congruent to $a$ modulo $b$.
For every $\epsilon > 0$,
\[
p_{i+1} - p_{i} \leq O (p_{i}^{3/4 + \epsilon }).
\]
\end{theorem}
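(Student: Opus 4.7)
The plan is to adapt the classical density-method approach to short-interval prime gaps (in the spirit of the Ingham-type bound $p_{n+1}-p_n \ll p_n^{5/8+\epsilon}$ for ordinary primes) to the arithmetic-progression setting. It suffices to show that, for fixed residue $a$ coprime to $b$, the weighted prime count $\psi(x+y;b,a)-\psi(x;b,a)$ is strictly positive (indeed, of the expected order $y/\phi(b)$) whenever $y \geq x^{3/4+\epsilon}$ and $x$ is sufficiently large; the case $\gcd(a,b)>1$ is trivial because then only finitely many primes lie in the progression.

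First I would apply character orthogonality to write
\[
\psi(x;b,a) \;=\; \frac{1}{\phi(b)} \sum_{\chi \bmod b} \bar\chi(a)\, \psi(x,\chi), \qquad \psi(x,\chi) \;=\; \sum_{n \leq x} \chi(n)\Lambda(n),
\]
so that the problem reduces to estimating each short-interval difference $\psi(x+y,\chi)-\psi(x,\chi)$ separately. The principal character contributes the main term $y/\phi(b) + o(y)$, which follows from a Hoheisel-type prime number theorem in short intervals for $\zeta$. For non-principal $\chi$, Perron's formula plus a truncated explicit formula express the difference as a weighted sum over the nontrivial zeros $\rho$ of $L(s,\chi)$ of shape roughly $\sum_\rho ((x+y)^\rho - x^\rho)/\rho$, truncated at some height $T$.

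The main obstacle, and the heart of the argument, is controlling these zero-sums uniformly in $\chi$ modulo $b$. The standard route combines two ingredients: (i) a classical de la Vallée Poussin zero-free region $\sigma > 1 - c/\log(b(|t|+3))$ valid for all Dirichlet $L(s,\chi)$ with $\chi$ modulo $b$ (with $b$ fixed, the possible Siegel exceptional zero is absorbed ineffectively into the implied constants), and (ii) a zero-density estimate of the form $N(\sigma,T;\chi) \ll T^{A(1-\sigma)}(\log T)^{B}$ with $A \le 4$, obtained by the Vinogradov--Tchudakoff fourth-moment and mollifier techniques for Dirichlet $L$-functions. Combining (i) and (ii) through the usual Hoheisel-style optimization in $\sigma$ bounds the total contribution of the zeros by $o(y)$ as long as $y \geq x^{1-1/A+\epsilon}$, which gives precisely the exponent $3/4+\epsilon$ when $A = 4$.

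The hard part is therefore entirely the zero-density estimate with exponent $A \le 4$ uniform in $\chi \bmod b$; everything else—the character decomposition, Perron's formula, the truncated explicit formula, and the final Hoheisel balancing—is routine analytic machinery that transfers from the single-character $\zeta$-case with only cosmetic modifications. I would expect that for the purposes of this paper the theorem is invoked as a black box (as it is explicitly cited to~\cite{Tchudakoff}), and the actual use will be the corollary that one can locate a prime $q \equiv 1 \pmod 4$ within any window $[N, N + N^{3/4+\epsilon}]$, which is what is needed in Section~\ref{sec:weightedExp} to hit the required Ramanujan-graph parameters.
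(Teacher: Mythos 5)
The paper does not prove this theorem at all — it cites it directly to the Tchudakoff reference and uses it only through Corollary~\ref{cor:tchudakoff} (existence of a prime $\equiv 1 \pmod 4$ in $[n,2n]$ for $n$ large), exactly as you anticipate in your closing paragraph. There is therefore no in-paper argument to compare against. Your sketch of the standard route (orthogonality of characters to reduce to $\psi(x,\chi)$, Perron and the truncated explicit formula, a de la Vall\'ee Poussin zero-free region with the Siegel zero absorbed ineffectively for fixed $b$, a zero-density estimate $N(\sigma,T;\chi)\ll T^{A(1-\sigma)}(\log T)^{B}$ with $A\le 4$, and the Hoheisel balancing giving $y\ge x^{1-1/A+\epsilon}$) is the correct high-level description of the classical method, and you are right that the genuine content is the zero-density bound with $A\le 4$ uniform over the finitely many characters modulo the fixed $b$; the rest is routine. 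Since the paper treats the result as a black box and you explicitly note the same, the review has no substantive disagreement to record.
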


\begin{corollary}\label{cor:tchudakoff}
There exists an $n_{0}$ so that for all $n \geq  n_{0}$
  there is a prime congruent to 1 modulo 4 between $n$ and $2 n$.
\end{corollary}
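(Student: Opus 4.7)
The plan is to deduce the corollary directly from the cited theorem of Tchudakoff by specializing $a = 1$ and $b = 4$ and by choosing $\epsilon$ strictly less than $1/4$. With that choice, the bound on consecutive gaps becomes $p_{i+1} - p_i \leq C p_i^{3/4 + \epsilon}$ for an absolute constant $C$ and for all sufficiently large $i$. The quantity $p_i^{3/4 + \epsilon}$ is $o(p_i)$, so for all large $p_i$ we will have $p_{i+1} \leq 2 p_i$.

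More concretely, fix $\epsilon = 1/8$, so that Tchudakoff gives $p_{i+1} - p_i \leq C p_i^{7/8}$ for all $i \geq i_0$. Let $n_0$ be large enough that $p_{i_0} \leq n_0$ and also that $C x^{7/8} \leq x$ for every $x \geq n_0$. Given $n \geq n_0$, let $p_i$ be the largest prime congruent to $1 \pmod 4$ with $p_i \leq n$; if no such prime exists below $n$ then enlarging $n_0$ handles the finitely many small cases. Then $p_{i+1} \leq p_i + C p_i^{7/8} \leq p_i + p_i \leq 2 p_i \leq 2 n$, while $p_{i+1} > p_i$ and, by the extremal choice of $p_i$, also $p_{i+1} > n$. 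Hence $p_{i+1}$ is a prime congruent to $1 \pmod 4$ lying in the interval $(n, 2n]$.

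No step looks like a genuine obstacle here; this is a standard "gaps give intervals" reduction. The only mild care needed is to absorb a small finite set of exceptional $n$ into the constant $n_0$ (both because Tchudakoff's bound is only asymptotic, and because we need at least one prime congruent to $1 \bmod 4$ to already exist below $n$, which is immediate for $n \geq 5$). So the proof will be a short paragraph that records the choice $\epsilon = 1/8$, invokes the theorem, and performs the one-line inequality $p_{i+1} \leq p_i + Cp_i^{7/8} \leq 2n$.
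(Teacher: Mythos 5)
Your proof is correct, and it matches the route the paper implicitly takes: the paper states Corollary~\ref{cor:tchudakoff} with no written proof, treating it as the immediate consequence of Tchudakoff's gap bound that you spell out (fix $\epsilon<1/4$, take the largest $p_i\le n$, and observe $p_{i+1}\le p_i + Cp_i^{3/4+\epsilon}\le 2p_i\le 2n$ while $p_{i+1}>n$ by maximality). The only cosmetic remark is that your argument places the prime in $(n,2n]$ rather than the open interval, but the paper's use of the corollary is insensitive to the endpoint, so this is immaterial.
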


We now explain how we convert between bipartite and non-bipartite expander graphs.
To convert a non-bipartite expander into a bipartite expander, we take its double-cover.
We recall that if $G = (V,E)$ is a graph with adjacency matrix $\AA$, then its double-cover
  is the graph with adjacency matrix
\[
  \begin{pmatrix}
0 & \AA \\
\AA^{T} & 0
\end{pmatrix}.
\]
It is immediate from this construction that the eigenvalues of the adjacency matrix
  of the double-cover
  are the union of the eigenvalues of $\AA$ with the eigenvalues of $-\AA$.
\begin{proposition}\label{pro:doubleCover}
Let $G$ be a connected, $d$-regular graph in which all matrix eigenvalues
  other than $d$ are bounded in absolute value by $\lambda$.
Then, all non-trivial adjacency matrix eigenvalues of the double-cover of $G$
  are also bounded in absolute value by $\lambda$.
\end{proposition}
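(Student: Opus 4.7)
The proposal is to exploit directly the eigenvalue identity already noted in the excerpt: if $\AA$ is the adjacency matrix of $G$, then the adjacency matrix of the double cover is $\begin{pmatrix} 0 & \AA \\ \AA^{T} & 0 \end{pmatrix}$, whose spectrum (with multiplicity) is the union of the spectrum of $\AA$ and the spectrum of $-\AA$. Since $G$ is $d$-regular, $\AA^{T}=\AA$, so the spectrum of $-\AA$ is just the negatives of the eigenvalues of $\AA$.

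From the hypothesis, the eigenvalues of $\AA$ consist of $d$ with multiplicity one and other eigenvalues $\mu_{1},\dots,\mu_{n-1}$ with $|\mu_{i}|\le\lambda$. Therefore the spectrum of the double cover is $\{d,-d\}$ together with pairs $\{\mu_{i},-\mu_{i}\}$, each of which has absolute value at most $\lambda$. The block off-diagonal structure makes the double cover bipartite, so its trivial eigenvalues are exactly $\pm d$. Every remaining (i.e.\ non-trivial) eigenvalue has absolute value at most $\lambda$, which is the claim.

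The only small subtlety worth checking is that $-d$ appears with multiplicity exactly one in the double cover, so that we are justified in classifying both $d$ and $-d$ as trivial and subtracting both from the count. This follows because $-d$ is an eigenvalue of $\AA$ only if $G$ is bipartite; if $G$ is bipartite the inequality $|-d|\le\lambda$ is already implied by the hypothesis and the statement is vacuous, while if $G$ is non-bipartite then $-d$ arises in the double cover only from the spectrum of $-\AA$, coming from the single eigenvalue $d$ of $\AA$. In either case there are no ``extra'' copies of $\pm d$ to worry about, and no computation beyond the spectral identity for the $2\times 2$ block matrix is needed.
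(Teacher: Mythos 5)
Your proof is correct and takes essentially the same approach as the paper, which simply notes that the spectrum of the double cover is the union of the spectra of $\AA$ and $-\AA$ and lets the proposition follow immediately. The multiplicity check for $\pm d$ that you add is a reasonable piece of due diligence the paper leaves implicit.
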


To convert a bipartite expander into a non-bipartite expander, we will simply
  collapse the two vertex sets onto one another.
If $G = (U \union V, E)$ is a bipartite graph, 
  we specify how the vertices of $V$ are mapped onto $U$ by a permutation $\pi : V \rightarrow U$.
We then define the \textit{collapse} of $G$ induced by $\pi$ 
  to be the graph with vertex set $U$ 
  and edge set
\[
  \setof{  (u, \pi (v)) : (u,v) \in E }.
\]
Note that the collapse will have self-loops at vertices $u$ for which $(u,v) \in E$
  and $u = \pi (v)$.
We assign a weight of $2$ to every self loop.
When a double-edge would be created, that is when $(\pi (v), \pi^{-1} (u))$ is also an edge in the graph,
  we give the edge a weight of $2$.
Thus, the collapse can be a weighted graph.

\begin{proposition}\label{pro:collapse}
Let $G$ be a $d$-regular bipartite graph with all non-trivial adjacency matrix eigenvalues
  bounded by $\lambda$, and let $H$ be a collapse of $G$.
Then, every vertex in $H$ has weighted degree $2d$ 
   and all adjacency matrix eigenvalues of $H$ other than $d$ are bounded in absolute value by $2 \lambda$.
\end{proposition}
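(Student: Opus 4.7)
My plan is to encode the collapse operation algebraically. Write the bipartite adjacency matrix of $G$ as $\begin{pmatrix} 0 & \AA \\ \AA^{T} & 0 \end{pmatrix}$ where $\AA \in \mathbb{R}^{U \times V}$ is the biadjacency matrix; since $G$ is $d$-regular bipartite, the singular values of $\AA$ are exactly the non-negative half of the spectrum of this block matrix. In particular the top singular value is $d$, realised by the left/right singular vectors $\mathbf{1}_{U}/\sqrt{|U|}$ and $\mathbf{1}_{V}/\sqrt{|V|}$, and every remaining singular value is at most $\lambda$. Let $\PP \in \mathbb{R}^{U \times V}$ be the permutation matrix with $\PP e_{v} = e_{\pi(v)}$. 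I claim that the weighted adjacency matrix of the collapse $H$ is
\[
 \AA \PP^{T} + \PP \AA^{T}.
\]

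To justify this, observe that $(\AA \PP^{T})_{u,u'} = \AA_{u,\pi^{-1}(u')}$ records whether the edge $\{u,\pi^{-1}(u')\}$ is present in $G$ and therefore contributes to $\{u,u'\}$ after the collapse, while $(\PP \AA^{T})_{u,u'} = \AA_{u',\pi^{-1}(u)}$ records the other edge of $G$ that could collapse to $\{u,u'\}$. Summing them produces entry $2$ in exactly the two situations in which the text prescribes weight $2$ (a self-loop, when $u = u'$, and a doubled edge, when both $\{u,\pi^{-1}(u')\}$ and $\{u',\pi^{-1}(u)\}$ lie in $E$), and entry $1$ otherwise. The weighted-degree claim then follows immediately from row sums. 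Using $\PP^{T}\mathbf{1}_{U} = \mathbf{1}_{V}$, $\PP\mathbf{1}_{V} = \mathbf{1}_{U}$ and the $d$-regularity identities $\AA\mathbf{1}_{V} = d\mathbf{1}_{U}$, $\AA^{T}\mathbf{1}_{U} = d\mathbf{1}_{V}$, both summands of $(\AA \PP^{T} + \PP \AA^{T})\mathbf{1}_{U}$ equal $d\mathbf{1}_{U}$, so every row sums to $2d$ and $\mathbf{1}_{U}$ is an eigenvector with eigenvalue $2d$ (which I read as the intended Perron eigenvalue in the statement).

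For the bound on the remaining eigenvalues, take any unit $\yy \in \mathbb{R}^{U}$ orthogonal to $\mathbf{1}_{U}$. Then $\PP^{T}\yy$ is still a unit vector, and $\mathbf{1}_{V}^{T} \PP^{T}\yy = (\PP\mathbf{1}_{V})^{T}\yy = \mathbf{1}_{U}^{T}\yy = 0$, so $\PP^{T}\yy \perp \mathbf{1}_{V}$ as well. Thus both $\yy$ and $\PP^{T}\yy$ sit in the orthogonal complements of the top left and right singular vectors of $\AA$, which gives $\|\AA \PP^{T}\yy\| \le \lambda$. Hence $|\yy^{T}(\AA \PP^{T} + \PP \AA^{T})\yy| = 2|\yy^{T}\AA \PP^{T}\yy| \le 2\lambda$, and the Courant–Fischer characterisation delivers the spectral bound on all non-Perron eigenvalues. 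The one place care is required is verifying that the algebraic formula $\AA \PP^{T} + \PP \AA^{T}$ really matches the stated weighting conventions (self-loops weight $2$, doubled edges weight $2$); once that identification is made, everything else is a two-line singular-value argument.
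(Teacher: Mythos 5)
Your proof is correct and follows essentially the same route as the paper's: both reduce the spectral bound to the observation that a vector orthogonal to the all-ones vector (and its image under the permutation) only sees the singular values of $\AA$ that are at most $\lambda$, the paper phrasing this via the quadratic form of the lifted vector $[\xx;\xx]$ against the block adjacency matrix and you via $\norm{\AA\PP^{T}\yy}\le\lambda$ plus Cauchy--Schwarz. Your explicit bookkeeping of the permutation matrix, the weight conventions, and the degree claim (and your reading of the Perron eigenvalue as $2d$) fills in details the paper handles by "rearranging rows and columns," but the underlying argument is the same.
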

\begin{proof}
To prove the bound on the eigenvalues, let $G$ have adjacency matrix
\[
\begin{pmatrix}
0 & \AA \\
\AA^{T} & 0
\end{pmatrix}.
\]
After possibly rearranging rows and columns, we may assume that
  the adjacency matrix of the collapse is given by
\[
  \AA + \AA^{T}.
\]
Note that the self-loops, if they exist, correspond to diagonal entries of value $2$.
Now, let $\xx$ be a unit vector orthogonal to the all-1s vector.
We have
\[
  \xx^{T} (\AA + \AA^{T}) \xx
=
\begin{pmatrix}
\xx
\\
\xx 
\end{pmatrix}^{T}
\begin{pmatrix}
0 & \AA \\
\AA^{T} & 0
\end{pmatrix}
\begin{pmatrix}
\xx
\\
\xx 
\end{pmatrix}
\leq 
\lambda \norm{
\begin{pmatrix}
\xx
\\
\xx 
\end{pmatrix}
}^{2}
\leq
2 \lambda ,
\]
as the vector $[\xx ;\xx]$ is orthogonal to the eigenvectors of the trivial
  eigenvalues of the adjacency matrix of $G$.
\end{proof}

We now state how bounds on the eigenvalues of the adjacency matrices of graphs
  lead to approximations of complete graphs and complete bipartite graphs.

\begin{proposition}\label{pro:expanderApprox}
Let $G$ be a graph with $n$ vertices, possibly with self-loops and weighted edges,
  such that every vertex of $G$  has weighted degree $d$ and
  such that all non-trivial eigenvalues of the adjacency matrix of $G$
  have absolute value at most $\lambda \leq d/2$.
If $G$ is not bipartite, then
  $(n/d) \LL_{G}$ is an $\epsilon$-approximation of $K_{n}$ for $\epsilon = (2 \ln 2) (\lambda)/d$.
If $G$ is bipartite, then
  $(n/d) \LL_{G}$ is an $\epsilon$-approximation of $K_{n,n }$ for $\epsilon = (2 \ln 2) (\lambda)/d$.
\end{proposition}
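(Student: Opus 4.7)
The plan is to reduce everything to a direct eigenvalue comparison between $(n/d)\LL_G$ and the Laplacian of the target graph ($K_n$ or $K_{n,n}$), using the fact that $G$ is $d$-regular. The first step will be to record the identity $\LL_G = d\II - \AA_G$ (where self-loops contribute nothing to $\LL_G$, since they contribute equally to the diagonal of $\AA_G$ and to the degree), together with $\LL_{K_n} = n\II - \frac{1}{n}\mathbf{1}\mathbf{1}^\top \cdot n = n(\II - \frac{1}{n}\mathbf{1}\mathbf{1}^\top)$, so that on the space $\mathbf{1}^\perp$, $\LL_{K_n}$ acts as the scalar $n$.

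In the non-bipartite case, I will observe that $\ker \LL_G = \ker \LL_{K_n} = \Span{\mathbf{1}}$ (using connectedness, which is implied by the $\lambda\le d/2$ spectral gap together with non-bipartiteness). On $\mathbf{1}^\perp$, the eigenvalues of $\AA_G$ have absolute value at most $\lambda$ by hypothesis, so the eigenvalues of $\LL_G$ lie in $[d-\lambda,d+\lambda]$, and hence those of $(n/d)\LL_G$ lie in $[n(1-\lambda/d),\,n(1+\lambda/d)]$. To conclude the $\epsilon$-approximation $e^{-\epsilon}\LL_{K_n}\pleq (n/d)\LL_G\pleq e^{\epsilon}\LL_{K_n}$, I need
\[
e^{-\epsilon} \le 1-\lambda/d \quad\text{and}\quad 1+\lambda/d \le e^{\epsilon}.
\]
The right inequality is just $1+x\le e^x$. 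For the left, taking logs gives $\epsilon \ge -\ln(1-\lambda/d)$, and since the function $x\mapsto -\ln(1-x)/x$ is increasing on $(0,1)$ with value $2\ln 2$ at $x=1/2$, the hypothesis $\lambda/d\le 1/2$ guarantees this with $\epsilon = (2\ln 2)\lambda/d$.

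For the bipartite case, I will apply the same argument but now tracking the extra trivial eigenvector. Let $\mathbf{1}_{\pm}$ denote the $\pm 1$ indicator of the bipartition. Then $\AA_G \mathbf{1}_+ = d\mathbf{1}_+$ and $\AA_G \mathbf{1}_- = -d\mathbf{1}_-$, so $\LL_G\mathbf{1}_- = 2d\mathbf{1}_-$, exactly matching $\LL_{K_{n,n}}\mathbf{1}_- = 2n\mathbf{1}_- /(n/d)$, i.e. on the direction $\mathbf{1}_-$ the scaled Laplacian $(n/d)\LL_G$ agrees with $\LL_{K_{n,n}}$. On the orthogonal complement $\Span{\mathbf{1}_+,\mathbf{1}_-}^\perp$, $\LL_{K_{n,n}}$ again acts as the scalar $n$, and the non-trivial spectral bound gives eigenvalues of $\AA_G$ bounded by $\lambda$, so the same inequality chain carries through.

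The only subtle point, which I'll state carefully but not belabor, is the constant $2\ln 2$: this is precisely the supremum of $-\ln(1-x)/x$ on $(0,1/2]$, which is why the hypothesis $\lambda\le d/2$ appears in the statement. There is no genuine obstacle beyond verifying this elementary calculus inequality and being careful about the shared trivial eigenspace in the bipartite case; the proof is essentially a spectral bookkeeping exercise.
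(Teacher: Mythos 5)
Your proof is correct and follows essentially the same route as the paper's: reduce to comparing eigenvalues of $(n/d)\LL_G$ and the target Laplacian on the orthogonal complement of the trivial eigenspace(s), then verify $e^{-\epsilon}\le 1-\lambda/d$ and $1+\lambda/d\le e^{\epsilon}$. In fact you supply the elementary calculus justification for the constant $2\ln 2$ (the supremum of $-\ln(1-x)/x$ on $(0,1/2]$) that the paper merely asserts; the only blemish is the garbled expression ``$\LL_{K_{n,n}}\mathbf{1}_- = 2n\mathbf{1}_-/(n/d)$,'' whose intended (and correct) meaning is clear from the surrounding sentence.
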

\begin{proof}
Let $\AA$ be the adjacency matrix of $G$.
Then,
\[
  \LL_{G} = d \II  - \AA .
\]

In the non-bipartite case, we observe that all of the non-zero eigenvalues
  of $\LL_{K_{n}}$ are $n$,
  so for all vectors $x$ orthogonal to the constant vector,
\[
  x^{T}  \LL_{K_{n}} x = n x^{T}x.
\]
As all of the non-zero eigenvalues of $\LL_{G}$
  are between $d - \lambda$ and $d + \lambda$,
for all vectors $x$ orthogonal to the constant vector
\[
  n \left(1-\frac{\lambda }{d} \right)  x^{T} x 
\leq   x^{T}  (n/d)\LL_{G} x
\leq 
  n \left(1+\frac{\lambda }{d} \right)  x^{T} x.
\]
Thus,
\begin{equation*}
 \left(1-\frac{\lambda }{d} \right) \LL_{K_{n}} 
\pleq \LL_{G} \pleq 
 \left(1+\frac{\lambda }{d} \right) \LL_{K_{n}} .
\end{equation*}

In the bipartite case, we naturally assume that the bipartition is the same in both $G$ and $K_{n,n}$.
Now, let $\xx$ be any vector on the vertex set of $G$.
Both the graphs $K_{n,n}$ and $(n/d) G$ have Laplacian matrix eigenvalue
  $0$ with the constant eigenvector, and eigenvalue $2 n$ with eigenvector
  $[\bvec{1};-\bvec{1}]$.
The other eigenvalues of the Laplacian of $K_{n,n}$ are $n$, while the
  other eigenvalues of the Laplacian of $(n/d) G$ are between
\[
  n \left(1 - \frac{\lambda}{d} \right)
\quad \text{and} \quad 
  n \left(1 + \frac{\lambda}{d} \right).
\]
Thus,
\[
 \left(1-\frac{\lambda }{d} \right) \LL_{K_{n,n}} 
\pleq \LL_{G} \pleq 
 \left(1+\frac{\lambda }{d} \right) \LL_{K_{n,n}} .
\]

The proposition now follows from our choice of $\epsilon$, which guarantees that
\[
  e^{-\epsilon} \leq 1 - \lambda /d
\quad \text{and} \quad 1 + \lambda /d
\leq e^{\epsilon},
\]
provided that $\lambda /d \leq  1/2$.
\end{proof}

\begin{lemma}
\label{lem:explicitExpanders}
There are algorithms that on input $n$ and $\epsilon > n^{-1/6}$
   produce a graph having $O (n/\epsilon^{2})$ edges that is an
  $O (\epsilon)$ approximation of $K_{n'}$ or $K_{n',n'}$
  for some $n \leq n' \leq 8n$.
These algorithms run in $O (\log n)$ depth and $O (n / \epsilon^{2})$ work.
\end{lemma}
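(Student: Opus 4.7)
The plan is to build the approximator from an explicit LPS Ramanujan graph via Theorem~\ref{thm:LPS} and Proposition~\ref{pro:LPS}, using the parameter-picking flexibility afforded by Corollary~\ref{cor:tchudakoff} and correcting any bipartite/non-bipartite mismatch with Propositions~\ref{pro:doubleCover} and~\ref{pro:collapse}.

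First I would set the degree. Let $p$ be the smallest prime $\equiv 1 \pmod 4$ with $p \geq c_0/\epsilon^{2}$, where $c_0$ is a constant chosen so that the Ramanujan bound $\lambda \leq 2\sqrt{p-1}$ yields $\lambda/d \leq \epsilon/(2\ln 2)$ for $d=p+1$; Corollary~\ref{cor:tchudakoff} gives $p=O(1/\epsilon^{2})$. Next I would fix the vertex count by choosing a prime $q \equiv 1 \pmod 4$, $q \neq p$: for a target of the form $K_{n'}$ I pick $q$ so that $n' := q^{2}(q-1)/2 \in [n,8n]$; for $K_{n',n'}$ I pick $q$ so that $n' := q^{2}(q-1) \in [n,8n]$. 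In both cases $q=\Theta(n^{1/3})$, and another application of Corollary~\ref{cor:tchudakoff} supplies such a $q$. The hypothesis $\epsilon > n^{-1/6}$ is used exactly here: it forces $p = O(1/\epsilon^{2}) = O(n^{1/3})$, which by choosing the constants appropriately is strictly less than $q$, so the premise $p<q$ of Proposition~\ref{pro:LPS} is satisfied and the graph is constructed in $O(\log n)$ depth and $O(n)$ work.

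The main obstacle is that we cannot control in advance whether Theorem~\ref{thm:LPS} returns the bipartite or the non-bipartite Ramanujan graph---that is decided by the Legendre symbol $(p/q)$, over which we have no leverage. If the type matches the target, Proposition~\ref{pro:expanderApprox} immediately delivers the $O(\epsilon)$-approximation. If it does not, I would convert: when the target is bipartite but I received the non-bipartite graph on $n'$ vertices, apply Proposition~\ref{pro:doubleCover} to get a bipartite expander on $2n'$ vertices with the same $d$ and $\lambda$; when the target is non-bipartite but I received the bipartite graph on $2n'$ vertices, apply Proposition~\ref{pro:collapse} via any permutation $\pi$ to get a $2d$-regular graph on $n'$ vertices whose non-trivial adjacency eigenvalues are bounded by $2\lambda$. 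In either conversion the ratio $\lambda/d$ is preserved, so Proposition~\ref{pro:expanderApprox} still yields an $O(\epsilon)$-approximation, and the vertex count remains in the window $[n,8n]$ (possibly after re-choosing $q$ one step smaller to absorb the factor of two when converting, which is allowed by the slack between $n$ and $8n$).

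Finally, the edge count is $n'd/2 = O(n/\epsilon^{2})$. The Cayley-graph assembly of Proposition~\ref{pro:LPS} produces the generator set in $O(\log n)$ depth and $O(n)$ work, and expanding each vertex's neighbor list by applying the $d$ generators in parallel is $O(\log n)$ depth and $O(n/\epsilon^{2})$ work overall; both conversions (double cover and collapse) are local transformations trivially parallelizable in $O(\log n)$ depth and $O(n/\epsilon^{2})$ work. This yields the claimed bounds.
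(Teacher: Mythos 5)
Your proposal is correct and follows essentially the same route as the paper: choose $p=\Theta(\epsilon^{-2})$ and $q$ with $q^{2}(q-1)$ (or $q^{2}(q-1)/2$) in the window $[n,8n]$ via Corollary~\ref{cor:tchudakoff}, build the LPS graph, and fix any bipartite/non-bipartite mismatch with the double cover or the collapse before invoking Proposition~\ref{pro:expanderApprox}. You are in fact slightly more explicit than the paper about where the hypothesis $\epsilon>n^{-1/6}$ enters (ensuring $p<q$ for Proposition~\ref{pro:LPS}) and about absorbing the factor of two in the vertex count after conversion.
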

\begin{proof}
We first consider the problem of constructing an approximation of $K_{n', n'}$.
By Corollary~\ref{cor:tchudakoff} there
  is a constant $n_{0}$ so that if $n > n_{0}$, then
  there is a prime $q$ that is equivalent to
  $1$ modulo $4$ so that $q^{2} (q-1)$ is between   and $n$ and $8 n$.
Let $q$ be such a prime and let $n' = q^{2} (q-1)$.
Similarly, for $\epsilon$ sufficiently small, there is a prime $p$
  equivalent to $1$ modulo $4$ that is between
  $\epsilon^{-2}/2$ and $\epsilon^{-2}$.
Our algorithm should construct the corresponding Ramanujan graph, as described
  in Theorem~\ref{thm:LPS} and Proposition~\ref{pro:LPS}.
If the graph is bipartite, then Proposition~\ref{pro:expanderApprox} tells us
  that it provides the desired approximation of $K_{n',n'}$.
If the graph is not bipartite, then we form its double cover to obtain
   a bipartite graph and use Proposition~\ref{pro:doubleCover} 
  and Proposition~\ref{pro:expanderApprox} to see that it provides the desired
  approximation of $K_{n',n'}$.

The non-bipartite case is similar, except that we require a prime $q$
  so that $q^{2} (q-1)/2$ is between $n$ and $8 n$, and we use
  a collapse to convert a bipartite expander to a non-bipartite one,
  as analyzed in Proposition~\ref{pro:collapse}.
\end{proof}

For the existence results in Section~\ref{sec:existence}, we just need to know that
  there exist graphs of low degree that are good approximations of complete graphs.
We may obtain them from the recent theorem of Marcus, Spielman and Srivastava
  that there exist bipartite Ramanujan graphs of every degree and number of vertices
  \cite{IF4}.

\begin{lemma}
\label{lem:existExpanders}
For every integer $n$ and even integer $d$, 
  there is a weighted graph on $n$ vertices of degree at most
  $d$  that is a $4 / \sqrt{d} $ approximation
  of $K_{n}$.
\end{lemma}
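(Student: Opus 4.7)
The plan is to invoke the existence theorem of Marcus, Spielman, and Srivastava~\cite{IF4}, which produces a bipartite Ramanujan graph of every degree and every (even) number of vertices, and then apply the bipartite-to-non-bipartite collapse construction from Proposition~\ref{pro:collapse} to obtain the desired approximation of $K_n$. All of the analytic work has already been done in Propositions~\ref{pro:collapse} and~\ref{pro:expanderApprox}; what remains is to compose them correctly and verify that the constants work out.

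Concretely: since $d$ is even, set $d' = d/2$ and invoke~\cite{IF4} to produce a $d'$-regular bipartite Ramanujan graph $G_{\mathrm{bip}}$ on $n+n$ vertices. By the Ramanujan property, its non-trivial adjacency eigenvalues are bounded in absolute value by $2\sqrt{d'-1}$. Next, fix any bijection $\pi$ between the two sides of the bipartition and form the collapse $H$ of $G_{\mathrm{bip}}$ as defined before Proposition~\ref{pro:collapse}. That proposition gives that every vertex of $H$ has weighted degree $2d' = d$ and that every non-trivial adjacency eigenvalue of $H$ has absolute value at most $2 \cdot 2\sqrt{d'-1} = 4\sqrt{d/2-1}$.

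Plugging $\lambda = 4\sqrt{d/2-1} \leq 2\sqrt{2d}$ and degree $d$ into Proposition~\ref{pro:expanderApprox} (applied in the non-bipartite case) yields that, after scaling by $n/d$, the graph $H$ is an $\epsilon$-approximation of $K_n$ with
\[
\epsilon \;\leq\; (2\ln 2)\cdot \frac{2\sqrt{2d}}{d} \;=\; \frac{4\sqrt{2}\,\ln 2}{\sqrt{d}} \;<\; \frac{4}{\sqrt{d}},
\]
using $4\sqrt{2}\ln 2 < 3.93$. Since $H$ has maximum degree $d$ and $n$ vertices, this is exactly the graph claimed by the lemma.

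The only genuine obstacle is checking the hypothesis $\lambda \leq d/2$ of Proposition~\ref{pro:expanderApprox}, which translates to $4\sqrt{d/2-1} \leq d/2$ and holds only once $d$ is above a small absolute constant (roughly $d \geq 30$). For the remaining finite set of small values of $d$, the target accuracy $4/\sqrt{d}$ is so loose that the conclusion can be obtained by direct means---for instance, taking $H = K_n$ whenever $n-1 \leq d$, and otherwise using any small expander construction with $O(1)$-bounded spectral gap to degree ratio. No new ideas are required beyond what is already developed in the paper together with the MSS existence theorem.
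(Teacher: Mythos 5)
Your proof is correct and follows essentially the same route as the paper: invoke the Marcus--Spielman--Srivastava existence theorem for bipartite Ramanujan graphs of degree $d/2$ on $2n$ vertices, collapse via Proposition~\ref{pro:collapse}, and apply Proposition~\ref{pro:expanderApprox} to get $\epsilon = 4\sqrt{2}\ln 2/\sqrt{d} < 4/\sqrt{d}$. Your additional check of the hypothesis $\lambda \leq d/2$ for small $d$ is a point the paper's proof silently skips, and your handling of it is fine.
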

\begin{proof}
The main theorem of \cite{IF4} tells us that there is a bipartite Ramanujan
  graph on $2n$ vertices of degree $k$ for every $k \leq n$.
By Propositions \ref{pro:collapse} and \ref{pro:expanderApprox},
  a collapse of this graph
  is a weighted graph of degree at most $2k$
  that is a $(4 \ln 2)/\sqrt{k}$ approximation of $K_{n,n}$.
The result now follows by setting $d = 2k$.
\end{proof}

\subsection{Sparsifying Complete Product Demand Graphs}
\label{subsec:complete}

In the rest of the section, we will adapt these expander constructions
to weighted settings.
We start with product demand graphs.

\weightedExp*

Our algorithm for sparsifying complete product demand graphs begins by
  splitting the vertices of highest demands into many vertices.
By \textit{splitting} a vertex, we mean replacing it by many
  vertices whose demands sum to its original demand.
In this way, we obtain a larger product demand graph.
We observe that we can obtain a sparsifier of the original graph by
  sparsifying the larger graph, and then collapsing back together
  the vertices that were split.

\begin{proposition}\label{pro:splitProduct}
Let $G$ be a product demand graph with vertex set 
  $\setof{1, \dots ,n}$ 
  and demands $\dd$,
  and let $\Ghat = (\Vhat, \Ehat )$ be a product demand graph with
  demands $\ddhat$.
If there is a partition of $\Vhat$ into sets $S_{1}, \dots , S_{n}$  
  so that for all $i \in V$, $\sum_{j \in S_{i}} \hat{d}_{j} = d_{i}$,
  then $\Ghat$ is a \textit{splitting} of $G$ and there is a matrix
  $\MM$ so that
\[
  \LL_{G} = \MM \LL_{\Ghat} \MM^{T}.
\]
\end{proposition}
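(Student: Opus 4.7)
The plan is to take $\MM \in \{0,1\}^{n \times |\Vhat|}$ to be the indicator matrix of the partition, i.e., $\MM_{i,j} = 1$ if $j \in S_i$ and $\MM_{i,j} = 0$ otherwise. Intuitively, $\MM$ is the ``merging'' operator that collapses the blocks $S_1,\dots,S_n$ back down to the single vertices $1,\dots,n$ of $V$. With this definition, I expect $\MM \LL_{\Ghat} \MM^{T} = \LL_{G}$ to be a direct computation.

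The cleanest way to carry out the verification is to use the rank-one decomposition of the Laplacian of a product demand graph. Setting $T = \sum_k d_k$, one checks entrywise that
\[
\LL_{G(\dd)} = T \cdot \diag(\dd) - \dd \dd^{T},
\]
and likewise $\LL_{\Ghat} = \hat{T}\cdot \diag(\ddhat) - \ddhat\ddhat^{T}$ where $\hat T = \sum_j \hat d_j$. The hypothesis $\sum_{j \in S_i} \hat d_j = d_i$ together with the fact that the $S_i$ partition $\Vhat$ gives $\hat T = T$ and also $\MM \ddhat = \dd$, so $\MM \ddhat \ddhat^{T} \MM^{T} = \dd \dd^{T}$. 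For the diagonal piece, the disjointness of the $S_i$ means that the $(i,i')$ entry of $\MM \diag(\ddhat) \MM^{T}$ is $\sum_{j} \hat d_j [j\in S_i][j\in S_{i'}]$, which vanishes for $i \neq i'$ and equals $d_i$ when $i = i'$; hence $\MM \diag(\ddhat)\MM^{T} = \diag(\dd)$.

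Combining these two identities yields
\[
\MM \LL_{\Ghat} \MM^{T} = T\cdot \MM\diag(\ddhat)\MM^{T} - \MM\ddhat\ddhat^{T}\MM^{T} = T\cdot \diag(\dd) - \dd\dd^{T} = \LL_{G},
\]
which is the desired identity. There is no real obstacle here: the content of the proposition is just that the product structure $d_i d_j$ is preserved under summing the demands within a block, together with the fact that the ``merge'' matrix $\MM$ implements this summation. The only thing to be careful about is bookkeeping of the diagonal of the Laplacian, which is handled cleanly by the $T\diag(\dd) - \dd\dd^{T}$ form rather than by manipulating the diagonal entry $d_i\sum_{k\neq i} d_k$ directly.
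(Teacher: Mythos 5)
Your proof is correct and uses exactly the same matrix as the paper: $\MM$ is the $0/1$ partition-indicator (merge) matrix with $\MM_{i,j}=1$ iff $j\in S_i$, which is the entirety of the paper's stated proof. The verification you supply via the rank-one form $\LL_{G(\dd)} = T\cdot\diag(\dd)-\dd\dd^{T}$, together with $\MM\ddhat=\dd$ and $\MM\diag(\ddhat)\MM^{T}=\diag(\dd)$, is a clean way to fill in the computation the paper leaves implicit.
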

\begin{proof}
The $(i,j)$ entry of matrix $\MM$ is $1$ if and only if $j \in S_{i}$.
Otherwise, it is zero.
\end{proof}

We now show that we can sparsify $G$ by sparsifying $\Ghat$.
 
\begin{proposition}
\label{pro:collapseLoewner}
Let $\Ghat_{1}$ and $\Ghat_{2}$ be graphs on the same vertex set $\Vhat$ such
  that $\Ghat _{1}\approx_{\epsilon}\Ghat _{2}$ for some $\epsilon$.
Let $S_{1}, \dots , S_{n}$ be a partition of $\Vhat$, and let $G_{1}$
  and $G_{2}$ be the graphs obtained by collapsing together all the
  vertices in each set $S_{i}$ and eliminating any self loops that are
  created.
Then
\[
G_{1}\approx_{\epsilon}G_{2}.
\]
\end{proposition}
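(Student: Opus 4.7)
The plan is to observe that collapsing vertices corresponds to applying a fixed linear map to the Laplacian on both sides, and that Loewner order is preserved under such congruences. This reduces the proposition to a clean one-line deduction from Fact~\ref{fact:orderCAC}.

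More concretely, I would define the collapse map $\MM \in \{0,1\}^{n \times |\Vhat|}$ by $\MM_{i,j} = 1$ if $j \in S_i$ and $0$ otherwise, where the rows are indexed by the partition blocks $S_1,\dots,S_n$ and the columns by $\Vhat$. The first step is to verify the identity
\[
\LL_{G_k} = \MM \LL_{\Ghat_k} \MM^T \qquad \text{for } k=1,2.
\]
This is standard: for each edge $(u,v)$ of $\Ghat_k$ with weight $w_{uv}$, its contribution to the Laplacian is $w_{uv}(\ee_u-\ee_v)(\ee_u-\ee_v)^T$, and $\MM(\ee_u-\ee_v) = \ee_{\pi(u)} - \ee_{\pi(v)}$ where $\pi$ is the map sending each vertex to the index of its block. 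When $\pi(u) = \pi(v)$, this produces a zero contribution, which is consistent with discarding the self-loop created by collapse; when $\pi(u) \neq \pi(v)$, it adds a weight-$w_{uv}$ edge between the two blocks, and parallel edges are automatically summed. Thus the Laplacian identity holds regardless of whether the self-loops are kept or discarded, since self-loops contribute zero to a Laplacian.

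The second step is to apply the congruence inequality. By hypothesis $\Ghat_1 \approx_\epsilon \Ghat_2$, i.e.
\[
  e^{-\epsilon} \LL_{\Ghat_2} \pleq \LL_{\Ghat_1} \pleq e^{\epsilon} \LL_{\Ghat_2}.
\]
Applying Fact~\ref{fact:orderCAC} with $\CC = \MM$ to both inequalities and using the identity from the first step yields
\[
  e^{-\epsilon} \LL_{G_2} \pleq \LL_{G_1} \pleq e^{\epsilon} \LL_{G_2},
\]
which is exactly the claim $G_1 \approx_\epsilon G_2$.

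There is no real obstacle here. The only small thing to be careful about is the bookkeeping around self-loops, but since a self-loop $w(\ee_u-\ee_u)(\ee_u-\ee_u)^T$ is the zero matrix, discarding self-loops is automatic from the Laplacian viewpoint. Everything else is a direct consequence of the fact that $\pleq$ is preserved under congruence by an arbitrary rectangular matrix of compatible dimensions.
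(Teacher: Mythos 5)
Your proof is correct and follows exactly the paper's argument: define the 0/1 collapse matrix $\MM$ so that $\LL_{G_k} = \MM \LL_{\Ghat_k}\MM^{T}$ (as in Proposition~\ref{pro:splitProduct}) and then apply Fact~\ref{fact:orderCAC}. The extra remark that self-loops contribute the zero matrix is a nice touch but adds nothing beyond what the paper's one-line proof already relies on.
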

\begin{proof}
Let $\MM$ be the matrix introduced in Proposition \ref{pro:splitProduct}.
Then,
\[
  \LL_{G_{1}} = \MM \LL_{\Ghat_{1}} \MM^{T} \quad \text{and} \quad 
  \LL_{G_{2}} = \MM \LL_{\Ghat_{2}} \MM^{T}.
\]
The proof now follows from   Fact \ref{fact:orderCAC}.
\end{proof}

For distinct vertices $i$ and $j$, we let $\edgu{i,j}$ denote the graph with an edge of weight $1$ between vertex $i$ and vertex $j$.
If $i = j$, we let $\edgu{i,j}$ be the empty graph.
With this notation, we can express the product demand graph 
  as
\[
  \sum_{i < j} d_{i} d_{j} \edgu{i,j}
=
  \frac{1}{2} \sum_{i,j \in V} d_{i} d_{j}\edgu{i,j}.
\]

This notation also allows us to precisely express our algorithm for sparsifying
  product demand graphs.

\begin{algbox}
$G'=\textsc{WeightedExpander}(\dd,\epsilon)$ 
\begin{enumerate}

\item Let $\nhat$ be the least integer greater than
  $2 n / \epsilon^{2}$ such that the algorithm described in Lemma \ref{lem:explicitExpanders}
  produces an $\epsilon$-approximation of $K_{\nhat}$.
  %I changed Gtil to tildeK because in step 6, the rescaling make "Gtil"_HH is not equals to "Gtil_HH", so creates notation inconsistence

%%\item Let $t$ be the largest number such that
%%  $\sum_{i} \floor{d_{i}/t} \geq \nhat$. \yintat{change that to $\frac{\sum_{k}d_{k}}{2 \nhat}$?}
\item Let $t = \frac{\sum_{k}d_{k}}{\nhat}$.

\item Create a new product demand graph $\Ghat$ with demand vector $\hat{\dd}$
 by 
splitting each vertex $i$ into a set of $\ceil{d_{i}/t}$ vertices, $S_i$:
\begin{enumerate}
\item $\floor{d_{i}/t}$ vertices with demand $t$.
\item one vertex with demand $d_{i} - t \floor{d_{i}/t}$.
\end{enumerate} 

\item Let $H$ be a set of $\nhat$ vertices in $\Ghat$ with demand $t$,
  and let $L$ contain the other vertices.  Set $k = \sizeof{L}$.
%  and renumber the vertices so that $L = \setof{1, \dots , k}$.

%%
%%\begin{enumerate}
%%  \item For each $i$ with $d_{i} \geq t$, create $\ceil{d_{i}/t}$
%%  vertices: $\floor{d_{i}/t}$ with demand $t$
%%  and, if $t$ does not divide $d_{i}$, one with demand $d_{i} - t \floor{d_{i}/t}$.
%%  Call this new set of vertices $S_{i}$.
%%  
%%  \item Also include every vertex of $G$ with $d_{i} < t$ with its original demand.
%%
%%\end{enumerate}
%%
%%\item 

%%Renumber the vertices so that $L = \setof{1, \dots , k}$ for some $k$.

\item 
Partition $H$ arbitrarily into sets $V_{1}, \dots , V_{k}$, so that
  $\sizeof{V_{i}} \geq \floor{\nhat / k}$ for all $1 \leq i \leq k$.
  
\item 
Use the algorithm described in Lemma \ref{lem:explicitExpanders} to
  produce $\tilde{K}_{HH}$, an $\epsilon$-approximation of the complete graph on $H$.
Set 
\[
\Gtil = t^2 \tilde{K}_{HH} + \sum_{l \in L} 
  \frac{\sizeof{H}}{\sizeof{V_{l}}} \sum_{h \in V_{l}} 
     \dhat_{l} \dhat_{h} \edgu{l,h}.
\]

\item Let $G'$ be the graph obtained by collapsing together all vertices
  in each set $S_{i}$.
\end{enumerate}
\end{algbox}

%\protect\caption{Pseudocode for splitting vertices to reduce the construction to one
%for the special case handled in the pseudocode in Figure~\ref{fig:weightedExpSpecial}.}

%\label{alg:weightedExpFull}
%\end{figure}

This section and the next are devoted to the analysis of this algorithm.
Given Proposition~\ref{pro:collapseLoewner}, we just need to show that 
  $\Gtil$ is a good approximation to $\Ghat$.

\begin{proposition}\label{pro:numVertsAfterSplit}
The number of vertices in $\Ghat$ is at most $n + \nhat$.
\end{proposition}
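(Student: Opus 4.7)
The plan is to directly count the vertices produced by the splitting procedure in Step 3 of the algorithm and bound the sum using the definition of $t$.

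First I would observe that Step 3 of $\textsc{WeightedExpander}$ replaces each original vertex $i$ with exactly $\ceil{d_i / t}$ new vertices (namely, $\floor{d_i/t}$ vertices of demand $t$ together with at most one additional vertex of demand $d_i - t\floor{d_i/t}$, which is present only when this residual demand is positive, so the total is $\ceil{d_i/t}$ in either case). Hence the total number of vertices in $\Ghat$ equals
\[
\sum_{i=1}^{n} \ceil{d_i / t}.
\]

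Then I would apply the elementary bound $\ceil{x} \leq x + 1$ to each term, yielding
\[
\sum_{i=1}^{n} \ceil{d_i / t} \;\leq\; \sum_{i=1}^{n} \left(\frac{d_i}{t} + 1\right) \;=\; \frac{\sum_{i=1}^{n} d_i}{t} + n.
\]
Finally, I would substitute the definition $t = \frac{\sum_k d_k}{\nhat}$ from Step 2, which gives $\sum_i d_i / t = \nhat$, and conclude that the total number of vertices in $\Ghat$ is at most $\nhat + n$, as claimed.

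There is essentially no obstacle here: the proof is a one-line count followed by the definition of $t$. The only mild subtlety is handling the degenerate case where $d_i / t$ is an integer (so the ``extra'' vertex would have zero demand); this is absorbed automatically by the $\ceil{\cdot}$ formulation, so no special case analysis is needed.
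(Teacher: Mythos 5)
Your proof is correct and matches the paper's own argument exactly: both count $\sum_i \ceil{d_i/t}$, bound each term by $d_i/t + 1$, and substitute $t = \sum_k d_k / \nhat$ to get $n + \nhat$. No discrepancies.
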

\begin{proof}
The number of vertices in $\Ghat$ is
\[
  \sum_{i \in V} \ceil{d_{i} / t}
\leq
  n +   \sum_{i \in V} d_{i} / t
=
  n + \nhat .
\]
\end{proof}

So, $k \leq n$ and $\nhat \geq 2 k / \epsilon^{2}$.
That is, $\sizeof{H} \geq 2 \sizeof{L} / \epsilon^{2}$.
In the next section, we prove the lemmas that show that for these special product demand graphs  $\Ghat $ in which
  almost all weights are the maximum,
  our algorithm produces a graph $\Gtil$ that is a good approximation of $\Ghat$.

%%
%%\begin{theorem}
%%\label{thm:expanderFull}
%%Let $0 < \epsilon < 1$ and 
%%  let $G$ be a product demand graph with $n$ vertices and demand vector $\dd$.
%%Given $\dd$ and $\epsilon$ as input, \textsc{WeightedExpander} produces
%%  a graph $G'$ with $O (n / \epsilon^{4})$ edges that is 
%%  an $O (\epsilon)$ approximation of $G$.
%%Moreover, \textsc{WeightedExpander} runs in $O (\log n)$ depth
%%  and $O (n / \epsilon^{4})$ work.
%%\end{theorem} 
\begin{proof}(of Lemma~\ref{lem:weightedExpander})
The number of vertices in the graph $\Ghat$ will be between
  $n + 2 n / \epsilon^{2}$ and $n + 16 n / \epsilon^{2}$.
So, the algorithm described in Lemma \ref{lem:explicitExpanders}  will take
  $O (\log n)$ depth and $O (n / \epsilon^{4})$ work to produce an
  $\epsilon$ approximation of the complete graph on $\nhat$ vertices.
This dominates the computational cost of the algorithm.

Proposition
  \ref{pro:collapseLoewner} tells us that
  $G'$ approximates $G$ at least as well as $\Gtil$ approximates
  $\Ghat$.
To bound how well $\Gtil$ approximates $\Ghat$,
  we use two lemmas that are stated in the next section.
% Those lemma statement is invariant under $t$...
%To apply those statements, we merely need to drop hats, and multiply
%  all the statements by $t$.
Lemma \ref{lemma:light_vertex_not_important} shows
  that
\[
  \Ghat_{HH} + \Ghat_{LH} \approx_{O(\epsilon^{2})} \Ghat .
\]
Lemma \ref{lem:replaceLH} shows that 
\[
 \Ghat_{HH} + \Ghat_{LH}
\approx_{4 \epsilon}
\Ghat_{HH} + \sum_{l \in L} 
  \frac{\sizeof{H}}{\sizeof{V_{l}}} \sum_{h \in V_{l}} 
     \dhat_{l} \dhat_{h} \edgu{l,h}.
\]
And, we already know that $t^2 \tilde{K}$ is an $\epsilon$-approximation of
  $\Ghat_{HH}$.
Fact \ref{frac:orderComposition} says that we can combine these three approximations to conclude that
  $\Gtil$ is an $O (\epsilon)$-approximation of $\Ghat$.

\end{proof}

\subsection{Product demand graphs with most weights maximal}
In this section, we consider product demand graphs in which almost all weights are the maximum.
For simplicity, we make a slight change of notation from the previous section.
We drop the hats, we let $n$ be the number of vertices in the product demand graph,
  and we order the demands so that
\[
  d_{1} \leq d_{2} \leq \dots \leq d_{k} \leq d_{k+1} = \dots = d_{n} = 1.
\]

We let $L = \setof{1, \dots , k}$ and $H = \setof{k+1, \dots , n}$
  be the set of low and high demand vertices, respectively.
Let $G$ be the product demand graph corresponding to $\dd$, and let
  $G_{LL}$, $G_{HH}$ and $G_{LH}$ be the subgraphs containing the
  low-low, high-high and low-high edges respectively.
We now show that little is lost by dropping the edges in $G_{LL}$
  when $k$ is small.

Our analysis will make frequent use of the
following Poincare inequality: 
\begin{lemma} \label{lemma:poincare}Let
$c \edgu{u,v}$ be an edge of weight $c$ and let $P$ be a path from
from $u$ to $v$ 
  consisting of edges of weights $c_{1},c_{2},\cdots,c_{k}$.
Then 
\[
c \edgu{u,v} \preceq c\left(\sum c_{i}^{-1}\right)P.
\]
\end{lemma}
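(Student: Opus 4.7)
The plan is to establish the inequality by comparing quadratic forms. For any vector $\xx$, the quadratic form of the left-hand side satisfies $\xx^{T} \LL_{c\edgu{u,v}} \xx = c (x_{u} - x_{v})^{2}$. Writing the path $P$ as a sequence of vertices $u = a_{0}, a_{1}, \ldots, a_{k} = v$ with edge weights $c_{1}, \ldots, c_{k}$, the right-hand side evaluates to $c \left( \sum_{i} c_{i}^{-1} \right) \sum_{i} c_{i} (x_{a_{i-1}} - x_{a_{i}})^{2}$. So, after dividing by $c$, it suffices to prove
\[
(x_{u} - x_{v})^{2} \leq \left( \sum_{i} c_{i}^{-1} \right) \sum_{i} c_{i} (x_{a_{i-1}} - x_{a_{i}})^{2}.
\]

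The key step is a single application of the Cauchy--Schwarz inequality. I would first telescope $x_{u} - x_{v} = \sum_{i} (x_{a_{i-1}} - x_{a_{i}})$, then rewrite each term as $c_{i}^{-1/2} \cdot c_{i}^{1/2} (x_{a_{i-1}} - x_{a_{i}})$, and apply Cauchy--Schwarz to split the two factors. This yields exactly the bound $(\sum_{i} c_{i}^{-1})(\sum_{i} c_{i} (x_{a_{i-1}} - x_{a_{i}})^{2})$ on the right. Since the inequality holds for all $\xx$, we conclude the Loewner inequality between the two Laplacians.

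There is no real obstacle here: once one telescopes the difference along the path, the Cauchy--Schwarz bound is immediate, and pointwise inequality of quadratic forms is equivalent to the $\pleq$ relation on symmetric matrices. The only minor care is to verify that the Laplacian of the weighted path equals $\sum_{i} c_{i} \LL_{\edgu{a_{i-1}, a_{i}}}$, which is built into the definition of a weighted graph Laplacian.
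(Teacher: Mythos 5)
Your proof is correct: the telescoping of $x_u - x_v$ along the path followed by Cauchy--Schwarz with the splitting $c_i^{-1/2}\cdot c_i^{1/2}(x_{a_{i-1}}-x_{a_i})$ is exactly the standard argument for this Poincar\'e inequality, and the reduction to quadratic forms is valid since all matrices involved are symmetric PSD. The paper states this lemma without proof as a classical fact, and your argument is precisely the one it implicitly relies on.
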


As the weights of the edges we consider in this section are determined
  by the demands of their vertices,
  we introduce the notation
\[
  \edg{i,j} = d_{i} d_{j} \edgu{i,j}.
\]
With this notation, we can express the product demand graph 
  as
\[
  \sum_{i < j} \edg{i,j}
=
  \frac{1}{2} \sum_{i,j \in V} \edg{i,j}.
\]

\begin{lemma} \label{lemma:light_vertex_not_important}
If $\left|L\right|\leq\left|H\right|$,
then 
\[
G_{HH}+G_{LH}\approx_{3\frac{\left|L\right|}{\left|H\right|}} G.
\]
\end{lemma}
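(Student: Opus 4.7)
\textbf{Proof proposal for Lemma~\ref{lemma:light_vertex_not_important}.}

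The plan is to write $G = G_{HH} + G_{LH} + G_{LL}$, observe that one direction of the approximation is trivial, and bound $G_{LL}$ in the Loewner order by a small multiple of $G_{LH}$ using a path (Poincar\'e) argument that routes each low-low edge through the high vertices.

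First, since $G_{LL} \succeq 0$, we immediately get $G_{HH} + G_{LH} \preceq G$, which is one of the two required inequalities (stronger than $G_{HH}+G_{LH} \preceq e^{3|L|/|H|} G$). It remains to show $G \preceq e^{3|L|/|H|} (G_{HH} + G_{LH})$, equivalently $G_{LL} \preceq (e^{3|L|/|H|}-1)(G_{HH}+G_{LH})$.

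For the core step, fix an edge $\edg{i,j} = d_i d_j \edgu{i,j}$ with $i,j \in L$ and any $h \in H$. The two-edge path $i \to h \to j$ has edge weights $d_i d_h = d_i$ and $d_h d_j = d_j$ since $d_h = 1$. By Lemma~\ref{lemma:poincare},
\[
d_i d_j \edgu{i,j}
\preceq
d_i d_j \bigl( \tfrac{1}{d_i} + \tfrac{1}{d_j} \bigr) \bigl( d_i \edgu{i,h} + d_j \edgu{h,j} \bigr)
=
(d_i + d_j)\bigl( d_i \edgu{i,h} + d_j \edgu{h,j} \bigr).
\]
Averaging this inequality over all $h \in H$ gives
\[
d_i d_j \edgu{i,j}
\preceq
\frac{d_i + d_j}{|H|} \sum_{h \in H} \bigl( d_i \edgu{i,h} + d_j \edgu{h,j} \bigr).
\]
Summing over $i < j$ with $i,j \in L$ and reorganizing, the coefficient multiplying each $d_i \edgu{i,h}$ (an edge of $G_{LH}$ of weight $d_i$) becomes
\[
\sum_{j \in L, j \neq i} \frac{d_i + d_j}{|H|} \le \frac{2(|L|-1)}{|H|} \le \frac{2|L|}{|H|},
\]
using $d_i, d_j \le 1$. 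This yields $G_{LL} \preceq \tfrac{2|L|}{|H|} G_{LH} \preceq \tfrac{2|L|}{|H|} (G_{HH} + G_{LH})$. Combining with $e^x \ge 1 + x$ so that $e^{3|L|/|H|} - 1 \ge 3|L|/|H| \ge 2|L|/|H|$, we conclude
\[
G \preceq \bigl(1 + \tfrac{2|L|}{|H|}\bigr)(G_{HH}+G_{LH}) \preceq e^{3|L|/|H|}(G_{HH}+G_{LH}),
\]
which completes the proof.

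The main obstacle is really just bookkeeping: choosing the right quantitative form of Poincar\'e (the natural $d_i d_j (d_i^{-1} + d_j^{-1}) = d_i + d_j$ cancellation is what makes the bound come out linear in $|L|/|H|$ rather than worse), and verifying that averaging over $h \in H$ preserves the Loewner inequality (which it does, since averaging a family of PSD inequalities is valid). No step should require more than elementary manipulations once this setup is in place.
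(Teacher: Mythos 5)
Your proof is correct, and it takes a genuinely different routing than the paper. The paper bounds each low--low edge $\edg{l_1,l_2}$ by averaging over \emph{three-edge} paths $l_1 \to h_1 \to h_2 \to l_2$ with $h_1, h_2 \in H$, which produces the mixed bound $G_{LL} \preceq \tfrac{3|L|}{|H|} G_{LH} + \tfrac{3|L|^2}{|H|^2} G_{HH}$ and then invokes $|L| \le |H|$ to absorb the $G_{HH}$ term. You instead average over \emph{two-edge} paths $i \to h \to j$ with a single $h \in H$, yielding the cleaner $G_{LL} \preceq \tfrac{2|L|}{|H|} G_{LH}$ with no $G_{HH}$ term and a slightly better constant. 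What the shorter path buys you: a bound that never touches $G_{HH}$, that does not actually need the hypothesis $|L| \le |H|$ for the core Loewner inequality (the hypothesis only guarantees the stated approximation parameter is small), and whose bookkeeping is a single sum over $j \in L\setminus\{i\}$ rather than a double sum over $h_1, h_2 \in H$. What the paper's longer path buys: it is the same LHHL routing reused in Lemma~\ref{lemma:light_vertex_not_important2} for the bipartite case, where a two-edge LHL path is unavailable (both endpoints of an $L^A L^B$ edge sit on opposite sides and any path connecting them must alternate sides), so stating the argument once with the three-edge path keeps the complete and bipartite cases parallel. Your verification that the per-edge coefficient on $\edg{\ell,h}$ is $\sum_{j \ne \ell} (d_\ell + d_j)/|H| \le 2(|L|-1)/|H|$ is right, as is the observation that averaging a family of Loewner inequalities with a common left-hand side is sound.
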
 
\begin{proof}
The lower bound $G_{HH}+G_{LH}\preceq G_{HH}+G_{LH}+G_{LL}$
follows from $G_{LL}\succeq0$.

Using lemma \ref{lemma:poincare} and the assumptions $d_{l} \leq 1$
  for $l \in L$ and and $d_{h} = 1$ for $h\in H$, we derive for every $l_{1}, l_{2} \in L$,
\begin{align*}
\edg{l_{1}, l_{2}}
& =  \frac{1}{\left|H\right|^{2}}\sum_{h_{1},h_{2}\in H} \edg{l_{1}, l_{2}}\\
\intertext{\text{(by Lemma \ref{lemma:poincare})}}
& \preceq  \frac{1}{\left|H\right|^{2}}
  \sum_{h_{1},h_{2}\in H}d_{l_{1}}d_{l_{2}}
  \left(\frac{1}{d_{l_{1}}d_{h_{1}}}+\frac{1}{d_{h_{1}}d_{h_{2}}}+\frac{1}{d_{h_{2}}d_{l_{2}}}\right)
 \left(\edg{l_{1},h_{1}}+\edg{h_{1},h_{2}}+\edg{h_{2},l_{2}}\right)\\
 & \preceq 
 \frac{3}{\left|H\right|^{2}} \sum_{h_{1},h_{2}\in H}  \left(\edg{l_{1},h_{1}}+\edg{h_{1},h_{2}}+\edg{h_{2},l_{2}}\right)\\
 & = 
 \frac{3}{\left|H\right|}\sum_{h\in H} \left(\edg{l_{1},h}+\edg{l_{2},h}\right)+\frac{6}{\left|H\right|^{2}}G_{HH}.
\end{align*}
So,
\begin{align*}
G_{LL} & =
  \frac{1}{2} \sum_{l_{1}, l_{2} \in L} \edg{l_{1}, l_{2}}
\\
 & \preceq  \frac{1}{2}
  \sum_{l_{1},l_{2}}\left(\frac{3}{\left|H\right|}\sum_{h\in H}
  \left(\edg{l_{1},h}+\edg{l_{2},h}\right)+\frac{6}{\left|H\right|^{2}}G_{HH}\right)\\
 & =  \frac{3\left|L\right|}{\left|H\right|}G_{LH}+\frac{3\left|L\right|^{2}}{\left|H\right|^{2}}G_{HH}.
\end{align*}
The assumption $\sizeof{L} \leq \sizeof{H}$ then allows us to conclude
\[
G_{HH}+G_{LH}+G_{LL}\preceq\left(1+3\frac{\left|L\right|}{\left|H\right|}\right)\left(G_{HH}+G_{LH}\right).
\]
\end{proof} 

Using a similar technique, we will show that the edges between $L$ and $H$
  can be replaced by the union of a small number of stars.
In particular, we will partition the vertices of $H$ into $k$ sets,
  and for each of these sets we will create one star connecting
  the vertices in that set to a corresponding vertex in $L$.

We employ the following consequence of the Poincare inequality in Lemma \ref{lemma:poincare}.

\begin{lemma} \label{lemma:light_can_be_merge}
For any $\epsilon \leq 1$, $l \in L$ and $h_{1}, h_{2} \in H$,
\[
\epsilon \edg{h_{1},l}+ (1/2)\edg{h_{1},h_{2}}
\approx_{4 \sqrt{\epsilon}}
\epsilon \edg{h_{2},l}+ (1/2)\edg{h_{1},h_{2}}.
\]
\end{lemma}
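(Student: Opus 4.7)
The plan is to reduce this to a one-parameter Young-type inequality, which is the scalar version of the Poincare bound in Lemma~\ref{lemma:poincare}. Since $h_1, h_2 \in H$ both have demand $1$, the products $d_{h_1} d_{h_2}$ disappear: writing $\mu = \epsilon d_l \in (0, 1]$, the two sides of the claimed approximation become
\[
A \defeq \mu \edgu{h_1, l} + \tfrac{1}{2} \edgu{h_1, h_2}
\quad \text{and} \quad
B \defeq \mu \edgu{h_2, l} + \tfrac{1}{2} \edgu{h_1, h_2}.
\]
Swapping $h_1 \leftrightarrow h_2$ leaves $\edgu{h_1,h_2}$ fixed while exchanging $A$ and $B$, so it is enough to prove the one-sided bound $A \pleq e^{4\sqrt{\epsilon}} B$.

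The key observation is the telescoping $x_{h_1} - x_l = (x_{h_2} - x_l) + (x_{h_1} - x_{h_2})$, so the scalar Young inequality $(u+v)^2 \leq (1+\eta)u^2 + (1 + 1/\eta)v^2$ (for any $\eta > 0$) lifts to the Laplacian inequality
\[
\edgu{h_1, l} \pleq (1 + \eta)\, \edgu{h_2, l} + (1 + \tfrac{1}{\eta})\, \edgu{h_1, h_2}.
\]
Multiplying by $\mu$ and adding $\tfrac{1}{2} \edgu{h_1, h_2}$ to both sides gives
\[
A \pleq \mu(1+\eta)\, \edgu{h_2, l} + \Bigl(\mu(1 + \tfrac{1}{\eta}) + \tfrac{1}{2}\Bigr)\, \edgu{h_1, h_2}.
\]

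To obtain $A \pleq (1 + \gamma) B$ with $\gamma = e^{4\sqrt{\epsilon}} - 1$, it suffices that both coefficients on the right are at most $(1 + \gamma)$ times the corresponding coefficients of $B$. Matching against $\mu$ and $\tfrac{1}{2}$ yields the two constraints $\eta \leq \gamma$ and $\mu + \mu/\eta \leq \gamma/2$; setting $\eta = \gamma$, they collapse to the single quadratic condition $\gamma^2 - 2\mu \gamma - 2\mu \geq 0$, i.e., $\gamma \geq \mu + \sqrt{\mu^2 + 2\mu}$. Using $\mu \leq \epsilon \leq 1$, the right-hand side is bounded by $\sqrt{\mu} + \sqrt{3\mu} \leq 3\sqrt{\mu} \leq 3\sqrt{\epsilon}$, while $\gamma = e^{4\sqrt{\epsilon}} - 1 \geq 4\sqrt{\epsilon}$ by the elementary bound $e^x \geq 1 + x$, so the condition holds and $A \pleq e^{4\sqrt{\epsilon}} B$ follows.

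The only thing to track carefully is the parameter tuning; the actual sharp ratio between $A$ and $B$ is $1 + \mu + \sqrt{\mu^2 + 2\mu} \sim 1 + \sqrt{2\epsilon d_l}$ for small $\mu$, and the stated $4\sqrt{\epsilon}$ is a convenient loose upper bound that accommodates even $d_l = 1$. The argument is a single application of Poincare plus a one-dimensional optimization, and I do not anticipate a genuine obstacle beyond verifying the two-coefficient matching cleanly.
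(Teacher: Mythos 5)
Your proof is correct and takes essentially the same route as the paper: both apply the two-edge Poincar\'e (weighted triangle) inequality along the path $h_1 \to h_2 \to l$ with a tunable weight and then match coefficients against $e^{4\sqrt{\epsilon}}$ times the target. The only difference is cosmetic --- you normalize by $d_l$ and solve the quadratic for the optimal $\eta$, whereas the paper simply plugs in the weight ratio $\sqrt{\epsilon}$ and bounds the resulting coefficients directly.
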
 \begin{proof} 
By applying Lemma \ref{lemma:poincare} and
  recalling that $d_{h_{1}} = d_{h_{2}} = 1$ and $d_{l} \leq 1$, we compute
\begin{align*}
\edg{h_{1}, l}
 & \preceq   d_{h_{1}} d_{l} 
  \left(\frac{\sqrt{\epsilon}}{ d_{h_{1}} d_{h_{2}}}+\frac{1}{ d_{h_{2}} d_{l} }\right)
  \left(\frac{1}{\sqrt{\epsilon}}\edg{h_{1},h_{2}}+\edg{h_{2},l}\right)\\
 & \preceq  \frac{1+\sqrt{\epsilon}}{\sqrt{\epsilon}}\edg{h_{1},h_{2}}
    + (1+\sqrt{\epsilon})\edg{h_{2},l}
\\
 & \preceq  (1+\sqrt{\epsilon})\edg{h_{2},l}+\frac{2}{\sqrt{\epsilon}}\edg{h_{1},h_{2}}.
\end{align*}
Multiplying both sides by $\epsilon$ and adding $(1/2) \edg{h_{1}, h_{2}}$ then gives
\begin{align*}
\epsilon \edg{h_{1}, l} + (1/2) \edg{h_{1}, h_{2}}
& \pleq 
(1+\sqrt{\epsilon}) \epsilon \edg{h_{2}, l}
+
(2 \sqrt{\epsilon} + 1/2) \edg{h_{1},h_{2}}
\\
& \pleq
(1 + 4 \sqrt{\epsilon}) \left( \epsilon \edg{h_{2},l}+ (1/2)\edg{h_{1},h_{2}}  \right)
\\
& \pleq
e^{4 \sqrt{\epsilon }} \left( \epsilon \edg{h_{2},l}+ (1/2)\edg{h_{1},h_{2}}  \right).
\end{align*}
By symmetry, we also have
\[
\epsilon \edg{h_{2}, l} + (1/2) \edg{h_{1}, h_{2}}
\pleq 
e^{4 \sqrt{\epsilon }} \left( \epsilon \edg{h_{1},l}+ (1/2)\edg{h_{1},h_{2}}  \right).
\]
\end{proof}

\begin{lemma}\label{lem:replaceLH}
Recall that $L = \setof{1,\dots ,k}$ and let
  $V_{1}, \dots , V_{k}$ be a partition of $H = \setof{k+1, \dots , n}$
  so that $\sizeof{V_{l}} \geq s$ for all $l$.
Then,
\[
G_{HH} + G_{LH} 
\approx_{4 / \sqrt{s}}
G_{HH} + \sum_{l \in L} \frac{\sizeof{H}}{\sizeof{V_{l}}} \sum_{h \in V_{l}} \edg{l,h}.
\]
\end{lemma}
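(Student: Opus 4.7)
The plan is to derive the approximation by applying Lemma~\ref{lemma:light_can_be_merge} one triple at a time, summing with carefully chosen nonnegative coefficients so the bookkeeping yields exactly the two target graphs.

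First I would split $G_{LH}$ into its \emph{in-cluster} part $G_{LH}^{in}=\sum_{l\in L}\sum_{h\in V_l}\edg{l,h}$ and its \emph{cross} part $G_{LH}^{cr}=\sum_{l\in L}\sum_{h'\in H\setminus V_l}\edg{l,h'}$. On the target side, the star term decomposes as $\sum_{l}\frac{|H|}{|V_l|}\sum_{h\in V_l}\edg{l,h}=\sum_{l}\frac{|H|-|V_l|}{|V_l|}\sum_{h\in V_l}\edg{l,h}+G_{LH}^{in}$. So the real work is to show that the cross part $G_{LH}^{cr}$, assisted by $G_{HH}$, is $(4/\sqrt{s})$-approximated by $\sum_{l}\frac{|H|-|V_l|}{|V_l|}\sum_{h\in V_l}\edg{l,h}$ assisted by the same amount of $G_{HH}$, after which adding $G_{LH}^{in}$ to both sides (which preserves approximation since it is PSD) and then topping up with the leftover $G_{HH}$ gives the claim.

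To prove that, for each triple $(l,h,h')$ with $l\in L$, $h\in V_l$, $h'\in H\setminus V_l$, I would invoke Lemma~\ref{lemma:light_can_be_merge} with $\epsilon=1/s$ (permissible since $s\ge 1$) to obtain
\[
(1/s)\edg{l,h'}+(1/2)\edg{h,h'}\;\approx_{4/\sqrt{s}}\;(1/s)\edg{l,h}+(1/2)\edg{h,h'},
\]
and then scale this inequality by the coefficient $a_{l,h,h'}=s/|V_l|$. Summing over all valid triples (this is legal because approximation is preserved under nonnegative linear combinations), the coefficient of $\edg{l,h'}$ on the left becomes $\sum_{h\in V_l}(s/|V_l|)(1/s)=1$ and the coefficient of $\edg{l,h}$ on the right becomes $\sum_{h'\in H\setminus V_l}(s/|V_l|)(1/s)=(|H|-|V_l|)/|V_l|$. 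Thus the summed inequality is exactly
\[
G_{LH}^{cr}+R\;\approx_{4/\sqrt{s}}\;\sum_{l}\tfrac{|H|-|V_l|}{|V_l|}\sum_{h\in V_l}\edg{l,h}+R,
\]
where $R$ is the common bridge term $R=\sum_{l}\sum_{h\in V_l,\,h'\in H\setminus V_l}(s/(2|V_l|))\edg{h,h'}$.

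The last thing to verify is that $R\preceq G_{HH}$, so that $G_{HH}-R\succeq 0$ can be added to both sides. For an unordered pair $\{h,h'\}\subset H$ with $h\in V_{l_1},h'\in V_{l_2}$: if $l_1=l_2$ the edge appears with coefficient $0$ in $R$; if $l_1\ne l_2$ it appears with total coefficient $s/(2|V_{l_1}|)+s/(2|V_{l_2}|)\le 1/2+1/2=1$, using the hypothesis $|V_l|\ge s$. Hence $R\preceq G_{HH}|_{\text{cross}}\preceq G_{HH}$, and adding $G_{LH}^{in}+(G_{HH}-R)$ to both sides turns the left side into $G_{HH}+G_{LH}$ and the right side into $G_{HH}+\sum_{l}\frac{|H|}{|V_l|}\sum_{h\in V_l}\edg{l,h}$, finishing the proof. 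There is no hard step; the one nontrivial point is the uniform coefficient bound from $|V_l|\ge s$ which keeps us inside a single copy of $G_{HH}$.
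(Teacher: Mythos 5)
Your proof is correct, and the key lemma you invoke (Lemma~\ref{lemma:light_can_be_merge}) is the same one the paper uses, but your route differs in two linked choices. The paper applies Lemma~\ref{lemma:light_can_be_merge} with the \emph{cluster-dependent} parameter $\epsilon = 1/\sizeof{V_l}$ to all triples $(l, h_1, h_2)$ with $h_1 \in H$ (including $h_1 \in V_l$) and $h_2 \in V_l$, and then simply sums: the coefficient of $\edg{l,h_1}$ telescopes to $1$, and the accumulated bridge term $\tfrac12\sum_{h_1\in H,\,l,\,h_2\in V_l}\edg{h_1,h_2}$ collapses \emph{exactly} to $G_{HH}$ because the $V_l$ partition $H$. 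You instead fix the \emph{uniform} parameter $\epsilon=1/s$, restrict to cross-cluster triples $h' \in H\setminus V_l$, and introduce explicit scaling weights $s/\sizeof{V_l}$; this produces a bridge graph $R$ that is merely dominated by $G_{HH}$ rather than equal to it, which forces the extra (correct, but additional) steps of (i) verifying $R \preceq G_{HH}$ via the uniform bound $\sizeof{V_l}\ge s$, and (ii) separately adding back the in-cluster stars $G_{LH}^{in}$ and the leftover $G_{HH}-R$. The net effect is the same $4/\sqrt s$ bound; the paper's per-cluster choice of $\epsilon$ just makes the accounting close automatically, whereas your decomposition into $G_{LH}^{in}$ and $G_{LH}^{cr}$ makes the structure of the star replacement more explicit at the price of the domination argument.
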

\begin{proof}
Observe that 
\[
G_{LH} = \sum_{l \in L} \sum_{h \in H} \edg{l,h}.
\]
For each $l \in L$, $h_{1} \in H$ and $h_{2} \in V_{l}$ we apply
  Lemma \ref{lemma:light_can_be_merge} to show that
\[
  \frac{1}{\sizeof{V_{l}}} \edg{l, h_{1}}
+ \frac{1}{2} \edg{h_{1}, h_{2}}
\approx_{4 / \sqrt{s}}
  \frac{1}{\sizeof{V_{l}}} \edg{l, h_{2}}
  + \frac{1}{2} \edg{h_{1}, h_{2}}.
\]
Summing this approximation over all $h_{2} \in V_{l}$
  gives
\[
 \edg{l, h_{1}}
+   \sum_{h_{2} \in V_{l}}
  \frac{1}{2} \edg{h_{1}, h_{2}}
\approx_{4 / \sqrt{s}}
  \sum_{h_{2} \in V_{l}}
\left(  \frac{1}{\sizeof{V_{l}}}
   \edg{l, h_{2}}
  + \frac{1}{2} \edg{h_{1}, h_{2}} \right)
.
\]
Summing the left-hand side of this this approximation
  over all $l \in L$ and $h_{1} \in H$ gives
\[
  \sum_{l \in L, h_{1} \in H}  \edg{l, h_{1}}
+ 
   \sum_{h_{2} \in V_{l}}
  \frac{1}{2} \edg{h_{1}, h_{2}}
=
  \sum_{l \in L, h_{1} \in H}  \edg{l, h_{1}}
+
  \frac{1}{2}
  \sum_{h_{1} \in H, l \in L}  
   \sum_{h_{2} \in V_{l}} \edg{h_{1}, h_{2}}
=
  G_{LH} + G_{HH}.
\]
On the other hand, the sum of the right-hand terms gives
\[
G_{HH} + 
  \sum_{l \in L, h_{1} \in H}
  \sum_{h_{2} \in V_{l}}
 \frac{1}{\sizeof{V_{l}}}
   \edg{l, h_{2}}
=
G_{HH} + 
  \sum_{l \in L}
  \sum_{h_{2} \in V_{l}}
 \frac{\sizeof{H}}{\sizeof{V_{l}}}
   \edg{l, h_{2}}.
\]
\end{proof}

\subsection{Weighted Bipartite Expanders}
\label{subsec:bipartite}

This construction extends analogously to bipartite product graphs.
The bipartite product demand graph of vectors $(\dd^{A},\dd^{B})$
is a complete bipartite graph whose weight between vertices $i\in A$
and $j\in B$ is given by $w_{ij}=d_{i}^{A}d_{j}^{B}$.
The main result that we will prove is:

\weightedBipartiteExp*

 Without loss of generality, we will assume
 $d_{1}^{A}\geq d_{2}^{A}\geq\cdots\geq d_{n^{A}}^{A}$
and $ d_{1}^{B}\geq d_{2}^{B}\geq\cdots\geq d_{n^{B}}^{B}$.
As the weights of the edges we consider in this section are determined
  by the demands of their vertices,
  we introduce the notation
\[
  \edg{i,j} = d^{A}_{i} d^{B}_{j} \edgu{i,j}.
\]

Our construction is based on a similar observation that if most vertices
on $A$ side have $d_{i}^{A}$ equaling to $d_{1}^{A}$ and most
vertices on $B$ side have $d_{i}^{B}$ equaling to $d_{1}^{B}$,
then the uniform demand graph on these vertices dominates the graph.
%Pseudocode for handling this case is below in Figure~\ref{fig:weightedBiExpSpecial}.

\begin{algbox}
$G'=\textsc{WeightedBipartiteExpander}(\dd^{A},\dd^{B},\epsilon)$ 
\begin{enumerate}

\item Let $n'=\max(n^{A},n^{B})$ and $\nhat$ be the least integer greater than
  $2 n' / \epsilon^{2}$ such that the algorithm described in
  Lemma \ref{lem:explicitExpanders} produces an $\epsilon$-approximation of $K_{\nhat ,\nhat}$.

\item Let $t^{A}=\frac{\sum_{k}d_{k}^{A}}{\nhat}$ and  $t^{B}=\frac{\sum_{k}d_{k}^{B}}{\nhat}$.
%\item Let $t$ be the largest number such that
%  $\sum_{i} \floor{d_{i}/t} \geq \nhat$.

\item Create a new bipartite demand graph $\Ghat$ with demands
$\ddhat^{A}$ and $\ddhat^{B}$ follows:
\begin{enumerate}
  \item On the side $A$ of the graph, for each vertex $i$, create a subset $S_i$ consisting of $\ceil{d^{A}_{i}/t^A}$ vertices:
  \begin{enumerate}
  \item   $\floor{d^{A}_{i}/t^A}$ with demand $t^A$.
    \item  one vertex with demand $d^{A}_{i} - t^A \floor{d^{A}_{i}/t^A}$.
  \end{enumerate} 

\item Let $H^{A}$ contain $\hat{n}$ vertices of $A$ of with demand $t^{A}$, and let
   $L^{A}$ contain the rest.
Set $k^{A} = \sizeof{L^{A}}$.
% and renumber the vertices of $A$ so that $L^{A} = \setof{1, \ldots ,k^{A}}$.

\item Create the side $B$ of the graph with partition $H^{B}, L^{B}$
and demand vector $\ddhat^{B}$ similarly.
  \end{enumerate}

\item Partition $H^{A}$ into sets of size
$\sizeof{V^{A}_{i}} \geq \floor{\nhat / k^{A}}$, one corresponding
to each vertex $l \in L^{A}$.
% Denote these sets using $V^{A}_{l}$.
Partition $V_{B}$ similarly.

\item 
Let $\tilde{K}_{H^{A} H^{B}} $ be a bipartite expander produced
by Lemma~\ref{lem:explicitExpanders} that $\epsilon$-approximates $K_{\hat{n}  \hat{n}}$, identified with the vertices $H^{A}$ and $H^{B}$.
  
 Set 
\[
\Gtil = t^{A} t^{B} \tilde{K} + \sum_{l \in L^{A}} 
  \frac{\sizeof{H^{B}}}{\sizeof{V^{B}_{l}}} \sum_{h \in V^{B}_{l}} 
     \dhat^{A}_{l} \dhat^{B}_{h} \edgu{l,h}
+ \sum_{l \in L^{B}} 
  \frac{\sizeof{H^{A}}}{\sizeof{V^{A}_{l}}} \sum_{h \in V^{A}_{l}} 
     \dhat^{B}_{l} \dhat^{A}_{h} \edgu{l,h}.
\]

\item Let $G'$ be the graph obtained by collapsing together all vertices 
  in each set $S^{A}_{i}$ and $S^{B}_{i}$.
\end{enumerate}
\end{algbox}

Similarly to the non-bipartite case, the Poincare inequality show that
the edges between low demand vertices can be completely omitted if
there are many high demand vertices which allows the demand routes
through high demand vertices.

\begin{lemma} \label{lemma:light_vertex_not_important2}Let $G$
be the bipartite product demand graph of the demand $(\dd_{i}^{A},\dd_{j}^{B})$.
Let $H^{A}$ a subset of vertices on $A$ side with demand higher
than the set of remaining vertices $L^{A}$ on $A$ side.
Define $H^{B},L^{B}$ similarly. Assume that $\sizeof{L^{A}}\leq\sizeof{H^{A}}$
and $\sizeof{L^{B}}\leq\sizeof{H^{B}}$, then 
\[
G_{H^{A}H^{B}}+G_{H^{A}L^{B}}+G_{L^{A}H^{B}}\approx_{3\max\left(\frac{\sizeof{L^{A}}}{\sizeof{H^{A}}},\frac{\sizeof{L^{B}}}{\sizeof{H^{B}}}\right)}G.
\]
\end{lemma}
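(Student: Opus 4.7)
The proof plan mirrors the argument for Lemma~\ref{lemma:light_vertex_not_important}, adapted to the bipartite setting. The lower bound $G_{H^{A}H^{B}}+G_{H^{A}L^{B}}+G_{L^{A}H^{B}}\preceq G$ is immediate since the missing piece $G_{L^{A}L^{B}}$ is PSD. So the entire task is to upper bound $G_{L^{A}L^{B}}$ by the three other blocks up to a factor of $3\eps$, where $\eps = \max(|L^A|/|H^A|,|L^B|/|H^B|)$.

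The key step is to route each edge $\edg{l_{1},l_{2}}$ with $l_{1}\in L^{A}$ and $l_{2}\in L^{B}$ through a length-three alternating path $l_{1}\to h_{B}\to h_{A}\to l_{2}$ for an arbitrary pair $(h_{A},h_{B})\in H^{A}\times H^{B}$. Applying the Poincare inequality of Lemma~\ref{lemma:poincare}, together with the fact that $d^A_{l_1}\leq d^A_{h_A}$ for every $l_1\in L^A,h_A\in H^A$ and similarly on the $B$ side, the Poincare coefficient simplifies to
\[
d^A_{l_1}d^B_{l_2}\left(\tfrac{1}{d^A_{l_1}d^B_{h_B}}+\tfrac{1}{d^A_{h_A}d^B_{h_B}}+\tfrac{1}{d^A_{h_A}d^B_{l_2}}\right)\leq 3.
\]
Thus $\edg{l_{1},l_{2}}\preceq 3(\edg{l_{1},h_{B}}+\edg{h_{A},h_{B}}+\edg{h_{A},l_{2}})$ for every valid $(h_{A},h_{B})$.

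Next, I average this inequality uniformly over all $h_{A}\in H^{A}$ and $h_{B}\in H^{B}$, then sum over all $(l_{1},l_{2})\in L^{A}\times L^{B}$. The three resulting terms telescope into
\[
G_{L^{A}L^{B}}\preceq \tfrac{3|L^{B}|}{|H^{B}|}G_{L^{A}H^{B}}+\tfrac{3|L^{A}||L^{B}|}{|H^{A}||H^{B}|}G_{H^{A}H^{B}}+\tfrac{3|L^{A}|}{|H^{A}|}G_{H^{A}L^{B}}.
\]
Using the hypothesis that the two low/high ratios are each at most $\eps\leq 1$ (so the cross-product ratio is at most $\eps^{2}\leq \eps$), each coefficient is bounded by $3\eps$. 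Adding $G_{H^{A}H^{B}}+G_{H^{A}L^{B}}+G_{L^{A}H^{B}}$ to both sides gives
\[
G \preceq (1+3\eps)\bigl(G_{H^{A}H^{B}}+G_{H^{A}L^{B}}+G_{L^{A}H^{B}}\bigr),
\]
which is the required upper bound in the $\approx_{3\eps}$ relation (since $1+3\eps\leq e^{3\eps}$).

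There is no real obstacle here: the content is entirely the Poincare routing argument of the unipartite case, with the minor bookkeeping twist that the path has length three (crossing the bipartition an odd number of times) and that we must average over $H^A\times H^B$ rather than $H\times H$, which is what produces the product $\eps^{2}$ on the $H^{A}H^{B}$ term and the single factors of $|L^A|/|H^A|$ or $|L^B|/|H^B|$ on the two cross-blocks. The only subtlety worth double-checking is that the Poincare coefficient in the bipartite case does not blow up when one of the low-side demands is tiny; the cancellation $d^A_{l_1}\cdot \tfrac{1}{d^A_{l_1}d^B_{h_B}}=\tfrac{1}{d^B_{h_B}}$ handles this cleanly.
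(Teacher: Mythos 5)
Your proof is correct and follows exactly the approach that the paper's (sketchier) proof indicates: route each $L^A$--$L^B$ edge through the length-three alternating path $l_A \to h_B \to h_A \to l_2$, average over all $(h_A,h_B)\in H^A\times H^B$, bound the Poincar\'e coefficient by 3 using the demand orderings, and sum over $(l_1,l_2)$ to obtain the three-term bound on $G_{L^AL^B}$. Your write-up fills in the bookkeeping the paper leaves implicit, and the final coefficient check ($\eps^2\le\eps$, $1+3\eps\le e^{3\eps}$) is exactly right.
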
 \begin{proof} 
The proof is analogous to Lemma~\ref{lemma:light_vertex_not_important},
but with the upper bound modified for bipartite graphs.

For every edge $l_A, l_B$, we embed it evenly into paths of
the form $l_A, h_B, h_A, l_B$ over all choices of $h_A$ and $h_B$.
The support of this embedding can be calculated using
Lemma~\ref{lemma:poincare}, and the overall accounting
follows in the same manner as Lemma~\ref{lemma:light_vertex_not_important}.
\end{proof}

It remains to show that the edges between low demand and high demand
vertices can be compressed into a few edges.
The proof here is also analogous to Lemma~\ref{lemma:light_can_be_merge}:
we use the Poincare inequality to show that all
demands can routes through high demand vertices.
The structure of the bipartite graph makes it helpful
to further abstract these inequalities via the following
Lemma for four edges.

\begin{lemma} \label{lem:light_can_be_merge_2}Let
$G$ be the bipartite product demand graph of the demand $(\dd_{i}^{A},\dd_{j}^{B})$.
Given $h_{A},l_{A}\in A$ and $h_{B,1},h_{B,2}\in B$. Assume that
$d_{h_{A}}^{A}=d_{h_{B,1}}^{B}=d_{h_{B,2}}^{B}\geq d_{l_{A}}^{A}$.
For any $\epsilon<1$ , we have 
\[
\epsilon\edg{l_{A},h_{B,1}}+\edg{h_{A},h_{B,2}}+\edg{h_{A},h_{B,1}}\approx_{3\sqrt{\epsilon}}\epsilon \edg{l_{A},h_{B,2}}+\edg{h_{A},h_{B,2}}+\edg{h_{A},h_{B,1}}.
\]
\end{lemma}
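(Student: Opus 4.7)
\bigskip

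\textbf{Proof proposal for Lemma~\ref{lem:light_can_be_merge_2}.}

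The plan is to reduce this to the Poincare inequality of Lemma~\ref{lemma:poincare} applied to the length-three path $l_A \to h_{B,2} \to h_A \to h_{B,1}$, using weights on the path edges calibrated so that the ``lossy'' contribution through the two heavy $\edg{h_A,\cdot}$ edges scales as $\sqrt{\epsilon}$ rather than $1$ or $1/\epsilon$. This is the direct bipartite analogue of how Lemma~\ref{lemma:light_can_be_merge} routes $\edg{h_1,l}$ through the single heavy edge $\edg{h_1,h_2}$; only now the detour traverses two heavy edges because $l_A$ and $h_{B,1}$ sit on opposite sides of the bipartition.

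Concretely, abbreviate $d := d_{h_A}^A = d_{h_{B,1}}^B = d_{h_{B,2}}^B$ and $\ell := d_{l_A}^A \le d$. I would apply Lemma~\ref{lemma:poincare} to bound $\edg{l_A,h_{B,1}} = \ell d\cdot \edgu{l_A,h_{B,1}}$ by the path with edge weights
\[
c_1 = \epsilon\,\ell d,\qquad c_2 = \sqrt{\epsilon}\,d^2,\qquad c_3 = \sqrt{\epsilon}\,d^2
\]
on the three hops $\edgu{l_A,h_{B,2}}$, $\edgu{h_A,h_{B,2}}$, $\edgu{h_A,h_{B,1}}$ respectively, so that the weighted path equals $\epsilon \edg{l_A,h_{B,2}} + \sqrt{\epsilon}\,\edg{h_A,h_{B,2}} + \sqrt{\epsilon}\,\edg{h_A,h_{B,1}}$. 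The Poincare prefactor evaluates to
\[
\ell d\!\left(\tfrac{1}{c_1}+\tfrac{1}{c_2}+\tfrac{1}{c_3}\right) = \tfrac{1}{\epsilon}+\tfrac{2\ell}{\sqrt{\epsilon}\,d}\ \le\ \tfrac{1}{\epsilon}+\tfrac{2}{\sqrt{\epsilon}},
\]
using $\ell\le d$. Multiplying through by $\epsilon$ yields
\[
\epsilon\,\edg{l_A,h_{B,1}} \ \preceq\ (1+2\sqrt{\epsilon})\Bigl(\epsilon\,\edg{l_A,h_{B,2}} + \sqrt{\epsilon}\,\edg{h_A,h_{B,2}} + \sqrt{\epsilon}\,\edg{h_A,h_{B,1}}\Bigr).
\]

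Adding $\edg{h_A,h_{B,2}}+\edg{h_A,h_{B,1}}$ to both sides and collecting like terms gives the coefficient $(1+2\sqrt{\epsilon})\sqrt{\epsilon}+1 \le 1+3\sqrt{\epsilon}$ in front of the two heavy edges (for $\epsilon < 1$), and coefficient $\epsilon(1+2\sqrt{\epsilon}) \le \epsilon(1+3\sqrt\epsilon)$ in front of $\edg{l_A,h_{B,2}}$, yielding the one-sided bound
\[
\epsilon\,\edg{l_A,h_{B,1}} + \edg{h_A,h_{B,2}} + \edg{h_A,h_{B,1}} \ \preceq\ e^{3\sqrt{\epsilon}}\Bigl(\epsilon\,\edg{l_A,h_{B,2}} + \edg{h_A,h_{B,2}} + \edg{h_A,h_{B,1}}\Bigr).
\]
The reverse inequality follows by exchanging the roles of $h_{B,1}$ and $h_{B,2}$, which is a literal symmetry of the hypotheses since $d_{h_{B,1}}^B = d_{h_{B,2}}^B$; this establishes the $\approx_{3\sqrt{\epsilon}}$ relation.

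The only place requiring cleverness is the balanced scaling $(\epsilon,\sqrt\epsilon,\sqrt\epsilon)$ of the path weights: heavier weight on the $l_A$-incident edge keeps the $\edg{l_A,h_{B,2}}$ contribution at the natural scale $\epsilon$, while splitting the ``loss'' evenly across the two heavy hops turns the combined Poincare overhead from $O(1/\epsilon)$ into the two error contributions $\sqrt{\epsilon}\cdot(1/\sqrt{\epsilon}) = O(1)$ after multiplication. Everything else is routine bookkeeping with $\pleq$ and the observation $\ell \le d$.
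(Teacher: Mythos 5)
Your proposal is correct and follows essentially the same route as the paper: both apply the Poincar\'e inequality of Lemma~\ref{lemma:poincare} to the path $l_A \to h_{B,2} \to h_A \to h_{B,1}$ with the heavy hops up-weighted by a factor $1/\sqrt{\epsilon}$ relative to the light hop (your weights $(\epsilon,\sqrt\epsilon,\sqrt\epsilon)$ are the paper's $(1,1/\sqrt\epsilon,1/\sqrt\epsilon)$ rescaled by $\epsilon$, which you then undo by multiplying through by $\epsilon$), arrive at the same bound $\epsilon\edg{l_A,h_{B,1}} \preceq (1+2\sqrt\epsilon)\bigl(\epsilon\edg{l_A,h_{B,2}}+\sqrt\epsilon\,\edg{h_A,h_{B,2}}+\sqrt\epsilon\,\edg{h_A,h_{B,1}}\bigr)$, and close by adding the heavy edges and invoking the $h_{B,1}\leftrightarrow h_{B,2}$ symmetry. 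Your bookkeeping of the final constant ($1+\sqrt\epsilon+2\epsilon\le 1+3\sqrt\epsilon$ for $\epsilon\le 1$) is in fact slightly more explicit than the paper's.
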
 \begin{proof} 
Using Lemma $\ref{lemma:poincare}$ and
$d_{h_{A}}^{A}=d_{h_{B,1}}^{B}=d_{h_{B,2}}^{B}\geq d_{l_{A}}^{A}$,
we have 
\begin{align*}
 &  \edg{l_{A},h_{B,1}} \\
& \preceq  d_{l_{A}}^{A} d_{h_{B,1}}^{B}\left(\frac{1}{ d_{l_{A}}^{A} d_{h_{B,2}}^{B}}+\frac{\sqrt{\epsilon}}{d_{h_{A}}^{A} d_{h_{B,2}}^{B}}+\frac{\sqrt{\epsilon}}{ d_{h_{A}}^{A} d_{h_{B,1}}^{B}}\right)\left(\edg{l_{A},h_{B,2}}+\frac{1}{\sqrt{\epsilon}}\edg{h_{A},h_{B,2}}+\frac{1}{\sqrt{\epsilon}}\edg{h_{A},h_{B,1}}\right)\\
 & \preceq  (1+2\sqrt{\epsilon})\edg{l_{A},h_{B,2}}+\frac{1+2\sqrt{\epsilon}}{\sqrt{\epsilon}}\edg{h_{A},h_{B,2}}+\frac{1+2\sqrt{\epsilon}}{\sqrt{\epsilon}}\edg{h_{A},h_{B,1}}.
\end{align*}
Therefore,
\begin{align*}
 &   \epsilon\edg{l_{A},h_{B,1}}+\edg{h_{A},h_{B,2}}+\edg{h_{A},h_{B,1}} \preceq  (1+3\sqrt{\epsilon})\left(\epsilon\edg{l_{A},h_{B,2}}+\edg{h_{A},h_{B,2}}+\edg{h_{A},h_{B,1}}\right).
\end{align*}
The other side is similar due to the symmetry.\end{proof} 

\begin{proof}(of Lemma~\ref{lem:weightedBipartiteExpander})
The proof is analogous to Lemma~\ref{lem:weightedExpander}.
After the splitting, the demands in $H^{A}$ are higher than the demands in $L^{A}$ and so is $H^{B}$ to $L^{B}$.
Therefore, Lemma \ref{lemma:light_vertex_not_important2} shows that
  that

\[
\Ghat_{H^{A}H^{B}}+\Ghat_{H^{A}L^{B}}+\Ghat_{L^{A}H^{B}} \approx_{3 \epsilon^{2}/2} \Ghat .
\]
By a proof analogous to Lemma \ref{lem:replaceLH}, one can use Lemma \ref{lem:light_can_be_merge_2} to show that 
\[
 \Ghat_{H^{A}H^{B}}+\Ghat_{H^{A}L^{B}}+\Ghat_{L^{A}H^{B}}
\approx_{O(\epsilon)}
\Ghat_{H^{A}H^{B}} + \frac{\sizeof{H^{B}}}{\sizeof{V^{B}_{l}}} \sum_{h \in V^{B}_{l}} 
     \dhat^{A}_{l} \dhat^{B}_{h} \edgu{l,h}
+ \sum_{l \in L^{B}} 
  \frac{\sizeof{H^{A}}}{\sizeof{V^{A}_{l}}} \sum_{h \in V^{A}_{l}} 
     \dhat^{B}_{l} \dhat^{A}_{h} \edgu{l,h}.
\]
And, we already know that $t^A t^B \tilde{K}$ is an $\epsilon$-approximation of
  $\Ghat_{H^{A}H^{B}}$.
Fact \ref{frac:orderComposition} says that we can combine these three approximations to conclude that
  $\Gtil$ is an $O (\epsilon)$-approximation of $\Ghat$.
%%the number of vertices $J$ is at least $(1 - \epsilon^2)$
%%of the sizes of both sides.
%
%Therefore, the modified demand graphs satisfies the condition
%of Lemma \ref{lemma:light_vertex_not_important2} with $\epsilon/11$.
%Since we scaled the demands uniformly, rescaling them back
%and combining vertices then gives a sparsifier for the original graph.
%The number of edges in $G'$ also follows from the size of $\hat{G}'$ analogously.
 \end{proof}

\end{document}
%%% Local Variables:
%%% mode: latex
%%% TeX-master: t
%%% End: